\documentclass[12pt]{article}
\usepackage{amsmath} 
\usepackage{amsthm}
\usepackage{amstext}
\usepackage{amssymb}
\usepackage{enumerate}
\usepackage{array}
\usepackage{color}
\usepackage{bbm}
\usepackage{natbib}
\usepackage{graphicx}
\usepackage{caption}
\usepackage{booktabs}
\usepackage[title]{appendix}
\usepackage{url} 
\usepackage{fancyhdr}
\usepackage{authblk}
\usepackage{xr}
\usepackage{makecell,longtable}
\usepackage{titletoc}
\allowdisplaybreaks

\renewcommand{\cite}{\citep}
\newcommand{\blind}{1}
\newtheorem{remark}{Remark}
\newtheorem{assumption}{Assumption}
\newtheorem{theorem}{Theorem}
\newtheorem{lemma}{Lemma}
\newtheorem{example}{Example}
\newtheorem{corollary}{Corollary}

\usepackage{algorithm, algorithmicx, algpseudocode}

\floatname{algorithm}{Algorithm}

\usepackage{xr}
\makeatletter
\newcommand*{\addFileDependency}[1]{
	\typeout{(#1)}
	\@addtofilelist{#1}
	\IfFileExists{#1}{}{\typeout{No file #1.}}
}
\makeatother

\begin{document}

\def\spacingset#1{\renewcommand{\baselinestretch}%
{#1}\small\normalsize} \spacingset{1}


\if1\blind
{
  \title{\bf A unified theory of conditional coverage in conformal prediction with applications}
  \author{
    Yinjie Min [nk.yjmin@gmail.com]\\
    School of Statistics and Data Science, Nankai University\\
    Liuhua Peng [liuhua.peng@unimelb.edu.au]\\ 
    School of Mathematics \& Statistics, the University of Melbourne\\
    and \\
    Changliang Zou [nk.chlzou@gmail.com]\\
    School of Statistics and Data Science, Nankai University}
  \maketitle
} \fi

\if0\blind
{
  \bigskip
  \bigskip
  \bigskip
  \begin{center}
    {\LARGE\bf A unified theory of conditional coverage in conformal prediction with applications}
\end{center}
  \medskip
} \fi

\bigskip
\begin{abstract}
Conformal prediction provides prediction sets with finite-sample marginal coverage, but many applications require coverage guarantees that adapt to individual test points, a subpopulation, or a structural component of the data. Existing methods targeting conditional coverage are largely analyzed case by case, leaving limited general theory for understanding where conditional miscoverage comes from, how different procedures should be compared, and how such guarantees can be extended beyond i.i.d.~data.
We address these gaps through a unified framework and theory for conformal methods targeting conditional coverage. Our central contribution is a non-asymptotic decomposition of conditional miscoverage into three interpretable components: score-estimation error, finite-sample calibration error, and intrinsic conditional-mismatch error. This decomposition clarifies the mechanisms behind asymptotic conditional validity and places existing methods within a common analytical lens. Building on this framework, we derive principled guidance for conditional-coverage-oriented model selection, and develop localized methods with asymptotic conditional guarantees under covariate shift. Finally, we extend the framework to structured data, with concrete applications to graph-structured and hierarchical settings. Numerical experiments corroborate the theory and demonstrate the effectiveness of the proposed procedures.
\end{abstract}

\noindent%
{\it Keywords:}  Asymptotic conditional validity; Conditional distribution; Finite-sample calibration; Model selection; Non-asymptotic error bound

\vfill

\newpage
\spacingset{1.9} 

\section{Introduction}

Conformal prediction, introduced by \cite{vovk2005algorithmic}, provides distribution-free and model-agnostic prediction sets with finite-sample coverage guarantees. 
Let $(X_i,Y_i),~i=1,\ldots,n+1$ be a sequence of independent and identically distributed
(i.i.d.) data, where $X_i \in \mathcal{X} \subseteq \mathbb{R}^d$ is a feature vector and $Y_i\in \mathcal{Y} \subseteq \mathbb{R}$ is the response.
The first $n$ samples serve as calibration data, while $(X_{n+1},Y_{n+1})$ denotes a test point with $Y_{n+1}$ unobserved.
Conformal prediction constructs a prediction set $\widehat{C}(X_{n+1})$ for the test point that satisfies the following finite-sample marginal coverage guarantee
\begin{gather*}
    \mathrm{Pr}\bigl(Y_{n+1}\in\widehat{C}(X_{n+1})\bigr)\geq 1-\alpha\,,
\end{gather*}
for a specified level $1-\alpha\in(0,1)$. Split conformal prediction (SCP), one of the most common variants, uses a pre-trained conformity score function $v:\mathcal{X}\times\mathcal{Y}\to\mathbb{R}$.
For example, $v(x,y)=|y-\hat{\mu}(x)|$ with a pre-trained prediction model $\hat{\mu}(\cdot):\mathcal{X}\to\mathcal{Y}$.
SCP then yields the prediction set $\widehat{C}_{\rm SCP}(X_{n+1})=\left\{y:v(X_{n+1},y)\leq Q\left(1-\alpha;(n+1)^{-1}\left(\sum_{i=1}^{n}\delta_{v(X_i,Y_i)}+\delta_{\infty}\right)\right)\right\}$ that guarantees marginal coverage,
where $Q(1-\alpha;\cdot)$ denotes the $(1-\alpha)$-quantile of the distribution in the second argument, and $\delta_a$ is the point mass at $a$. This guarantee holds in finite samples and requires no assumptions beyond exchangeability \cite{vovk2005algorithmic}.

Marginal validity, however, is a population-level guarantee that can mask severe local failures. A procedure may achieve correct marginal coverage while systematically undercovering in certain regions of the feature space. This motivates conditional targets, such as test-conditional coverage,
\begin{gather*}
    \mathrm{Pr}\bigl(Y_{n+1}\in\widehat{C}(X_{n+1})\mid X_{n+1}=x\bigr)\,,
\end{gather*}
as well as other notions such as subgroup-conditional coverage in batch data \cite{jung2023batch} and coverage conditional on specific communities in graph data, which we refer to as community-conditional coverage (Section~\ref{sec: community conditional}). 
However, exact finite-sample distribution-free conditional coverage is generally impossible, particularly for the test-conditional target \cite{lei2013,foygel2021limits}.
The practical challenge is therefore to construct conformal prediction sets that {\it retain marginal validity} while {\it achieving meaningful approximate or asymptotic conditional coverage}.

For example, in the test-conditional setting for SCP, if the conditional cumulative distribution function (c.d.f.) of $v(X_{n+1},Y_{n+1})$ given $X_{n+1}=x$, denoted by $F_{v\mid X}(\cdot\mid x)$, were known, then the theoretical prediction set
\begin{gather}\label{eq:theo_CP}
    \widehat{C}_{\rm Theo}(X_{n+1})=\left\{y:v(X_{n+1},y)\leq Q\bigl(1-\alpha;F_{v\mid X}(\cdot\mid X_{n+1})\bigr)\right\}
\end{gather}
would achieve the desired test-conditional coverage property. 
In finite samples, however, this theoretical conditional quantile $Q\bigl(1-\alpha;F_{v\mid X}(\cdot\mid X_{n+1})\bigr)$ is not directly accessible. Existing conformal methods targeting conditional coverage can be broadly understood as approximating this quantile through two methodological routes.

The first route improves conditional performance by modifying the conformity score while retaining standard conformal calibration. Examples include the locally weighted residual score of \cite{lei2018distribution}, which rescales residuals by an estimate of the conditional mean absolute deviation to accommodate heteroscedasticity. Conformalized quantile regression \cite[CQR]{romano2019conformalized} constructs scores from fitted conditional quantiles, while distributional conformal prediction \cite[DCP]{Chernozhukov2021distribution} transforms a base score using an estimated conditional distribution. In these methods, conditional information enters primarily through the score rather than the calibration threshold. By contrast, a second route keeps a pre-trained or base score and focuses on approximating the theoretical conditional quantile in \eqref{eq:theo_CP} through different calibration strategies.
Localized conformal prediction \cite[LCP]{guan2023localized} uses local weights around $X_{n+1}$, Randomized LCP \cite[RLCP]{hore2025conformal} stabilizes this approximation via randomized local weights, and  Conditional Calibration \cite[CC]{gibbs2025conformal} estimates the conditional quantile directly through functional quantile regression. Thus, in this route, conditional information is incorporated mainly through the calibration threshold.


Despite recent progress, several gaps remain. 
First, existing methods for conditional coverage are largely developed and analyzed separately. As a result, there is no unified framework that explicitly characterizes the components driving the gap between the achieved conditional coverage and the target level $1-\alpha$, i.e., the {\it conditional miscoverage}, and therefore no clear account of when and why asymptotic conditional validity holds.  Second, conditional coverage remains underdeveloped for data with complex structure, such as graph-structured prediction \cite{lunde2025conformal}, where calibration exploits structural symmetry rather than sample exchangeability \cite{dobriban2025symmpi}. Third, when conditional coverage is the primary target, existing model selection approaches \cite{braun2025conditional, zhou2026conformal} recast the problem as a supervised classification task on hold-out data, leaving open whether the construction mechanism of conformal procedures can itself be used to guide selection without a separate hold-out set.


To bridge these gaps, we develop a unified theory of conditional coverage for conformal prediction. The theory applies across the main existing approaches and enables a range of applications and extensions.
The contributions of this paper are as follows.

1. We introduce a unified framework for conformal methods targeting conditional coverage through weighted conformal quantiles, expressing a broad class of existing procedures within this formulation. We then derive a non-asymptotic bound for conditional miscoverage, as well as an averaged version under weaker conditions. Together, these results clarify how existing procedures achieve asymptotic conditional validity and identify the key quantities that govern their error rates.

2. The unified theory provides methodological guidance when conditional coverage is the primary target. In particular, it offers a principled basis for conditional-coverage-oriented model selection, motivates a generalization of RLCP through a broader auxiliary reweighting scheme, and supports extensions to localized conformal $p$-values.
Leveraging the weighted framework, our theory adapts naturally to localization methods under covariate shift, enabling new procedures with asymptotic conditional coverage.

3. We extend the framework beyond i.i.d. data through a symmetry-based formulation inspired by SymmPI \cite{dobriban2025symmpi}. A weighted version handling distributional shifts under group actions is developed in the Supplementary Material. Concrete applications are developed for graph-structured data, targeting community-conditional coverage, and for hierarchical data.

\paragraph*{Related Work}


There is a substantial literature on conformal prediction targeting conditional coverage, encompassing score-based methods such as locally weighted residual scores \cite{lei2018distribution}, CQR \cite{romano2019conformalized}, DCP \cite{Chernozhukov2021distribution}, and GLCP \cite{min2025personalized}, as well as calibration-based methods such as LCP \cite{guan2023localized}, RLCP \cite{hore2025conformal}, and CC \cite{gibbs2025conformal}.  Recent unifying perspectives have clarified related aspects of the conformal landscape: \cite{angelopoulos2024theoretical}  studied theoretical foundations and asymptotic behavior of conformal prediction; \cite{barber2026unifying} offered a partial-information conditioning viewpoint; and  \cite{dobriban2025symmpi} extended marginal validity theory to structured data. In contrast, our framework is specifically directed at conditional coverage, provides a non-asymptotic decomposition that applies across a broad class of existing methods, and yields new procedures and extensions not covered by prior work.

For model selection, existing work mostly focused on  efficiency while preserving marginal coverage \cite{yang2025selection,liang2024conformal}. Recent works such as \cite{braun2025conditional} and \cite{zhou2026conformal} study conditional-coverage-aware selection by framing coverage evaluation as a supervised learning problem on hold-out data. Our selection criterion is complementary: it is derived directly from the conditional-miscoverage decomposition, targets the construction mechanism of conformal procedures, and does not require a separate hold-out set.

The rest of the paper is organized as follows. Section~\ref{sec: general framework} introduces the unified framework. Section~\ref{sec: Unified Theory of Conditional Coverage} develops the main theory, including the non-asymptotic error bound for conditional miscoverage and averaged conditional miscoverage. Section~\ref{sec:application} presents a wide range of applications and extensions. Numerical experiments are reported in Section~\ref{sec: experimental simulation}, and concluding remarks are in Section~\ref{sec: conclusion}. Further technical details, theoretical proofs, and additional applications and experiments are deferred to the Supplementary Material.

\section{A Unified Framework}\label{sec: general framework}

In this section, we introduce a unified framework for conformal methods targeting conditional coverage via weighted conformal quantiles, encompassing a broad class of existing approaches.
Let $Z=\{(X_1,Y_1),\ldots,(X_n,Y_n),(X_{n+1},Y_{n+1})\}$ denote the collection of calibration and test data.
We consider a conformity score of the form, $s(x,y;z)$, to indicate its dependence on the data $z$.  We first state the following assumptions on the data distribution and the score construction, which are standard in the conformal literature.

\begin{assumption}\label{ass: data_distribution_and_score_invariance}
    (i) The calibration pairs $\{(X_i,Y_i)\}_{i\in[n]}$ and the test pair $(X_{n+1},Y_{n+1})$ are i.i.d.~from $P_{X}\times P_{Y\mid X}$; (ii) For $z=\{(x_1,y_1),\ldots,(x_n,y_n),(x_{n+1},y_{n+1})\}$, $s(x,y;z)$ is permutation invariant in $\{(x_i,y_i)\}_{i\in[n+1]}$.
\end{assumption}

Given a trial response value $y$, denote $Z^y=\{(X_1,Y_1),\ldots,(X_n,Y_n),(X_{n+1},y)\}$. We introduce the following conformal prediction set
\begin{align}
    \widehat{C}(X_{n+1}) = \left\{ y:\ s(X_{n+1},y;Z^y)\leq q(Z^y;\alpha)\right\}\,,\label{eq: general Cvec}
\end{align}
where $$q(Z^y;\alpha) = Q\left( 1-\alpha; \left\{\sum_{i=1}^{n+1}w(X_i)\right\}^{-1}\left\{\sum_{i=1}^{n}w(X_i)\delta_{s(X_i,Y_i;Z^y)}+w(X_{n+1})\delta_{\infty}\right\} \right)$$ is a ``weighted'' conformal quantile, and $w(x)$ can be either a fixed or a random function. 

The form of $\widehat{C}(X_{n+1})$ is introduced primarily to unify conformal methods targeting conditional coverage, though it is closely related to weighted conformal prediction, which was originally proposed to address covariate shift \cite[WCP]{tibshirani2019conformal}. 
For score-based methods, such as CQR and split DCP, $s(X_{n+1},y;Z^y)$ takes the form of $v(X_{n+1},y)$ and $w(\cdot)\equiv 1$, where $v(x,y)$ is a pre-trained score that does not depend on $Z^y$. For calibration-based methods, the score and weight functions vary case by case. As an example, the localized conformal prediction without marginal-coverage adjustment, referred to as BaseLCP in \cite{hore2025conformal}, is equivalent to using the score $s(X_{n+1},y;Z^y)=v(X_{n+1},y)$ and the weight function $w(x)=K(x,X_{n+1};h)$, where $K(\cdot,\cdot;h)$ is a kernel function with bandwidth $h$. In contrast, the standard LCP method proposed by \cite{guan2023localized} can be written in the form of \eqref{eq: general Cvec} with $s(x,y;Z)=\{\sum_{j=1}^{n+1}K(x,X_j;h)\mathbbm{1}(v(X_j,Y_j)\leq v(x,y))\}/\{\sum_{j=1}^{n+1}K(x,X_j;h)\}$ and $w(\cdot)\equiv 1$, while RLCP \cite{hore2025conformal} uses $s(X_{n+1},y;Z^y)=v(X_{n+1},y)$ and $w(x)=K(x,\widetilde X;h)$ with $\widetilde{X}$ drawn from a distribution with density proportional to $K(\cdot,X_{n+1};h)$ conditional on $X_{n+1}$.



The marginal coverage of $\widehat{C}(X_{n+1})$ is established in the next theorem.
\begin{theorem}\label{theo: marginal general}
    Suppose Assumption~\ref{ass: data_distribution_and_score_invariance} holds. For $X\sim P_X$, define $\mathcal{W}_X=\{w_0(\cdot)\geq 0:\mathbb{E}\{w_0(X)\mid w_0\}=1 \text{ and }\mathbb{E}\{w_0(x)\}=1 \text{ for all } x\in\mathcal{X}\}$, where $\mathbb{E}\{w_0(X)\mid w_0\}$ is the expectation taken over $X\sim P_X$, and $\mathbb{E}\{w_0(x)\}$ is the expectation taken over the possible randomness of $w_0$ for a fixed $x\in\mathcal{X}$. 
    Then
    \begin{align*} 
        1-\alpha-\inf_{w_0\in\mathcal{W}_X}\mathbb{E}\left\{ d_Z(w,w_0) \right\}&\leq \mathrm{Pr}\left( Y_{n+1}\in \widehat{C}(X_{n+1}) \right)\\
        &\leq 1-\alpha+\inf_{w_0\in\mathcal{W}_X}\mathbb{E}\left\{ d_Z(w,w_0) \right\}+\mathbb{E}\left\{ \widehat{w}_{\max}(Z) \right\},
    \end{align*}
    where $d_Z(w,w_0)=\sum_{j=1}^{n+1}\bigl|\bigl\{ \sum_{i=1}^{n+1}w(X_i) \bigr\}^{-1}w(X_j)-\bigl\{ \sum_{i=1}^{n+1}w_0(X_i) \bigr\}^{-1}w_0(X_j)\bigr|$, 
    $\widehat{w}_{\max}(Z)=\sup_{c}\bigl\{ \sum_{j=1}^{n+1}w(X_j) \bigr\}^{-1}\bigl\{ \sum_{i=1}^{n+1}w(X_i)\mathbbm{1}(s(X_i,Y_i;Z)=c) \bigr\}$.
\end{theorem}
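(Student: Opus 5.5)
Fix an arbitrary $w_0\in\mathcal{W}_X$ and compare the actual procedure with an oracle that uses the normalized weights $p^0_j=w_0(X_j)/\sum_i w_0(X_i)$ in place of $p_j=w(X_j)/\sum_i w(X_i)$. Throughout write $S_i=s(X_i,Y_i;Z)$; by Assumption~\ref{ass: data_distribution_and_score_invariance}(ii) these are the scores on the full data $Z=Z^{Y_{n+1}}$. The first step is the standard weighted-conformal reduction: $Y_{n+1}\in\widehat C(X_{n+1})$ if and only if $S_{n+1}\le Q(1-\alpha;\sum_{j=1}^{n+1}p_j\delta_{S_j})$. Replacing $\delta_\infty$ by $\delta_{S_{n+1}}$ does not change whether $S_{n+1}$ lies below the quantile. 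Equivalently, coverage holds iff $\sum_j p_j\mathbbm{1}(S_j<S_{n+1})<1-\alpha$, up to the usual convention for the quantile. The planned route is to condition on the unordered multiset of data points and on $w,w_0$, compute the conditional probability that the test index is a given $j$, and compare it to $p_j$.

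\textbf{Key exchangeability step.} Condition on the multiset $\{(X_i,Y_i)\}_{i\in[n+1]}$ and on the (independent) random functions $w,w_0$. Under i.i.d.\ sampling all orderings are equally likely, so the test point is uniform over the $n+1$ values. The scores are permutation invariant, so the values $S_j$ and the normalized weights depend only on which point is labelled $n+1$. This uniform reweighting does not match $p^0_j$. The role of the conditions $\mathbb{E}\{w_0(X)\mid w_0\}=1$ and $\mathbb{E}\{w_0(x)\}=1$ is to make $w_0$ act as a valid likelihood-ratio-type tilt, and I expect this to be the main obstacle. I would first try to show directly that
\[
\mathbb{E}\Bigl\{\sum_{j}p^0_j\,\mathbbm{1}\bigl(S_{n+1}\le Q(1-\alpha;\textstyle\sum_k p^0_k\delta_{S_k})\bigr)\text{ evaluated at }j\Bigr\}
\]
equals the oracle coverage. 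To get there, introduce an auxiliary index $J$ drawn with probability $p^0_j$, and show that $(Z,J)$ has the same law as $(Z,n+1)$ after swapping. The swap is justified by integrating $w_0$ over its own randomness using $\mathbb{E}\{w_0(x)\}=1$ and over $X$ using $\mathbb{E}\{w_0(X)\mid w_0\}=1$, in the spirit of the RLCP argument of \cite{hore2025conformal}. If exactness fails, I would settle for an inequality absorbed into the same $d_Z$ term.

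\textbf{Deterministic comparison.} With the oracle representation in hand, for any weights $p$ define $F_p(t)=\sum_j p_j\mathbbm{1}(S_j\le t)$ and the events $E_j=\{S_j\le Q(1-\alpha;F_p)\}$. Deterministically, $1-\alpha\le\sum_j p_j\mathbbm{1}(E_j)\le 1-\alpha+\max_c\sum_j p_j\mathbbm{1}(S_j=c)$; the upper bound holds because only the atom at the quantile can overshoot. Applying this with $p=p^0$ in the oracle representation and then transferring from $p^0$ to $p$ costs at most $\sum_j|p_j-p^0_j|=d_Z(w,w_0)$. A second transfer, from the $p$-quantile to the $p^0$-quantile inside the indicator, is controlled by the same total-variation quantity. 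That transfer is one-sided: it is needed only for the lower bound, while the upper bound picks up the atom term $\widehat w_{\max}(Z)$. Taking expectations gives the two inequalities for this $w_0$, and taking the infimum over $w_0\in\mathcal{W}_X$ completes the proof. Care is needed to make sure ties are handled so that the atom term appears only in the upper bound.
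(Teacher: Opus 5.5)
Your skeleton (drop $\delta_\infty$, introduce an index $J$ with $\mathrm{Pr}(J=j)=p^0_j$, apply the deterministic sandwich, pay $d_Z(w,w_0)$ to change weights, take the infimum) matches the paper's. However, the step you flag as the main obstacle is where the proof actually lives, and as you set it up it fails.

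You condition on $w,w_0$ as random functions \emph{independent} of the data. Then, given $(\mathrm{set}(Z),w,w_0)$, the test index is uniform on $[n+1]$. Integrating over the randomness of $w_0$ using $\mathbb{E}\{w_0(x)\}=1$ and $\mathbb{E}\{w_0(X)\mid w_0\}=1$ cannot turn uniform into $p^0_j$. The identity $\mathrm{Pr}\{S_{n+1}\le Q(1-\alpha;F_p)\}=\mathbb{E}\{\sum_j p^0_j\mathbbm{1}(S_j\le Q(1-\alpha;F_p))\}$ fails in general, and no error term of size $d_Z$ repairs it. Here is a counterexample. Take $X\in\{a,b\}$ uniform, $Y\mid X=a\sim U(0,1)$, $Y\mid X=b\sim U(10,11)$, $s(x,y)=y$. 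Let $w=w_0$, with $(w_0(a),w_0(b))$ equal to $(2-\epsilon,\epsilon)$ or $(\epsilon,2-\epsilon)$ with probability $1/2$ each, independent of $Z$. Both conditions defining $\mathcal W_X$ hold and $d_Z(w,w_0)=0$. Yet with $\alpha=0.1$, small $\epsilon$ and large $n$, the coverage is about $0.7$: when the weights favour $a$ and $X_{n+1}=b$, the threshold falls in $(0,1)$ and the test point is never covered. So the independence you assume is incompatible with the bound you are trying to prove, and your fallback of absorbing the discrepancy into $d_Z$ cannot work.

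The missing idea, which the paper's proof supplies, is a particular coupling of $w_0$ with the test point. Draw $w_0$ first, keep the calibration pairs independent of it, and let $X_{n+1}\mid w_0\sim w_0(x)\,dP_X(x)$ and $Y_{n+1}\mid X_{n+1}\sim P_{Y\mid X}$. The two conditions in $\mathcal W_X$ are exactly what make this legitimate:
\begin{itemize}
\item $\mathbb{E}\{w_0(X)\mid w_0\}=1$ makes $w_0\,dP_X$ a probability measure;
\item $\mathbb{E}\{w_0(x)\}=1$ makes the marginal law of $X_{n+1}$ equal to $P_X$, so $Z$ keeps its i.i.d.\ law.
\end{itemize}
Relative to the product of the law of $w_0$ and $(P_X\times P_{Y\mid X})^{n+1}$, the joint density of $(w_0,Z)$ is $w_0(X_{n+1})$. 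Bayes' rule then says that, given $(w_0,\mathrm{set}(Z))$, the test point equals a given element $(x,y)$ of $\mathrm{set}(Z)$ with probability $w_0(x)/\sum_i w_0(X_i)$. Permutation invariance of $s$ yields $S_{n+1}\mid(w_0,\mathrm{set}(Z))\overset{\rm d}{=}S_J\mid(w_0,\mathrm{set}(Z))$. Since the threshold is computed with $w$, you also need $w$ to be conditionally independent of the test index given $(w_0,\mathrm{set}(Z))$, e.g.\ $w$ a function of $w_0$.

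This is the RLCP mechanism. There $\widetilde X$ is generated from $X_{n+1}$, and $X_{n+1}\mid\widetilde X$ has law $\bar w(x)\,dP_X(x)$ with $\bar w=K(\cdot,\widetilde X;h)/K_X(\widetilde X)$. It is a conditioning identity, not an averaging one.

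With this in hand, your deterministic part also simplifies. Keep the actual threshold $Q(1-\alpha;F_p)$ in the indicator and apply the sandwich with the actual weights $p$; this is where $\widehat w_{\max}(Z)$ comes from. Then switch the outer weights from $p^0$ to $p$ at cost $d_Z(w,w_0)$. The detour through the $p^0$-quantile is unnecessary, and the atom term automatically appears only in the upper bound.
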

   
The class $\mathcal W_X$ collects normalized weighting schemes whose average normalized mass assigned to each fixed covariate value is one. Such weights preserve the label symmetry required for marginal conformal validity. The term $d_Z(w,w_0)$ measures how far the actual normalized weights are from an admissible marginally valid weighting scheme.
The terms $\mathbb{E}\left\{ \widehat{w}_{\max}(Z) \right\}$ and
$\inf_{w_0\in\mathcal{W}_X}\mathbb{E}\{d_Z(w,w_0)\}$ jointly quantify the marginal coverage gap. The former depends mainly on the dispersion of the weights $\{w(X_i)\}_{i\in[n+1]}$. When $w(\cdot) \equiv 1$ and there are no ties among the scores, $\widehat{w}_{\max}(Z)=(n+1)^{-1}$. To determine whether the gap induced by $\inf_{w_0\in\mathcal{W}_X}\mathbb{E}\{d_Z(w,w_0)\}$ is zero, it entails examining whether $\sup_{x\in\mathcal{X}}|\mathbb{E}\{\bar{w}(x)\}-1|=0$, where $\bar{w}(x)=w(x)/\mathbb{E}\{w(X)\mid w\}$ for $X\sim P_X$.  
When $w(\cdot)\equiv 1$, we have $\sup_{x\in\mathcal{X}}|\mathbb{E}\{\bar{w}(x)\}-1|=0$. If, in addition, there are no ties, the marginal coverage reduces to that of the classical full-conformal prediction set, i.e., $[1-\alpha,\;1-\alpha+1/(n+1)]$.
For BaseLCP, $w(x)=K(x,X_{n+1};h)$ and $\bar{w}(x)=w(x)/K_X(X_{n+1})$, where $K_X(x)=\mathbb{E}\{K(x,X;h)\}$. It can be shown that $\mathbb{E}\{\bar{w}(x)\}$ is not identically equal to $1$ when $K_X(X_{n+1})$ is not constant. Thus, $\inf_{w_0\in\mathcal{W}_X}\mathbb{E}\{d_Z(w,w_0)\}>0$ for BaseLCP, leading to a nonzero marginal coverage gap. 
In RLCP, $w(x)=K(x,\widetilde{X};h)$ and $\bar{w}(x)=w(x)/K_X(\widetilde{X})$. A direct integration yields $\mathbb{E}\{\bar{w}(x)\}=1$ for all $x\in\mathcal{X}$. Hence, RLCP guarantees marginal coverage. A detailed discussion of marginal validity can be found in Section~\ref{sec: discussion of marginal} of the Supplementary Material.

Under this unified framework, the conditional coverage is governed by the relationship between the target theoretical conditional quantile and the ``weighted'' conformal quantile $q(Z^y;\alpha)$. Writing $\widehat{C}(X_{n+1})$ in the form of \eqref{eq: general Cvec} is therefore key to a generic analysis of conditional validity across different approaches. 
 
\section{Unified Theory of Conditional Coverage}\label{sec: Unified Theory of Conditional Coverage}
Under the unified framework introduced in Section~\ref{sec: general framework}, achieving conditional coverage is governed by two design choices: the score and the weighting scheme $w(\cdot)$. 
These perspectives raise a fundamental question: how do different design choices affect conditional coverage, and when can asymptotic conditional validity be achieved?
To answer this question, we develop a unified theory that quantifies conditional miscoverage through an error decomposition applicable across different conformal methods.

Marginal validity is typically a baseline requirement in conformal prediction. Consequently, we focus on weighting schemes that preserve marginal coverage, as characterized in Section~\ref{sec: general framework}. The conditional-miscoverage theory developed in this section, however, applies to a general weighting function $w(\cdot)$ and is not restricted to marginally valid weights.

We consider a unified formulation of the conditioning event.
Let $\mathcal{T}$ be an index set and let $\phi:\mathcal{T}\to\mathcal{A}$ map each $t\in\mathcal{T}$ to a measurable event $\phi(t)\in\mathcal{A}$. The conditional coverage of interest is $$\mathrm{Pr}\bigl(Y_{n+1}\in\widehat{C}(X_{n+1})\mid \phi(t)\bigr).$$ 
For example, when test-conditional coverage is considered, $\phi(t)=\{X_{n+1}=t\}$ with $\mathcal{T}=\mathcal{X}$. 
For group-conditional coverage, one may take $\mathcal{T}=\{0,1\}$ and $\phi(t)=\{\mathbbm{1}(X_{n+1}\in B)=t\}$ for some fixed $B\subset\mathcal{X}$.
We associate the conditioning event $\phi(T)$ with a variable $T=T(X_{n+1})\in\mathcal T$, which is independent of the calibration data $\{(X_i,Y_i)\}_{i\in[n]}$. In the test-conditional setting, one takes $T=X_{n+1}$, whereas in the group-conditional setting, $T=\mathbbm{1}(X_{n+1}\in B)$.
Let $P_T$ be the law of $T(X)$ for $X\sim P_X$. 
Given $w(\cdot)$, define $P_{T,w}$ to be the law of $T(X_{w})$, where $X_{w}\sim P_{X,w}$ and $P_{X,w}$ has density proportional to $w(x)dP_X(x)$.
Let $r_{w}=dP_{T,w}/dP_{T}$ whenever $P_{T,w}\ll P_{T}$. In the test-conditional case $T=X_{n+1}$, $r_{w}$ is proportional to $w$.

Our theory is built around an ``oracle'' score $s^\star:\mathcal{X}\times \mathcal{Y}\to\mathbb{R}$, which serves as the population target of the learned score $s(x,y;z)$. 
\begin{assumption}\label{ass: pointwise_convergence}
    There exist a fixed population-level score function $s^\star(\cdot,\cdot)$ and a function $\delta_n(\varepsilon)$ such that $\mathrm{Pr}\left( |s(x,y;Z)-s^\star(x,y)|>\varepsilon \right)\leq\delta_n(\varepsilon)$ for any $x\in\mathcal{X}$ and $y\in\mathcal{Y}$. 
\end{assumption}
A key role of $s^\star$ is to remove the dependence on the full dataset $Z$. This oracle baseline highlights that the central problem is to quantify conditional coverage by comparing the learned score with its oracle counterpart. 
When the score is trained on an independent dataset 
as in split conformal, CQR, and RLCP, the oracle score can be obtained by letting the training sample size go to infinity.
For methods such as LCP and CC, a pre-trained base score $v(x,y)$ may be used, but the final score depends on the calibration sample through an estimated conditional distribution or conditional quantile. In this case, the oracle score is the population-level version of the learned score. For example, in LCP one may define $s^\star(x,y)=F_{v\mid X}(v(x,y)\mid x)$. 


Let $q^\star(Z;\alpha)= Q\bigl( 1-\alpha; \bigl\{\sum_{i=1}^{n+1}w(X_i)\bigr\}^{-1}\bigl\{\sum_{i=1}^{n}w(X_i)\delta_{s^\star(X_i,Y_i)}+w(X_{n+1})\delta_{\infty}\bigr\} \bigr)$ denote the \textit{oracle weighted conformal quantile} obtained by replacing $s$ with $s^\star$ in $q(Z;\alpha)$. Denote $F_{s^\star\mid \phi(t)}$ as the conditional distribution of $s^\star(X_{n+1},Y_{n+1})$ given $\phi(t)$. Then $Q(1-\alpha;F_{s^\star\mid \phi(t)})$ is the theoretical threshold associated with conditional coverage on the event $\phi(t)$ under the oracle score $s^\star$, whereas $q^\star(Z;\alpha)$ is the threshold produced by weighted conformal calibration over $\{s^\star(X_i,Y_i)\}_{i\in[n]}$.
We refer to $F_{s^\star\mid \phi(t)}$ and $Q(1-\alpha;F_{s^\star\mid \phi(t)})$ as the \textit{oracle conditional distribution} and \textit{oracle conditional quantile}, respectively, associated with $s^\star$.
In the test-conditional case, the oracle threshold reduces to the $(1-\alpha)$-quantile of the conditional distribution of $s^\star(X_{n+1},Y_{n+1})$ given $X_{n+1}=t$.
Later we will show that the discrepancy between $Q(1-\alpha;F_{s^\star\mid \phi(t)})$ and $q^\star(Z;\alpha)$ plays an important role in
conditional coverage. This viewpoint is analogous in spirit to Theorem~5.1 of \cite{angelopoulos2024theoretical}, but is adapted to the $w$-weighted representation under a conditional perspective. 



The following stability condition is imposed to control the influence of replacing a single observation in $Z$ on the score $s(x,y;Z)$
\cite{bian2023training, liang2025algorithmic}. 
\begin{assumption}[probabilistic uniform stability]\label{ass: stable score}
    For each $j\in[n+1]$ and any two replacement samples $z_{j,1},z_{j,2}\in\mathcal{X}\times\mathcal{Y}$, let $Z_{j,1}$ and $Z_{j,2}$ be the datasets obtained by replacing $(X_j,Y_j)$ in $Z$ with $z_{j,1},z_{j,2}$, respectively. 
    Assume there exists $\widetilde{\delta}_n\in[0,1)$ and $\widetilde{\varepsilon}_n>0$ such that for every $t\in\mathcal{T}$, 
    \begin{gather*}
        \mathrm{Pr}\Bigl( \sup_{z_{j,1},z_{j,2}:j\in[n+1]}\sup_{x,y}|s(x,y;Z_{j,1})-s(x,y;Z_{j,2})|\leq \widetilde{\varepsilon}_n\mid \phi(t) \Bigr)\geq 1-\widetilde{\delta}_n\,.
    \end{gather*}
\end{assumption}
When $s(x,y;Z)=v(x,y)$, 
$Z$ no longer affects the score function, and thus $\widetilde{\varepsilon}_n=\widetilde{\delta}_n=0$.
When $s(x,y;Z)$ depends on $Z$ and has the uniform stability property, $\widetilde{\varepsilon}_n=O(n^{-1})$ and $\widetilde{\delta}_n=0$ \cite{feldman2018generalization}. In contrast, 
for the kernel-estimation-based score constructed from $Z$ as in LCP, it can still have probabilistic uniform stability, where $\widetilde{\varepsilon}_n$ scales inversely with the effective local sample size, and $\widetilde{\delta}_n$ decays exponentially in the effective local sample size under mild conditions \cite{chen2019nearest}.

Denote by $F_{w\circ s^\star}(\cdot)$ the distribution of $s^\star(X_w,Y)$ given $w$, where $(X_w,Y)\sim P_{X,w}\times P_{Y\mid X}$. When $w(\cdot)\equiv 1$, this notation refers to the marginal distribution of $s^\star(X_{n+1},Y_{n+1})$. We impose the following mild regularity condition.

\begin{assumption}
\label{ass: oracle quantile regularity}
    There exist positive constants $M_{s^\star}$, $\underline{L}_{s^\star}$ and $L_t$ such that: (i) $|s^\star(X, Y)|\leq M_{s^\star}$; (ii) $F_{w\circ s^\star}$ has a density lower bounded by $\underline{L}_{s^\star}$ on its support; and (iii) for every $t\in\mathcal{T}$, $F_{s^\star\mid \phi(t)}$ is $L_t$-Lipschitz continuous.
\end{assumption}

The next theorem gives the main decomposition of conditional miscoverage.

\begin{theorem}\label{theo: main miscoverage}
    Suppose Assumptions~\ref{ass: data_distribution_and_score_invariance}--\ref{ass: oracle quantile regularity} hold. For each $t\in\mathcal{T}$, assume $s^\star(X_{n+1},Y_{n+1})$ is independent of $q^\star(Z;\alpha)$ conditional on $\phi(t)$. 
    Moreover, assume that $0\leq w(X_1)\leq M_w$ almost surely, and define $B_w = \mathbb{E}\{w(X_1)\mid w\}$ and $\sigma_w^2 = \mathbb{E}\{w^2(X_1)\mid w\}$, where $M_w$, $B_w$ and $\sigma_w^2$ are allowed to depend on $n$ and may vanish or diverge as $n\to\infty$.
    Then, for each $t\in\mathcal{T}$, there exists a constant $C_t > 0$ such that
    \begin{align}
        \notag&\left|\mathrm{Pr}\left( Y_{n+1}\in\widehat{C}(X_{n+1})\mid \phi(t) \right)-(1-\alpha)\right|\\
        \notag \leq& C_t \left\{ \varepsilon+n\delta_n(\varepsilon)+\widetilde{\varepsilon}_n+\widetilde{\delta}_n \right\}
        + C_t \mathbb{E}\left\{ \Gamma_n(w)\mid \phi(t) \right\} \\
        & +C_t \mathbb{E}\left\{\left|Q(1-\alpha;F_{s^\star\mid \phi(t)})-Q\left( 1-\alpha;F_{w\circ s^\star } \right)\right|\mid \phi(t) \right\},\label{eq: independent pointwise bound 1}
    \end{align}
    where $\Gamma_n(w)=(\sigma_w B_w^{-1}n^{-1/2}+M_w B_w^{-1}n^{-1})\sqrt{\log(n)}$. 
    Let $T_0$ be drawn independently from $P_{T}$. Then there exists a constant $C^\prime>0$ such that the last item in \eqref{eq: independent pointwise bound 1} can be further bounded by
    \begin{gather}
        C^\prime \mathbb{E}\left\{ r_w(T_0)\left|Q(1-\alpha;F_{s^\star\mid \phi(t)})-Q\left(1-\alpha;F_{s^\star\mid \phi(T_0)}\right)\right|\mid \phi(t) \right\}\,.\label{eq: independent pointwise bound 2}
    \end{gather}
\end{theorem}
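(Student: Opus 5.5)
The plan is a chain of three comparisons: learned score $s$ versus the oracle score $s^\star$, weighted empirical quantile $q^\star(Z;\alpha)$ versus its population counterpart $Q(1-\alpha;F_{w\circ s^\star})$, and that population quantile versus the oracle conditional quantile $Q(1-\alpha;F_{s^\star\mid\phi(t)})$. Each comparison is converted into a coverage error through the Lipschitz constant $L_t$ from Assumption~\ref{ass: oracle quantile regularity}(iii).

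\textbf{Step 1: replace the learned score by the oracle score.} First I would put $s(X_{n+1},y;Z^y)$ at $y=Y_{n+1}$, so that $Z^y=Z$. Then the event $\{Y_{n+1}\in\widehat C(X_{n+1})\}$ becomes $\{s(X_{n+1},Y_{n+1};Z)\le q(Z;\alpha)\}$. Next, on the event where $|s(X_i,Y_i;Z)-s^\star(X_i,Y_i)|\le\varepsilon$ for all $i\in[n+1]$, we have $q^\star(Z;\alpha)-\varepsilon\le q(Z;\alpha)\le q^\star(Z;\alpha)+\varepsilon$, because weighted quantiles are monotone and shift-equivariant.

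There is a subtlety: Assumption~\ref{ass: pointwise_convergence} is pointwise in $(x,y)$, while the points $(X_i,Y_i)$ enter $Z$ itself. This is where Assumption~\ref{ass: stable score} comes in. Replace $(X_j,Y_j)$ by an independent copy to decouple the score from the evaluation point; this costs $\widetilde\varepsilon_n$ on an event of probability at least $1-\widetilde\delta_n$ given $\phi(t)$. Then apply the pointwise bound and a union bound over the $n+1$ points, which gives the $n\delta_n(\varepsilon)$ term. On the good event,
\[
\{s^\star(X_{n+1},Y_{n+1})\le q^\star(Z;\alpha)-c\eta\}\subseteq\{Y_{n+1}\in\widehat C(X_{n+1})\}\subseteq\{s^\star(X_{n+1},Y_{n+1})\le q^\star(Z;\alpha)+c\eta\},
\]
where $\eta=\varepsilon+\widetilde\varepsilon_n$.

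\textbf{Step 2: condition on $q^\star(Z;\alpha)$.} By the assumed conditional independence given $\phi(t)$, the conditional probability of $\{s^\star(X_{n+1},Y_{n+1})\le q^\star(Z;\alpha)\pm c\eta\}$ given $\phi(t)$ equals $\mathbb{E}\{F_{s^\star\mid\phi(t)}(q^\star(Z;\alpha)\pm c\eta)\mid\phi(t)\}$. Since $F_{s^\star\mid\phi(t)}$ is $L_t$-Lipschitz, the $\pm c\eta$ shift costs $L_t c\eta$. We also have $F_{s^\star\mid\phi(t)}(Q(1-\alpha;F_{s^\star\mid\phi(t)}))=1-\alpha$ by continuity. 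Hence the coverage error is bounded by
\[
L_t\,\mathbb{E}\bigl\{|q^\star(Z;\alpha)-Q(1-\alpha;F_{s^\star\mid\phi(t)})|\mid\phi(t)\bigr\},
\]
plus the Step~1 terms. Applying the triangle inequality through $Q(1-\alpha;F_{w\circ s^\star})$ leaves two pieces. The piece $|Q(1-\alpha;F_{s^\star\mid\phi(t)})-Q(1-\alpha;F_{w\circ s^\star})|$ is the third term of \eqref{eq: independent pointwise bound 1}.

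\textbf{Step 3: concentration of the weighted quantile (the main obstacle).} We need
\[
\mathbb{E}\bigl\{|q^\star(Z;\alpha)-Q(1-\alpha;F_{w\circ s^\star})|\mid\phi(t)\bigr\}\lesssim\Gamma_n(w)+\text{lower-order terms}.
\]
Conditional on $w$, set
\[
\widehat G(u)=\frac{\sum_{i\le n}w(X_i)\mathbbm 1\{s^\star(X_i,Y_i)\le u\}}{\sum_{i\le n+1}w(X_i)}.
\]
The numerator averages i.i.d.\ terms with mean $B_wF_{w\circ s^\star}(u)$, variance at most $\sigma_w^2$, and bound $M_w$. The denominator concentrates around $(n+1)B_w$, and the test-point term contributes at most $M_w/(nB_w)$.

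I would apply Bernstein's inequality on a grid of $O(n)$ points in $u$, with a union bound and monotonicity to pass between grid points. This gives
\[
\sup_u|\widehat G(u)-F_{w\circ s^\star}(u)|\lesssim\Gamma_n(w)
\]
with probability at least $1-n^{-1}$; the $\sqrt{\log n}$ factor comes from this union bound. The density lower bound $\underline L_{s^\star}$ then inverts the c.d.f.\ deviation into a quantile deviation. On the failure event, the deviation is bounded by $2M_{s^\star}$ times $n^{-1}$, which is absorbed. The delicate points are handling random $w$ (conditioning on $w$ and on $\phi(t)$) and keeping the constants free of $n$ while $M_w$, $B_w$ and $\sigma_w$ are allowed to vary.

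\textbf{Step 4: the bound \eqref{eq: independent pointwise bound 2}.} Write
\[
F_{w\circ s^\star}(u)=\int F_{s^\star\mid\phi(t_0)}(u)\,r_w(t_0)\,dP_T(t_0),
\]
a mixture of the conditional distributions with weights $r_w$. Set $q_t=Q(1-\alpha;F_{s^\star\mid\phi(t)})$. Then
\[
\bigl|F_{w\circ s^\star}(q_t)-(1-\alpha)\bigr|\le\int r_w(t_0)\,\bigl|F_{s^\star\mid\phi(t_0)}(q_t)-F_{s^\star\mid\phi(t_0)}(q_{t_0})\bigr|\,dP_T(t_0)\le\mathbb{E}\bigl\{r_w(T_0)L_{T_0}|q_t-q_{T_0}|\bigr\}.
\]
This uses $F_{s^\star\mid\phi(t_0)}(q_{t_0})=1-\alpha$, which holds by Lipschitz continuity. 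Finally, the density lower bound converts the c.d.f.\ gap back into a quantile gap, costing a factor $\underline L_{s^\star}^{-1}$. This step requires $\sup_{t_0}L_{t_0}<\infty$, or else absorbing $L_{T_0}$ into $C'$.
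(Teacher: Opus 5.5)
Your architecture matches the paper's proof. It uses score replacement on a good event, then conditional independence to turn coverage into $\mathbb{E}\{F_{s^\star\mid\phi(t)}(q^\star(Z;\alpha)\pm 2\eta)\mid\phi(t)\}$, then the triangle inequality through $Q(1-\alpha;F_{w\circ s^\star})$, then a weighted-quantile concentration bound. For \eqref{eq: independent pointwise bound 2} it uses the $r_w$-mixture plus the density lower bound; the paper likewise relies there on a uniform Lipschitz constant $\overline{L}_{s^\star}$ that Assumption~\ref{ass: oracle quantile regularity} does not supply.

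The one step that does not go through as written is the decoupling in Step~1. If you resample only $(X_j,Y_j)$, then for $j\le n$ the perturbed data set $Z_j'$ still contains the original test point. So, conditionally on $\phi(t)$ (for instance $X_{n+1}=t$), $Z_j'$ is not distributed as $n+1$ i.i.d.\ draws from $P_X\times P_{Y\mid X}$. Assumption~\ref{ass: pointwise_convergence} is an unconditional statement about $Z$ at a fixed $(x,y)$. It therefore does not control $\mathrm{Pr}(|s(X_j,Y_j;Z_j')-s^\star(X_j,Y_j)|>\varepsilon\mid\phi(t))$, and the per-point bound feeding your union bound is unavailable. The paper fixes this by resampling the $j$-th and the $(n+1)$-th entries simultaneously, so the data argument is independent of both the evaluation point $(X_j,Y_j)$ and the conditioning event. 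The extra swap is paid for by Assumption~\ref{ass: stable score}, which is stated conditionally on $\phi(t)$, at the cost of constant factors on $\widetilde\varepsilon_n$ and $\widetilde\delta_n$.

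Step~3 needs two smaller repairs.

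\emph{Atom at infinity.} $q^\star(Z;\alpha)$ has an atom at $+\infty$ of mass $w(X_{n+1})/\sum_{i=1}^{n+1}w(X_i)$. On your failure event, $|q^\star(Z;\alpha)-Q(1-\alpha;F_{w\circ s^\star})|$ is therefore not bounded by $2M_{s^\star}$ and can be infinite. You need the truncation that comes from $|F_{s^\star\mid\phi(t)}(a)-F_{s^\star\mid\phi(t)}(b)|\le\min\{L_t|a-b|,1\}$. That is, work with $\mathbb{E}\{|q^\star(Z;\alpha)-Q(1-\alpha;F_{w\circ s^\star})|\wedge L_t^{-1}\mid\phi(t)\}$, which the paper carries throughout.

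\emph{Extra log factor.} A grid-plus-Bernstein union bound at confidence $1-n^{-1}$ yields $\sigma_wB_w^{-1}\sqrt{\log(n)/n}+M_wB_w^{-1}n^{-1}\log(n)$. The $M_w$-term carries $\log(n)$ rather than $\sqrt{\log(n)}$, so you do not literally recover $\Gamma_n(w)$ when $M_w$ is large compared with $\sigma_w\sqrt{n/\log(n)}$. The paper's Lemma~\ref{theo: weighted quantile error} avoids the union bound:
\begin{itemize}
\item The numerator is controlled by a weighted DKW inequality whose leading term $\sigma_w\sqrt{\log(n)/n}$ comes from the VC Rademacher bound with the empirical $L_2$ envelope. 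Its tail is integrated rather than cut at a fixed level.
\item Bernstein handles the denominator.
\item A Bahadur-type expansion of the ratio then makes $M_w$ appear only at order $M_wB_w^{-1}n^{-1}\sqrt{\log(n)}$.
\end{itemize}
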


The error bound in this theorem separates the overall conditional miscoverage into the following three interpretable terms:
\begin{itemize}
\item \textbf{Score-estimation error.} The term
$\varepsilon+n\delta_n(\varepsilon)+\widetilde{\varepsilon}_n+\widetilde{\delta}_n$
captures the discrepancy between the learned score $s(X_i,Y_i;Z)$ and its oracle target $s^\star(X_i,Y_i)$.

\item \textbf{Finite-sample calibration error.} The term $\mathbb{E}\{\Gamma_n(w)\mid \phi(t)\}$ reflects the error in estimating the quantile $Q(1-\alpha;F_{w\circ s^\star})$ induced by $w(\cdot)$ using the calibration scores $\{s^\star(X_i,Y_i)\}_{i\in[n]}$. The quantity
$\Gamma_n(w)=(\sigma_w B_w^{-1}n^{-1/2}+M_w B_w^{-1}n^{-1})\sqrt{\log(n)}$ characterizes the effective sample-size behavior induced by the weighting scheme $w$. In the special case $w(\cdot)\equiv1$, $\Gamma_n(w)$ reduces to the parametric rate $n^{-1/2}\sqrt{\log(n)}$.

\item \textbf{Intrinsic conditional-mismatch error.}
The last term 
$\mathbb{E}\{|Q(1-\alpha;F_{s^\star\mid \phi(t)})-Q(1-\alpha;F_{w\circ s^\star})|\mid \phi(t)\}$
quantifies the discrepancy between the oracle conditional quantile $Q(1-\alpha;F_{s^\star\mid \phi(t)})$ and the oracle threshold $Q(1-\alpha;F_{w\circ s^\star})$. This error persists even when the oracle score is known exactly and the calibration step is noise-free. When $w(\cdot)\equiv 1$, it reduces to the gap between the $(1-\alpha)$-quantile of the conditional distribution of $s^\star$ given $\phi(t)$ and the corresponding $(1-\alpha)$-quantile of its unconditional/marginal distribution. When $w(\cdot)$ is random, $Q\left( 1-\alpha;F_{w\circ s^\star } \right)$ is also random and this term is generally difficult to evaluate explicitly; therefore, we provide a computable upper bound in \eqref{eq: independent pointwise bound 2}.

\end{itemize}

Theorem~\ref{theo: main miscoverage} shows that any method aiming for asymptotic conditional validity should, through an appropriate choice of score function and weighting scheme, make the intrinsic conditional-mismatch error asymptotically negligible. In contrast, the remaining terms arise purely from estimation uncertainty: the score-estimation error term $\varepsilon+n\delta_n(\varepsilon)+\widetilde{\delta}_n+\widetilde{\varepsilon}_n$ vanishes as the relevant training and/or calibration sample sizes increase, while the finite-sample calibration error term $\Gamma_n(w)$ converges to zero as $n\to\infty$.

\begin{remark}\label{remark: LCPRLCP rate}
    In LCP, as the conditional c.d.f.~is estimated via a kernel estimator, the score-estimation error scales as $O\big((nh^d)^{-1/2}\sqrt{\log(n)}+h\log(1/h)\big)$. The finite-sample calibration error is dominated by the score-estimation error. Moreover, the intrinsic conditional-mismatch error is zero because both $F_{s^\star\mid \phi(t)}$ and $F_{w\circ s^\star}$ are the uniform distribution on $[0,1]$, and hence coincide exactly. 
    For RLCP, the score function is independent of $Z$ and thus the score-estimation error is zero. Furthermore, a detailed analysis shows that the finite-sample calibration error decays at rate $O((nh^d)^{-1/2}\sqrt{\log(n)})$, while the intrinsic conditional-mismatch error is of order $O(h)$; see Lemmas~\ref{lemma: kernel fun}--\ref{lemma: kernel fun bias} in the Supplementary Material. This yields the non-asymptotic bound of test-conditional miscoverage for RLCP, thereby extending the theoretical results in \cite{hore2025conformal}.
    Overall, up to logarithmic factors, both LCP and RLCP attain the rate $O\big((nh^d)^{-1/2}+h\big)$ for conditional miscoverage. 
\end{remark}

\subsection{Averaged Conditional Miscoverage under Weak Conditions}\label{sec: averaged conditional miscoverage}

Theorem~\ref{theo: main miscoverage} provides a pointwise result for each fixed index $t\in\mathcal{T}$, relying on the pointwise score-approximation condition in Assumption~\ref{ass: pointwise_convergence}. 
In some cases, such a pointwise condition may be too strong or difficult to verify. 
Instead, the relevant score-approximation error may only be controllable on average with respect to the distribution of $T$.
This motivates a weaker but still informative target,  \textit{averaged conditional miscoverage}, defined as
\begin{equation}
    \mathbb{E}\left|\mathrm{Pr}\left( Y_{n+1}\in\widehat{C}(X_{n+1})\mid \phi(T) \right)-(1-\alpha)\right|\,,\label{eq: general averaged conditional miscoverage}
\end{equation}
where the expectation is taken over the random index $T\sim P_T$. This criterion measures the average deviation from the target conditional coverage level across conditioning events. 

To isolate the dependence of $s(x,y;Z)$ on $(X_{n+1},Y_{n+1})$, we consider $s(x,y;Z_{n+1}^\prime)$, where $Z_{n+1}^\prime$ is obtained from $Z$ by replacing its $(n+1)$-th element $(X_{n+1},Y_{n+1})$ with an independent copy drawn from $P_X\times P_{Y\mid X}$. Conditional on $Z_{n+1}^\prime$, for each $t\in\mathcal{T}$, denote by $F_{s(\cdot,\cdot;Z_{n+1}^\prime)\mid \phi(t)}$ the conditional c.d.f.~of $s(X_{n+1},Y_{n+1};Z_{n+1}^\prime)$ given $\phi(t)$. Define the corresponding $(1-\alpha)$-quantiles by
\begin{equation}
    q_{s(\cdot,\cdot;Z_{n+1}^\prime)}(t)=Q(1-\alpha;F_{s(\cdot,\cdot;Z_{n+1}^\prime)\mid \phi(t)}),\qquad
q_{s^\star}(t)=Q(1-\alpha;F_{s^\star\mid \phi(t)})\,.\label{eq: quantiles for averaged}
\end{equation}
When $\phi(t)=\{X_{n+1}=t\}$, these reduce to the conditional quantiles given $X_{n+1}=t$. 

For any measurable function $f:\mathcal{T}\rightarrow\mathbb{R}$, define $\|f\|_{P_{T},p}=\{\int_{\mathcal{T}} |f(t)|^p dP_{T}(t)\}^{1/p}$. 
We need the following averaged quantile-approximation condition in place of the pointwise score-approximation condition in Assumption~\ref{ass: pointwise_convergence}.

\begin{assumption}\label{ass: L_p_convergence}
    There exist $p>1$ and a function $\delta_n(\varepsilon)$ such that $\mathrm{Pr}\bigl(\|q_{s(\cdot,\cdot;Z_{n+1}^\prime)}-q_{s^\star}\|_{P_{T},p}>\varepsilon\bigr) \leq \delta_n(\varepsilon)$.
\end{assumption}
Assumption~\ref{ass: L_p_convergence} quantifies the discrepancy between the learned and oracle score distributions through their conditional quantiles. Lemma~\ref{lem: pointwise implies lp} in the Supplementary Material shows that, under mild boundedness and regularity conditions, convergence in probability of $s(x,y;Z)$ to $s^\star(x,y)$ implies $\|q_{s(\cdot,\cdot;Z_{n+1}^\prime)}-q_{s^\star}\|_{P_T,p}\to 0$ in probability. 
Following the standard paradigm  for establishing averaged guarantees, 
we also assume that the score function $s(\cdot,\cdot;Z)$ belongs to a class $\mathcal{S}$ of measurable functions on $\mathcal{X}\times\mathcal{Y}$, and that $\mathcal{S}$ has finite complexity in the sense of pseudo-dimension \cite{vapnik2013nature}.

\begin{assumption}
\label{ass: space complexity}
The function class $\mathcal{S}$ is compact and has finite pseudo-dimension $\mathrm{Pdim}(\mathcal{S})$.
\end{assumption}

\begin{remark}\label{remark: cc instance}
    Consider the test-conditional setting $\phi(t)=\{X_{n+1}=t\}$ with $T=X_{n+1}$. 
    As an important example, consider the CC method (with marginal coverage guaranteed) proposed by \cite{gibbs2025conformal}, which can be written in our framework with a quantile-centered score, $s(x,y;Z)=v(x,y)-\widehat Q_\alpha(x;Z)$; see Section~\ref{sec: example details} in the Supplementary Material. 
 Here, $\widehat{Q}_\alpha(x;Z)\in\mathcal{F}$ estimates the conditional $(1-\alpha)$-quantile of $v(X,Y)$ given $X=x$, where $\mathcal{F}$ denotes a given function space. 
    Let $Q^\star_\alpha(x)$ denote the true conditional $(1-\alpha)$-quantile of $v(X,Y)$ given $X=x$, and define the oracle score
    $s^\star(x,y)=v(x,y)-Q^\star_\alpha(x)$, where the subscript follows the miscoverage level $\alpha$.
    Then the discrepancy measured in Assumption~\ref{ass: L_p_convergence} reduces to the estimation error of the learned conditional quantile:
    $$
    \|q_{s(\cdot,\cdot;Z_{n+1}^\prime)}-q_{s^\star}\|_{P_{T},p}
    =
    \|\widehat{Q}_\alpha(\cdot;Z_{n+1}^\prime)-Q^\star_\alpha(\cdot)\|_{P_{X},p}\,.
    $$   
    Viewing $\mathcal{S}=\{(x,y)\mapsto v(x,y)-f(x):f\in\mathcal{F}\}$ as a function class on $(X,Y)$,
    Lemma~\ref{lem:pdim_Sprime} in the Supplementary Material implies that
    $\mathrm{Pdim}(\mathcal{S})\leq \mathrm{Pdim}(\mathcal{F})$.
    Hence, Assumption~\ref{ass: space complexity} is satisfied whenever the estimator class for $\widehat{Q}_\alpha(\cdot;\cdot)$ has finite pseudo-dimension, as in finite-dimensional basis expansions or linear setting \cite{gibbs2025conformal}.
\end{remark}

The next theorem shows that Assumptions~\ref{ass: L_p_convergence} and~\ref{ass: space complexity} yield a non-asymptotic bound for averaged conditional miscoverage. Let $f_{s(\cdot,\cdot;Z_{n+1}^\prime)\mid \phi(t)}(\cdot)$ and $f_{s^\star\mid \phi(t)}(\cdot)$ denote the conditional densities corresponding to $F_{s(\cdot,\cdot;Z_{n+1}^\prime)\mid \phi(t)}$ and $F_{s^\star\mid \phi(t)}$, respectively.

\begin{theorem}\label{theo: averaged miscoverage}
    Suppose Assumptions~\ref{ass: data_distribution_and_score_invariance} and~\ref{ass: stable score}--\ref{ass: space complexity} hold. Assume further that the conditional densities $f_{s(\cdot,\cdot;Z_{n+1}^\prime)\mid \phi(t)}(\cdot)$ and $f_{s^\star\mid \phi(t)}(\cdot)$ are bounded below by $\underline L$ and above by $\overline L$ on their supports, for positive constants $\underline L$ and $\overline L$. Suppose also that $|s(X,Y;Z)|\leq \underline{M}$ and $\underline{M}\leq w(x)\leq \overline{M}$ uniformly over $x\in\mathcal{X}$ almost surely, for positive constants $\underline{M}$, and $\overline{M}$. Let $T_0,T_0^\prime \sim P_T$ be independent copies of $T$. There exists a constant $C>0$ such that
    \begin{align}
        \notag &\mathbb{E}\left|\mathrm{Pr}\left( Y_{n+1}\in\widehat{C}(X_{n+1})\mid \phi(T) \right)-(1-\alpha)\right|\\
        \notag \leq & C \left\{ \widetilde{\delta}_n+\widetilde{\varepsilon}_n+\delta_n(\varepsilon)+\varepsilon\mathbb{E}\left( \|r_w\|_{P_T,p/(p-1)} \right)\right\} + C\sqrt{n^{-1}\mathrm{Pdim}(\mathcal{S})\log(n)} \\
        & + C \mathbb{E}\left[ \left\{ r_w(T_0)+1\right\}\left|q_{s^\star}(T_0)-q_{s^\star}(T_0^\prime)\right| \right]\,.
        \label{eq: averaged miscoverage bound}
    \end{align}
\end{theorem}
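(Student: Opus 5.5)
Our plan is to follow the route of Theorem~\ref{theo: main miscoverage}, but to work directly with the learned score at the level of conditional quantiles, so that only the averaged condition in Assumption~\ref{ass: L_p_convergence} is ever needed.

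\textbf{Step 1: decoupling the test point.} Replace $Z^y$ by $Z_{n+1}'$ inside both the test score and the calibration scores. Assumption~\ref{ass: stable score} makes this replacement cost $O(\widetilde\varepsilon_n+\widetilde\delta_n)$ in conditional coverage. The argument is a sandwich: the event $\{s(X_{n+1},Y_{n+1};Z)\le q(Z;\alpha)\}$ is squeezed between events with thresholds shifted by $\pm 2\widetilde\varepsilon_n$, on a set of probability at least $1-\widetilde\delta_n$. The threshold shifts are then converted into probability via the density upper bound $\overline L$.

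After this step, conditional on $Z_{n+1}'$ and on $\phi(t)$, the test score $S_{n+1}=s(X_{n+1},Y_{n+1};Z_{n+1}')$ has c.d.f.\ $F_t:=F_{s(\cdot,\cdot;Z_{n+1}')\mid\phi(t)}$. Define $\hat q$ as the weighted quantile built from the calibration scores $s(X_i,Y_i;Z_{n+1}')$. This $\hat q$ depends on $X_{n+1}$ only through the weights and on the $n$ calibration points. Conditional coverage is therefore approximately $\mathbb{E}\{F_t(\hat q)\mid\phi(t)\}$. Since $F_t(q_{s(\cdot,\cdot;Z'_{n+1})}(t))=1-\alpha$ and the densities are bounded above and below, the miscoverage at $t$ is bounded by $\overline L\,\mathbb{E}\{|\hat q-q_{s(\cdot,\cdot;Z'_{n+1})}(t)|\mid\phi(t)\}$.

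\textbf{Step 2: the population weighted quantile.} Let $q_w$ solve
\[
\int F_{t'}(q_w)\, r_w(t')\,dP_T(t')=1-\alpha,
\]
i.e.\ $q_w$ is the $(1-\alpha)$-quantile of the $w$-tilted score law.

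The distance $|\hat q-q_w|$ is a finite-sample calibration error, and here we expect the main obstacle. Because the score function is random, we plan to control it by a uniform deviation bound over the class $\mathcal S$. Since $\underline M\le w\le\overline M$, the normalized weighted empirical c.d.f.\ $\sum_i w(X_i)\mathbbm 1\{s(X_i,Y_i)\le c\}/\sum_i w(X_i)$ can be handled uniformly over $s\in\mathcal S$ and $c$. We would apply a pseudo-dimension (VC subgraph) bound to the class $\{(x,y)\mapsto w(x)\mathbbm 1\{s(x,y)\le c\}\}$, and separately a concentration bound for the denominator. This gives a uniform error of order $\sqrt{n^{-1}\mathrm{Pdim}(\mathcal S)\log n}$ with high probability. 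The lower density bound $\underline L$ then inverts this c.d.f.\ error into a quantile error.

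The subtle point is that $s$ is built from the calibration data, so it is not independent of the sample used in the empirical c.d.f. Uniformity over $\mathcal S$ (using compactness for measurability) is exactly what removes this dependence. Randomness in $w$ is handled by conditioning on $w$; the conditional independence of $T$ from the calibration data is used here.

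\textbf{Step 3: mismatch and score-approximation terms.} It remains to bound $|q_w-q_{s(\cdot,\cdot;Z'_{n+1})}(t)|$. Because the densities are bounded, the monotone-inversion argument gives
\[
|q_w-q(t)|\le (\overline L/\underline L)\int |q(t')-q(t)|\,r_w(t')\,dP_T(t'),
\]
where $q(\cdot)=q_{s(\cdot,\cdot;Z'_{n+1})}(\cdot)$. Next, use the triangle inequality to replace $q$ by $q_{s^\star}$ at both $t$ and $t'$. This produces the following pieces:
\begin{itemize}
\item Averaging over $t\sim P_T$, the $t'$-difference is bounded by Hölder's inequality: $\int|q-q_{s^\star}|\,r_w\,dP_T\le\|q-q_{s^\star}\|_{P_T,p}\|r_w\|_{P_T,p/(p-1)}$. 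On the good event of Assumption~\ref{ass: L_p_convergence} this is at most $\varepsilon\,\mathbb{E}\|r_w\|_{P_T,p/(p-1)}$.
\item The $t$-difference is bounded by $\|q-q_{s^\star}\|_{P_T,1}\le\|\cdot\|_{P_T,p}\le\varepsilon$. This is the source of the "$+1$" in $r_w(T_0)+1$.
\item The bad event of Assumption~\ref{ass: L_p_convergence} has probability $\delta_n(\varepsilon)$. Since miscoverage is bounded by one, it contributes at most $\delta_n(\varepsilon)$.
\item What is left is $\mathbb{E}[r_w(T_0')\,|q_{s^\star}(T_0)-q_{s^\star}(T_0')|]$, where $T_0$ plays the role of $T$. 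This is dominated by the last term of \eqref{eq: averaged miscoverage bound}.
\end{itemize}
Collecting the bounds from the three steps and absorbing all density and weight constants into $C$ yields \eqref{eq: averaged miscoverage bound}.
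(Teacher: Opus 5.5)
Your three steps match the paper's proof. You decouple through $Z_{n+1}^\prime$ using Assumption~\ref{ass: stable score}, control the weighted calibration quantile uniformly over $\mathcal S$ (weighted numerator plus denominator concentration, as in Lemmas~\ref{lemma: sup weighted dkw} and~\ref{theo: sup quantile error}), write the $w$-tilted population quantile as a density-weighted average of $q_{s(\cdot,\cdot;Z_{n+1}^\prime)}(T_w)$ via the mean-value theorem, and finish with the triangle inequality and H\"older.

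One step is not justified as written, however. Your $\hat q$ keeps the atom $w(X_{n+1})\delta_\infty$, so it depends on $X_{n+1}$. The identity $\mathrm{Pr}(S_{n+1}\le \hat q\mid Z_{n+1}^\prime,\phi(t))=F_t(\hat q)$ needs $\hat q$ to be fixed given $Z_{n+1}^\prime$ and $\phi(t)$. That holds for $\phi(t)=\{X_{n+1}=t\}$, but not for the general events the theorem allows, e.g.\ $\{\mathbbm{1}(X_{n+1}\in B)=t\}$. There $w(X_{n+1})$ stays random and is dependent on $Y_{n+1}$ through $X_{n+1}$, and the stability replacement in your Step 1 does not address this.

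The paper closes this with the bounded weights. The test atom has normalized mass at most $\overline M/\{(n+1)\underline M\}$, so $\hat q$ is sandwiched between $q_{\alpha+\gamma_n}(Z_{n+1}^\prime)$ and $q_{\alpha-\gamma_n}(Z_{n+1}^\prime)$ with $\gamma_n=O(n^{-1})$. These are the weighted quantiles of all $n+1$ scores in $Z_{n+1}^\prime$, with the replacement point $(X_{n+1}^\prime,Y_{n+1}^\prime)$ carrying its own score. This buys three things:
\begin{enumerate}
\item They depend only on $Z_{n+1}^\prime$ (and $w$), so the conditional c.d.f.\ step becomes exact.
\item They are ordinary weighted empirical quantiles of $n+1$ i.i.d.\ points, so your Step 2 applies to them verbatim.
\item The level shift costs only $O(\gamma_n)$.
\end{enumerate}

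There is also a smaller bookkeeping slip. The plain $\varepsilon$ you get from $\|q_{s(\cdot,\cdot;Z_{n+1}^\prime)}-q_{s^\star}\|_{P_T,1}$ is not the source of the $+1$ in $r_w(T_0)+1$; that $+1$ multiplies the mismatch term, not $\varepsilon$. Your extra $\varepsilon$ is instead absorbed into $\varepsilon\,\mathbb{E}(\|r_w\|_{P_T,p/(p-1)})$. This works because $r_w$ is a density ratio, so $\|r_w\|_{P_T,p/(p-1)}\ge\|r_w\|_{P_T,1}=1$.

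In the paper, the $+1$ arises from routing the triangle inequality through $q_{s^\star}(T^\prime)$ for an independent $T^\prime\sim P_T$. Your direct comparison gives $\mathbb{E}\{r_w(T_0^\prime)|q_{s^\star}(T_0)-q_{s^\star}(T_0^\prime)|\}$ instead. That is slightly sharper and is dominated by the stated term.
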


The bound in~\eqref{eq: averaged miscoverage bound} can be viewed as an averaged counterpart of the pointwise bounds in~\eqref{eq: independent pointwise bound 1}--\eqref{eq: independent pointwise bound 2}, with the complexity of the score class entering the \textit{finite-sample calibration} error term $\sqrt{n^{-1}\mathrm{Pdim}(\mathcal{S})\log(n)}$ through $\mathrm{Pdim}(\mathcal S)$. We refer to $\mathbb{E}\left\{ r_w(T_0)+1\right\}\left|q_{s^\star}(T_0)-q_{s^\star}(T_0^\prime)\right|$ as the \emph{averaged intrinsic conditional-mismatch} error, and to $\widetilde{\delta}_n+\widetilde{\varepsilon}_n+\delta_n(\varepsilon)+\varepsilon\mathbb{E}\|r_w\|_{P_T,p/(p-1)}$ as the \emph{score-estimation} error.  The term
$\|r_w\|_{P_T,p/(p-1)}$ arises because $\varepsilon$ measures the convergence rate of $q_{s(\cdot,\cdot;Z_{n+1}^\prime)}$ under the $\|\cdot\|_{P_T,p}$ norm, whereas calibration via the weight $w(\cdot)$ requires measuring the convergence rate under the $\|\cdot\|_{P_{T,w},p}$ norm. This naturally induces an additional term capturing this discrepancy, quantified by the density ratio $r_w$. In particular, when $w(\cdot)\equiv 1$, $\mathbb{E}(\|r_w\|_{P_T,p/(p-1)})$ reduces to $1$.

To see the practical relevance of this unified result,
we next refine Theorem~\ref{theo: averaged miscoverage} for CC with an RKHS-based conditional quantile estimator. Consider the class
$\mathcal F=\{f_\kappa:f_\kappa(x)=\sum_{i=1}^{d_0}\kappa_i\eta_i(x),\ \|\kappa\|_2\leq B\}$, where $\eta(x)=(\eta_1(x),\ldots,\eta_{d_0}(x))^{\top}$ are basis functions, $\kappa=(\kappa_1,\ldots,\kappa_{d_0})^{\top}$, $d_0$ is a fixed integer, and $B$ is a positive constant. Define the pinball loss $\ell_{1-\alpha}(s_1,s_2)=\{1-\alpha-\mathbbm{1}(s_2\geq s_1)\}(s_1-s_2)$. The conditional quantile estimator is $\widehat Q_\alpha(x;Z)=f_{\widehat\kappa}(x)$, where
$\widehat{\kappa}=\arg\min_{\{\kappa:~\|\kappa\|_2\leq B\}}n^{-1}\sum_{i=1}^{n}\ell_{1-\alpha}(v(X_i,Y_i),f_\kappa(X_i))+\lambda\|\kappa\|_2^2$.
Define the risk by $R(f)=\mathbb E\{\ell_{1-\alpha}(v(X,Y),f(X))\}$.  Recall that $Q_\alpha^\star(\cdot)$ denotes the true conditional $(1-\alpha)$-quantile of $v(X,Y)$ given $X=x$ as in Remark~\ref{remark: cc instance}.
Let $\widehat{C}_{\rm CC}(X_{n+1})$ denote the resulting prediction set procduced by CC, as detailed in Section~\ref{sec: example details} of the Supplementary Material.

\begin{corollary}[Averaged conditional miscoverage of CC]\label{cor: cc L1 test}
    Suppose that, for constant $M>0$, $|v(X,Y)|\leq M$ and $\|\eta(X)\|_{2}\leq M$ almost surely for $(X,Y)\sim P_X\times P_{Y\mid X}$. Assume the conditional density of $v(X,Y)$ given $X=x$ is bounded below by $\underline L$ and above by $\overline L$ on its support, for positive constants $\underline L$ and $\overline L$.
    Then there exists a constant $C>0$ such that
    \begin{align*}
        &\mathbb{E}\left|\mathrm{Pr}\left( Y_{n+1}\in\widehat{C}_{\rm CC}(X_{n+1})\mid X_{n+1} \right)-(1-\alpha)\right| \\
        \leq & C \left[ \sqrt{\inf_{f\in\mathcal{F}}R(f)-R(Q_\alpha^\star)}+\{d_0\log(n)/n\}^{1/3} \right]\,.
    \end{align*}  
\end{corollary}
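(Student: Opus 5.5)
Our plan is to deduce the corollary from Theorem~\ref{theo: averaged miscoverage} in the test-conditional setting $T=X_{n+1}$, $\phi(t)=\{X_{n+1}=t\}$. We take the CC representation of Remark~\ref{remark: cc instance}, with $s(x,y;Z)=v(x,y)-\widehat Q_\alpha(x;Z)$, $s^\star(x,y)=v(x,y)-Q^\star_\alpha(x)$, and the weight $w(\cdot)\equiv1$ coming from the marginal-valid CC formulation in Section~\ref{sec: example details}. Then $r_w\equiv 1$, so $\mathbb{E}\|r_w\|_{P_T,p/(p-1)}=1$. Moreover, $q_{s^\star}(t)=Q(1-\alpha;F_{s^\star\mid X=t})=0$ for every $t$, so the averaged intrinsic conditional-mismatch term $\mathbb{E}[\{r_w(T_0)+1\}|q_{s^\star}(T_0)-q_{s^\star}(T_0')|]$ vanishes identically.

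We then check the assumptions of Theorem~\ref{theo: averaged miscoverage}.
\begin{itemize}
\item Boundedness: $|s|\le M+BM$ holds because $|f_\kappa(x)|\le\|\kappa\|_2\|\eta(x)\|_2\le BM$.
\item Densities: the conditional densities of $s$ and $s^\star$ given $X=t$ are shifts of the density of $v$ given $X=t$, so they inherit the bounds $\underline L,\overline L$.
\item Complexity: by Lemma~\ref{lem:pdim_Sprime}, $\mathrm{Pdim}(\mathcal S)\le\mathrm{Pdim}(\mathcal F)\le d_0+1$ for a $d_0$-dimensional linear class, and $\mathcal S$ is compact since the ball $\{\|\kappa\|_2\le B\}$ is compact. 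This makes the calibration term $C\sqrt{d_0\log(n)/n}$.
\item Stability (Assumption~\ref{ass: stable score}): the regularized pinball objective is $2\lambda$-strongly convex in $\kappa$. Replacing one sample changes the empirical objective by a function that is $O(M/n)$-Lipschitz in $\kappa$. The standard strong-convexity argument then gives $\|\widehat\kappa-\widehat\kappa'\|_2\lesssim M/(n\lambda)$, hence $\widetilde\varepsilon_n\lesssim M^2/(n\lambda)$ and $\widetilde\delta_n=0$.
\end{itemize}

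The main work is Assumption~\ref{ass: L_p_convergence} with $p=2$, where by Remark~\ref{remark: cc instance} the discrepancy equals $\|\widehat Q_\alpha(\cdot;Z'_{n+1})-Q^\star_\alpha\|_{P_X,2}$. We use a self-calibration inequality for the pinball loss. Since the conditional density of $v$ is bounded below by $\underline L$ and all functions are bounded, a Knight-identity expansion gives
\[
R(f)-R(Q^\star_\alpha)\ge c\,\|f-Q^\star_\alpha\|_{P_X,2}^2 .
\]
So it suffices to bound the excess risk $R(\widehat Q_\alpha)-R(Q^\star_\alpha)$. We split it as the approximation term $\inf_{\mathcal F}R-R(Q^\star_\alpha)$, plus a regularization term $\lambda B^2$, plus an estimation term $2\sup_{f\in\mathcal F}|R_n(f)-R(f)|$. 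The pinball loss is Lipschitz and the class is $d_0$-dimensional and bounded, so the estimation term is $O(\sqrt{d_0\log(n)/n})$ with high probability. Taking square roots gives
\[
\varepsilon\asymp\sqrt{\inf_{\mathcal F}R-R(Q^\star_\alpha)}+\lambda^{1/2}+(d_0\log n/n)^{1/4},
\]
with $\delta_n(\varepsilon)\le 1/n$.

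Combining the pieces, the bound is of order
\[
\sqrt{\inf_{\mathcal F}R-R(Q^\star_\alpha)}+\lambda^{1/2}+(d_0\log n/n)^{1/4}+\frac{1}{n\lambda}+\sqrt{d_0\log n/n}.
\]
The step we expect to be the main obstacle is the trade-off between the stability term $1/(n\lambda)$ and the bias term $\lambda^{1/2}$. Balancing them gives $\lambda\asymp n^{-2/3}$, which makes both of order $n^{-1/3}$, so up to the $\log$ and $d_0$ factors the rate $\{d_0\log(n)/n\}^{1/3}$ is plausible. However, the crude uniform-deviation bound above produces an estimation term of order $(d_0\log n/n)^{1/4}$, which is slower than $n^{-1/3}$. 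To reach the stated rate we would first try a localized, Bernstein-type analysis: the quadratic self-calibration gives a variance–excess-risk relation, which should yield an excess-risk rate $d_0\log n/n$ (up to approximation terms), so its square root is $\sqrt{d_0\log n/n}$. The $n^{-1/3}$ rate would then be driven only by the $\lambda$ trade-off. If a misspecified $\mathcal F$ spoils the localization, the fallback is to absorb the misspecification into the approximation term via the triangle inequality in $L_2$.
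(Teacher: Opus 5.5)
Your reduction is the same as the paper's. You apply the averaged bound with $w\equiv1$ (the paper goes through the covariate-shift version, Theorem~\ref{theo: averaged miscoverage shift}, with $r_X\equiv1$) and use $q_{s^\star}\equiv0$ to kill the mismatch term. You bound $\mathrm{Pdim}(\mathcal S)$ via Lemma~\ref{lem:pdim_Sprime}, get $\widetilde\varepsilon_n\lesssim(n\lambda)^{-1}$ and $\widetilde\delta_n=0$ from strong convexity, and turn excess pinball risk into an $L_p$ error by self-calibration (Lemma~\ref{lemma: quantile regression error}).

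The gap is exactly the step that fixes the rate: a high-probability bound on $R(\widehat Q_\alpha)-R(Q^\star_\alpha)$. The only bound you actually establish comes from $\sup_{\mathcal F}|R_n-R|=O(\sqrt{d_0\log n/n})$. It gives $\varepsilon\asymp(d_0\log n/n)^{1/4}$, and the corollary does not follow from it. To reach the stated rate you need excess risk at most of order $\inf_{\mathcal F}R-R(Q^\star_\alpha)+(d_0\log n/n)^{2/3}$. The localized analysis meant to deliver this is only asserted (``should yield''). Your misspecification fallback does not help either. The triangle inequality $\|\widehat Q_\alpha-Q^\star_\alpha\|\le\|\widehat Q_\alpha-Q^\circ\|+\|Q^\circ-Q^\star_\alpha\|$ just moves the problem to a fast bound on $\|\widehat Q_\alpha-Q^\circ\|$. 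Localizing around the best-in-class $Q^\circ$ requires a Bernstein condition centered at $Q^\circ$, which your self-calibration inequality (centered at $Q^\star_\alpha$) does not provide.

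One way to close the gap with your own ingredients is to localize around $Q^\star_\alpha$ itself. Since $\ell_{1-\alpha}$ is $\max(\alpha,1-\alpha)$-Lipschitz in its second argument, your self-calibration inequality gives the variance bound
\[
\mathbb{E}\bigl\{\ell_{1-\alpha}(v(X,Y),f(X))-\ell_{1-\alpha}(v(X,Y),Q^\star_\alpha(X))\bigr\}^2\le\|f-Q^\star_\alpha\|_{P_X,2}^2\le c^{-1}\{R(f)-R(Q^\star_\alpha)\}.
\]
A standard oracle inequality for regularized empirical risk minimization under such a variance bound, over a bounded $d_0$-dimensional class, then yields
\[
R(\widehat Q_\alpha)-R(Q^\star_\alpha)\le C\{\inf_{\mathcal F}R-R(Q^\star_\alpha)+\lambda B^2+d_0\log(n)/n\}
\]
with probability at least $1-n^{-1}$. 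Taking square roots and choosing $\lambda\asymp(d_0\log n/n)^{2/3}$ makes both $\lambda^{1/2}$ and $(n\lambda)^{-1}$ at most of order $(d_0\log n/n)^{1/3}$.

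The paper does not carry out this analysis. It invokes Theorem~2 of Li, Liu and Zhu (2007) together with covering-number bounds for finite-dimensional RKHS (Zhou, 2002). For $\lambda=c_1\{d_0\log(n)/n\}^{2/3}/(2J_0)$, this gives excess risk over the best-in-class of order $\{d_0\log(n)/n\}^{2/3}$ with probability $1-o(n^{-1})$. Adding the approximation error and applying Lemma~\ref{lemma: quantile regression error} gives $\varepsilon$ of the stated order, and $\lambda^{-1}n^{-1}$ is of no larger order. So the $2/3$ excess-risk rate suffices; the full $d_0\log(n)/n$ rate you aim for is not needed.
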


The term $\sqrt{\inf_{f\in\mathcal{F}}R(f)-R(Q_\alpha^\star)}$ represents the approximation error induced by restricting the conditional quantile estimator to the class $\mathcal F$, while $\{d_0\log(n)/n\}^{1/3}$ is the score-estimation error and vanishes as $n\to\infty$. Consequently, if the approximation error $\inf_{f\in\mathcal{F}}R(f)-R(Q_\alpha^\star)$ is negligible, the averaged conditional miscoverage converges to zero for CC. This provides a useful complement to the theory of CC \cite{gibbs2025conformal}.

Table~\ref{tab: conditional coverage methods summary} provides a brief summary of several major conformal methods targeting conditional coverage, together with the corresponding error decomposition for conditional miscoverage. Detailed descriptions and theoretical analyses of these methods are deferred to Section~\ref{sec: example details} of the Supplementary Material.

\begin{table*}[t]
    \caption{\small Summary of conditional miscoverage errors for several conformal methods. All rates ignore logarithmic factors. The order of $\delta_{\rm tr}$ depends on the estimation error of the pre-trained score and $\Delta_{\mathcal F}=\inf_{f\in\mathcal F} R(f)-R(Q_\alpha^\star)$. BatchGCP targets group-conditional coverage, whereas the other methods target test-conditional coverage.}
    \label{tab: conditional coverage methods summary}
    \begin{center}
    {\fontsize{7}{8}\selectfont{\setlength{\tabcolsep}{3pt}\begin{tabular}{ccccc}
    \toprule
    Method & Score-estimation & Finite-sample calibration & Conditional-mismatch & Total \\
    \midrule
    LCP \cite{guan2023localized} & $O\big((nh^d)^{-1/2}+h\big)$ & $O\bigl(n^{-1/2}\bigr)$ & $0$ & $O\big((nh^d)^{-1/2}+h\big)$ \\
    RLCP \cite{hore2025conformal} & $0$ & $O\big((nh^d)^{-1/2}\big)$ & $O(h)$ & $O\big((nh^d)^{-1/2}+h\big)$ \\
    CQR \cite{romano2019conformalized} & $O\bigl(\delta_{\rm tr}\bigr)$ & $O\bigl(n^{-1/2}\bigr)$ & $0$ & $O\bigl(\delta_{\rm tr}+n^{-1/2}\bigr)$ \\
    DCP \cite{Chernozhukov2021distribution} & $O\bigl(\delta_{\rm tr}\bigr)$ & $O\bigl(n^{-1/2}\bigr)$ & $0$ & $O\bigl(\delta_{\rm tr}+n^{-1/2}\bigr)$ \\
    GLCP \cite{min2025personalized} & $O\bigl(\delta_{\rm tr}\bigr)$ & $O\bigl(n^{-1/2}\bigr)$ & $0$ & $O\bigl(\delta_{\rm tr}+n^{-1/2}\bigr)$ \\
    BatchGCP \cite{jung2023batch} & $O\bigl(n^{-1/2}\bigr)$ & $O\bigl(n^{-1/2}\bigr)$ & $0$ & $O\bigl(n^{-1/2}\bigr)$ \\
    CC \cite{gibbs2025conformal} & $O\bigl((n/d_0)^{-1/3}\bigr)$ & $O\bigl(n^{-1/2}\bigr)$ & $O\bigl(\Delta_{\mathcal F}^{1/2}\bigr)$ & \makecell[c]{$O\bigl((n/d_0)^{-1/3}+\Delta_{\mathcal F}^{1/2}\bigr)$}\\
    \bottomrule
    \end{tabular}}}
    \end{center}
\end{table*}

\section{Applications and Extensions}\label{sec:application}

    The previous section developed the main conditional miscoverage theory, including both pointwise and averaged bounds, and showed how a broad class of conformal methods can be analyzed within a unified
    framework. We now demonstrate the usefulness of this unified theory by presenting several applications and extensions. 
    In Section~\ref{sec: model selection}, we first develop a conditional-coverage-oriented model selection procedure motivated by the conditional-miscoverage decomposition. 
    In Section~\ref{sec: Extensions to covariate shift}, we consider covariate shift, where our weighted formulation naturally combines localization with density-ratio correction. Finally, in Section~\ref{sec: extension structured data}, we extend the theory beyond i.i.d.~data through a weighted SymmPI formulation \cite{dobriban2025symmpi}, and apply it to community-conditional coverage on graph-structured data. 
    Some additional extensions, including a generalized RLCP framework, 
    an extension of localized conformal $p$-values \cite{wu2025conditional}, and a two-layer hierarchical example, are provided in Section~\ref{sec:supp_add_applications} of the Supplementary Material.
    
\subsection{Conditional-Coverage-Oriented Model Selection}\label{sec: model selection}

The error decomposition in Theorem~\ref{theo: main miscoverage} provides a principled basis for model selection when conditional coverage is the primary objective.
In practice, conditional conformal procedures require choosing among tuning parameters, score constructions, weighting schemes, and estimators of conditional distributions or conditional quantiles.
Although each fixed construction may achieve asymptotic conditional validity under suitable conditions, their finite-sample conditional performance can differ substantially.
We therefore seek to select, from a collection of candidate conformal constructions, the one that is most aligned with test-conditional validity.

Theorem~\ref{theo: main miscoverage} identifies the intrinsic conditional-mismatch error,
$\mathbb{E}\{|Q(1-\alpha;F_{s^\star\mid \phi(t)})-Q(1-\alpha;F_{w\circ s^\star})|\mid \phi(t)\}$,
as a key quantity that must be controlled for asymptotic conditional validity.
Analogously to the definition of $F_{w\circ s^\star}$, we denote by $F_{w\circ s(\cdot,\cdot;Z)}(\cdot)$ the distribution of $s(X_w,Y;Z)$ given $Z$ and $w$, where $(X_w,Y)\sim P_{X,w}\times P_{Y\mid X}$.  
This motivates a data-driven criterion that reflects this mismatch. Concretely, focusing on the test-conditional case, we measure the conditional-coverage performance of the set built with score $s(x,y;Z)$ by the gap between the conditional quantile of $s(X_{n+1},Y_{n+1};Z)$ given $X_{n+1}=t$ and the unconditional/marginal quantile $Q(1-\alpha;F_{w\circ s(\cdot,\cdot;Z)})$.
For $(X,Y)\sim P_X\times P_{Y\mid X}$, this motivates measuring $\mathbb{E}[\mathbbm{1}\{s(X,Y;Z)\leq Q(1-\alpha;F_{w\circ s(\cdot,\cdot;Z)})\}-(1-\alpha)\mid Z,w,X=t]$.

To empirically implement this idea, let $\mathcal{S}$ be a collection of candidate score functions. For each $s\in\mathcal{S}$, let $q_{s}(Z;\alpha)$ represent conformal quantile $q(Z;\alpha)$ using score $s(\cdot,\cdot;Z)$ and the corresponding weighting scheme $w(\cdot)$. Define $\hat{\zeta}_s(x,y;Z)=\mathbbm{1}\left(s(x,y;Z)\leq q_{s}(Z;\alpha)\right)-(1-\alpha)$. 
At the population/oracle level, test-conditional validity can be diagnosed through the conditional moment restriction $\mathbb E\{\hat\zeta_s(X,Y;Z)\mid Z,w,X=x\}=0$ for all $x$.  Thus, comparing candidate conformal constructions can be viewed as comparing the degree of violation of a conditional mean restriction. This perspective is closely related to classical nonparametric lack-of-fit tests for conditional mean restrictions, such as those of \cite{zheng1996consistent} and \cite{tripathi2003testing}. The procedure usually proceeds by first estimating the conditional expectation and then uses a lack-of-fit functional $\mathcal L(s;Z)$ to measure its discrepancy from zero. Different choices of $\mathcal L(s;Z)$ can be employed, such as squared-type or empirical likelihood-type discrepancies.


For a trial response $y$, we select the score function minimizing this criterion:
\[
\hat{s}^y=\underset{s\in\mathcal{S}}{\arg\min}\ \mathcal{L}\left(s;Z^y\right),
\]
and the conformal set with score selection is then
\begin{gather}
    \widehat{C}_{\rm Sel}(X_{n+1})=\left\{ y:\hat{s}^y(X_{n+1},y;Z^y)\leq q_{\hat{s}^y}(Z^y;\alpha) \right\}\,.\label{eq: conditional selection}
\end{gather}
If $\mathcal L(s;Z^y)$ is invariant under permutations of $Z^y$, then under the permutation-invariance of the candidate scores, the selected score $\hat s^y$ is also permutation invariant. Hence, the resulting conformal set $\widehat{C}_{\rm Sel}(X_{n+1})$ preserves marginal validity \cite{liang2024conformal}. At the same time, the selection criterion explicitly favors scores that are better aligned with test-conditional validity.
The main computational burden is that \eqref{eq: conditional selection} requires recomputing $\mathcal L(s;Z^y)$ and solving the minimization problem to obtain $\hat{s}^y$ for each trial value $y$. Section~\ref{sec: efficient sel} of the Supplementary Material provides an efficient approximation algorithm.


\subsection{Extensions of Conditional Methods under Covariate Shift}\label{sec: Extensions to covariate shift}

Most existing methods for conditional coverage are developed in the i.i.d.~setting. Under covariate shift, the calibration and test covariate distributions differ while $P_{Y\mid X}$ remains unchanged. 
Therefore, directly applying i.i.d.-based conditional conformal methods may fail to preserve marginal coverage. WCP \cite{tibshirani2019conformal} restores marginal validity through density-ratio weighting, but it does not guarantee small conditional miscoverage.

Our solution is to retain the general form of the prediction set in \eqref{eq: general Cvec}, while redesigning the weighting function $w(\cdot)$ to account for covariate shift. The design of $w(\cdot)$ is guided by two theoretical requirements: the marginal coverage guarantee under covariate shift and the conditional-miscoverage decomposition for $\widehat{C}(X_{n+1})$.

\begin{assumption}\label{ass: covariate shift assumption}
    (i) The calibration pairs $\{(X_i,Y_i)\}_{i\in[n]}$ are i.i.d.~from $P_{X,1}\times P_{Y\mid X}$ and the test pair $(X_{n+1},Y_{n+1})$ is from $P_{X,2}\times P_{Y\mid X}$, with the density ratio $r_X(x)=dP_{X,2}/dP_{X,1}(x)$ known; (ii) For $z=\{(x_1,y_1),\ldots,(x_n,y_n),(x_{n+1},y_{n+1})\}$, $s(x,y;z)$ is permutation invariant in $\{(x_i,y_i)\}_{i\in[n+1]}$.
\end{assumption}

Under Assumption~\ref{ass: covariate shift assumption}, the following covariate-shift analogue of Theorem~\ref{theo: marginal general} holds by incorporating $r_X(x)$ into the definition of $\mathcal{W}_X$.

\begin{theorem}\label{theo: marginal general cshift}
    Suppose Assumption~\ref{ass: covariate shift assumption} holds. For $X\sim P_{X,1}$, define $\mathcal{W}_X=\{w_0(\cdot)\geq 0:\mathbb{E}\{w_0(X)\mid w_0\}=1\text{ and }\mathbb{E}\{w_0(x)\}=r_X(x),\text{ for all } x\in\mathcal{X}\}$. Using the same definitions of $d_Z(w,w_0)$ and $\widehat{w}_{\max}(Z)$ as in Theorem~\ref{theo: marginal general}, we have
    \begin{align*} 
        1-\alpha-\inf_{w_0\in\mathcal{W}_X}\mathbb{E}\left\{ d_Z(w,w_0) \right\}&\leq \mathrm{Pr}\left( Y_{n+1}\in \widehat{C}(X_{n+1}) \right)\\
        &\leq 1-\alpha+\inf_{w_0\in\mathcal{W}_X}\mathbb{E}\left\{ d_Z(w,w_0) \right\}+\mathbb{E}\left\{ \widehat{w}_{\max}(Z) \right\}\,.
    \end{align*}
\end{theorem}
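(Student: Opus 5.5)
The plan is to follow the proof of Theorem~\ref{theo: marginal general} and change only the symmetry step. Two ingredients are needed. The first is a weighted exchangeability identity for an oracle weight $w_0\in\mathcal{W}_X$. The second is a total-variation perturbation argument that replaces $w$ by $w_0$ at a cost of $d_Z(w,w_0)$.

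\textbf{Step 1 (oracle weights give exact weighted coverage).} Fix $w_0\in\mathcal{W}_X$, drawn independently of the data (when it is random). Condition on the multiset $E$ of the $n+1$ data points and on $w_0$. Let $S_i=s(X_i,Y_i;Z)$; by Assumption~\ref{ass: covariate shift assumption}(ii) these are functions of $E$ and the labeling only. The joint density of $Z$ is $\prod_{i\le n}dP_{X,1}(x_i)dP_{Y|X}\cdot r_X(x_{n+1})dP_{X,1}(x_{n+1})dP_{Y|X}$. Therefore, given $E$, the probability that the test point is the $j$th element of $E$ is proportional to $r_X(X_j)$. This is the standard argument of \cite{tibshirani2019conformal}.

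The delicate point is that the weights appearing in the quantile are $w_0(X_j)/\sum_i w_0(X_i)$, not $r_X(X_j)/\sum_i r_X(X_i)$. I would handle this by averaging over the randomness of $w_0$. Since $\mathbb{E}\{w_0(x)\}=r_X(x)$ and $\mathbb{E}\{w_0(X)\mid w_0\}=1$, there is a change-of-measure step. Reweighting the calibration law by $w_0$ and integrating over $w_0$ gives a joint law on $(w_0,Z)$ in which, conditionally on $w_0$ and $E$, the test index is the $j$th element with probability $w_0(X_j)/\sum_i w_0(X_i)$. This mirrors the RLCP computation quoted after Theorem~\ref{theo: marginal general}, where $\mathbb{E}\{\bar w(x)\}=1$; here the target is $r_X(x)$ instead.

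Under that conditional law, the score at the test index is a draw from the weighted empirical distribution $\sum_j p_j\delta_{S_j}$ with $p_j\propto w_0(X_j)$. The event $S_{n+1}\le q$ is then equivalent to $S_{n+1}\le Q(1-\alpha;\sum_j p_j\delta_{S_j})$. This gives coverage in $[1-\alpha,\,1-\alpha+\max_c\sum_j p_j\mathbbm{1}(S_j=c)]$.

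\textbf{Step 2 (perturbation from $w_0$ to $w$).} The actual set uses normalized weights $p^w_j$, and at the test index the point mass $\delta_{\infty}$ is replaced in the usual way. The event $\{S_{n+1}\le q(Z;\alpha)\}$ depends on the weights only through the weighted c.d.f. evaluated at $S_{n+1}$. Changing the weight vector from $p^{w_0}$ to $p^w$ moves that c.d.f. by at most $\sum_j|p^w_j-p^{w_0}_j|=d_Z(w,w_0)$. Equivalently, one can compute the weighted-rank probability with $p^w$ and $p^{w_0}$ and compare them in $\ell_1$. Taking expectations gives the lower bound $1-\alpha-\mathbb{E}\,d_Z(w,w_0)$ and the upper bound $1-\alpha+\mathbb{E}\,d_Z(w,w_0)+\mathbb{E}\,\widehat w_{\max}(Z)$. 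Here the tie term is taken under $w$, so I would carry the perturbation before bounding the atom. Finally, take the infimum over $w_0\in\mathcal{W}_X$.

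\textbf{Main obstacle.} The hard part is Step 1: making rigorous that the two moment conditions on $w_0$ yield the conditional index law $\propto w_0(X_j)$ under covariate shift. The normalization by the random sum $\sum_i w_0(X_i)$ means linearity alone does not suffice. I would first try to reuse verbatim the corresponding lemma in the proof of Theorem~\ref{theo: marginal general}, which handles exactly this for $r_X\equiv 1$. The only change is to insert the factor $r_X(x_{n+1})$ in the test density and to check that the condition $\mathbb{E}\{w_0(x)\}=r_X(x)$ exactly absorbs it.
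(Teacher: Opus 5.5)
Your overall plan, in its weighted-rank form, matches the paper's proof. Write coverage with the test score in place of $\delta_\infty$ and represent it under an oracle $w_0$ as $\mathbb{E}\{\sum_j p^{w_0}_j\mathbbm{1}(S_j\le q)\}$, where $q$ is the $w$-weighted quantile of all $n+1$ scores. Then bound $|\sum_j(p^{w}_j-p^{w_0}_j)\mathbbm{1}(S_j\le q)|\le d_Z(w,w_0)$, use $1-\alpha\le\sum_j p^{w}_j\mathbbm{1}(S_j\le q)\le 1-\alpha+\widehat{w}_{\max}(Z)$, and take the infimum.

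The gap is Step 1, which you leave open, and neither route you propose for it works as stated. If $w_0$ is independent of the data, conditioning on $(w_0,E)$ gives test-index probabilities $r_X(X_j)/\sum_i r_X(X_i)$. Only the factor $r_X(x_{n+1})$ in the joint density survives and $w_0$ drops out, so the claimed law $\propto w_0(X_j)$ is false under that coupling. Nor is there a lemma in the proof of Theorem~\ref{theo: marginal general} to reuse: the paper deduces that theorem from this one by setting $P_{X,1}=P_{X,2}$, so that plan is circular.

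The missing idea is a two-stage representation that \emph{replaces} independence. This is what the paper does, and your remark about reweighting and integrating over $w_0$ points toward it. Draw $w_0$, keep $\{(X_i,Y_i)\}_{i\le n}$ i.i.d.\ from $P_{X,1}\times P_{Y\mid X}$ independent of $w_0$, and draw $X_{n+1}\mid w_0$ from $w_0(x)\,dP_{X,1}(x)$. The condition $\mathbb{E}\{w_0(X)\mid w_0\}=1$ makes this a probability measure, and $\mathbb{E}\{w_0(x)\}=r_X(x)$ makes its mixture equal to $P_{X,2}$, so the law of $Z$ is unchanged. Conditionally on $w_0$ the density is $w_0(x_{n+1})\prod_{i=1}^{n+1}dP_{X,1}(x_i)\,dP_{Y\mid X}(y_i\mid x_i)$. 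Bayes' rule on the multiset then gives $w_0(X_j)/\sum_i w_0(X_i)$ at once; the random normalizer is just the conditional-probability denominator, so no linearity argument is needed. (In RLCP with $w_0=\bar w$, Bayes' rule shows this two-stage law is exactly the actual joint law of $(\widetilde X,Z)$.)

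You should also state a condition that both your sketch and the paper's proof use only tacitly. The total-expectation step $\mathbb{E}\{\mathbbm{1}(S_{n+1}\le q)\mid w_0,E\}=\sum_j p^{w_0}_j\mathbbm{1}(S_j\le q)$ needs the weights that $w$ attaches to the points of $E$, hence $q$, to carry no information about the test position beyond $(w_0,E)$. For instance, they could be a function of $(w_0,E)$, as when $w$ is deterministic or when $w$ and $\bar w$ are both functions of $\widetilde X$. If $w$ depends on the test position in a way $w_0$ does not record (say $w=K(\cdot,X_{n+1};h)$ paired with $w_0\equiv 1$), that identity fails. This requirement therefore has to be built into the class over which the infimum is taken.
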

Theorem~\ref{theo: marginal general cshift} shows that marginal coverage is guaranteed under covariate shift when $\mathbb{E}\{\bar{w}(x)\}=r_X(x)$ for all $x\in\mathcal{X}$, where $\bar{w}(x)=w(x)/\mathbb{E}\{w(X)\mid w\}$ for $X\sim P_{X,1}$. This condition restores marginal validity, but it does not by itself ensure small conditional miscoverage. Since Theorem~\ref{theo: main miscoverage} gives bounds for coverage conditional on events determined by $X_{n+1}$, the form of the conditional-miscoverage decomposition is unchanged after replacing Assumption~\ref{ass: data_distribution_and_score_invariance} with Assumption~\ref{ass: covariate shift assumption}. A covariate-shift analogue of Theorem~\ref{theo: averaged miscoverage} for averaged conditional miscoverage is given in Theorem~\ref{theo: averaged miscoverage shift} in Section~\ref{sec:supp_covariate_shift} of the Supplementary Material, together with the corresponding specialization to CC. Therefore, in addition to satisfying the marginal condition above, the weight should also make the weighted quantile $Q(1-\alpha;F_{w\circ s^\star})$ close to the oracle conditional quantile $Q(1-\alpha;F_{s^\star\mid \phi(t)})$.

For test-conditional coverage, consider scores whose oracle counterparts remove conditional heterogeneity in the score distribution, as in CQR, LCP, DCP, and GLCP. For such scores, the oracle conditional quantile $Q(1-\alpha;F_{s^\star\mid X_{n+1}=t})$ is constant across $t$. In this case, setting $w(\cdot)=r_X(\cdot)$ suffices to remove the intrinsic conditional-mismatch error while restoring marginal validity. For scores that do not have this property, such as a pre-trained score $v(x,y)$ with $s^\star(x,y)=v(x,y)$, we combine density-ratio correction with localization. Specifically, let $\widetilde{X}$ be drawn from a distribution with density proportional to $K(\cdot,X_{n+1};h)$ conditional on $X_{n+1}$. Multiplying the RLCP local weight $K(\cdot,\widetilde{X};h)$ by $r_X(\cdot)$ gives
$w(x)=r_X(x)K(x,\widetilde{X};h)$,
which accounts for covariate shift through $r_X$ and localization through the kernel factor.

\subsection{Extension to Structured Data}\label{sec: extension structured data}

We extend our unified framework from i.i.d.~data to structurally symmetric data, following the SymmPI perspective \cite{dobriban2025symmpi}. The goal is to show that the same conditional-miscoverage decomposition continues to apply when calibration is performed over a symmetry group rather than over an i.i.d.~sample.

Let $Z\in\mathcal Z$ denote the full structured data object and let $\Omega(Z)$ denote its observed component. For example, if $Z=\{(X_1,Y_1),\ldots,(X_n,Y_n),(X_{n+1},Y_{n+1})\}$, then $\Omega(Z)=\{(X_1,Y_1),\ldots,(X_n,Y_n),X_{n+1}\}$. Suppose the structural symmetry is encoded by a finite group $\mathcal G$ acting on $\mathcal Z$ through $\rho:\mathcal G\times\mathcal Z\to\mathcal Z$. Let $V:\mathcal Z\to\widetilde{\mathcal Z}$ be a score map, and let $\widetilde\rho:\mathcal G\times\widetilde{\mathcal Z}\to\widetilde{\mathcal Z}$ be the induced action on the score space. In the i.i.d. setting, $\mathcal G$ is the permutation group on the sample indices and $\rho$ permutes the observations.
\begin{assumption}\label{ass: distributional invariance}
    The data $Z$ is $(\mathcal{G},\rho)$-distributionally invariant and the score map $V$ is $(\mathcal{G},\rho,\widetilde{\rho})$-distributionally equivariant in the sense that for every $g\in\mathcal{G}$,
    \begin{gather*}
        Z\overset{\rm d}{=}\rho(g,Z),\qquad V(\rho(g,Z))\overset{\rm d}{=}\widetilde{\rho}(g,V(Z))\,.
    \end{gather*}
\end{assumption}
As an example, when $Z$ consists of $n+1$ exchangeable pairs and $\mathcal G$ is the full permutation group, this reduces to the standard exchangeable setting. A more general version of this framework, which incorporates ratio functions to handle distribution shift, is developed in Section~\ref{sec: appendix weighted symmpi} of the Supplementary Material.

Let $\psi:\widetilde{\mathcal Z}\to\mathbb R$ extract the score associated with the unobserved component, so that $\psi(V(Z))$ is the test score. For instance, if $V(Z)=\{S(X_1,Y_1;Z),\ldots,S(X_{n+1},Y_{n+1};Z)\}$ with $Y_{n+1}$ unobserved, one may take $\psi(V(Z))=S(X_{n+1},Y_{n+1};Z)$. Define $\mathcal{G}_e=\{g\in\mathcal{G}:\psi(\widetilde{\rho}(g,V(z)))=\psi(V(z))\text{ for all } z\in\mathcal{Z}\}$ as the set of group elements that keep the test score in the test position. This set contains the identity and all other group elements that leave the extracted test score unchanged. Calibration is performed over $\{\psi(\widetilde{\rho}(g,V(Z))):g\in\mathcal{G}\setminus\mathcal{G}_e\}$, yielding
$$
q(V(Z);\alpha)=Q\bigl(1-\alpha;|\mathcal G|^{-1}\bigl\{{\textstyle\sum_{g\in\mathcal{G}\setminus\mathcal{G}_e}}\delta_{\psi(\widetilde{\rho}(g,V(Z)))}+|\mathcal{G}_e|\,\delta_{\infty}\bigr\}\bigr)\,,
$$
where $|\mathcal G|$ is the cardinality of $\mathcal{G}$.
Let $z_{\rm obs}=\Omega(Z)$; the prediction set takes the generic form
\[
\widehat{C}(z_{\rm obs})=\left\{z:\psi(V(z))\leq q(V(z);\alpha),\ \Omega(z)=z_{\rm obs}\right\}.
\]

\begin{theorem}\label{theo: marginal SymmPI}
    Suppose Assumption~\ref{ass: distributional invariance} holds. Then the marginal coverage of $\widehat{C}(z_{\rm obs})$ satisfies 
    \begin{gather*}
        1-\alpha\leq \mathrm{Pr}(Z\in \widehat{C}(\Omega(Z)))\leq 1-\alpha+\mathbb{E}\left\{\widehat{\gamma}_{\max}(Z)\right\}\,,
    \end{gather*}
    where $\widehat{\gamma}_{\max}(z)=\sup_c\left|\{g\in\mathcal{G}:\psi(\widetilde{\rho}(g,V(z)))=c\}\right|/|\mathcal{G}|$.
\end{theorem}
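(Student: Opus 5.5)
The plan is the standard randomization argument for symmetric groups, in the style of SymmPI. Throughout, write $S_g=\psi(\widetilde\rho(g,V(Z)))$ for $g\in\mathcal G$, so that $S_e=\psi(V(Z))$ is the test score. By construction, $Z\in\widehat C(\Omega(Z))$ if and only if $S_e\le q(V(Z);\alpha)$.

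\textbf{Step 1 (rewrite the quantile).} Since every $g\in\mathcal G_e$ satisfies $S_g=S_e$, I will compare $q(V(Z);\alpha)$ with the full group quantile
\[
\bar q(V(Z))=Q\Bigl(1-\alpha;|\mathcal G|^{-1}\textstyle\sum_{g\in\mathcal G}\delta_{S_g}\Bigr).
\]
The standard replacement lemma gives the equivalence $S_e\le q(V(Z);\alpha)\iff S_e\le \bar q(V(Z))$. The reason is that $q$ is obtained from $\bar q$ by moving the mass $|\mathcal G_e|/|\mathcal G|$ sitting at $S_e$ to $+\infty$, and such a move cannot change whether $S_e$ lies at or below the $(1-\alpha)$-quantile. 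Hence coverage equals $\Pr\{S_e\le \bar q(V(Z))\}$.

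\textbf{Step 2 (exchangeability over $\mathcal G$).} Let $G$ be drawn uniformly from $\mathcal G$, independently of $Z$. By Assumption~\ref{ass: distributional invariance}, $V(Z)\overset{d}{=}V(\rho(G^{-1},Z))\overset{d}{=}\widetilde\rho(G^{-1},V(Z))$. I will also use that the multiset $\{S_g\}_{g\in\mathcal G}$ is invariant when $V(Z)$ is replaced by $\widetilde\rho(h,V(Z))$, because $\widetilde\rho$ is a group action and $g\mapsto gh$ is a bijection. Therefore $\bar q$ is $\mathcal G$-invariant, while the extracted test score becomes $S_{h}$ up to the relabeling convention. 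Combining these facts,
\[
\Pr\{S_e\le\bar q\}=\mathbb E\Bigl\{|\mathcal G|^{-1}\textstyle\sum_{g\in\mathcal G}\mathbbm 1(S_g\le \bar q)\Bigr\}.
\]
Care is needed because the equivariance in Assumption~\ref{ass: distributional invariance} is only distributional: I apply it to the joint map $Z\mapsto (V(Z))$ and then to the measurable functional $v\mapsto \mathbbm 1\{\psi(v)\le \bar q(v)\}$. The invariance of $\bar q$ must hold pointwise on $\widetilde{\mathcal Z}$, and it does because the induced action is a genuine group action. This bookkeeping is the main technical point I expect to be delicate.

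\textbf{Step 3 (deterministic quantile bounds).} For any finite empirical distribution, the fraction of atoms at or below its $(1-\alpha)$-quantile is at least $1-\alpha$. It is at most $1-\alpha$ plus the largest point mass, because the atoms strictly below the quantile carry mass less than $1-\alpha$. The largest point mass here is exactly $\widehat\gamma_{\max}(Z)$. Taking expectations yields
\[
1-\alpha\le \Pr\{Z\in\widehat C(\Omega(Z))\}\le 1-\alpha+\mathbb E\{\widehat\gamma_{\max}(Z)\},
\]
which is the statement of Theorem~\ref{theo: marginal SymmPI}.
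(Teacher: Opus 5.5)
Your proof is correct. Its skeleton matches the paper's, which obtains this theorem from the weighted version with $\widetilde r_g\equiv 1$: replace the test score $\psi(V(Z))$ by $\psi(\widetilde\rho(G,V(Z)))$ for uniform $G$, condition on $Z$, and apply the deterministic quantile bounds of your Step 3.

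The genuine difference is your Step 1, and it is not cosmetic. The paper performs the exchange while holding the threshold $q(V(Z);\alpha)$ fixed. That tacitly assumes $q(\widetilde\rho(g,v);\alpha)=q(v;\alpha)$, which fails: for $\widetilde\rho(g,v)$ the atom at $\infty$ replaces the mass at $S_g$ rather than at $S_e$. The fixed-threshold identity can indeed be false. Take exchangeable continuous scores with $n=9$ and $1-\alpha=0.85$; then $q$ is the largest calibration score, so $\mathrm{Pr}\{S_e\le q\}=0.9$. But the $Z$-averaged fraction of $g$ with $S_g\le q$ is $0.99$, which exceeds even the stated upper bound $1-\alpha+\mathbb E\{\widehat\gamma_{\max}(Z)\}=0.95$. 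Passing first to the full-group quantile $\bar q$, which is exactly $\mathcal G$-invariant, is what makes the exchange legitimate. So your route closes a step the paper skips.

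In the write-up, do two things. First, give the short contradiction argument for both directions of the replacement equivalence. If $S_e>\bar q$, moving the mass at $S_e$ to $\infty$ leaves the mass at or below $\bar q$ unchanged, so $q\le\bar q<S_e$; symmetrically, $S_e>q$ forces $\bar q\le q$. Second, state explicitly that $\widetilde\rho$ is a group action with $\widetilde\rho(e,v)=v$. The paper leaves this implicit, but you need it both for $e\in\mathcal G_e$ and for the invariance of $\bar q$.
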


To analyze conditional coverage, let $V^\star:\mathcal Z\to\widetilde{\mathcal Z}$ be an oracle counterpart of $V$, analogous to the role of $s^\star$ in the i.i.d.~data framework. 
The oracle score map $V^\star$ serves as a population reference that removes the dependence of the learned map $V$ on the finite sample $Z$.
For a conditioning event $\phi(t)$, impose the following approximation condition.
\begin{assumption}\label{ass: independence score}
    For each $t\in\mathcal T$, we assume: (i) $\psi(V^\star(Z))$ is independent of $q(V^\star(Z);\alpha)$ conditional on $\phi(t)$; (ii) for every $\varepsilon>0$, defining
    \[
    D_\varepsilon=\left\{\left|\psi(V(Z))-\psi(V^\star(Z))\right|\vee\left|q(V(Z);\alpha)-q(V^\star(Z);\alpha)\right|\leq\varepsilon\right\},
    \]
    there exists a non-random function $\delta(\varepsilon)\in[0,1]$ such that $\mathrm{Pr}\left(D_\varepsilon\mid\phi(t)\right)\ge 1-\delta(\varepsilon)$.
\end{assumption}

For each $t\in\mathcal T$, let $F_{\psi(V^\star(\cdot))\mid\phi(t)}$ denote the conditional distribution of $\psi(V^\star(Z))$ given $\phi(t)$. Its $(1-\alpha)$-quantile $Q(1-\alpha;F_{\psi(V^\star(\cdot))\mid\phi(t)})$ is the theoretical threshold for conditional coverage under the oracle score map. Parallel to the i.i.d.~data case, let $F_{\psi(V^\star(\cdot))}$ denote the unconditional/marginal distribution of $\psi(V^\star(Z))$; its $(1-\alpha)$-quantile $Q(1-\alpha;F_{\psi(V^\star(\cdot))})$ is the population-level quantile that $q(V^\star(Z);\alpha)$ estimates. 
\begin{theorem}\label{theo:symmpi conditional miscoverage}
    Suppose Assumption~\ref{ass: independence score} holds. If for every $t\in\mathcal{T}$, $F_{\psi(V^\star(\cdot))\mid\phi(t)}$ is $L_t$-Lipschitz continuous for some positive constant $L_t$, then
    \begin{align}
        &\left|\mathrm{Pr}\left( Z\in\widehat C(\Omega(Z))\mid\phi(t) \right)-(1-\alpha)\right|\notag\\
        &\leq \delta(\varepsilon)+2L_t\varepsilon
        +L_t\,\mathbb{E}\bigl\{\left|q(V^\star(Z);\alpha)-Q(1-\alpha;F_{\psi(V^\star(\cdot))})\right|\mid\phi(t)\bigr\}\notag\\
        &\qquad\qquad\qquad +L_t\left|Q(1-\alpha;F_{\psi(V^\star(\cdot))})-Q(1-\alpha;F_{\psi(V^\star(\cdot))\mid\phi(t)})\right|. \label{eq: general symmpi error}
    \end{align}
\end{theorem}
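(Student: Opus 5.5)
The plan is to sandwich the coverage event between two oracle events, average over the oracle threshold using the conditional independence in Assumption~\ref{ass: independence score}(i), and then apply the Lipschitz property of $F_{\psi(V^\star(\cdot))\mid\phi(t)}$. Write $S=\psi(V(Z))$, $q=q(V(Z);\alpha)$, $S^\star=\psi(V^\star(Z))$ and $q^\star=q(V^\star(Z);\alpha)$. Also write $F_t=F_{\psi(V^\star(\cdot))\mid\phi(t)}$, $Q_t=Q(1-\alpha;F_t)$ and $Q_0=Q(1-\alpha;F_{\psi(V^\star(\cdot))})$. By the definition of $\widehat C$, since $\Omega(Z)=z_{\rm obs}$ automatically, the coverage event is $\{Z\in\widehat C(\Omega(Z))\}=\{S\le q\}$.

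\textbf{Step 1: sandwich.} On $D_\varepsilon$ we have
\[
\{S^\star\le q^\star-2\varepsilon\}\subseteq\{S\le q\}\subseteq\{S^\star\le q^\star+2\varepsilon\}.
\]
Splitting on $D_\varepsilon$ and $D_\varepsilon^c$ and using $\Pr(D_\varepsilon^c\mid\phi(t))\le\delta(\varepsilon)$ gives
\[
\Pr(S^\star\le q^\star-2\varepsilon\mid\phi(t))-\delta(\varepsilon)\le \Pr(S\le q\mid\phi(t))\le \Pr(S^\star\le q^\star+2\varepsilon\mid\phi(t))+\delta(\varepsilon).
\]

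\textbf{Step 2: condition on $q^\star$.} By the conditional independence in Assumption~\ref{ass: independence score}(i),
\[
\Pr(S^\star\le q^\star\pm2\varepsilon\mid\phi(t))=\mathbb{E}\{F_t(q^\star\pm2\varepsilon)\mid\phi(t)\}.
\]
The Lipschitz property gives $|F_t(q^\star\pm2\varepsilon)-F_t(q^\star)|\le 2L_t\varepsilon$, which produces the $2L_t\varepsilon$ term.

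\textbf{Step 3: compare with the target level.} Decompose
\[
F_t(q^\star)-(1-\alpha)=\{F_t(q^\star)-F_t(Q_0)\}+\{F_t(Q_0)-F_t(Q_t)\}+\{F_t(Q_t)-(1-\alpha)\}.
\]
The first difference is at most $L_t|q^\star-Q_0|$ in absolute value; taking conditional expectation yields the third term of \eqref{eq: general symmpi error}. The second is at most $L_t|Q_0-Q_t|$, which is the last term. The third vanishes because $F_t$ is continuous, so $F_t(Q_t)=1-\alpha$. Combining the upper and lower bounds from Step 1 gives the two-sided inequality.

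\textbf{Main obstacle.} Step 2 is the delicate step. The independence in Assumption~\ref{ass: independence score}(i) is between $S^\star$ and $q^\star$ only, so the oracle sandwich must be arranged to involve just these two quantities plus deterministic shifts $\pm2\varepsilon$. Otherwise the conditional distribution of $S^\star$ given $q^\star$ would not equal $F_t$. The other point to verify is that continuity of $F_t$, implied by the Lipschitz assumption, makes $F_t(Q_t)=1-\alpha$ exact. With these checked, the remaining steps are routine triangle inequalities.
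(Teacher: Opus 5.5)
Your proposal is correct and follows essentially the same route as the paper, which proves the weighted version (Theorem~\ref{theo:symmpi conditional miscoverage weighted}) via the same sandwich on $D_\varepsilon$. It then uses conditional independence to write the oracle probability as $\mathbb{E}\{F_{\psi(V^\star(\cdot))\mid\phi(t)}(q(V^\star(Z);\alpha)\pm 2\varepsilon)\mid\phi(t)\}$, followed by the same Lipschitz/continuity and triangle-inequality steps. The only cosmetic difference is that the paper first bounds by $L_t\mathbb{E}\{|q(V^\star(Z);\alpha)-Q(1-\alpha;F_{\psi(V^\star(\cdot))\mid\phi(t)})|\wedge L_t^{-1}\mid\phi(t)\}$ before splitting, giving a slightly sharper truncated bound that implies the stated one.
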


Theorem~\ref{theo:symmpi conditional miscoverage} also decomposes the conditional miscoverage into three terms: the \textit{score-estimation error} $\delta(\varepsilon)+2L_t\varepsilon$, the \textit{finite-sample calibration error} through the gap between $q(V^\star(Z);\alpha)$ and its population counterpart $Q(1-\alpha;F_{\psi(V^\star(\cdot))})$, and the \textit{intrinsic conditional-mismatch error}, given by the gap between the population quantile $Q(1-\alpha;F_{\psi(V^\star(\cdot))})$ and the target conditional quantile $Q(1-\alpha;F_{\psi(V^\star(\cdot))\mid\phi(t)})$. This three-term structure mirrors the decomposition in Theorem~\ref{theo: main miscoverage} for i.i.d.~data. 

The graph setting below provides a concrete illustration, where the conditioning event is community membership. A second example based on a two-layer hierarchical model is given in Section~\ref{sec: two layer hierarchical} of the Supplementary Material. 
In both cases, the unified theory plays the same role: after controlling score estimation and finite-sample calibration, asymptotic conditional validity reduces to quantifying the structure-specific intrinsic conditional-mismatch error.

\subsubsection{Community-Conditional Conformal Prediction on Graphs}\label{sec: community conditional}

We study community-conditional coverage on graphs, motivated by real-world network structures with dense within-community and sparse between-community connections. In document classification on citation networks \cite{sen2008collective}, for example, one may want prediction sets that are calibrated not only marginally, but also within thematic communities \cite{clarkson2023distribution}. 
We adopt and extend the graph-data notation of \cite{lunde2025conformal}. Let
$Z=\{(W_1,X_1,Y_1),\ldots,(W_{n+1},X_{n+1},Y_{n+1}),A\}\in\mathcal{Z}$
be drawn from $P$, where $A=(A_{i,j})_{i,j\in[n+1]}\in\mathbb R^{(n+1)\times(n+1)}$ denotes the adjacency or connection matrix and $(W_i,X_i,Y_i)\in\mathcal W\times\mathcal X\times\mathcal Y$. Here, $W_i\in\mathcal W$ represents unobserved community-related information, such as community labels, community-level parameters, or latent positions in latent space models \cite{gao2022community}. Conditional on the latent variable $W_i$, the distributions of $X_i, Y_i$ and the edges involving node $i$ may vary across communities. The observation function is 
$\Omega(Z)=((X_1,Y_1),\ldots,(X_n,Y_n),X_{n+1},A)$. 
Let $\mathcal{G}$ be the permutation group on $n+1$ elements. For $g\in \mathcal{G}$, define the action $\rho$ on $Z$ by
\begin{align*}
     \rho(g, Z) =& \rho(g,\{(W_1,X_1,Y_1),\ldots,(W_{n+1},X_{n+1},Y_{n+1}),A\})\\
    =&\{(W_{g^{-1}(1)},X_{g^{-1}(1)},Y_{g^{-1}(1)}),\ldots,(W_{g^{-1}(n+1)},X_{g^{-1}(n+1)},Y_{g^{-1}(n+1)}),A^g\},
\end{align*}
where $A^g=(A_{i,j}^g)_{i,j\in[n+1]}$ is the permuted connection matrix with $A^g_{i,j}=A_{g^{-1}(i),g^{-1}(j)}$ for $i,j\in [n+1]$.
Thus, the joint exchangeability condition in \cite{lunde2025conformal} is equivalent to $(G,\rho)$-distributional invariance. Assume that there are $n_{\rm comm}$ non-overlapping communities, labeled by $[n_{\rm comm}]$. For a node-level latent variable $W$, let $\sigma(W)\in[n_{\rm comm}]$ denote its community label.
Assume $p_{\rm comm}=\inf_{\mathcal{I}\in[n_{\rm comm}]}\mathrm{Pr}(\sigma(W)=\mathcal{I})>0$.

To place this construction within the SymmPI framework, define the score map $V(Z)=(s(X_1,Y_1;Z),\ldots,s(X_{n+1},Y_{n+1};Z))$ and let $\widetilde\rho$ act on $V(Z)$ by permuting the score entries according to $g$, i.e., $\widetilde\rho(g,V(Z))=(s(X_{g^{-1}(1)},Y_{g^{-1}(1)};Z),\ldots,s(X_{g^{-1}(n+1)},Y_{g^{-1}(n+1)};Z))$. 
Here, $s(x,y;Z)$ is the conformal score used to construct the prediction set.
The test score extractor is $\psi(V(Z))=s(X_{n+1},Y_{n+1};Z)$. Under the joint exchangeability condition, $Z$ is $(\mathcal{G},\rho)$-distributionally invariant and $V$ is $(\mathcal{G},\rho,\widetilde\rho)$-distributionally equivariant.

Let $Z^y$ be obtained from $Z$ by replacing $Y_{n+1}$ with trial value $y$.
Assume the score is permutation invariant, in the sense that $s(x,y;z)=s(x,y;\rho(g,z))$ for any $g\in\mathcal{G}$. The SymmPI framework then yields a graph conformal prediction (GraphCP) set of the form 
\begin{gather*}
    \widehat{C}_{\rm GraphCP}(X_{n+1})=\Bigl\{ y:s(X_{n+1},y;Z^y)\leq Q\Bigl( 1-\alpha;(n+1)^{-1}\Bigl(\sum_{i=1}^{n}\delta_{s(X_i,Y_i;Z^y)}+\delta_{\infty}\Bigr) \Bigr) \Bigr\}.
\end{gather*}
We define the community-conditional coverage target as
\begin{gather*}
    \mathrm{Pr}\left( Y_{n+1}\in\widehat{C}_{\rm GraphCP}(X_{n+1})\mid\sigma(W_{n+1})=\mathcal{I} \right), \quad \mathcal{I}\in[n_{\rm comm}]\,.
\end{gather*}
Taking $\mathcal{T}=[n_{\rm comm}]$ and
$\phi(\mathcal{I})=\{\sigma(W_{n+1})=\mathcal{I}\}$, this target falls under Theorem~\ref{theo:symmpi conditional miscoverage}.
The remaining task is to construct score $s(\cdot,\cdot;Z)$ whose oracle community-conditional $(1-\alpha)$-quantile is the same as the corresponding unconditional/marginal $(1-\alpha)$-quantile.

Suppose a community detection algorithm applied to $A$ yields predicted labels 
$\widehat{\sigma}(A)=(\widehat{\sigma}(A)_1,\ldots,\widehat{\sigma}(A)_{n+1})\in\mathbb{N}^{n+1}$.
We assume that the community detection rule is permutation equivariant up to a relabeling of labels, which ensures that the score defined below is invariant to node indexing.
Let $v(x,y;Z):\mathcal X\times\mathcal Y\times\mathcal Z\to\mathbb R$ be a base score, either pre-trained or trained from $Z$. Given a full data object $Z$, define the within-community rank score of node $i$ by
\begin{gather*}
    s(X_i,Y_i;Z)=\dfrac{\sum_{j=1}^{n+1}\mathbbm{1}(\widehat{\sigma}(A)_j=\widehat{\sigma}(A)_i\text{ and }v(X_j,Y_j;Z)\leq v(X_i,Y_i;Z))}{\sum_{j=1}^{n+1}\mathbbm{1}(\widehat{\sigma}(A)_j=\widehat{\sigma}(A)_i)}\,.
\end{gather*}

The purpose of this within-community ranking is to remove community-level heterogeneity in the distribution of the base score. When the community detection algorithm consistently
recovers the true partition and $v(x,y;Z)$ converges to an oracle score $v^\star(x,y)$, the oracle conditional quantile no longer depends on $\mathcal{I}$.
Then the intrinsic conditional-mismatch error vanishes asymptotically in Theorem~\ref{theo:symmpi conditional miscoverage}.
The remaining error terms are driven by community
recovery, score estimation, and finite community sizes.

Since community labels are identifiable only up to permutation,  we measure the distance between the true labels $\sigma$ and the estimated labels $\widehat\sigma$ by the misclustering proportion
\begin{gather*}
    \ell_{\rm mis}(\sigma,\widehat{\sigma};\{W_i\}_{i\in[n+1]},A)=\min_{\pi\in\mathcal{G}_{n_{\rm comm}}}(n+1)^{-1}\sum_{i=1}^{n+1}\mathbbm{1}\left( \sigma(W_i)\neq\pi(\widehat{\sigma}(A)_i) \right)\,,
\end{gather*}
where $\mathcal{G}_{n_{\rm comm}}$ denotes the permutation group of $[n_{\rm comm}]$. We impose the following condition on community recovery and base-score approximation.

\begin{assumption}\label{ass: community estimate}
    For any $\varepsilon>0$, there exist $\delta_{\rm comm}(n,\varepsilon)$ and $\delta(n,\varepsilon)$ such that
    \begin{gather*}
        \mathrm{Pr}\left( \ell_{\rm mis}(\sigma,\widehat{\sigma};\{W_i\}_{i\in[n+1]},A)\leq\varepsilon \right)\geq 1-\delta_{\rm comm}(n,\varepsilon),\\
        \mathrm{Pr}\left( \max_{i\in[n+1]}|v(X_i,Y_i;Z)-v^\star(X_i,Y_i)|\leq\varepsilon \right)\geq 1-\delta(n,\varepsilon)\,.
    \end{gather*}
\end{assumption}
For a fixed design $\{(W_i,X_i)\}_{i\in[n+1]}$ under a latent space model for $A$, Proposition 7 in \cite{gao2022community} provides conditions under which Assumption \ref{ass: community estimate} holds. 
The next theorem shows that the community-conditional miscoverage converges to zero.
For each $\mathcal{I}\in[n_{\rm comm}]$, let $F_{v^\star\mid \sigma(W)=\mathcal{I}}$ denote the conditional distribution function of $v^\star(X,Y)$ given $\sigma(W)=\mathcal{I}$.

\begin{theorem}\label{theo: community conditional}
    Suppose Assumption~\ref{ass: community estimate} holds. Assume that for every $\mathcal I\in[n_{\rm comm}]$, $F_{v^\star\mid \sigma(W)=\mathcal{I}}$ is $L_{\mathcal I}$-Lipschitz continuous for some positive constant $L_{\mathcal I}$. Then for $\varepsilon\le p_{\rm comm}/4$, there exists a constant $C>0$ such that, for every $\mathcal I\in[n_{\rm comm}]$,
    \begin{align*}
        &\left|\mathrm{Pr}\left( Y_{n+1}\in\widehat{C}_{\rm GraphCP}(X_{n+1})\mid\sigma(W_{n+1})=\mathcal{I} \right)-(1-\alpha)\right|\\
        \leq&C\left\{ \varepsilon+n^{-1}+\delta_{\rm comm}(n,\varepsilon)+\delta(n,\varepsilon)+n_{\rm comm}\exp\left( -np_{\rm comm}\varepsilon^2 \right) \right\}\,.
    \end{align*}
\end{theorem}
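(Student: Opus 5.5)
The plan is to apply Theorem~\ref{theo:symmpi conditional miscoverage} with $\mathcal T=[n_{\rm comm}]$, $\phi(\mathcal I)=\{\sigma(W_{n+1})=\mathcal I\}$ and a carefully chosen oracle map. The oracle score is the within-community conditional c.d.f.\ of the oracle base score, $s^\star(X_i,Y_i)=F_{v^\star\mid\sigma(W)=\sigma(W_i)}(v^\star(X_i,Y_i))$, and we set $V^\star(Z)=(s^\star(X_1,Y_1),\ldots,s^\star(X_{n+1},Y_{n+1}))$. Since each $F_{v^\star\mid\sigma(W)=\mathcal I}$ is continuous (it is Lipschitz), $s^\star(X_{n+1},Y_{n+1})$ is $\mathrm{Unif}[0,1]$ conditional on every community event. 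Hence $Q(1-\alpha;F_{\psi(V^\star(\cdot))\mid\phi(\mathcal I)})=Q(1-\alpha;F_{\psi(V^\star(\cdot))})=1-\alpha$, the last term of \eqref{eq: general symmpi error} vanishes, and the relevant Lipschitz constant $L_t$ equals $1$.

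For condition (i) of Assumption~\ref{ass: independence score}, I would not use the $q(V^\star(Z);\alpha)$ of the theorem literally. Instead I would replace the oracle threshold by one built from the calibration oracle scores $\{s^\star(X_i,Y_i)\}_{i\le n}$ only. This is allowed because the unweighted conformal quantile changes by at most $O(n^{-1})$ when the test entry is swapped for $\infty$, and that shift is absorbed in the $n^{-1}$ term. Given $\phi(\mathcal I)$, these calibration scores are i.i.d.\ uniform and independent of the test node, so (i) holds provided I work conditionally on the latent variables and use joint exchangeability. I expect some care is needed here, because $A$ couples the nodes; the oracle quantities, however, depend only on $(W_i,X_i,Y_i)$.

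The finite-sample calibration term $\mathbb E\{|q(V^\star;\alpha)-(1-\alpha)|\mid\phi(\mathcal I)\}$ is the gap between an empirical quantile of $n$ i.i.d.\ uniforms and $1-\alpha$. By DKW this is at most $C(\varepsilon+n^{-1})+2\exp(-2n\varepsilon^2)$, which is dominated by the claimed terms.

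The main work is condition (ii): constructing the event $D_{C\varepsilon}$ with $\delta(C\varepsilon)\le\delta_{\rm comm}+\delta+n_{\rm comm}e^{-np_{\rm comm}\varepsilon^2}$. I would intersect three good events:
\begin{enumerate}
\item[(a)] the misclustering proportion is at most $\varepsilon$ after the optimal relabeling $\pi$;
\item[(b)] $\max_i|v(X_i,Y_i;Z)-v^\star(X_i,Y_i)|\le\varepsilon$;
\item[(c)] by Hoeffding plus a union bound over communities, each true community has size at least $(n+1)p_{\rm comm}/2$, and the empirical c.d.f.\ of $v^\star$ within each true community is uniformly within $\varepsilon$ of $F_{v^\star\mid\sigma(W)=\mathcal I}$. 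The DKW inequality applied within each community gives this with probability at least $1-Cn_{\rm comm}\exp(-np_{\rm comm}\varepsilon^2)$.
\end{enumerate}

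On this intersection, $\varepsilon\le p_{\rm comm}/4$ guarantees that every estimated community differs from its matched true community in at most $(n+1)\varepsilon$ nodes while having size of order $np_{\rm comm}$. Consequently, the within-community rank $s(X_i,Y_i;Z)$ differs from the true-community empirical c.d.f.\ evaluated at $v(X_i,Y_i;Z)$ by at most $C\varepsilon/p_{\rm comm}$. This holds for all but the misclustered nodes, and those are at most an $\varepsilon$ fraction. Replacing $v$ by $v^\star$ moves the argument by at most $\varepsilon$. By Lipschitz continuity of $F_{v^\star\mid\cdot}$ together with (c), this costs $C\varepsilon$ plus the number of ties in a window of width $2\varepsilon$, which (c) again controls by $C\varepsilon$.

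It follows that $|s-s^\star|\le C\varepsilon$ for all but a fraction $C\varepsilon$ of nodes, and any discrepancy is bounded by $1$ because both scores lie in $[0,1]$. A quantile of the empirical measure perturbed in this way moves by at most $C\varepsilon$. This last step uses that the uniform oracle scores have density bounded below, so a mass perturbation of size $\varepsilon$ moves the quantile by $O(\varepsilon)$.

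For the test score itself, misclustering of node $n+1$ is not controlled pointwise. I would handle it by exchangeability: given $\phi(\mathcal I)$, the probability that node $n+1$ is misclustered is at most $C\mathbb E(\ell_{\rm mis})/p_{\rm comm}\le C(\varepsilon+\delta_{\rm comm})$. Plugging everything into Theorem~\ref{theo:symmpi conditional miscoverage} yields the stated bound. I expect this test-node and exceptional-set bookkeeping to be the main obstacle.
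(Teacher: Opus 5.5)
Your argument is correct, and on the score side it follows the paper's proof. You use the same oracle score $F_{v^\star\mid\sigma(W)=\sigma(W_i)}(v^\star(X_i,Y_i))$ and the same good events: within-community DKW plus Hoeffding on community sizes, the $O(\varepsilon/p_{\rm comm})$ gap between true- and estimated-community empirical c.d.f.s, and the base-score sandwich. You also use the same exchangeability bound for misclustering of the test node.

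The genuine difference is the threshold. The paper never works with the empirical oracle quantile $q(V^\star(Z);\alpha)$. It uses the fact that within-community ranks form grids in each estimated community to show that, on $D_1\cap D_2$, the actual threshold deterministically satisfies $|q(V(Z);\alpha)-(1-\alpha)|\le 8/(np_{\rm comm})$. It then in effect feeds the constant $1-\alpha$ into Theorem~\ref{theo:symmpi conditional miscoverage} as the oracle threshold. This makes condition (i) of Assumption~\ref{ass: independence score} trivial, reduces the calibration term to $O(n^{-1})$, and leaves only the test score to approximate.

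You keep the random oracle threshold instead, which costs three extra pieces:
\begin{enumerate}
\item independence of $q(V^\star(Z);\alpha)$ from the oracle test score given $\phi(\mathcal I)$, which needs the node triples $(W_i,X_i,Y_i)$ to be independent rather than merely jointly exchangeable (the paper's DKW/Hoeffding step tacitly needs this too);
\item a DKW bound for the calibration term;
\item a perturbation step comparing the two thresholds when up to $(n+1)\varepsilon$ calibration scores are uncontrolled.
\end{enumerate}
In the perturbation step, the population uniform density is not the relevant fact. What you need is that the pooled empirical c.d.f.\ of the oracle calibration scores lies within $C\varepsilon$ of the identity. This does follow from your event (c) after composing with the continuous $F_{v^\star\mid\sigma(W)=\mathcal I}$, so state it that way.

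Your route buys a literal application of the SymmPI decomposition that does not depend on the ranks forming exact tie-free grids, since ties in $v$ are absorbed by the DKW/Lipschitz window. The paper's route is shorter. Finally, the SymmPI threshold already places an atom at $\infty$ in place of the test score, so the modification you describe for condition (i) is unnecessary.
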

Theorem~\ref{theo: community conditional} provides the graph-specific conditional miscoverage decomposition in Theorem~\ref{theo:symmpi conditional miscoverage}. The terms $\varepsilon+\delta_{\rm comm}(n,\varepsilon)+\delta(n,\varepsilon)$ capture community-recovery and score-estimation errors, while $n^{-1}+n_{\rm comm}\exp(-n p_{\rm comm}\varepsilon^2)$ is the finite-sample calibration error driven by the discreteness of conformal calibration and the minimum community size. The within-community rank score construction removes the intrinsic conditional-mismatch error asymptotically, as the oracle rank score has the same conditional distribution across communities.

\section{Numerical Experiments}
\label{sec: experimental simulation}

We conduct numerical experiments on both simulated and real-world data to validate the unified theory and demonstrate its applicability across different conditional conformal settings. The numerical studies include simulations of conditional methods under covariate shift, experiments on conditional-coverage-oriented model selection, and a real-data analysis on graph-structured data. Additional numerical experiments are provided in Section~\ref{sec: additional numerical} of the Supplementary Material.

\subsection{Synthetic Data Experiments}
\label{sec: Synthetic Data Experiments}

We consider the regression model $Y = \mu(X) + \epsilon(X)$, where $X = (X_1, \ldots, X_d)^{\top} \sim N_d(0, I_d)$ and $\mu(X)=2\sum_{i=1}^{d}X_i/d$. The noise term $\epsilon(X)$ follows one of the following data-generating processes (DGPs): \\
\noindent\textbf{DGP1}: $\epsilon(X) \mid X \sim N\big(0, \big(\sum_{i=1}^{d}\big||X_i|-\sqrt{2/\pi}\big|\big)^2/d\big)$, \\
\noindent\textbf{DGP2}: $\epsilon(X) \mid X \sim N\big(0, \big\{\sum_{i=1}^{d}\exp(|X_i|)\big\}^2/d\big)$, \\
\noindent\textbf{DGP3}: $\epsilon(X) \mid X \sim N\big(0, \sum_{i=1}^{d}|X_i|/d\big)$.

Across all experiments, conformal prediction sets are constructed under a common data-splitting protocol and the nominal coverage is $1-\alpha=90\%$. We first split the data into a training set $Z_{\mathrm{tr}}$ and a calibration set $Z$, where $Z$ is reserved only for calibration. When a method requires estimating conditional score distributions or conditional quantiles, $Z_{\mathrm{tr}}$ is further divided evenly into two parts: $Z_{\mathrm{tr},1}$ is used to train the base predictor by linear regression and construct the base score function $v(x,y)=|y-\hat{\mu}(x)|$, while $Z_{\mathrm{tr},2}$ is used for the corresponding conditional distribution or quantile estimation. In our implementation, we take $|Z_{\mathrm{tr}}|=2|Z|=2n$ with $n=500,1000$, and use an independent test set of size $n_{\rm te}=500$. Each experiment is repeated 50 times.

For each DGP, we draw the test covariates $X_{{\rm te},1},\ldots,X_{{\rm te},n_{\rm te}}$ once and keep them fixed across repetitions. Since the conditional distribution of $Y$ given $X=x$ is known in the simulation, the conditional coverage probability $\Pr(Y\in \widehat{C}(X)\mid X=x)$ can be computed for any realized prediction set $\widehat{C}(X)$. Let $\widehat{C}^{(1)}(X),\ldots,\widehat{C}^{(50)}(X)$ denote the prediction sets obtained over the 50 repetitions. We report the marginal coverage as
$(50n_{\rm te})^{-1}\sum_{j=1}^{50}\sum_{i=1}^{n_{\rm te}}\Pr(Y\in\widehat{C}^{(j)}(X)\mid X=X_{{\rm te},i})$,
and the conditional miscoverage error as
$n_{\rm te}^{-1}\sum_{i=1}^{n_{\rm te}}\left|50^{-1}\sum_{j=1}^{50}\Pr(Y\in\widehat{C}^{(j)}(X)\mid X=X_{{\rm te},i})-(1-\alpha)\right|$.

\subsubsection{Conditional Conformal Methods under Covariate Shift}\label{sec: covariate shift}

We first evaluate conditional conformal methods under covariate shift. The conditional distribution of $Y\mid X$ follows DGP1--DGP3. The training and calibration covariates are generated from $X\sim N_d(0,I_d)$, whereas the test covariates are generated from $X\sim N_d(0,\sigma^2 I_d)$, with $\sigma\in\{0.8,1.0,1.2\}$. We consider dimensions $d\in\{5,10,15,20\}$.

We compare 11 conformal procedures. These include five conditional methods without covariate-shift adjustment, their five density-ratio-weighted counterparts, and the standard WCP based on the base score. The five conditional methods are LCP, GLCP with Engression \cite{shen2025engression} as the conditional distribution estimator, and three CQR-type methods that center the base score by an estimated conditional score quantile: linear quantile regression (CQR-LR), quantile random forests (CQR-RF), and LightGBM \cite{ke2017lightgbm} (CQR-LGB).


\begin{table*}[htbp]
    \caption{Conditional miscoverage error and marginal coverage under covariate shift for $n=500$ and $d=10$. Methods with smaller conditional miscoverage error than WCP are highlighted in \textbf{bold} with $*$. Methods whose marginal coverage deviates from $1-\alpha$ by more than $0.02$ are highlighted in \textit{italic} with $\times$.}
    \label{tab: main cs 1}
    \begin{center}
    {\fontsize{7}{6}\selectfont{\setlength{\tabcolsep}{3.5pt}\begin{tabular}{cccccccccccccccccc}
    \toprule
    & & \multicolumn{5}{c}{No covariate-shift adjustment} & & \multicolumn{5}{c}{Density-ratio weighted}\\
    \cmidrule(lr){3-7} \cmidrule(lr){9-13} 
    & $\sigma$ & GLCP & LCP & CQR-LR & CQR-RF & CQR-LGB & & GLCP & LCP & CQR-LR & CQR-RF & CQR-LGB & & WCP \\
    \midrule
    & & \multicolumn{11}{c}{Conditional Miscoverage}\\
    DGP1 & $0.8$ & $0.128$ & $0.116$ & $0.128$ & $0.142$ & $0.131$ &  & $\textbf{0.099}^*$ & $\textbf{0.097}^*$ & $\textbf{0.099}^*$ & $\textbf{0.098}^*$ & $\textbf{0.095}^*$ &  & $0.101$\\
    & $1.0$ & $\textbf{0.099}^*$ & $\textbf{0.102}^*$ & $\textbf{0.103}^*$ & $\textbf{0.098}^*$ & $\textbf{0.097}^*$ &  & $\textbf{0.100}^*$ & $\textbf{0.102}^*$ & $\textbf{0.103}^*$ & $\textbf{0.098}^*$ & $\textbf{0.097}^*$ &  & $0.104$\\
    & $1.2$ & $0.150$ & $0.169$ & $0.159$ & $0.116$ & $0.138$ &  & $\textbf{0.102}^*$ & $\textbf{0.106}^*$ & $\textbf{0.105}^*$ & $\textbf{0.092}^*$ & $\textbf{0.100}^*$ &  & $0.107$\\
    DGP2 & $0.8$ & $\textbf{0.033}^*$ & $0.061$ & $0.067$ & $\textbf{0.038}^*$ & $\textbf{0.038}^*$ &  & $\textbf{0.020}^*$ & $\textbf{0.045}^*$ & $0.050$ & $\textbf{0.027}^*$ & $\textbf{0.019}^*$ &  & $0.050$\\
    & $1.0$ & $\textbf{0.033}^*$ & $\textbf{0.056}^*$ & $0.063$ & $\textbf{0.030}^*$ & $\textbf{0.039}^*$ &  & $\textbf{0.033}^*$ & $\textbf{0.056}^*$ & $0.063$ & $\textbf{0.030}^*$ & $\textbf{0.039}^*$ &  & $0.062$\\
    & $1.2$ & $\textbf{0.062}^*$ & $0.096$ & $0.109$ & $\textbf{0.048}^*$ & $\textbf{0.066}^*$ &  & $\textbf{0.046}^*$ & $\textbf{0.064}^*$ & $0.070$ & $\textbf{0.039}^*$ & $\textbf{0.057}^*$ &  & $0.070$\\
    DGP3 & $0.8$ & $\textbf{0.026}^*$ & $0.038$ & $0.041$ & $\textbf{0.019}^*$ & $\textbf{0.016}^*$ &  & $\textbf{0.026}^*$ & $\textbf{0.033}^*$ & $0.036$ & $\textbf{0.020}^*$ & $\textbf{0.015}^*$ &  & $0.035$\\
    & $1.0$ & $\textbf{0.025}^*$ & $\textbf{0.032}^*$ & $0.035$ & $\textbf{0.017}^*$ & $\textbf{0.014}^*$ &  & $\textbf{0.025}^*$ & $\textbf{0.032}^*$ & $0.035$ & $\textbf{0.017}^*$ & $\textbf{0.014}^*$ &  & $0.034$\\
    & $1.2$ & $0.032$ & $0.041$ & $0.047$ & $\textbf{0.015}^*$ & $\textbf{0.018}^*$ &  & $\textbf{0.024}^*$ & $\textbf{0.029}^*$ & $0.033$ & $\textbf{0.015}^*$ & $\textbf{0.013}^*$ &  & $0.032$\\
    \midrule
    & & \multicolumn{11}{c}{Marginal coverage}\\
    DGP1 & $0.8$ & $\textit{0.850}^\times$ & $\textit{0.872}^\times$ & $\textit{0.858}^\times$ & $\textit{0.831}^\times$ & $\textit{0.846}^\times$ &  & $0.893$ & $0.903$ & $0.903$ & $0.900$ & $0.902$ &  & $0.901$\\
    & $1.0$ & $0.899$ & $0.908$ & $0.907$ & $0.902$ & $0.906$ &  & $0.899$ & $0.908$ & $0.907$ & $0.902$ & $0.906$ &  & $0.907$\\
    & $1.2$ & $\textit{0.825}^\times$ & $\textit{0.813}^\times$ & $\textit{0.824}^\times$ & $\textit{0.865}^\times$ & $\textit{0.842}^\times$ &  & $0.896$ & $0.907$ & $0.904$ & $0.903$ & $0.903$ &  & $0.902$\\
    DGP2 & $0.8$ & $\textit{0.929}^\times$ & $\textit{0.957}^\times$ & $\textit{0.962}^\times$ & $\textit{0.935}^\times$ & $\textit{0.933}^\times$ &  & $0.898$ & $0.907$ & $0.906$ & $0.906$ & $0.905$ &  & $0.907$\\
    & $1.0$ & $0.897$ & $0.904$ & $0.905$ & $0.903$ & $0.906$ &  & $0.897$ & $0.904$ & $0.905$ & $0.903$ & $0.906$ &  & $0.905$\\
    & $1.2$ & $\textit{0.850}^\times$ & $\textit{0.829}^\times$ & $\textit{0.820}^\times$ & $\textit{0.865}^\times$ & $\textit{0.856}^\times$ &  & $0.897$ & $0.906$ & $0.905$ & $0.903$ & $0.907$ &  & $0.906$\\
    DGP3 & $0.8$ & $0.916$ & $\textit{0.932}^\times$ & $\textit{0.935}^\times$ & $0.906$ & $0.912$ &  & $0.899$ & $0.904$ & $0.905$ & $0.904$ & $0.905$ &  & $0.905$\\
    & $1.0$ & $0.895$ & $0.902$ & $0.902$ & $0.899$ & $0.902$ &  & $0.895$ & $0.902$ & $0.902$ & $0.899$ & $0.902$ &  & $0.903$\\
    & $1.2$ & $\textit{0.878}^\times$ & $\textit{0.874}^\times$ & $\textit{0.869}^\times$ & $0.898$ & $0.891$ &  & $0.893$ & $0.900$ & $0.900$ & $0.899$ & $0.901$ &  & $0.899$\\
    \bottomrule
    \end{tabular}}}
    \end{center}
\end{table*}

Table~\ref{tab: main cs 1} reports the results for $n=500$ and $d=10$; additional settings and implementation details are provided in Section~\ref{sec: method_details} of the Supplementary Material. When $\sigma\neq 1$, methods without covariate-shift adjustment can deviate substantially from the nominal marginal coverage level, whereas their density-ratio-weighted counterparts recover valid marginal coverage. Most density-ratio-weighted conditional methods also have smaller conditional miscoverage error than WCP, reflecting the benefit of combining density-ratio correction with conditional adaptation. The CQR-LR variant is less stable under the more complex DGPs, where the linear conditional quantile model is misspecified.

\subsubsection{Conditional-Coverage-Oriented Model Selection}\label{sec: selection experiment}

We evaluate the conditional-coverage-oriented model selection procedure proposed in Section~\ref{sec: model selection} under DGP1--DGP3. We consider 20 candidate conformal prediction sets in the CQR form by using different conditional quantile estimators and tuning parameter choices; implementation details are provided in Section~\ref{sec: detail selection experiment} of the Supplementary Material.

We use the localized conditional-moment statistic $\mathcal{L}(s;Z)$ motivated by \cite{tripathi2003testing}. Let
$a_{ij}(h)=K(X_i,X_j;h)/\sum_{\ell=1}^{n+1}K(X_i,X_\ell;h)$ for $i,j\in[n+1]$ and 
\begin{align*}
    & \mathcal{L}\left(s;Z\right)
    =\sum_{i=1}^{n+1}\sum_{j=1}^{n+1}a_{ij}(h)\log\Bigl(1+\dfrac{\lambda_i\hat{\zeta}_s(X_j,Y_j;Z)}{n+1}\Bigr) \\
    & \ {\rm s.t.}\ 
    \sum_{j=1}^{n+1}\dfrac{a_{ij}(h)\hat{\zeta}_s(X_j,Y_j;Z)}{n+1+\lambda_i\hat{\zeta}_s(X_j,Y_j;Z)}=0, \quad i=1,\ldots, n+1\,,
\end{align*}
where $\hat{\zeta}_s(x,y;Z)=\mathbbm{1}\{s(x,y;Z)\le q_s(Z;\alpha)\}-(1-\alpha)$ and $\lambda_i$ is chosen to satisfy the constraint for each $i\in[n+1]$.
The criterion is small when the conditional moments, corresponding to $\mathbb{E}\{\hat{\zeta}_s(X,Y;Z)\mid Z,X=X_i\}$, are close to zero.
The criterion depends on the bandwidth $h$. Following \cite{hore2025conformal}, we define the effective sample size
$n_{\mathrm{eff}}(h)=n \mathbb{E}\big[\mathbb{E}\{K(X_1,X_2;h)\mid X_1\}^2\big]/\mathbb{E}\{K^2(X_1,X_2;h)\}$.
Based on $Z_{\mathrm{tr},1}$ and $Z_{\mathrm{tr},2}$, we estimate this quantity by $\hat n_{\mathrm{eff}}(h)$. We choose bandwidths $\hat h_{30}$, $\hat h_{40}$, and $\hat h_{50}$ by matching the target effective sample sizes 30, 40, and 50, respectively, and compute $\mathcal{L}(s;Z)$ for each candidate score on the calibration set.

We aggregate the three losses based on $\hat h_{30}$, $\hat h_{40}$, and $\hat h_{50}$ using two selection rules: selecting the candidate with the smallest average loss (\textbf{AvgLoss}), and selecting the candidate with the smallest average rank across the three losses (\textbf{AvgRankLoss}). We compare them with two baselines: selecting the candidate with the smallest average prediction-set size on $Z$ (\textbf{EffSize}) and selecting a candidate uniformly at random (\textbf{Rand}). Experiments are conducted for $n=500,1000$ and $d\in\{1,\ldots,20\}$, with 50 repetitions for each setting. 

\begin{figure}[hbtp]
    \centering
    \includegraphics[scale=0.31]{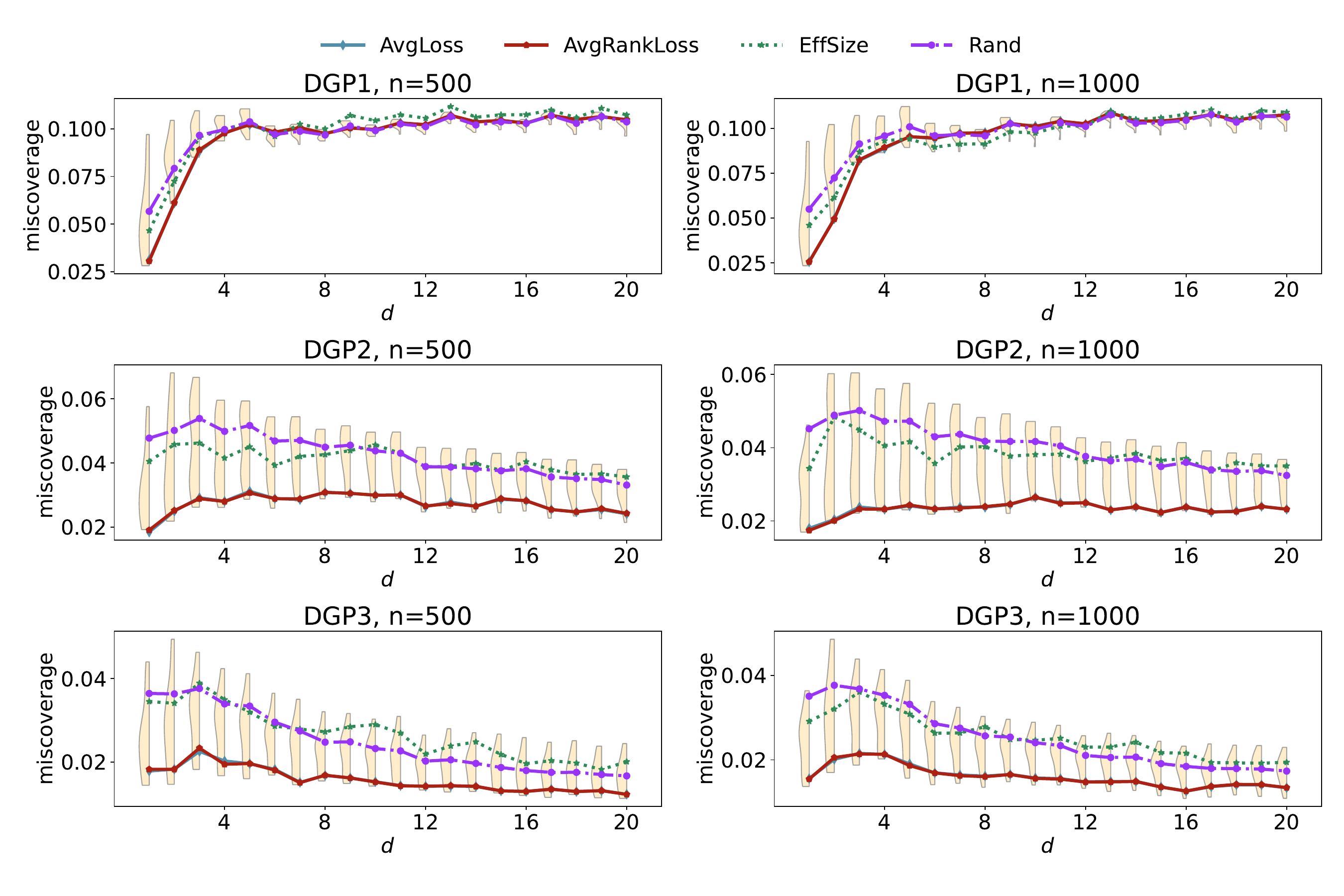}
    \caption{Comparison of selection strategies under DGP1--DGP3. Violin plots show the conditional miscoverage of candidate methods. Curves correspond to the selected candidates under AvgLoss, AvgRankLoss, EffSize, and Rand.}
    \label{fig: selection 123}
\end{figure}

Figure~\ref{fig: selection 123} reports violin plots of the conditional miscoverage across the 20 candidate methods, with the worst-performing 30\% of candidates omitted for visual clarity, and overlays the performance of the four selection strategies. \textbf{AvgLoss} and \textbf{AvgRankLoss} behave similarly and consistently select candidates close to the best-performing region of the candidate pool. In contrast, the \textbf{EffSize} and \textbf{Rand} baselines typically remain near the median of the candidate pool, showing that efficiency-driven or random selection does not reliably identify candidates with strong conditional coverage performance.

\subsection{Real Data Analysis on a Graph Dataset}\label{sec: graph real data}

We apply the community-conditional conformal prediction (GraphCP) method proposed in Section~\ref{sec: community conditional} to the Toloker graph dataset, available at \texttt{Kaggle.com}. The dataset contains a graph with 11,758 nodes and 519,000 undirected edges on a crowdsourcing platform, where nodes represent annotators and edges represent shared task annotations. Each node has numerical and categorical profile features, and the binary response indicates whether the annotator was banned. The goal is to predict ban status while providing prediction sets whose coverage adapts to the detected community structure.

The dataset provides 10 predefined random splits, each dividing the nodes into training, calibration, and test sets in a $2:1:1$ ratio. For each split, we train a LightGBM classifier using node features and neighborhood-derived features from the retained graph, and construct the base conformity score from the classifier's predicted probabilities. GraphCP and standard conformal prediction \cite[\textbf{StdCP}]{lunde2025conformal} use the same splits, base model, and score construction; they differ only in calibration. StdCP uses pooled conformal calibration, whereas GraphCP uses the community-aware calibration described in Section~\ref{sec: community conditional}. Details on graph construction, feature engineering, and model training are given in Section~\ref{sec: graph details} of the Supplementary Material.

To evaluate community-conditional coverage, we run the Louvain algorithm \cite{blondel2008fast} on the graph to obtain a reference community partition. Communities with fewer than 10 nodes are merged into a single outlier community, resulting in 8 final communities. These labels are used only for evaluation. Table~\ref{table: tolokerTS} reports the community proportions, community-wise miscoverage rates, and average prediction-set sizes over the 10 splits. The aggregate column is the weighted average of the community-wise conditional miscoverage results, using the proportions in the ``Prop.'' row as weights.

\begin{table*}[h]
    \caption{Community-wise miscoverage rates (in \textbf{bold}) and average prediction-set sizes (in parentheses) for predicting whether an annotator is banned. ``Prop.'' denotes the proportion of test nodes in each detected community.}
    \begin{center}
    {\fontsize{9}{9}\selectfont{\setlength{\tabcolsep}{3.5pt}\begin{tabular}{cccccccccccc}
    \toprule
    Community Index & 1 & 2 & 3 & 4 & 5 & 6 & 7 & 8 & Agg\vspace{4pt}\\ \midrule
    Prop. & $0.314$ & $0.207$ & $0.175$ & $0.156$ & $0.127$ & $0.010$ & $0.008$ & $0.003$ & $1.000$ \\
    GraphCP & $\textbf{0.036}$ & $\textbf{0.000}$ & $\textbf{0.006}$ & $\textbf{0.017}$ & $\textbf{0.010}$ & $\textbf{0.041}$ & $\textbf{0.014}$ & $\textbf{0.105}$ & $\textbf{0.0170}$ \\
    & $(0.990)$ & $(1.586)$ & $(1.428)$ & $(1.526)$ & $(1.243)$ & $(1.440)$ & $(1.425)$ & $(1.333)$ & $(1.3150)$ \\
    StdCP & $\textbf{0.065}$ & $\textbf{0.044}$ & $\textbf{0.015}$ & $\textbf{0.009}$ & $\textbf{0.011}$ & $\textbf{0.034}$ & $\textbf{0.023}$ & $\textbf{0.131}$ & $\textbf{0.0358}$ \\
    & $(1.048)$ & $(1.456)$ & $(1.347)$ & $(1.408)$ & $(1.247)$ & $(1.477)$ & $(1.425)$ & $(1.295)$ & $(1.2745)$ \\
    \bottomrule
    \end{tabular}}}
    \end{center}
    \label{table: tolokerTS}
\end{table*}

The results show that GraphCP reduces the aggregate miscoverage rate from $0.0358$ to $0.0170$, with only a modest increase in average prediction-set size. The improvement is most visible in several larger communities, such as communities 1--3, and also in community 8, where StdCP has the largest noncoverage rate. 
These findings demonstrate the practical effectiveness of the proposed GraphCP framework for achieving stronger conditional guarantees on a real-world graph dataset.

\section{Conclusion}\label{sec: conclusion}

This paper develops a unified framework for studying conditional coverage in conformal prediction. By representing diverse methods through weighted conformal quantiles, we derive pointwise and averaged conditional-miscoverage bounds that separate score-estimation, finite-sample calibration, and intrinsic conditional-mismatch errors. The theory clarifies existing conditional conformal methods and guides new procedures for model selection, conditional conformal methods under covariate shift, and structured data.
Several directions remain for future work. First, the covariate-shift theory assumes a known density ratio; allowing estimated ratios would require controlling how ratio-estimation error propagates into weighted calibration and conditional miscoverage. Second, conditional coverage becomes more challenging with high-dimensional covariates, where local neighborhoods are sparse and the relevant conditioning structure may be latent or low-dimensional. Extending the framework to learn such cases while preserving valid conditional-miscoverage control would broaden the applicability of conditional conformal methods.

\bibliographystyle{agsm}
\bibliography{ref}

@article{yang2025selection,
  title={Selection and aggregation of conformal prediction sets},
  author={Yang, Yachong and Kuchibhotla, Arun Kumar},
  journal={Journal of the American Statistical Association},
  volume={120},
  number={549},
  pages={435--447},
  year={2025},
  publisher={Taylor \& Francis}
}

@article{foygel2021limits,
  title={The limits of distribution-free conditional predictive inference},
  author={Foygel Barber, Rina and Candes, Emmanuel J and Ramdas, Aaditya and Tibshirani, Ryan J},
  journal={Information and Inference: A Journal of the IMA},
  volume={10},
  number={2},
  pages={455--482},
  year={2021},
  publisher={Oxford University Press}
}

@article{lei2018distribution,
  title={Distribution-free predictive inference for regression},
  author={Lei, Jing and G'Sell, Max and Rinaldo, Alessandro and Tibshirani, Ryan J. and Wasserman, Larry},
  journal={Journal of the American Statistical Association},
  volume={113},
  number={523},
  pages={1094--1111},
  year={2018},
  publisher={Taylor \& Francis}
}

@article{zhou2026conformal,
  title={Conformal Prediction Assessment: A Framework for Conditional Coverage Evaluation and Selection},
  author={Zhou, Zheng and Zhang, Xiangfei and Tao, Chongguang and Yang, Yuhong},
  journal={arXiv preprint arXiv:2603.27189},
  year={2026}
}

@article{lei2013,
author = { Jing   Lei  and  James   Robins  and  Larry   Wasserman },
title = {Distribution-Free Prediction Sets},
journal = {Journal of the American Statistical Association},
volume = {108},
number = {501},
pages = {278-287},
year  = {2013},
publisher = {Taylor & Francis}
}

@article{barber2026unifying,
  title={Unifying different theories of conformal prediction},
  author={Barber, Rina Foygel and Tibshirani, Ryan J},
  journal={Electronic Journal of Statistics},
  volume={20},
  number={1},
  pages={1428--1474},
  year={2026},
  publisher={The Institute of Mathematical Statistics and the Bernoulli Society}
}

@article{li2007quantile,
  title={Quantile regression in reproducing kernel Hilbert spaces},
  author={Li, Youjuan and Liu, Yufeng and Zhu, Ji},
  journal={Journal of the American Statistical Association},
  volume={102},
  number={477},
  pages={255--268},
  year={2007},
  publisher={Taylor \& Francis}
}

@article{meinshausen2006quantile,
  title={Quantile regression forests.},
  author={Meinshausen, Nicolai and Ridgeway, Greg},
  journal={Journal of machine learning research},
  volume={7},
  number={6},
  year={2006}
}

@inproceedings{NIPS2017_6449f44a,
 author = {Ke, Guolin and Meng, Qi and Finley, Thomas and Wang, Taifeng and Chen, Wei and Ma, Weidong and Ye, Qiwei and Liu, Tie-Yan},
 booktitle = {Advances in Neural Information Processing Systems},
 title = {LightGBM: A Highly Efficient Gradient Boosting Decision Tree},
 volume = {30},
 year = {2017}
}

@article{angelopoulos2024theoretical,
  title={Theoretical foundations of conformal prediction},
  author={Angelopoulos, Anastasios N and Barber, Rina Foygel and Bates, Stephen},
  journal={arXiv preprint arXiv:2411.11824},
  year={2024}
}

@article{zhou2002covering,
  title={The covering number in learning theory},
  author={Zhou, Ding-Xuan},
  journal={Journal of Complexity},
  volume={18},
  number={3},
  pages={739--767},
  year={2002},
  publisher={Elsevier}
}

@article{feldman2018generalization,
  title={Generalization bounds for uniformly stable algorithms},
  author={Feldman, Vitaly and Vondrak, Jan},
  journal={Advances in Neural Information Processing Systems},
  volume={31},
  year={2018}
}

@article{lunde2025conformal,
  title={Conformal prediction for network-assisted regression},
  author={Lunde, Robert and Levina, Elizaveta and Zhu, Ji},
  journal={Journal of the American Statistical Association},
  volume={120},
  number={551},
  pages={1633--1644},
  year={2025},
  publisher={Taylor \& Francis}
}

@article{gao2022community,
  title={Community detection in sparse latent space models},
  author={Gao, Fengnan and Ma, Zongming and Yuan, Hongsong},
  journal={Journal of Machine Learning Research},
  volume={23},
  number={322},
  pages={1--50},
  year={2022}
}

@article{guan2023localized,
  title={Localized conformal prediction: A generalized inference framework for conformal prediction},
  author={Guan, Leying},
  journal={Biometrika},
  volume={110},
  number={1},
  pages={33--50},
  year={2023},
  publisher={Oxford University Press}
}

@article{hore2025conformal,
  title={Conformal prediction with local weights: randomization enables robust guarantees},
  author={Hore, Rohan and Barber, Rina Foygel},
  journal={Journal of the Royal Statistical Society Series B: Statistical Methodology},
  volume={87},
  number={2},
  pages={549--578},
  year={2025},
  publisher={Oxford University Press UK}
}

@article{gibbs2025conformal,
  title={Conformal prediction with conditional guarantees},
  author={Gibbs, Isaac and Cherian, John J and Cand{\`e}s, Emmanuel J},
  journal={Journal of the Royal Statistical Society Series B: Statistical Methodology},
  pages={qkaf008},
  year={2025},
  publisher={Oxford University Press UK}
}

@inproceedings{romano2019conformalized,
  title={Conformalized quantile regression},
  author={Romano, Yaniv and Patterson, Evan and Cand\`es, Emmanuel},
  booktitle={Advances in Neural Information Processing Systems},
  volume={32},
  year={2019}
}

@article{dobriban2025symmpi,
  title={SymmPI: predictive inference for data with group symmetries},
  author={Dobriban, Edgar and Yu, Mengxin},
  journal={Journal of the Royal Statistical Society Series B: Statistical Methodology},
  pages={qkaf022},
  year={2025},
  publisher={Oxford University Press UK}
}

@article{Chernozhukov2021distribution,
  author = {Victor Chernozhukov  and Kaspar Wüthrich  and Yinchu Zhu },

  title = {Distributional conformal prediction},
  journal = {Proceedings of the National Academy of Sciences},
  volume = {118},
  number = {48},
  pages = {e2107794118},
  year = {2021},}

@article{tripathi2003testing,
  title={Testing conditional moment restrictions},
  author={Tripathi, Gautam and Kitamura, Yuichi},
  journal={The Annals of Statistics},
  volume={31},
  number={6},
  pages={2059--2095},
  year={2003},
  publisher={Institute of Mathematical Statistics}
}

@article{liang2024conformal,
  title={Conformal prediction after data-dependent model selection},
  author={Liang, Ruiting and Zhu, Wanrong and Barber, Rina Foygel},
  journal={Journal of the American Statistical Association},
  number={just-accepted},
  pages={1--26},
  year={2026},
  publisher={Taylor \& Francis}
}

@article{tibshirani2019conformal,
  title={Conformal prediction under covariate shift},
  author={Tibshirani, Ryan J and Foygel Barber, Rina and Cand{\`e}s, Emmanuel and Ramdas, Aaditya},
  journal={Advances in neural information processing systems},
  volume={32},
  year={2019}
}

@book{vovk2005algorithmic,
  title={Algorithmic learning in a random world},
  author={Vovk, Vladimir and Gammerman, Alexander and Shafer, Glenn},
  year={2005},
  publisher={New York: Springer Science \& Business Media}
}

@inproceedings{min2025personalized,
title={Personalized Federated Conformal Prediction with Localization},
author={Yinjie Min and Chuchen Zhang and Liuhua Peng and Changliang Zou},
booktitle={The Thirty-ninth Annual Conference on Neural Information Processing Systems},
year={2025},
}

@book{wainwright2019high,
  title={High-dimensional statistics: A non-asymptotic viewpoint},
  author={Wainwright, Martin J},
  volume={48},
  year={2019},
  publisher={Cambridge university press}
}

@article{10.3150/13-BEJ549,
author = {Johannes Lederer and Sara van de Geer},
title = {{New concentration inequalities for suprema of empirical processes}},
volume = {20},
journal = {Bernoulli},
number = {4},
publisher = {Bernoulli Society for Mathematical Statistics and Probability},
pages = {2020 -- 2038},
year = {2014}
}

@inproceedings{mohri2014learning,
  title={Learning theory and algorithms for revenue optimization in second price auctions with reserve},
  author={Mohri, Mehryar and Medina, Andres Munoz},
  booktitle={International conference on machine learning},
  pages={262--270},
  year={2014},
  organization={PMLR}
}

@inproceedings{jung2023batch,
  title={Batch Multivalid Conformal Prediction},
  author={Jung, Christopher and Noarov, Georgy and Ramalingam, Ramya and Roth, Aaron},
  booktitle={International Conference on Learning Representations},
  year={2023}
}

@article{shafer2008tutorial,
  title={A tutorial on conformal prediction.},
  author={Shafer, Glenn and Vovk, Vladimir},
  journal={Journal of Machine Learning Research},
  volume={9},
  number={3},
  year={2008}
}

@article{hu2024two,
  title={A two-sample conditional distribution test using conformal prediction and weighted rank sum},
  author={Hu, Xiaoyu and Lei, Jing},
  journal={Journal of the American Statistical Association},
  volume={119},
  number={546},
  pages={1136--1154},
  year={2024},
  publisher={Taylor \& Francis}
}

@inproceedings{wu2025conditional,
  title={Conditional Testing based on Localized Conformal p-values},
  author={Wu, Xiaoyang and Lu, Lin and Wang, Zhaojun and Zou, Changliang},
  booktitle={ICLR},
  year={2025}
}

@article{bates2023testing,
  title={Testing for outliers with conformal p-values},
  author={Bates, Stephen and Cand{\`e}s, Emmanuel and Lei, Lihua and Romano, Yaniv and Sesia, Matteo},
  journal={The Annals of Statistics},
  volume={51},
  number={1},
  pages={149--178},
  year={2023},
  publisher={Institute of Mathematical Statistics}
}

@article{shen2025engression,
  title={Engression: extrapolation through the lens of distributional regression},
  author={Shen, Xinwei and Meinshausen, Nicolai},
  journal={Journal of the Royal Statistical Society Series B: Statistical Methodology},
  volume={87},
  number={3},
  pages={653--677},
  year={2025},
  publisher={Oxford University Press UK}
}

@article{bousquet2002stability,
  title={Stability and generalization},
  author={Bousquet, Olivier and Elisseeff, Andr{\'e}},
  journal={Journal of machine learning research},
  volume={2},
  number={Mar},
  pages={499--526},
  year={2002}
}

@article{blondel2008fast,
  title={Fast unfolding of communities in large networks},
  author={Blondel, Vincent D and Guillaume, Jean-Loup and Lambiotte, Renaud and Lefebvre, Etienne},
  journal={Journal of statistical mechanics: theory and experiment},
  volume={2008},
  number={10},
  pages={P10008},
  year={2008}
}

@article{bian2023training,
  title={Training-conditional coverage for distribution-free predictive inference},
  author={Bian, Michael and Barber, Rina Foygel},
  journal={Electronic Journal of Statistics},
  volume={17},
  number={2},
  pages={2044--2066},
  year={2023},
  publisher={The Institute of Mathematical Statistics and the Bernoulli Society}
}

@article{liang2025algorithmic,
  title={Algorithmic stability implies training-conditional coverage for distribution-free prediction methods},
  author={Liang, Ruiting and Barber, Rina Foygel},
  journal={The Annals of Statistics},
  volume={53},
  number={4},
  pages={1457--1482},
  year={2025},
  publisher={Institute of Mathematical Statistics}
}

@inproceedings{clarkson2023distribution,
  title={Distribution free prediction sets for node classification},
  author={Clarkson, Jase},
  booktitle={International conference on machine learning},
  pages={6268--6278},
  year={2023},
  organization={PMLR}
}

@article{sen2008collective,
  title={Collective classification in network data},
  author={Sen, Prithviraj and Namata, Galileo and Bilgic, Mustafa and Getoor, Lise and Galligher, Brian and Eliassi-Rad, Tina},
  journal={AI magazine},
  volume={29},
  number={3},
  pages={93--93},
  year={2008}
}

@book{vapnik2013nature,
  title={The nature of statistical learning theory},
  author={Vapnik, Vladimir},
  year={2013},
  publisher={Springer science \& business media}
}

@article{braun2025conditional,
  title={Conditional Coverage Diagnostics for Conformal Prediction},
  author={Braun, Sacha and Holzm{\"u}ller, David and Jordan, Michael I and Bach, Francis},
  journal={arXiv preprint arXiv:2512.11779},
  year={2025}
}

@inproceedings{chen2019nearest,
  title={Nearest neighbor and kernel survival analysis: Nonasymptotic error bounds and strong consistency rates},
  author={Chen, George},
  booktitle={International Conference on Machine Learning},
  pages={1001--1010},
  year={2019},
  organization={PMLR}
}

@article{zheng1996consistent,
  title={A consistent test of functional form via nonparametric estimation techniques},
  author={Zheng, John Xu},
  journal={Journal of Econometrics},
  volume={75},
  number={2},
  pages={263--289},
  year={1996},
  publisher={Elsevier}
}

@article{ke2017lightgbm,
  title={LightGBM: A highly efficient gradient boosting decision tree},
  author={Ke, Guolin and Meng, Qi and Finley, Thomas and Wang, Taifeng and Chen, Wei and Ma, Weidong and Ye, Qiwei and Liu, Tie-Yan},
  journal={Advances in neural information processing systems},
  volume={30},
  year={2017}
}

@article{greblicki1984distribution,
  title={Distribution-free pointwise consistency of kernel regression estimate},
  author={Greblicki, W{\l}odzimierz and Krzy{\.z}ak, Adam and Pawlak, Miros{\l}aw},
  journal={The annals of Statistics},
  pages={1570--1575},
  year={1984},
  publisher={JSTOR}
}

\appendix
\renewcommand{\thesection}{S\arabic{section}}       
\renewcommand{\thefigure}{S\arabic{figure}}         
\renewcommand{\thetable}{S\arabic{table}}          
\renewcommand{\theequation}{S\arabic{section}.\arabic{equation}}  
\renewcommand{\thetheorem}{S\arabic{theorem}}
\renewcommand{\thelemma}{S\arabic{lemma}}
\renewcommand{\thealgorithm}{S\arabic{algorithm}}
\renewcommand{\thecorollary}{S\arabic{corollary}}

\setcounter{section}{0}
\setcounter{figure}{0}
\setcounter{table}{0}
\setcounter{equation}{0}
\setcounter{theorem}{0}
\setcounter{lemma}{0}
\setcounter{algorithm}{0}

\clearpage
\section*{Supplementary Materials}
This Supplementary Material includes technical details, extensions to weighted SymmPI framework, additional theoretical results and applications, proofs of main results, preliminary lemmas and their proofs, and is organized as follows.
Section~\ref{sec: example details} provides detailed descriptions and theoretical analysis of major conditional conformal methods in the i.i.d.~setting. Section~\ref{sec:supp_covariate_shift} presents additional results on conditional methods under the covariate shift setting. Section~\ref{sec: appendix weighted symmpi} introduces the weighted SymmPI framework. Section~\ref{sec:supp_add_applications} collects further applications and extensions, including GRLCP, localized conformal $p$-values, and the two-layer hierarchical example. Section~\ref{sec: app technical details} provides additional technical discussions and clarifications, while Section~\ref{sec:pre_lemmas} collects the preliminary lemmas and empirical-quantile tools used throughout the analysis. The proofs of main results, supplementary results and preliminary lemmas are deferred to Sections~\ref{sec:proof_main}--\ref{sec:proof_pre_lemmas}. Additional numerical results are reported in Section~\ref{sec: additional numerical}.

\startcontents
\printcontents{}{1}{\section*{Contents of the Supplement}}

\section{Detailed Analyses of Conditional Methods under the Unified Theory}\label{sec: example details}

In this section, we provide detailed descriptions and theoretical analyses of several representative conformal methods targeting conditional coverage in the i.i.d.~setting. These include LCP, RLCP, CQR, DCP, GLCP, BatchGCP, and CC. The goal is to show how each method can be expressed through the unified framework introduced in Section~\ref{sec: general framework}, and how the three components in the conditional-miscoverage decomposition arise in each case.

Throughout this section, let
\begin{gather*}
    Z=\{(X_1,Y_1),\ldots,(X_n,Y_n),(X_{n+1},Y_{n+1})\}
\end{gather*}
denote the calibration and test data, drawn i.i.d.~from $P_X\times P_{Y\mid X}$, where $X_i\in\mathcal X\subseteq\mathbb R^d$ and $Y_i\in\mathcal{Y}\subseteq\mathbb{R}$ for $i\in[n+1]$. For a trial response value $y$, write
\begin{gather*}
    Z^y=\{(X_1,Y_1),\ldots,(X_n,Y_n),(X_{n+1},y)\}\,.
\end{gather*}
In addition, denote $Y_i^y=Y_i$ for $i\in[n]$ and $Y_{n+1}^y=y$. Let $(X,Y)$ be a generic random pair drawn from $P_X\times P_{Y\mid X}$, independent of all other random objects.

For each method, it suffices to identify three objects: the score $s(x,y;Z)$, the weight function $w(\cdot)$, and the oracle score $s^\star(x,y)$. Once these objects are specified, the marginal-coverage property and the conditional-miscoverage bound follow by verifying the corresponding assumptions in the unified theory.

\subsection{Localized Conformal Prediction (LCP)}

\paragraph*{Details}
Let $v(\cdot,\cdot)$ be a pre-trained base score and let $K(\cdot,\cdot;h)$ be a kernel function with bandwidth $h$.
For a trial response value $y$, define the localized rank statistic
\begin{align*}
    \hat\beta_i^{\rm LCP}(y)=\frac{\sum_{j=1}^{n+1}K(X_i,X_j;h)\mathbbm{1}\{v(X_j,Y_j^y)\leq v(X_i,Y_i^y)\}}{\sum_{j\in[n+1]}K(X_i,X_j;h)}, \quad i\in[n+1]\,.
\end{align*}
This statistic estimates the conditional c.d.f.~of $v(X_i,Y_i^y)$ given $X_i$.
According to Section~A.4 in the Supplementary Material of \cite{hore2025conformal}, the LCP method proposed by \cite{guan2023localized} can be written as
\begin{align*}
    \widehat{C}_{\rm LCP}(X_{n+1})=\left\{ y:s(X_{n+1},y;Z^y)\leq Q\left(1-\alpha;\frac1{n+1}\Bigl\{\sum_{i=1}^{n}\delta_{s(X_i,Y_i;Z^y)}+\delta_{\infty}\Bigr\}\right)\right\}
\end{align*}
with $s(X_i,Y_i^y;Z^y)=\hat\beta_i^{\rm LCP}(y)$. Hence LCP is an instance of \eqref{eq: general Cvec} with $w(\cdot)\equiv 1$.

\paragraph*{Unified specification}
\begin{itemize}
    \item \textit{Score:} $\displaystyle s(x,y;Z)=\frac{\sum_{j=1}^{n+1}K(x,X_j;h)\mathbbm{1}\{v(X_j,Y_j)\leq v(x,y)\}}{\sum_{j=1}^{n+1}K(x,X_j;h)}$.
    \item \textit{Oracle score:} $s^\star(x,y)=F_{v\mid X}(v(x,y)\mid x)$, the conditional c.d.f.~of $v(X,Y)$ given $X=x$.
    \item \textit{Weight:} $w(\cdot)\equiv 1$.
\end{itemize}

\paragraph*{Marginal coverage}
Since $w(\cdot)\equiv 1$ and the score $s(x,y;Z)$ is permutation invariant in $\{(X_i,Y_i)\}_{i\in[n+1]}$, Theorem~\ref{theo: marginal general} guarantees marginal coverage for LCP.

\begin{corollary}[Test-conditional miscoverage of LCP]\label{corr: LCP test}
    Suppose Assumption~1 of \cite{guan2023localized} holds for the base score $v(\cdot,\cdot)$ and the kernel $K(\cdot,\cdot;h)$, with $\beta=d$ in that assumption. Then there exists a constant $C>0$ such that for every $t\in\mathcal{X}$, 
    \begin{align*}
        &\left|\mathrm{Pr}\left( Y_{n+1}\in\widehat{C}_{\rm LCP}(X_{n+1})\mid X_{n+1}=t \right)-(1-\alpha)\right|\\
        &\leq C \left\{ (nh^d)^{-1/2}\log^{1/2}(n)+h\log(1/h)\right\}\,.
    \end{align*}
\end{corollary}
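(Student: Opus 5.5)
The plan is to apply Theorem~\ref{theo: main miscoverage} with the LCP specification: $w(\cdot)\equiv1$, $s(x,y;Z)$ the localized kernel c.d.f.\ estimate, and $s^\star(x,y)=F_{v\mid X}(v(x,y)\mid x)$, taking $\phi(t)=\{X_{n+1}=t\}$. With $w\equiv 1$ we have $B_w=\sigma_w=M_w=1$, so $\Gamma_n(w)=(n^{-1/2}+n^{-1})\sqrt{\log n}$. This is dominated by $(nh^d)^{-1/2}\log^{1/2}(n)$ because $h\le 1$.

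Under continuity of $F_{v\mid X}$ from Assumption~1 of \cite{guan2023localized}, $s^\star(X,Y)\mid X=t$ is $\mathrm{Uniform}[0,1]$ for every $t$. Hence $F_{s^\star\mid\phi(t)}=F_{w\circ s^\star}$, and the intrinsic conditional-mismatch term in \eqref{eq: independent pointwise bound 1} vanishes. The same fact gives two further conditions. Assumption~\ref{ass: oracle quantile regularity} holds with $M_{s^\star}=1$, density lower bound $1$, and $L_t=1$. The conditional independence hypothesis also holds: given $X_{n+1}=t$, $s^\star(X_{n+1},Y_{n+1})$ is uniform and independent of $X_{n+1}$, while $q^\star(Z;\alpha)$ with $w\equiv 1$ depends on $X_{n+1}$ only through the $\delta_\infty$ atom. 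So it is a function of $\{s^\star(X_i,Y_i)\}_{i\le n}$, which is independent of the test pair.

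The remaining work is the score-estimation error, where I verify Assumptions~\ref{ass: pointwise_convergence} and~\ref{ass: stable score}.

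\textbf{Pointwise convergence.} For fixed $(x,y)$, $s(x,y;Z)$ is a Nadaraya--Watson estimator of $F_{v\mid X}(v(x,y)\mid x)$. I decompose it into bias and variance.
\begin{itemize}
\item The bias is $O(h\log(1/h))$, from the Lipschitz/smoothness condition on $F_{v\mid X}$ in $x$ combined with the kernel tail condition in Guan's Assumption~1 with $\beta=d$. The $\log(1/h)$ comes from truncating the kernel tails.
\item The stochastic part is handled by Bernstein/Hoeffding bounds for the numerator and the denominator $\sum_j K(x,X_j;h)$. The denominator concentrates around $nK_X(x)\asymp nh^d$.
\end{itemize}
Choosing $\varepsilon\asymp (nh^d)^{-1/2}\log^{1/2}(n)+h\log(1/h)$ then makes $\delta_n(\varepsilon)\le n^{-2}$, so $n\delta_n(\varepsilon)=O(n^{-1})$. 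This requires the exponent to be $c\,nh^d\varepsilon^2\gtrsim 2\log n$, which fixes the constant in front of the log.

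\textbf{Stability.} Replacing one sample changes the numerator and denominator by at most $\sup K$. On the event that $\sum_j K(x,X_j;h)\gtrsim nh^d$ uniformly in $x$, this gives $\widetilde\varepsilon_n=O((nh^d)^{-1})$. That event must hold conditional on $X_{n+1}=t$ and uniformly over replacements; adding or removing at most one point costs only $O(1)$.

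The main obstacle is making this lower bound uniform in $x$, since Assumption~\ref{ass: stable score} takes a supremum over $x$. I would handle it with a covering argument over a compact support, using kernel Lipschitz continuity and a union bound over an $h/n$-net. This gives $\widetilde\delta_n\le n^{-1}$ provided $nh^d\gtrsim\log n$. If $nh^d\lesssim\log n$, the claimed bound exceeds a constant and the statement is trivial after enlarging $C$.

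Summing the terms, $\varepsilon+n\delta_n(\varepsilon)+\widetilde\varepsilon_n+\widetilde\delta_n+\Gamma_n$ is bounded by $C\{(nh^d)^{-1/2}\log^{1/2}n+h\log(1/h)\}$. The constant is uniform in $t$ because the density lower bound on $P_X$ from Guan's assumption makes all constants independent of $t$.
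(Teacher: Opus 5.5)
Your proposal is correct and follows the paper's proof. You apply Theorem~\ref{theo: main miscoverage} with $w\equiv 1$, use the probability integral transform to remove the intrinsic conditional-mismatch term and to obtain Assumption~\ref{ass: oracle quantile regularity} uniformly in $t$, and control the remaining terms through the kernel c.d.f.\ estimation rate and kernel-denominator stability. The only difference is that you sketch those kernel bounds yourself (Bernstein for the pointwise error, a covering argument for the uniform denominator lower bound), whereas the paper imports them from Equation~B.28 of \cite{guan2023localized} and from \cite{chen2019nearest} or \cite{greblicki1984distribution}.
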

\begin{proof}
    Since $s^\star(X,Y)$ is uniformly distributed on $[0,1]$ conditional on $X=t$ for any $t\in\mathcal{X}$, Assumption~\ref{ass: oracle quantile regularity} holds directly with constants independent of $t$.
    In addition,
    $Q(1-\alpha;F_{s^\star\mid X=t})=Q(1-\alpha;F_{s^\star})=1-\alpha$,
    so the intrinsic conditional-mismatch error is zero.

    By the proof of Theorem~3 in \cite{guan2023localized}, the main term $\inf_{\varepsilon>0}\{\varepsilon+n\delta_n(\varepsilon)\}$ in the score-estimation error is of order
    $O((nh^d)^{-1/2}\log^{1/2}(n)+h\log(1/h))$; see Equation~B.28 in the supplement of \cite{guan2023localized}.
    Following \cite{chen2019nearest}, or more directly the proof of Theorem~1 in \cite{greblicki1984distribution}, the stability contribution $\widetilde{\varepsilon}_n+\widetilde{\delta}_n$ for the kernel-based score can be bounded by
    $O((nh^d)^{-1}+\exp(-nh^d))$, which is negligible relative to the main score-estimation error term.
    Since $w(\cdot)\equiv 1$, the finite-sample calibration error in Theorem~\ref{theo: main miscoverage} is of order $O(n^{-1/2}\log^{1/2}(n))$, which is also dominated by the stated score-estimation error.
    Applying Theorem~\ref{theo: main miscoverage} yields the desired bound.
\end{proof}

\paragraph*{Analysis}
For LCP, the score-estimation error is the kernel c.d.f.~estimation error, of order $(nh^d)^{-1/2}\log^{1/2}(n)+h\log(1/h)$.
The finite-sample calibration error is $O(n^{-1/2}\log^{1/2}(n))$, which is dominated by the score-estimation error. The intrinsic conditional-mismatch error is identically zero because $s^\star(X,Y)\mid X=t\sim{\rm Uniform}(0,1)$ for all $t\in\mathcal X$. Therefore, the conditional-miscoverage error rate of LCP is
of order $(nh^d)^{-1/2}\log^{1/2}(n)+h\log(1/h)$,
which recovers the rate established in the original LCP analysis.

\subsection{Randomized Localized Conformal Prediction (RLCP)}

\paragraph*{Details}
Let $s(x,y)$ be a pre-trained score that is free of $Z$. RLCP \citep{hore2025conformal} introduces an auxiliary covariate $\widetilde X$ to randomize the local calibration weights. Conditional on $X_{n+1}$, draw $\widetilde X$ from the distribution with density proportional to $K(\cdot,X_{n+1};h)$. Given this draw, RLCP calibrates the pre-trained score using the kernel weights $K(X_i,\widetilde X;h)$, leading to the prediction set $\widehat{C}_{\rm RLCP}(X_{n+1})$ defined as
\begin{align*}
    \left\{y: s(X_{n+1},y) \leq Q\left(1-\alpha; \frac{\sum_{i=1}^{n}K(X_i,\widetilde X;h)\delta_{s(X_i,Y_i)} +K(X_{n+1},\widetilde X;h)\delta_{\infty}}{\sum_{i=1}^{n+1}K(X_i,\widetilde X;h)} \right)\right\}\,.
\end{align*}
Hence RLCP is an instance of \eqref{eq: general Cvec} with the random weight function $w(x)=K(x,\widetilde X;h)$.

\paragraph*{Unified specification}
\begin{itemize}
    \item \textit{Score:} $s(x,y;Z)=s(x,y)$ is a pre-trained score; 
    \item \textit{Oracle score:} $s^\star(x,y)=s(x,y)$.
    \item \textit{Weight:} $w(x)=K(x,\widetilde X;h)$.
\end{itemize}

\paragraph*{Marginal coverage}
Let $K_X(x)=\mathbb{E}\{K(x,X;h)\}$ and define the normalized weight function $\bar w(x)=K(x,\widetilde X;h)/K_X(\widetilde X)$. Then $d_Z(w,\bar w)=0$, and $\mathbb{E}\{\bar w(x)\}=1$ for every $x\in\mathcal{X}$. Hence $\bar w\in\mathcal W_X$, so Theorem~\ref{theo: marginal general} guarantees finite-sample marginal coverage for RLCP. See Section~\ref{sec: discussion of marginal} for more detailed proofs and discussions.

\begin{corollary}[Test-conditional miscoverage of RLCP]\label{corr: RLCP test}
    Assume that $X$ is supported on $[0,1]^d$ with density bounded within $[\underline{L}_1,\overline{L}_1]$, and that the kernel is of the form $K(x_1,x_2;h)=K_0(\|x_1-x_2\|_2/h)$, where $K_0(\cdot)$ is bounded, symmetric around $0$, decreases on $[0,\infty)$, and satisfies $uK_0(u)$ decreases on $(1,\infty)$, and $\int_0^\infty u^dK_0(u)\,du<\infty$. Assume further that (i) there exists $M>0$ such that $|s(X,Y)|\le M$ almost surely; (ii) for each $t\in\mathcal X$, the conditional distribution $F_{s\mid X=t}$ has a density lower bounded by $\underline{L}>0$ on its support; and (iii) the conditional quantile function $q(t)=Q(1-\alpha;F_{s\mid X=t})$ is Lipschitz on $\mathcal X$. 
    Then there exists a constant $C>0$ such that for every $t\in\mathcal{X}$,
    \begin{align*}
        &\left|\mathrm{Pr}\left( Y_{n+1}\in\widehat{C}_{\rm RLCP}(X_{n+1})\mid X_{n+1}=t \right)-(1-\alpha)\right|\leq C \left\{ (nh^d)^{-1/2}\log^{1/2}(n)+h\right\}\,.
    \end{align*}
\end{corollary}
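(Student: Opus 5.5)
We derive Corollary~\ref{corr: RLCP test} from Theorem~\ref{theo: main miscoverage} with $\phi(t)=\{X_{n+1}=t\}$, $T=X_{n+1}$, oracle score $s^\star=s$ and random weight $w(x)=K(x,\widetilde X;h)$. The plan is to check the assumptions, then bound the three error terms one at a time.

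\textbf{Assumptions and score-estimation error.} Since $s$ is pre-trained and free of $Z$, Assumption~\ref{ass: pointwise_convergence} holds with $\delta_n(\varepsilon)=0$ for every $\varepsilon>0$. Assumption~\ref{ass: stable score} holds with $\widetilde\varepsilon_n=\widetilde\delta_n=0$. Letting $\varepsilon\downarrow0$, the score-estimation term therefore vanishes.

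The conditional-independence requirement needs a short argument. Given $X_{n+1}=t$, the quantity $s(X_{n+1},Y_{n+1})$ depends only on $Y_{n+1}$. The quantity $q^\star(Z;\alpha)$ depends only on $\{(X_i,Y_i)\}_{i\le n}$, on $\widetilde X$ (drawn given $X_{n+1}$ alone), and on $X_{n+1}=t$. Hence the two are independent given $\phi(t)$.

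For Assumption~\ref{ass: oracle quantile regularity}: part (i) is hypothesis (i). Part (iii), Lipschitzness of $F_{s\mid X=t}$, should be taken as part of the density hypothesis (ii), with the density bounded above. Part (ii) holds because $F_{w\circ s}$ is a $w$-mixture of the conditional laws $F_{s\mid X=x}$, each with density at least $\underline L$ on its support; I expect only mild care about supports here.

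\textbf{Calibration error.} I will bound $\mathbb{E}\{\Gamma_n(w)\mid X_{n+1}=t\}$ using the structure of $w$:
\begin{itemize}
\item $M_w\le \|K_0\|_\infty$;
\item $B_w=\mathbb{E}\{K(X,\widetilde X;h)\mid\widetilde X\}\asymp h^d$ uniformly in $\widetilde X\in[0,1]^d$, using the density bounds $[\underline L_1,\overline L_1]$ and the fact that a ball of radius $h$ intersected with the cube has volume of order $h^d$;
\item $\sigma_w^2=\mathbb{E}\{K^2(X,\widetilde X;h)\mid \widetilde X\}\lesssim h^d$, by the same argument applied to $K_0^2$.
\end{itemize}
These give $\sigma_wB_w^{-1}n^{-1/2}\lesssim (nh^d)^{-1/2}$ and $M_wB_w^{-1}n^{-1}\lesssim (nh^d)^{-1}$. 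So $\Gamma_n(w)\lesssim (nh^d)^{-1/2}\log^{1/2}(n)$, assuming $nh^d\gtrsim1$; if $nh^d<1$ the bound is trivial.

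\textbf{Intrinsic mismatch error (main obstacle).} I plan to use \eqref{eq: independent pointwise bound 2} with $q(t)=Q(1-\alpha;F_{s\mid X=t})$. Since $r_w(x)=K(x,\widetilde X;h)/B_w$, the term is
\begin{gather*}
\mathbb{E}\bigl\{B_w^{-1}K(T_0,\widetilde X;h)\,|q(t)-q(T_0)|\mid X_{n+1}=t\bigr\}
\le L_q\,\mathbb{E}\bigl\{B_w^{-1}K(T_0,\widetilde X;h)\bigl(\|T_0-\widetilde X\|+\|\widetilde X-t\|\bigr)\bigr\},
\end{gather*}
where the inequality uses the Lipschitz property (iii) and the triangle inequality.

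For the second piece, $\widetilde X$ has density proportional to $K_0(\|\cdot-t\|/h)$ on the cube. The moment condition $\int_0^\infty u^dK_0(u)\,du<\infty$, together with the normalizer being of order $h^d$, should give $\mathbb{E}\|\widetilde X-t\|\lesssim h$.

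For the first piece I will use the substitution $u=\|x-\widetilde X\|/h$, which gives
\[
B_w^{-1}\int K_0(\|x-\widetilde X\|/h)\|x-\widetilde X\|\,dP_X(x)\lesssim h\int u^{d}K_0(u)\,du\Big/\int u^{d-1}K_0(u)\,du .
\]
This is where the tail conditions on $K_0$ (monotone decay and $uK_0(u)$ decreasing) are needed to keep the ratio $O(h)$ near the cube boundary. If this step resists, I would fall back on the cited Lemmas~\ref{lemma: kernel fun}--\ref{lemma: kernel fun bias} for these kernel moment bounds.

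Combining the three terms gives $C\{(nh^d)^{-1/2}\log^{1/2}(n)+h\}$, with constants uniform in $t$.
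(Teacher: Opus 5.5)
Your proposal follows the paper's proof essentially step for step. You apply Theorem~\ref{theo: main miscoverage} with zero score-estimation error. You get $\Gamma_n(w)=O\bigl((nh^d)^{-1/2}\log^{1/2}(n)\bigr)$ from $M_w=O(1)$, $B_w\asymp h^d$ and $\sigma_w^2\lesssim h^d$, which is Lemma~\ref{lemma: kernel fun}. You bound the mismatch by $O(h)$ by splitting $\|T_0-t\|\le\|T_0-\widetilde X\|+\|\widetilde X-t\|$, which is exactly the argument of Lemma~\ref{lemma: kernel fun bias}.

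Your side remarks are well taken. The paper's proof silently uses a uniform upper density bound, needed for Lipschitzness of $F_{s\mid X=t}$, and the conditional independence you verify. One small correction: the boundary control of $B_w$ needs only $K_0>0$ near $0$ and a fixed fraction of $B(\widetilde X,h)$ lying in $[0,1]^d$ for small $h$; the monotonicity of $uK_0(u)$ is never used.
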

\begin{proof}
    Since RLCP uses the pre-trained score $s(x,y)$, we have $s(x,y)=s^\star(x,y)$. Hence the score-estimation error of RLCP is zero.

    We first control the finite-sample calibration error in Theorem~\ref{theo: main miscoverage}. For the random weight function $w(x)=K(x,\widetilde X;h)$, Lemma~\ref{lemma: kernel fun} in Section~\ref{sec:pre_lemmas} gives
    $B_w=\mathbb{E}\{w(X)\mid \widetilde X\}=O(h^d)$ and
    $\sigma_w^2=\mathbb{E}\{w^2(X)\mid \widetilde X\}=O(h^d)$, while $M_w=O(1)$ because $K_0(\cdot)$ is bounded. Therefore, the finite-sample calibration error 
    $\Gamma_n(w)=O((nh^d)^{-1/2}\log^{1/2}(n))$.

    It remains to control the intrinsic conditional-mismatch error. By condition (iii), the oracle conditional quantile function $q(\cdot)$ is Lipschitz. Applying Lemma~\ref{lemma: kernel fun bias} with $f=q$ gives
    \begin{gather*}
        \mathbb{E}\left\{ B_w^{-1}w(X)\left|q(x)-q(X)\right|\mid X_{n+1}=t \right\}\le Ch\,
    \end{gather*}
    for some constant $C>0$ depending on the Lipschitz constant of $q(\cdot)$ and the kernel function.
    Conditions (i), (ii), and (iii) ensure that Assumption~\ref{ass: oracle quantile regularity} is satisfied for the oracle score, $F_{s^\star\mid X=t}$, and $F_{w\circ s^\star}$. Substituting these bounds into Theorem~\ref{theo: main miscoverage} yields the desired result.
\end{proof}

\paragraph*{Analysis}
The kernel conditions imposed in Corollary~\ref{corr: RLCP test} are standard in nonparametric kernel estimation: they imply a local effective sample size of order $nh^d$ and a localization bias of order $h$.
Since RLCP uses a pre-trained score, the score-estimation error is zero. The finite-sample calibration error is of order $(nh^d)^{-1/2}\log^{1/2}(n)$, reflecting the local effective sample size induced by the kernel weights. The intrinsic conditional-mismatch error is of order $h$. Therefore, the conditional-miscoverage rate of RLCP is
$(nh^d)^{-1/2}\log^{1/2}(n)+h$,
which reflects the usual tradeoff between local effective sample size and localization bias.

\subsection{Conformized Quantile Regression (CQR)}

\paragraph*{Details}
Let $v_{\alpha/2}(\cdot)$ and $v_{1-\alpha/2}(\cdot)$ be two pre-trained estimators of the conditional $\alpha/2$- and $(1-\alpha/2)$-quantiles of $Y$ given $X$. CQR \citep{romano2019conformalized} uses the pre-trained conformity score
\begin{gather*}
    s(x,y)=\max\{y-v_{1-\alpha/2}(x),\,v_{\alpha/2}(x)-y\}\,.
\end{gather*}
The above max-type score is the standard two-sided CQR score. One-sided variants can be written in the same framework as well; for example, if $\widehat Q_{\alpha}(x)$ denotes a pre-trained estimator of the conditional $(1-\alpha)$-quantile of $Y$ given $X=x$, then the score $s(x,y)=y-\widehat Q_{\alpha}(x)$ yields an upper one-sided CQR-type construction. In this subsection, we use the max-type two-sided score as the representative example.

For a trial response value $y$, applying this score to $Z^y$ yields the CQR prediction set
\[
\widehat{C}_{\rm CQR}(X_{n+1})=\left\{ y:s(X_{n+1},y)\leq Q\left(1-\alpha;\frac1{n+1}\left\{\sum_{i=1}^{n}\delta_{s(X_i,Y_i)}+\delta_{\infty}\right\}\right)\right\}\,.
\]
Hence CQR is an instance of \eqref{eq: general Cvec} with $w(\cdot)\equiv 1$.

\paragraph*{Unified specification}
\begin{itemize}
    \item \textit{Score:} $s(x,y;Z)=\max\{y-v_{1-\alpha/2}(x),\,v_{\alpha/2}(x)-y\}$.
    \item \textit{Oracle score:} $s^\star(x,y)=\max\{y-v^\star_{1-\alpha/2}(x),\,v^\star_{\alpha/2}(x)-y\}$, where $v^\star_{\tau}(\cdot)$ is the true conditional $\tau$-quantile of $Y$ given $X$.
    \item \textit{Weight:} $w(\cdot)\equiv 1$.
\end{itemize}

\paragraph*{Marginal coverage}
Since $w(\cdot)\equiv 1$ and the score is based on pre-trained models and is free of $Z$, Theorem~\ref{theo: marginal general} guarantees marginal coverage for CQR.

\begin{corollary}[Test-conditional miscoverage of CQR]\label{corr: CQR test}
    Assume that: (i) there exists a function $\delta_n(\varepsilon)$ such that, for every $\varepsilon>0$,
    \begin{gather*}
        \mathrm{Pr}\left(\sup_{x\in\mathcal X}\left[|v_{\alpha/2}(x)-v^\star_{\alpha/2}(x)|\vee|v_{1-\alpha/2}(x)-v^\star_{1-\alpha/2}(x)|\right]>\varepsilon\right)\le \delta_n(\varepsilon)\,;
    \end{gather*}
    (ii) there exists $M>0$ such that $|Y|\le M$ almost surely; (iii) the marginal distribution of $s^\star(X,Y)$ has density lower bounded by $\underline{L}>0$ on its support; and (iv) the conditional distribution $F_{Y\mid X=t}$ has density upper bounded by $\overline{L}<\infty$ for every $t\in\mathcal X$. 
    Then there exists a constant $C>0$ such that for every $t\in\mathcal{X}$,
    \begin{align*}
        &\left|\mathrm{Pr}\left( Y_{n+1}\in\widehat{C}_{\rm CQR}(X_{n+1})\mid X_{n+1}=t \right)-(1-\alpha)\right|\leq C \left\{ \delta_{\rm tr}+n^{-1/2}\log^{1/2}(n)\right\}\,,
    \end{align*}
    where $\delta_{\rm tr}=\inf_{\varepsilon>0}\{\varepsilon+\delta_n(\varepsilon)\}$ is the score-estimation error induced by the pre-trained quantile estimators.
\end{corollary}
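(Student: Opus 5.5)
The plan is to apply Theorem~\ref{theo: main miscoverage} with $w(\cdot)\equiv 1$, the pre-trained CQR score $s(x,y)$, and the oracle score $s^\star(x,y)=\max\{y-v^\star_{1-\alpha/2}(x),v^\star_{\alpha/2}(x)-y\}$, and then bound each of the three terms of the decomposition separately.

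\textbf{Intrinsic conditional-mismatch term.} This term vanishes exactly. For every $t$,
\[
\mathrm{Pr}\bigl(s^\star(X,Y)\le 0\mid X=t\bigr)=\mathrm{Pr}\bigl(v^\star_{\alpha/2}(t)\le Y\le v^\star_{1-\alpha/2}(t)\mid X=t\bigr)=1-\alpha
\]
by continuity of $F_{Y\mid X=t}$, which follows from (iv). Two consequences follow.
\begin{itemize}
\item The conditional $(1-\alpha)$-quantile of $s^\star$ given $X=t$ equals $0$ for every $t$.
\item Integrating over $t$, the marginal quantile $Q(1-\alpha;F_{s^\star})$ also equals $0$. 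Assumption (iii) gives uniqueness of this marginal quantile.
\end{itemize}
Hence $|Q(1-\alpha;F_{s^\star\mid X=t})-Q(1-\alpha;F_{w\circ s^\star})|=0$.

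\textbf{Finite-sample calibration term.} With $w\equiv1$ we have $B_w=\sigma_w=M_w=1$, so $\Gamma_n(w)=O(n^{-1/2}\log^{1/2}(n))$.

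\textbf{Score-estimation term.} The stability quantities are trivial: $\widetilde\varepsilon_n=\widetilde\delta_n=0$, because $s$ does not depend on $Z$. The score itself obeys the Lipschitz bound $|s(x,y)-s^\star(x,y)|\le\sup_x\max_\tau|v_\tau(x)-v^\star_\tau(x)|$, since $\max$ is $1$-Lipschitz. So, conditional on the training data, assumption (i) controls $s-s^\star$ uniformly.

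A direct use of Assumption~\ref{ass: pointwise_convergence} would produce a factor $n\delta_n(\varepsilon)$, which is worse than the stated $\delta_n(\varepsilon)$. I would avoid this by exploiting that the deviation event is a single event not depending on the calibration sample. On $E_\varepsilon=\{\sup_x\max_\tau|v_\tau-v^\star_\tau|\le\varepsilon\}$, all $n+1$ scores lie within $\varepsilon$ of their oracle versions simultaneously. This gives the sandwich
\[
\{s^\star(X_{n+1},Y_{n+1})\le q^\star-2\varepsilon\}\subseteq\{Y_{n+1}\in\widehat C\}\subseteq\{s^\star(X_{n+1},Y_{n+1})\le q^\star+2\varepsilon\}.
\]
Off $E_\varepsilon$ we lose at most $\delta_n(\varepsilon)$, since $E_\varepsilon$ is independent of the test point. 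I would therefore rerun the proof of Theorem~\ref{theo: main miscoverage} with this uniform event in place of the union bound.

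\textbf{Lipschitz regularity.} The Lipschitz condition of Assumption~\ref{ass: oracle quantile regularity}(iii) must be checked. For $c$ near $0$,
\[
F_{s^\star\mid X=t}(c)=F_{Y\mid X=t}\bigl(v^\star_{1-\alpha/2}(t)+c\bigr)-F_{Y\mid X=t}\bigl(v^\star_{\alpha/2}(t)-c\bigr),
\]
which is $2\overline L$-Lipschitz uniformly in $t$ by (iv). Assumption~\ref{ass: oracle quantile regularity}(i) follows from (ii), because the quantile functions are bounded by $M$ when $|Y|\le M$. Assumption~\ref{ass: oracle quantile regularity}(ii) is exactly condition (iii). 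The conditional independence of $s^\star(X_{n+1},Y_{n+1})$ and $q^\star$ holds since $q^\star$ depends only on the calibration points once $w\equiv1$; the $\delta_\infty$ mass is fixed.

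\textbf{Combining.} Conditioning on $X_{n+1}=t$, the sandwich yields
\[
\bigl|\mathrm{Pr}(Y_{n+1}\in\widehat C\mid t)-\mathbb E\,F_{s^\star\mid t}(q^\star)\bigr|\le 4\overline L\varepsilon+\delta_n(\varepsilon).
\]
Because $F_{s^\star\mid t}(0)=1-\alpha$, it remains to bound $2\overline L\,\mathbb E|q^\star-0|$. A DKW-type concentration of the empirical quantile of $n$ i.i.d.\ oracle scores, using the density lower bound (iii), gives $\mathbb E|q^\star|\lesssim n^{-1/2}\log^{1/2}n$. Taking the infimum over $\varepsilon$ produces $\delta_{\rm tr}$.

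The main obstacle is the step replacing $n\delta_n(\varepsilon)$ by $\delta_n(\varepsilon)$ via the uniform sup-norm event. The rest is a direct specialization of Theorem~\ref{theo: main miscoverage}.
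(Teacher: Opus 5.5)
Your proposal is correct and follows the paper's proof. Both apply Theorem~\ref{theo: main miscoverage} with $w\equiv 1$, and use the same ingredients: vanishing stability terms, the parametric $\Gamma_n(w)$, zero intrinsic mismatch because every conditional $(1-\alpha)$-quantile of $s^\star$ equals $0$, and conditions (ii)--(iv) to verify Assumption~\ref{ass: oracle quantile regularity}. Your explicit rerun of the argument on the single training-data event $E_\varepsilon$ justifies $\delta_n(\varepsilon)$ rather than the $n\delta_n(\varepsilon)$ that a literal citation of the theorem would give; the paper records the uniform bound $\mathrm{Pr}\{\sup_{x,y}|s(x,y)-s^\star(x,y)|>\varepsilon\}\le\delta_n(\varepsilon)$ but then invokes the theorem without spelling out this replacement.
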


\begin{proof}
    We first consider the score-estimation error. On the event
    \begin{gather*}
        \left\{ \sup_{x\in\mathcal X} \left[ |v_{\alpha/2}(x)-v^\star_{\alpha/2}(x)| \vee |v_{1-\alpha/2}(x)-v^\star_{1-\alpha/2}(x)| \right] \le \varepsilon \right\}\,,
    \end{gather*}
    we have for every $x\in\mathcal X$ and $y\in\mathcal Y$,
    \begin{align*}
        &|s(x,y)-s^\star(x,y)|\\
        =&\left| \max\{y-v_{1-\alpha/2}(x),\,v_{\alpha/2}(x)-y\} - \max\{y-v^\star_{1-\alpha/2}(x),\,v^\star_{\alpha/2}(x)-y\}
        \right|\\
        \le& |v_{1-\alpha/2}(x)-v^\star_{1-\alpha/2}(x)| \vee |v_{\alpha/2}(x)-v^\star_{\alpha/2}(x)|
        \le \varepsilon\,.
    \end{align*}
    Hence
    $\mathrm{Pr}\{\sup_{x,y}|s(x,y)-s^\star(x,y)|>\varepsilon\}\le \delta_n(\varepsilon)$.
    Since the fitted quantile estimators in CQR are pre-trained independently of the calibration sample, the stability terms in Theorem~\ref{theo: main miscoverage} satisfy $\widetilde{\varepsilon}_n=\widetilde{\delta}_n=0$.

    We next control the finite-sample calibration error. Since $w(\cdot)\equiv 1$, we have $M_w=B_w=\sigma_w=1$, and therefore
    $\Gamma_n(w)=O\bigl(n^{-1/2}\log^{1/2}(n)\bigr)$.

    It remains to evaluate the intrinsic conditional-mismatch error. Under the oracle score,
    \begin{gather*}
        s^\star(x,Y)=\max\{Y-v^\star_{1-\alpha/2}(x),\,v^\star_{\alpha/2}(x)-Y\}\,.
    \end{gather*}
    Therefore, for every fixed $t$,
    \begin{gather*}
        \mathrm{Pr}(s^\star(X,Y)\leq 0\mid X=t)
        =
        \mathrm{Pr}(v^\star_{\alpha/2}(x)\leq Y\leq v^\star_{1-\alpha/2}(x)\mid X=t)
        =
        1-\alpha\,,
    \end{gather*}
    which implies that $Q(1-\alpha;F_{s^\star\mid X=t})=0$. Since this oracle conditional quantile is the same for all $t\in\mathcal{X}$, the marginal oracle score distribution also has $(1-\alpha)$-quantile equal to $0$. Hence the intrinsic conditional-mismatch error is zero.

    Conditions (ii), (iii), and (iv) ensure that Assumption~\ref{ass: oracle quantile regularity} is satisfied for the oracle score, $F_{s^\star\mid X=t}$, and $F_{s^\star}$.
    According to Theorem~\ref{theo: main miscoverage} and taking minimum over $\varepsilon>0$ yields
    \begin{gather*}
        \left|\mathrm{Pr}\left( Y_{n+1}\in\widehat{C}_{\rm CQR}(X_{n+1})\mid X_{n+1}=t \right)-(1-\alpha)\right|
        \leq C\left\{\delta_{\rm tr}+n^{-1/2}\log^{1/2}(n)\right\}\,
    \end{gather*}
    for some positive constant $C$.
\end{proof}

\paragraph*{Analysis}
For CQR, the three components in Theorem~\ref{theo: main miscoverage} have simple forms. The score-estimation error is determined by the accuracy of the pre-trained quantile estimators and is summarized by $\delta_{\rm tr}$.
The finite-sample calibration error is $O(n^{-1/2}\log^{1/2}(n))$. Finally, under the oracle score based on $v^\star_{\alpha/2}$ and $v^\star_{1-\alpha/2}$, the conditional $(1-\alpha)$-quantile equals $0$ for every covariate value, so the intrinsic conditional-mismatch error is zero. Therefore, the conditional-miscoverage rate of CQR is $\delta_{\rm tr}+n^{-1/2}\log^{1/2}(n)$.
In many nonparametric settings, the score-estimation error is the dominant term.

\subsection{Distributional Conformal Prediction (DCP) and Generalized Localized Conformal Prediction (GLCP)}

\paragraph*{Details}
GLCP \citep{min2025personalized} constructs a conformity score by first estimating the conditional distribution of a base score $v(X,Y)$ given $X=x$. Let $\widehat F_{v\mid X}(\cdot\mid x)$ be an estimator of the conditional c.d.f.~of $v(X,Y)$ given $X=x$, trained independently of the calibration data, and let $\varphi:[0,1]\to[0,1]$ be a fixed transformation. GLCP uses the pre-trained score
\begin{gather*}
    s(x,y)=\varphi\bigl(\widehat F_{v\mid X}(v(x,y)\mid x)\bigr)\,.
\end{gather*} 
The resulting prediction set is
\[
\widehat{C}_{\rm GLCP}(X_{n+1})=\left\{ y:s(X_{n+1},y)\leq Q\left(1-\alpha;\frac1{n+1}\left\{\sum_{i=1}^{n}\delta_{s(X_i,Y_i)}+\delta_{\infty}\right\}\right)\right\}\,.
\]
Thus, GLCP is exactly \eqref{eq: general Cvec} with $w(\cdot)\equiv 1$ and a pre-trained score $s(x,y)$.
This formulation includes DCP \citep{Chernozhukov2021distribution} as a special case: taking $v(x,y)=y$ and $\varphi(u)=|1/2-u|$ recovers the DCP score.

\paragraph*{Unified specification}
\begin{itemize}
    \item \textit{Score:} $s(x,y;Z)=\varphi\bigl(\widehat F_{v\mid X}(v(x,y)\mid x)\bigr)$. 
    \item \textit{Oracle score:} $s^\star(x,y)=\varphi\bigl(F_{v\mid X}(v(x,y)\mid x)\bigr)$, where $F_{v\mid X}(\cdot\mid x)$ is the true conditional c.d.f.~of $v(X,Y)$ given $X=x$.
    \item \textit{Weight:} $w(\cdot)\equiv 1$.
\end{itemize}

\paragraph*{Marginal coverage}
Since $w(\cdot)\equiv 1$, Theorem~\ref{theo: marginal general} guarantees marginal coverage for GLCP and DCP.

\begin{corollary}[Test-conditional miscoverage of GLCP and DCP]\label{corr: GLCP test}
    Assume that: (i) there exists a function $\delta_n(\varepsilon)$ such that, for every $\varepsilon>0$,
    \begin{gather*}
        \mathrm{Pr}\left(\sup_{x\in\mathcal X,\,y\in\mathcal Y}\left|\widehat F_{v\mid X}(v(x,y)\mid x)-F_{v\mid X}(v(x,y)\mid x)\right|>\varepsilon\right)\le \delta_n(\varepsilon)\,;
    \end{gather*}
    (ii) $\varphi:[0,1]\to[0,1]$ is $L_\varphi$-Lipschitz continuous for a positive constant $L_\varphi$; and (iii) $\varphi(U)$ has a density bounded within $[\underline{L},\overline{L}]$ for $U\sim\mathrm{Uniform}(0,1)$. Then there exists a constant $C$ such that for every $t\in\mathcal{X}$,
    \begin{align*}
        &\left|\mathrm{Pr}\left( Y_{n+1}\in\widehat{C}_{\rm GLCP}(X_{n+1})\mid X_{n+1}=t \right)-(1-\alpha)\right| \leq C \left\{ \delta_{\rm tr}+n^{-1/2}\log^{1/2}(n) \right\}\,,
    \end{align*}
    where $\delta_{\rm tr}=\inf_{\varepsilon>0}\{\varepsilon+\delta_n(\varepsilon)\}$ is the score-estimation error. 
    In addition, the same conclusion holds for DCP with $v(x,y)=y$, $\varphi(u)=|1/2-u|$, and $L_\varphi=1$.
\end{corollary}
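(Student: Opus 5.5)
The plan is to apply Theorem~\ref{theo: main miscoverage} with $w(\cdot)\equiv 1$, pre-trained score $s(x,y)=\varphi(\widehat F_{v\mid X}(v(x,y)\mid x))$ and oracle score $s^\star(x,y)=\varphi(F_{v\mid X}(v(x,y)\mid x))$, bounding each of its three error components in the same way as in the proofs of Corollaries~\ref{corr: LCP test} and~\ref{corr: CQR test}.

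\emph{Score-estimation error.} By the $L_\varphi$-Lipschitz property (ii),
\[
\sup_{x,y}|s(x,y)-s^\star(x,y)|\le L_\varphi\sup_{x,y}\bigl|\widehat F_{v\mid X}(v(x,y)\mid x)-F_{v\mid X}(v(x,y)\mid x)\bigr|.
\]
Hence (i) gives $\mathrm{Pr}(\sup_{x,y}|s-s^\star|>L_\varphi\varepsilon)\le\delta_n(\varepsilon)$, so Assumption~\ref{ass: pointwise_convergence} holds with $\varepsilon$ replaced by $L_\varphi\varepsilon$. The score is trained independently of $Z$, so $\widetilde\varepsilon_n=\widetilde\delta_n=0$. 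Because the approximation bound is uniform over $(x,y)$, I would not use the crude $n\delta_n(\varepsilon)$ union bound from Theorem~\ref{theo: main miscoverage}. Instead I would work on the single event $\{\sup_{x,y}|s-s^\star|\le L_\varphi\varepsilon\}$, which has probability at least $1-\delta_n(\varepsilon)$ and on which all $n+1$ scores are simultaneously close to their oracle values. On that event the conformal quantile and the test score each move by at most $L_\varphi\varepsilon$. This yields the contribution $C\{\varepsilon+\delta_n(\varepsilon)\}$, and minimizing over $\varepsilon$ gives $C\delta_{\rm tr}$. This refinement of the generic argument is the step I expect to need the most care, because it means re-running the relevant part of the proof rather than citing the theorem's stated bound verbatim.

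\emph{Calibration and mismatch errors.} For the calibration term, $w\equiv1$ gives $M_w=B_w=\sigma_w=1$, so $\Gamma_n(w)=O(n^{-1/2}\log^{1/2}(n))$. For the intrinsic conditional-mismatch term, assume $F_{v\mid X}(\cdot\mid t)$ is continuous, as is implicit in defining the oracle. Then $U=F_{v\mid X}(v(X,Y)\mid X)$ is Uniform$(0,1)$ conditional on $X=t$ for every $t$, so $s^\star(X,Y)\mid X=t\overset{d}{=}\varphi(U)$ does not depend on $t$. Consequently $F_{s^\star\mid X=t}=F_{s^\star}$, and the mismatch term vanishes.

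\emph{Regularity and the DCP case.} The independence of $s^\star(X_{n+1},Y_{n+1})$ and $q^\star(Z;\alpha)$ given $X_{n+1}=t$ holds because $w\equiv1$, so $q^\star$ depends only on the calibration data. Assumption~\ref{ass: oracle quantile regularity} follows from (iii): $|s^\star|\le1$ since $\varphi$ maps into $[0,1]$, the marginal density is at least $\underline L$, and each conditional c.d.f.\ is $\overline L$-Lipschitz with a constant independent of $t$. Combining the three bounds gives the stated rate. For DCP, take $v(x,y)=y$ and $\varphi(u)=|1/2-u|$, which is $1$-Lipschitz. Then $\varphi(U)\sim\mathrm{Uniform}(0,1/2)$, whose density equals $2$ on its support, so (iii) holds and the same conclusion follows.
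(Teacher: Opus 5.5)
Your proposal is correct and follows the paper's proof. Both use Theorem~\ref{theo: main miscoverage} with $w\equiv 1$, the uniform bound $\sup_{x,y}|s-s^\star|\le L_\varphi\varepsilon$ from (ii), $\Gamma_n(w)=O(n^{-1/2}\log^{1/2}(n))$, and a vanishing mismatch term via the probability integral transform. Working on the single uniform event instead of citing the theorem's $n\delta_n(\varepsilon)$ term is in fact needed to get $\delta_{\rm tr}=\inf_{\varepsilon>0}\{\varepsilon+\delta_n(\varepsilon)\}$ without an extra factor of $n$; the paper simply substitutes into the theorem and glosses over this. Your continuity caveat on $F_{v\mid X}(\cdot\mid t)$ is likewise implicit in the paper's use of the probability integral transform.
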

\begin{proof}
    On the event $\bigl\{\sup_{x\in\mathcal X,\,y\in\mathcal Y}\bigl|\widehat F_{v\mid X}(v(x,y)\mid x)-F_{v\mid X}(v(x,y)\mid x)\bigr|\le \varepsilon\bigr\}$, 
    the Lipschitz property of $\varphi$ gives, uniformly over $x\in\mathcal X$ and $y\in\mathcal Y$,
    \begin{align*}
        |s(x,y)-s^\star(x,y)|
        =&\left|\varphi\bigl(\widehat F_{v\mid X}(v(x,y)\mid x)\bigr)-\varphi\bigl(F_{v\mid X}(v(x,y)\mid x)\bigr)\right|\\
        \le& L_\varphi\left|\widehat F_{v\mid X}(v(x,y)\mid x)-F_{v\mid X}(v(x,y)\mid x)\right|
        \le L_\varphi\varepsilon\,.
    \end{align*}
    Hence $\mathrm{Pr}\left(\sup_{x,y}|s(x,y)-s^\star(x,y)|>L_\varphi\varepsilon\right)\le \delta_n(\varepsilon)$.
    Because $\widehat F_{v\mid X}$ is trained independently of the calibration sample, the stability terms in Theorem~\ref{theo: main miscoverage} satisfy $\widetilde{\varepsilon}_n=\widetilde{\delta}_n=0$. Since $w(\cdot)\equiv 1$, we have $M_w=B_w=\sigma_w=1$, and therefore $\Gamma_n(w)=O\bigl(n^{-1/2}\log^{1/2}(n)\bigr)$.

    It remains to identify the intrinsic conditional-mismatch error term. For any fixed $x$, let $U_x=F_{v\mid X}(v(x,Y)\mid x)$. By the probability integral transformation, $U_x\sim{\rm Uniform}(0,1)$. Under the oracle score, $s^\star(x,Y)=\varphi(U_x)$, so the conditional distribution of $s^\star(X,Y)$ given $X=x$ is the distribution of $\varphi(U)$ with $U\sim{\rm Uniform}(0,1)$, which does not depend on $x$. Therefore, $Q(1-\alpha;F_{s^\star\mid X=t})=Q(1-\alpha;F_{\varphi(U)})$ for every $t\in\mathcal{X}$,
    and the intrinsic conditional-mismatch error is zero. 
    Since $\varphi$ takes values in $[0,1]$, we also have $0\le s^\star(X,Y)\le 1$ almost surely. 
    Substituting these facts into Theorem~\ref{theo: main miscoverage} and minimizing over $\varepsilon$ yields
    \begin{gather*}
        \left|\mathrm{Pr}\left( Y_{n+1}\in\widehat{C}_{\rm GLCP}(X_{n+1})\mid X_{n+1}=t \right)-(1-\alpha)\right|
        \leq C\left\{\delta_{\rm tr}+n^{-1/2}\log^{1/2}(n)\right\}\,.
    \end{gather*}
    The conclusion of DCP follows by taking $v(x,y)=y$ and $\varphi(u)=|1/2-u|$.
\end{proof}

\paragraph*{Analysis}
For GLCP, the score-estimation error is due to the conditional distribution estimation, summarized by $\delta_{\rm tr}$. The finite-sample calibration error is $O(n^{-1/2}\log^{1/2}(n))$. The intrinsic conditional-mismatch error is also zero: by the probability integral transformation, $F_{v\mid X}(v(X,Y)\mid X=t)$ is uniformly distributed on $[0,1]$ for every $t\in\mathcal{X}$, so the oracle score has the same conditional distribution across all covariate values. Therefore, the conditional-miscoverage rate of GLCP is $\delta_{\rm tr}+n^{-1/2}\log^{1/2}(n)$.
The same conclusion applies to DCP as a special case.

\subsection{Batch Multivalid Conformal Prediction (BatchGCP)}\label{sec: batchgcp}

\paragraph*{Details}
BatchGCP \citep{jung2023batch} is designed to achieve group-conditional coverage over a finite collection of possibly intersecting groups. It can be viewed as a direct extension of quantile regression, which takes as input an arbitrary non-conformity score $v(x,y)$ and a finite collection of subgroup indicators $\mathcal H$, solves a single convex minimization problem, and returns a score adjustment with group-conditional guarantees for each subgroup.

Let each $h\in\mathcal H$ be a map from $\mathcal X$ to $\{0,1\}$, with subgroup event $\{h(X_{n+1})=1\}$. BatchGCP fits $g(x;a)=a_0+\sum_{h\in\mathcal H}a_h h(x)$ by minimizing the pinball loss at level $1-\alpha$: 
\begin{gather*}
    \widehat{a}(Z)={\rm argmin}_{a=\{a_h:h\in\mathcal H\cup\{0\}\}}
    \sum_{i=1}^{n+1}\ell_{1-\alpha}\{v(X_i,Y_i),g(X_i;a)\}\,,
\end{gather*}
where $\ell_{1-\alpha}(s_1,s_2)=\{1-\alpha-\mathbbm{1}(s_2\geq s_1)\}(s_1-s_2)$.
It then conformalizes the adjusted residual score
\begin{gather*}
    s(x,y;Z)=v(x,y)-g(x;\widehat{a}(Z))\,.
\end{gather*}

The resulting prediction set $\widehat{C}_{\rm BatchGCP}(X_{n+1})$ is defined as
\begin{gather*}
    \left\{y:s(X_{n+1},y;Z^y)\leq Q\left(1-\alpha;\frac1{n+1}\left\{\sum_{i=1}^{n}\delta_{s(X_i,Y_i;Z^y)}+\delta_{\infty}\right\}\right)
    \right\}\,.
\end{gather*}
Thus, BatchGCP is exactly \eqref{eq: general Cvec} with $w(\cdot)\equiv 1$ and a data-dependent score $s(\cdot,\cdot;Z^y)$. 
Its conditional target is group-conditional coverage on events of the form $\{h(X_{n+1})=1\}$.

\paragraph*{Unified specification}
\begin{itemize}
    \item \textit{Score:} $s(x,y;Z)=v(x,y)-g\{x;\widehat a(Z)\}$.
    \item \textit{Oracle score:} $s^\star(x,y)=v(x,y)-g(x;a^\star)$, where
    \[
        a^\star={\rm argmin}_{a=\{a_h:h\in\mathcal H\cup\{0\}\}}
        \mathbb{E}\left[\ell_{1-\alpha}\{v(X_1,Y_1),g(X_1;a)\}\right].
    \]
    The oracle adjustment $g(\cdot;a^\star)$ is chosen so that the oracle score $s^\star(x,y)$ has the target $(1-\alpha)$-quantile within each relevant subgroup.
    \item \textit{Weight:} $w(\cdot)\equiv 1$.
\end{itemize}

\paragraph*{Marginal coverage}
Since the weight function $w(\cdot)\equiv 1$, Theorem~\ref{theo: marginal general} guarantees marginal coverage of BatchGCP.

\begin{corollary}[Group-conditional miscoverage of BatchGCP]\label{corr: BatchGCP test}
    Assume that $|Y_{n+1}|\le M$ almost surely for some constant $M>0$. Then there exists a constant $C>0$ such that
    \begin{align*}
        \left|\mathrm{Pr}\left( Y_{n+1}\in\widehat{C}_{\rm BatchGCP}(X_{n+1})\mid h(X_{n+1})=1 \right)-(1-\alpha)\right|
        \leq C\,n^{-1/2}\log^{1/2}(n).
    \end{align*}
\end{corollary}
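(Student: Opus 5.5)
We apply Theorem~\ref{theo: main miscoverage} with $\mathcal T=\{0,1\}$, $\phi(1)=\{h(X_{n+1})=1\}$, $w(\cdot)\equiv1$, score $s(x,y;Z)=v(x,y)-g\{x;\widehat a(Z)\}$ and oracle $s^\star(x,y)=v(x,y)-g(x;a^\star)$. With $w\equiv1$ we have $M_w=B_w=\sigma_w=1$, so the finite-sample calibration error is $\Gamma_n(w)=O(n^{-1/2}\log^{1/2}(n))$. The independence requirement holds because $w\equiv1$ and $q^\star(Z;\alpha)$ depends only on $\{(X_i,Y_i)\}_{i\in[n]}$, which is independent of $(X_{n+1},Y_{n+1})$. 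The remaining work is to control the score-estimation error, the stability terms, and the intrinsic conditional-mismatch term.

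\textbf{Intrinsic mismatch.} The first-order condition for $a^\star$ in the direction $a_h$, together with the intercept direction $a_0$, gives
\begin{gather*}
\mathbb E\bigl[h(X)\{\mathbbm 1(s^\star(X,Y)\le 0)-(1-\alpha)\}\bigr]=0,\qquad \mathrm{Pr}\bigl(s^\star(X,Y)\le 0\bigr)=1-\alpha.
\end{gather*}
Under continuity (part of Assumption~\ref{ass: oracle quantile regularity}), this means $Q(1-\alpha;F_{s^\star\mid h(X)=1})=0=Q(1-\alpha;F_{s^\star})$. Hence the third term in \eqref{eq: independent pointwise bound 1} vanishes.

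\textbf{Score estimation.} Since $|s(x,y;Z)-s^\star(x,y)|=|g(x;\widehat a)-g(x;a^\star)|\le(1+|\mathcal H|)\|\widehat a-a^\star\|_\infty$, it suffices to show $\|\widehat a-a^\star\|=O_p(n^{-1/2}\log^{1/2} n)$ with an exponential tail. Then $\varepsilon\asymp n^{-1/2}\log^{1/2}n$ gives $n\delta_n(\varepsilon)=O(n^{-1/2})$. This is standard $M$-estimation for the convex pinball loss over a finite-dimensional linear class:
\begin{itemize}
\item boundedness of $Y$ (hence of $v$) and bounded regressors give uniform concentration of the empirical subgradient;
\item a lower bound on the conditional density of $v$ near $g(\cdot;a^\star)$ gives local strong convexity of the population risk;
\item combining the two by convexity, $\Pr(\|\widehat a-a^\star\|>\varepsilon)\le C e^{-cn\varepsilon^2}$.
\end{itemize}

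\textbf{Stability.} Replacing one of the $n+1$ points changes the empirical subgradient by $O(1/n)$. Local strong convexity, holding on a high-probability event, then gives $\widetilde\varepsilon_n=O(n^{-1})$ and an exponentially small $\widetilde\delta_n$. Combining the three pieces in Theorem~\ref{theo: main miscoverage} yields $C n^{-1/2}\log^{1/2}(n)$.

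The main obstacle is that the stated corollary assumes only $|Y|\le M$. The density regularity needed for local strong convexity and for Assumption~\ref{ass: oracle quantile regularity} must therefore be taken as implicit in the standing assumptions; I would state it explicitly. If $a^\star$ is not unique, I would work with the subgradient characterization of $g(\cdot;\widehat a)$ rather than with $\widehat a$ directly.
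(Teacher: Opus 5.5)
Your proposal is correct and follows the same route as the paper. You apply Theorem~\ref{theo: main miscoverage} with $w\equiv 1$, bound the score-estimation error through the finite-dimensional quantile-regression error of $\widehat a(Z)$, take $\Gamma_n(w)=O(n^{-1/2}\log^{1/2}n)$, and note that the intrinsic mismatch vanishes because the first-order conditions defining $a^\star$ put the oracle $(1-\alpha)$-quantile at $0$ both marginally and on $\{h(X)=1\}$.

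You also supply details the paper glosses over: the exponential tail needed to make $n\delta_n(\varepsilon)$ small, the stability terms, and the density regularity of Assumption~\ref{ass: oracle quantile regularity}, which the statement leaves implicit. One small correction concerns the stability step. The empirical pinball objective is piecewise linear rather than strongly convex, so replacing a point moves the fit by roughly an order-statistic spacing. You therefore get $\widetilde\varepsilon_n=O(\log n/n)$ with polynomially small $\widetilde\delta_n$, rather than $O(n^{-1})$ with exponentially small $\widetilde\delta_n$. This leaves the final $n^{-1/2}\log^{1/2}(n)$ rate unchanged.
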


\begin{proof}
    Since
    \begin{gather*}
        s(x,y;Z)-s^\star(x,y)=g(x;a^\star)-g(x;\widehat a(Z))\,,
    \end{gather*}
    the score-estimation error term in Theorem~\ref{theo: main miscoverage} is controlled by the estimation error of the finite-dimensional coefficient vector $\widehat a(Z)$. Because $\mathcal H$ is finite, estimating $a^\star$ is a finite-dimensional quantile-regression problem, so this contribution is of order $O(n^{-1/2}\log^{1/2}(n))$.

    Since $w(\cdot)\equiv 1$, the finite-sample calibration error satisfies
    $\Gamma_n(w)=O(n^{-1/2}\log^{1/2}(n))$. Finally, by the definition of $a^\star$, the oracle score $s^\star(x,y)=v(x,y)-g(x;a^\star)$ has conditional $(1-\alpha)$-quantile equal to $0$ on each relevant subgroup. Hence the intrinsic conditional-mismatch error is zero. Substituting these bounds into Theorem~\ref{theo: main miscoverage} gives
    \begin{gather*}
        \left|\mathrm{Pr}\left( Y_{n+1}\in\widehat{C}_{\rm BatchGCP}(X_{n+1})\mid h(X_{n+1})=1 \right)-(1-\alpha)\right| \leq C\,n^{-1/2}\log^{1/2}(n).
    \end{gather*}
\end{proof}

\paragraph*{Analysis}
BatchGCP differs from pre-trained-score methods such as CQR or GLCP because the subgroup adjustment $\widehat a(Z)$ is learned from the same data that are later conformalized. Therefore, the score-estimation term does not vanish automatically. However, since the adjustment is indexed by finitely many subgroup indicators, estimating $\widehat a(Z)$ behaves like a finite-dimensional quantile-regression problem and contributes the parametric-order error $O(n^{-1/2}\log^{1/2}(n))$.

Once the oracle adjustment $g(\cdot;a^\star)$ removes subgroup-level quantile differences, the intrinsic conditional-mismatch error is zero. The remaining error comes from estimating the finite-dimensional grouped adjustment and from the usual conformal calibration step, both of order $O(n^{-1/2}\log^{1/2}(n))$. Thus, the conditional-miscoverage error rate of BatchGCP is $n^{-1/2}\log^{1/2}(n)$. This characterization is less sharp than the original finite-sample group-validity result in \cite{jung2023batch}, but it makes transparent how BatchGCP fits into the same three-term decomposition as the other conditional methods.

\subsection{Conditional Calibration (CC)}\label{sec: exmaple CC}

\paragraph*{Details}
The original CC procedure introduced in \cite{gibbs2025conformal} solves a quantile regression problem. Let $v:\mathcal{X}\times\mathcal{Y}\to\mathbb{R}$ be a base score, and let $\mathcal{F}$ be a class of functions from $\mathcal{X}$ to $\mathbb{R}$. The  quantile regression estimator is
\[
\widehat Q_\alpha(\cdot;Z^y)=\operatorname*{argmin}_{\widehat Q\in\mathcal F}\sum_{i=1}^{n+1}\ell_{1-\alpha}\bigl(v(X_i,Y_i^y),\widehat Q(X_i)\bigr),
\]
where $\ell_{1-\alpha}(s_1,s_2)=\{1-\alpha-\mathbbm{1}(s_2\geq s_1)\}(s_1-s_2)$ is the pinball loss at level $1-\alpha$. Define $s(x,y;Z^y)=v(x,y)-\widehat Q_\alpha(x;Z^y)$, then the CC prediction set in Equation~(2.5) of \cite{gibbs2025conformal} is
\begin{gather}\label{eq:supp_CC_01}
    \left\{ y:v(X_{n+1},y)\leq\widehat Q_\alpha(X_{n+1};Z^y)\right\}=\left\{ y:s(X_{n+1},y;Z^y)\leq 0\right\}\,.
\end{gather}

Since $\widehat Q_\alpha(\cdot;Z)$ minimizes the pinball loss, the subgradient optimality condition gives, for any $f\in\mathcal F$,
\begin{align*}
    0\in&\partial_\epsilon\left[\sum_{i=1}^{n+1}\ell_{1-\alpha}\bigl(v(X_i,Y_i),\widehat Q_\alpha(X_i;Z)+\epsilon f(X_i)\bigr)\right]\Big|_{\epsilon=0}\\
    =&\Bigg\{\sum_{i:v(X_i,Y_i)\neq\widehat Q_\alpha(X_i;Z)}f(X_i)\bigl[\alpha-\mathbbm{1}\{v(X_i,Y_i)>\widehat Q_\alpha(X_i;Z)\}\bigr]\\
    &\qquad+\sum_{i:v(X_i,Y_i)=\widehat Q_\alpha(X_i;Z)}\tau_i f(X_i):\tau_i\in[\alpha-1,\alpha]\Bigg\}\,.
\end{align*}
Choosing $\tau_i^\star\in[\alpha-1,\alpha]$ for $i$ with $v(X_i,Y_i)=\widehat{Q}_\alpha(X_i;Z)$ so that the subgradient contains zero yields
\begin{align*}
\sum_{i=1}^{n+1}f(X_i)\bigl[\alpha-\mathbbm{1}\{v(X_i,Y_i)>\widehat Q_\alpha(X_i;Z)\}\bigr]&=\sum_{i:v(X_i,Y_i)=\widehat Q_\alpha(X_i;Z)}(\alpha-\tau_i^\star)f(X_i)\,,\\
\sum_{i=1}^{n+1}f(X_i)\bigl[\alpha-\mathbbm{1}\{s(X_i,Y_i;Z)>0\}\bigr]&=\sum_{i:s(X_i,Y_i;Z)=0}(\alpha-\tau_i^\star)f(X_i)\,.
\end{align*}
Taking $f\equiv 1$ (the constant direction) gives
\[
\frac1{n+1}\sum_{i=1}^{n+1}\mathbbm{1}\{s(X_i,Y_i;Z)\leq 0\}
=1-\alpha+\frac1{n+1}\sum_{i:s(X_i,Y_i;Z)=0}(\alpha-\tau_i^\star).
\]
Thus, up to the usual tie adjustment, the zero threshold coincides with the empirical $(1-\alpha)$-quantile of the scores $\{s(X_i,Y_i^y,Z^y\}_{i\in[n+1]}$. Therefore, the CC construction in \eqref{eq:supp_CC_01} can be written as
\begin{gather}
    \left\{ y:s(X_{n+1},y;Z^y)\leq Q\left(1-\alpha;\frac1{n+1}\left\{\sum_{i=1}^{n}\delta_{s(X_i,Y_i;Z^y)}+\delta_{\infty}\right\}\right)\right\}\,,\label{eq: CC general}
\end{gather}
which is exactly \eqref{eq: general Cvec} with $w(\cdot)\equiv 1$. 

\cite{gibbs2025conformal} considers a penalized version of CC. In Corollary~\ref{cor: cc L1 test}, we work with a finite-dimensional penalized quantile regression estimator so that the estimation error of the learned conditional quantile can be controlled directly. Specifically, let
\begin{gather*}
\mathcal F=\left\{f_\kappa:f_\kappa(x)=\sum_{i=1}^{d_0}\kappa_i\eta_i(x),\ \|\kappa\|_2\leq B\right\},\qquad \widehat Q_\alpha(x;Z)=f_{\widehat\kappa}(x),
\end{gather*}
where
\begin{gather*}
\widehat{\kappa}={\rm argmin}_{\{\kappa:\|\kappa\|_2\leq B\}}\left\{
\frac{1}{n}\sum_{i=1}^{n}\ell_{1-\alpha}\bigl(v(X_i,Y_i),f_\kappa(X_i)\bigr)+\lambda\|\kappa\|_2^2\right\}\,.
\end{gather*}
Then Equation~(3.1) of \cite{gibbs2025conformal} gives the prediction set
\begin{gather}
    \left\{ y:v(X_{n+1},y)\leq \widehat Q_\alpha(X_{n+1};Z^y) \right\}=\left\{ y:s(X_{n+1},y;Z^y)\leq 0 \right\}\,,\label{eq: CC penalized origin}
\end{gather}
where the centered score is defined as
\begin{gather*}
    s(x,y;Z)=v(x,y)-\widehat Q_\alpha(x;Z)\,.
\end{gather*}

In the unified analysis, we use the same centered score but replace the zero threshold in \eqref{eq: CC penalized origin} by the empirical conformal quantile in \eqref{eq: CC general}. This calibrated version retains the finite-sample marginal coverage guarantee from Theorem~\ref{theo: marginal general}, whereas the plug-in zero-threshold set need not. This is the CC version used throughout our unified analysis.

\paragraph*{Unified specification}
\begin{itemize}
    \item \textit{Score:} $s(x,y;Z)=v(x,y)-\widehat Q_\alpha(x;Z)$, where $\widehat Q_\alpha(\cdot;\cdot)$ is a fitted conditional $(1-\alpha)$-quantile estimator of the base score $v(X,Y)$ given $X$.
    \item \textit{Oracle score:} $s^\star(x,y)=v(x,y)-Q^\star_\alpha(x)$, where $Q^\star_\alpha(x)$ is the oracle conditional $(1-\alpha)$-quantile of $v(X,Y)$ given $X=x$.
    \item \textit{Weight:} $w(\cdot)\equiv 1$.
\end{itemize}

\paragraph*{Marginal coverage}
For the calibrated CC construction in \eqref{eq: CC general}, the weight is $w(\cdot)\equiv 1$. Hence Theorem~\ref{theo: marginal general} gives exact finite-sample marginal coverage.

The main i.i.d.~result for CC is the averaged conditional-miscoverage bound in Corollary~\ref{cor: cc L1 test}. This is the natural form of the unified theory for CC, since the relevant error is the global estimation error of $\widehat Q_\alpha$ rather than a pointwise bound at each fixed covariate value. The extension to covariate shift is given in Section~\ref{sec:supp_covariate_shift}.

\section{Additional Results on Conditional Methods under Covariate Shift}\label{sec:supp_covariate_shift}

This section supplements Section~\ref{sec: Extensions to covariate shift} of the main text by providing additional results for conditional conformal methods under covariate shift. We work within the same unified prediction-set framework \eqref{eq: general Cvec}, but replace the i.i.d.~data setting with the covariate-shift setting in Assumption~\ref{ass: covariate shift assumption}. Throughout this section, the data are
\begin{gather*}
    Z=\{(X_1,Y_1),\ldots,(X_n,Y_n),(X_{n+1},Y_{n+1})\},
\end{gather*}
with $\{(X_i,Y_i)\}_{i\in[n]}$ drawn i.i.d.\ from $P_{X,1}\times P_{Y\mid X}$ and $(X_{n+1},Y_{n+1})$ independently from $P_{X,2}\times P_{Y\mid X}$. The density ratio $r_X(\cdot)=dP_{X,2}/dP_{X,1}(\cdot)$ is assumed to be known.

\subsection{Averaged Conditional Miscoverage under Covariate Shift}\label{sec: averaged miscoverage shift}

We begin with the covariate-shift analogue of Theorem~\ref{theo: averaged miscoverage}. The result follows the same framework as in the main text, but the averaging measure and the weighted calibration distribution now need to account for the difference between $P_{X,1}$ and $P_{X,2}$.

For the conditioning variable $T=T(X)$, let $P_{T,1}$, $P_{T,2}$, and $P_{T,w}$ denote the distributions induced by $X\sim P_{X,1}$, $X\sim P_{X,2}$, and $X\sim P_{X,w}$, respectively, where $dP_{X,w}(x)=w(x)dP_{X,1}(x)$. Define
$r(\cdot)=dP_{T,2}/dP_{T,1}(\cdot)$ and $r_w(\cdot)=dP_{T,w}/dP_{T,1}(\cdot)$.

\begin{theorem}[Averaged conditional miscoverage under covariate shift]\label{theo: averaged miscoverage shift}
    Replace Assumption~\ref{ass: data_distribution_and_score_invariance} in Theorem~\ref{theo: averaged miscoverage} with Assumption~\ref{ass: covariate shift assumption}, and suppose that all other conditions in Theorem~\ref{theo: averaged miscoverage} hold. Let $T_0,T_0^\prime \sim P_{T,1}$ be independent. Then there exists a constant $C>0$ such that
    \begin{align*}
        &\mathbb{E}\left|\mathrm{Pr}\left( Y_{n+1}\in\widehat{C}(X_{n+1})\mid \phi(T) \right)-(1-\alpha)\right|\\
        \leq &C \left[ \widetilde{\delta}_n+\widetilde{\varepsilon}_n+\delta_n(\varepsilon)+\varepsilon\left\{ \|r\|_{P_{T,1},p/(p-1)}+\mathbb{E}\left( \|r_w\|_{P_{T,1},p/(p-1)} \right) \right\} \right]\\
        &+C \left( \sqrt{n^{-1}\mathrm{Pdim}(\mathcal{S})\log(n)}+\mathbb{E}\left[ \left\{ r_w(T_0)+r(T_0) \right\}r(T_0^{\prime})\left|q_{s^\star}(T_0)-q_{s^\star}(T_0^{\prime})\right| \right] \right)\,.
    \end{align*}
\end{theorem}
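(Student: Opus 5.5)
The plan is to follow the proof of Theorem~\ref{theo: averaged miscoverage} line by line and track where the test law $P_{T,2}$ differs from the calibration law $P_{T,1}$. Both the outer expectation over $T$ and the conditioning on $\phi(T)$ now refer to the test distribution. Every averaging step that was previously against $P_T$ therefore has to be rewritten against $P_{T,1}$ with the Radon--Nikodym factor $r=dP_{T,2}/dP_{T,1}$ inserted.

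\emph{Step 1 (score stability and decoupling).} As in the i.i.d.\ case, use Assumption~\ref{ass: stable score} to replace $s(\cdot,\cdot;Z)$ by $s(\cdot,\cdot;Z'_{n+1})$ in both the test score and the weighted quantile $q(Z;\alpha)$. This costs $O(\widetilde\delta_n+\widetilde\varepsilon_n)$, using the bounded conditional densities. Conditional on $Z'_{n+1}$ and $\phi(t)$, the test score $s(X_{n+1},Y_{n+1};Z'_{n+1})$ has c.d.f.\ $F_{s(\cdot,\cdot;Z'_{n+1})\mid\phi(t)}$. This holds because $P_{Y\mid X}$ is unchanged under covariate shift, so the conditional law given $\phi(t)$ is the same under $P_{X,1}$ and $P_{X,2}$. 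Hence the conditional coverage at $t$ equals $\mathbb E\{F_{s(\cdot,\cdot;Z'_{n+1})\mid\phi(t)}(\hat q)\mid\phi(t)\}$ up to these errors, where $\hat q$ is the decoupled weighted quantile.

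\emph{Step 2 (uniform weighted quantile concentration).} Because $\underline M\le w\le\overline M$, the calibration points are i.i.d.\ from $P_{X,1}$, and $\mathcal S$ has finite pseudo-dimension, apply the same uniform empirical-process bound used for Theorem~\ref{theo: averaged miscoverage}. This gives $\sup_{s\in\mathcal S,c}|\widehat F_{w\circ s}(c)-F_{w\circ s}(c)|\lesssim\sqrt{n^{-1}\mathrm{Pdim}(\mathcal S)\log n}$ with high probability. Combined with the density lower bound, $\hat q$ lies within this distance of $Q(1-\alpha;F_{w\circ s(\cdot,\cdot;Z'_{n+1})})$, where the weighted law is built on $P_{X,1}$. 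The test point enters only through the single weight $w(X_{n+1})$, which is bounded, so it contributes $O(1/n)$.

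\emph{Step 3 (averaging and change of measure).} Write the error as $\overline L\,|q_{s}(t)-Q(1-\alpha;F_{w\circ s})|$ with $s=s(\cdot,\cdot;Z'_{n+1})$, and integrate over $t\sim P_{T,2}$, i.e.\ against $r(t)\,dP_{T,1}(t)$. Next, represent the weighted quantile through the mixture $F_{w\circ s}=\int F_{s\mid\phi(t')}\,r_w(t')\,dP_{T,1}(t')$. Bounded densities then give $|Q(1-\alpha;F_{w\circ s})-q_s(t)|\lesssim\int r_w(t')\,|q_s(t')-q_s(t)|\,dP_{T,1}(t')$; this mixture-quantile lemma was already used in Theorem~\ref{theo: main miscoverage}. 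Now insert $q_{s^\star}$ by the triangle inequality:
\[
|q_s(t')-q_s(t)|\le|q_s(t')-q_{s^\star}(t')|+|q_{s^\star}(t')-q_{s^\star}(t)|+|q_{s^\star}(t)-q_s(t)|.
\]
Integrate each piece with the weights $r(t)r_w(t')$, and bound the $L_p$ discrepancy terms by H\"older's inequality. This yields $\varepsilon\|r\|_{p/(p-1)}$ and $\varepsilon\|r_w\|_{p/(p-1)}$ on the event of Assumption~\ref{ass: L_p_convergence}, with the complement contributing $\delta_n(\varepsilon)$. What remains is the mismatch term $\mathbb E[\{r_w(T_0)+r(T_0)\}r(T_0')|q_{s^\star}(T_0)-q_{s^\star}(T_0')|]$. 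The $r(T_0)r(T_0')$ piece arises from the analogue of the ``$+1$'' term in Theorem~\ref{theo: averaged miscoverage}, where the test-side quantile is also compared against a $P_{T,2}$-averaged reference.

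\emph{Main obstacle.} Step~3 is where I expect the difficulty. Setting up the mixture-quantile comparison so that the weights come out exactly as $\{r_w(T_0)+r(T_0)\}r(T_0')$ requires care about which index carries $P_{T,2}$ and which carries the calibration weight. I would first try applying the mixture lemma twice: once for $F_{w\circ s}$ against $q_s(t)$, and once for the $P_{T,2}$-marginal quantile against $q_s(t)$. Averaging both over $t\sim P_{T,2}$ should produce the two weight products symmetrically.
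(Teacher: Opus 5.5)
Your proposal is correct and follows the paper's proof. The shared steps are: decoupling through $Z_{n+1}^\prime$ and Assumption~\ref{ass: stable score}; an $O(n^{-1})$ level shift that absorbs the test-point weight; Lemma~\ref{theo: sup quantile error} for the uniform weighted-quantile error; and a mixture comparison with change of measure to $P_{T,1}$ followed by H\"older. The paper gets the mixture comparison from the mean value theorem, writing $Q(1-\alpha+\gamma_n;F_{w\circ S}(\cdot;Z_{n+1}^\prime))$ as a density-weighted average of $q_{s(\cdot,\cdot;Z_{n+1}^\prime)}(T_w)$ with $T_w\sim P_{T,w}$, and this gives the same bound as your mixture lemma.

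Your ``main obstacle'' is not a real one. Your three-term triangle inequality, integrated against $r(t)r_w(t')$, already produces the mismatch term with weight $r_w(T_0)r(T_0^\prime)$ alone, and the stated bound dominates that since the extra term is nonnegative. In the paper, the extra $r(T_0)r(T_0^\prime)$ piece does not come from a second mixture comparison. It comes from routing through $q_{s^\star}(T^\prime)$ for an independent copy $T^\prime\sim P_{T,2}$ and paying $\mathbb{E}|q_{s^\star}(T)-q_{s^\star}(T^\prime)|$. That detour is unnecessary for deterministic $w$, but it is exactly what lets you write the term with $w$ independent of $(T_0,T_0^\prime)$ when $w$ is built from $X_{n+1}$.
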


Compared with Theorem~\ref{theo: averaged miscoverage}, Theorem~\ref{theo: averaged miscoverage shift} contains an additional term $\varepsilon\|r\|_{P_{T,1},p/(p-1)}$, which arises because the averaged conditional target is evaluated under the test distribution $P_{T,2}$ rather than the calibration distribution $P_{T,1}$. When $w=r_X$, the ratio $r_w$ coincides with $r$, and hence $\|r_w\|_{P_{T,1},p/(p-1)}=\|r\|_{P_{T,1},p/(p-1)}$. The intrinsic conditional-mismatch error is also averaged with the density-ratio corrections $r_w$ and $r$. 

\subsection{Conditional Calibration (CC) under Covariate Shift}\label{sec: cc shift}

\paragraph*{Details}
Under covariate shift, the centered CC score is still defined by $s(x,y;Z)=v(x,y)-\widehat Q_\alpha(x;Z)$,
where $\widehat Q_\alpha(\cdot;Z)$ estimates the conditional $(1-\alpha)$-quantile of the base score $v(X,Y)$ given $X$. The difference from the i.i.d.~case is that calibration should now target the test covariate distribution. This is achieved by weighting the empirical score distribution by the density ratio $r_X=dP_{X,2}/dP_{X,1}$.

The weighted subgradient optimality condition, applied in the direction $f(x)=r_X(x)$, yields
\[
\frac{\sum_{i=1}^{n+1}r_X(X_i)\mathbbm{1}\{s(X_i,Y_i;Z)\leq 0\}}{\sum_{i=1}^{n+1}r_X(X_i)}
=1-\alpha+\frac{\sum_{i:s(X_i,Y_i;Z)=0}r_X(X_i)(\alpha-\tau_i^\star)}{\sum_{i=1}^{n+1}r_X(X_i)}.
\]
Let $\hat s_{(1)}\leq\cdots\leq\hat s_{(n+1)}$ be the order statistics of $\{s(X_i,Y_i;Z)\}_{i\in[n+1]}$, and let $\hat s_{(\hat n)}$ be the $(1-\alpha)$-quantile of distribution
$\{\sum_{i=1}^{n+1}r_X(X_i)\}^{-1}\sum_{i=1}^{n+1}r_X(X_i)\delta_{s(X_i,Y_i;Z)}$.
Following the same argument as in the i.i.d.\ case (Section~\ref{sec: exmaple CC}), the subgradient inequality implies $\hat s_{(\hat n)}\leq 0<\hat s_{(\hat n+1)}$, so with a slight tie adjustment, the prediction set $\bigl\{ y:v(X_{n+1},y)\leq \widehat Q_\alpha(X_{n+1};Z^y) \bigr\}=\left\{ y:s(X_{n+1},y;Z^y)\leq 0 \right\}$ can be rewritten as
\begin{align}\label{eq:supp_CC_02}
    \widehat{C}_{\rm CC}(X_{n+1})=\left\{ y:s(X_{n+1},y;Z)\leq Q\left(1-\alpha;\frac{\sum_{i=1}^{n+1}r_X(X_i)\delta_{s(X_i,Y_i;Z)}}{\sum_{i=1}^{n+1}r_X(X_i)}\right)\right\}\,,
\end{align}
which is exactly \eqref{eq: general Cvec} with $w(x)=r_X(x)$. When there are no ties at $0$, the threshold $0$ is exactly the $r_X$-weighted $(1-\alpha)$-quantile of the centered scores $\{s(X_i,Y_i;Z)\}_{i\in[n+1]}$.

\cite{gibbs2025conformal} also considers a covariate-shift version defined directly through zero thresholding over $s(X_{n+1},y;Z^y)$. For the averaged conditional-miscoverage analysis for CC under covariate shift, we employ the same centered score $s(x,y;Z)=v(x,y)-\widehat Q_\alpha(x;Z)$ but work with the explicitly calibrated version in \eqref{eq:supp_CC_02}. This calibrated construction is an instance of \eqref{eq: general Cvec} with $w(x)=r_X(x)$, and therefore satisfies marginal coverage guarantee under covariate shift by Theorem~\ref{theo: marginal general cshift}. This makes clear the distinction between the original zero-threshold construction and the weighted conformal calibration used in our unified analysis under covariate shift.

\paragraph*{Unified specification}
\begin{itemize}
    \item \textit{Score:} $s(x,y;Z)=v(x,y)-\widehat Q_\alpha(x;Z)$, where $\widehat Q_\alpha(\cdot;\cdot)$ is a fitted conditional $(1-\alpha)$-quantile estimator of the base score $v(X,Y)$ given $X$.
    \item \textit{Oracle score:} $s^\star(x,y)=v(x,y)-Q^\star_\alpha(x)$, where $Q^\star_\alpha(x)$ is the oracle conditional $(1-\alpha)$-quantile of $v(X,Y)$ given $X=x$.
    \item \textit{Weight:} $w(x)=r_X(x)$, the known density ratio.
\end{itemize}

\paragraph*{Marginal coverage}
Since $w(\cdot)=r_X(\cdot)$ satisfies the marginal-validity condition in Theorem~\ref{theo: marginal general cshift}, the finite-sample marginal coverage guarantee holds.

\begin{corollary}\label{cor: cc L1 test shift} 
    Suppose the conditions of Corollary~\ref{cor: cc L1 test} hold after replacing $P_X$ by $P_{X,1}$. Assume that the density ratio $r_X(\cdot)$ satisfies $\underline{M}\leq r_X(x)\leq \overline{M}$ for all $x\in\mathcal X$. Then there exists a constant $C>0$ such that
    \begin{align*}
        &\mathbb{E}\left|\mathrm{Pr}\left( Y_{n+1}\in\widehat{C}_{\rm CC}(X_{n+1})\mid X_{n+1} \right)-(1-\alpha)\right|\\
        \leq & C \left\{ \sqrt{\inf_{f\in\mathcal{F}}R(f)-R(Q_\alpha^\star)}+\{d_0\log(n)/n\}^{1/3} \right\}\,.
    \end{align*}  
\end{corollary}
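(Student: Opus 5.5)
The plan is to apply Theorem~\ref{theo: averaged miscoverage shift} with the test-conditional choice $T=X_{n+1}$, $\phi(t)=\{X_{n+1}=t\}$ and $w=r_X$. Each of its terms is then bounded in the same way as in the proof of Corollary~\ref{cor: cc L1 test}, with $P_X$ replaced by $P_{X,1}$. With $T=X$ we have $r=r_X$. Because $w=r_X$ is deterministic, $r_w=r_X/\mathbb{E}_{P_{X,1}}r_X=r_X$. The bounds $\underline M\le r_X\le\overline M$ then give $\|r\|_{P_{T,1},p/(p-1)}\le\overline M$ and $\mathbb{E}\|r_w\|_{P_{T,1},p/(p-1)}\le \overline M$. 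They also supply the weight condition $\underline M\le w\le\overline M$ required in Theorem~\ref{theo: averaged miscoverage}.

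\textbf{Step 1: intrinsic mismatch term.} For the oracle score $s^\star(x,y)=v(x,y)-Q^\star_\alpha(x)$, we have $q_{s^\star}(t)=0$ for every $t$. Hence the term $\mathbb{E}[\{r_w(T_0)+r(T_0)\}r(T_0')|q_{s^\star}(T_0)-q_{s^\star}(T_0')|]$ vanishes identically.

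\textbf{Step 2: complexity term.} By Remark~\ref{remark: cc instance} and Lemma~\ref{lem:pdim_Sprime}, $\mathrm{Pdim}(\mathcal S)\le\mathrm{Pdim}(\mathcal F)\le d_0+1$, since $\mathcal F$ is a $d_0$-dimensional linear class. So this term is $O(\sqrt{d_0\log n/n})$, which is dominated by $\{d_0\log n/n\}^{1/3}$.

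\textbf{Step 3: stability.} The ridge-penalized pinball objective over the ball is $\lambda$-strongly convex in $\kappa$ and the loss is Lipschitz. A single replacement therefore moves $\widehat\kappa$ by $O(M^2/(n\lambda))$ and the score by the same order uniformly in $(x,y)$. This gives $\widetilde\varepsilon_n=O(1/(n\lambda))$ and $\widetilde\delta_n=0$.

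\textbf{Step 4: the score-estimation term (the main obstacle).} Following Remark~\ref{remark: cc instance}, Assumption~\ref{ass: L_p_convergence} reduces to bounding $\|\widehat Q_\alpha(\cdot;Z'_{n+1})-Q^\star_\alpha\|_{P_{X,1},p}$. Note that the estimator is fit on calibration data drawn from $P_{X,1}$, so the i.i.d.\ analysis carries over to this step unchanged.
\begin{itemize}
\item Since the conditional density of $v$ lies in $[\underline L,\overline L]$, the pinball excess risk satisfies $R(f)-R(Q^\star_\alpha)\ge c\|f-Q^\star_\alpha\|_{P_{X,1},2}^2$ for bounded $f$. Taking $p=2$, this gives
\[
\|\widehat Q_\alpha-Q^\star_\alpha\|_2\lesssim\sqrt{R(\widehat Q_\alpha)-R(Q^\star_\alpha)}.
\]
\item Decompose $R(\widehat Q_\alpha)-R(Q^\star_\alpha)$ into approximation error, a uniform deviation of order $\sqrt{d_0\log n/n}$ over $\mathcal F$, and a penalty bias of order $\lambda B^2$.
\item Add the stability contribution $1/(n\lambda)$ from Step~3 and choose $\lambda\asymp(d_0\log n/n)^{1/3}$. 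Balancing $\varepsilon\asymp\sqrt{\lambda+\sqrt{d_0\log n/n}}+1/(n\lambda)$ then yields $\varepsilon\lesssim\sqrt{\Delta_{\mathcal F}}+(d_0\log n/n)^{1/3}$, with $\delta_n(\varepsilon)\le n^{-1}$ on a high-probability event.
\end{itemize}
I expect the delicate part to be getting this balancing to produce exactly the $1/3$ exponent. Comparing the $\varepsilon$ and $1/(n\lambda)$ contributions is where the rate is determined. If a direct substitution gives a worse exponent, I will simply reuse the corresponding computation from the proof of Corollary~\ref{cor: cc L1 test}, which is distribution-agnostic in $P_{X,1}$.

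\textbf{Step 5: remaining conditions and conclusion.} The density condition for the learned score, the bound $|s|\le\underline M$, and compactness of $\mathcal S$ hold as in Corollary~\ref{cor: cc L1 test}, since $\|\kappa\|_2\le B$ and $|v|\le M$. Substituting Steps~1--4 into Theorem~\ref{theo: averaged miscoverage shift} yields the stated bound, with a constant that now also depends on $\overline M$ and $\underline M$.
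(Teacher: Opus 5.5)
Your reduction to Theorem~\ref{theo: averaged miscoverage shift} with $T=X_{n+1}$ and $w=r_X$ (so $r=r_w=r_X$) agrees with the paper. So do the vanishing of the mismatch term because $q_{s^\star}\equiv 0$, the pseudo-dimension bound via Lemma~\ref{lem:pdim_Sprime}, and the stability bound $\widetilde\varepsilon_n=O(1/(n\lambda))$, $\widetilde\delta_n=0$.

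The gap is Step~4, which is exactly where the $1/3$ exponent has to come from, and your computation does not deliver it. With $\lambda\asymp(d_0\log n/n)^{1/3}$, the penalty-bias contribution to $\varepsilon$ is $\sqrt{\lambda B^2}\asymp(d_0\log n/n)^{1/6}$. Independently of $\lambda$, the square root of a slow-rate uniform-deviation bound $O(\sqrt{d_0\log n/n})$ on the excess pinball risk is $(d_0\log n/n)^{1/4}$. Neither is $O\bigl((d_0\log n/n)^{1/3}\bigr)$, so the claimed balance is false.

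You also cannot simply shrink $\lambda$. Since $\widetilde\varepsilon_n\asymp 1/(n\lambda)$ enters the bound linearly, the real trade-off is $\sqrt{\lambda}$ against $1/(n\lambda)$. That forces $\lambda\asymp n^{-2/3}$ up to logarithms, and even then the $(d_0\log n/n)^{1/4}$ deviation term dominates.

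The missing ingredient is a fast-rate excess-risk bound for the penalized estimator. The paper uses Theorem~2 of Li, Liu and Zhu (2007) for quantile regression in a finite-dimensional RKHS, together with covering-number bounds from Zhou (2002). For $\lambda=c_1(d_0\log n/n)^{2/3}/(2J_0)$ this gives $R(f_{\widehat\kappa})-R(f_{\kappa^\circ})\le c_1(d_0\log n/n)^{2/3}$ with failure probability $o(n^{-1})$. Lemma~\ref{lemma: quantile regression error} then yields $\|\widehat Q_\alpha-Q_\alpha^\star\|_{P_{X,1},p}\lesssim\sqrt{\Delta_{\mathcal F}}+(d_0\log n/n)^{1/3}$ with $\delta_n(\varepsilon)=o(n^{-1})$. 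With this $\lambda$, the stability term $1/(n\lambda)\asymp(nd_0^2\log^2 n)^{-1/3}$ is of no larger order.

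Your fallback of reusing the computation from the proof of Corollary~\ref{cor: cc L1 test} does not close the gap either. In the paper that corollary is deduced from the present statement as the case $r_X\equiv 1$, so there is no independent computation to borrow.
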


\paragraph*{Analysis}
Corollary~\ref{cor: cc L1 test shift} gives the averaged conditional-miscoverage bound for CC under covariate shift. The density ratio $r_X$ corrects the calibration distribution toward the test covariate, and its boundedness keeps the rate of the averaged bound unchanged. As in the i.i.d.~case, the rate consists of the approximation error of the conditional quantile class,
$\sqrt{\inf_{f\in\mathcal F}R(f)-R(Q_\alpha^\star)}$, and the estimation error of the learned conditional quantile, $\{d_0\log(n)/n\}^{1/3}$.

\subsection{Weighted Conformal Prediction (WCP)}\label{supsec: WCP}

WCP \citep{tibshirani2019conformal} is the standard density-ratio-weighted conformal method under covariate shift. In this subsection, we show that WCP is a special case of the unified prediction set \eqref{eq: general Cvec} with $w(\cdot)=r_X(\cdot)$, and then use the conditional-miscoverage decomposition to derive its test-conditional miscoverage bound.

\paragraph*{Details}
Let $s(x,y)$ be a pre-trained score function that is independent of $Z$.
WCP uses the known density ratio $r_X(x)=dP_{X,2}/dP_{X,1}(x)$ to calibrate the test score against the weighted empirical distribution of the calibration scores:
\[
\widehat{C}_{\rm WCP}(X_{n+1})=\left\{ y:s(X_{n+1},y)\leq Q\left( 1-\alpha;\dfrac{\sum_{i=1}^{n}r_X(X_i)\delta_{s(X_i,Y_i)}+r_X(X_{n+1})\delta_{\infty}}{\sum_{i=1}^{n+1}r_X(X_i)} \right) \right\}\,,
\]
Thus, WCP is exactly \eqref{eq: general Cvec} with $s(x,y;Z)=s(x,y)$ and the weight function $w(x)=r_X(x)$.

\paragraph*{Unified specification}
\begin{itemize}
    \item \textit{Score:} $s(x,y;Z)=s(x,y)$, where $s(x,y)$ is a pre-trained score independent of $Z$.
    \item \textit{Oracle score:} $s^\star(x,y)$, the population target of $s(x,y)$.
    \item \textit{Weight:} $w(x)=r_X(x)$.
\end{itemize}

\paragraph*{Marginal coverage}
Since $w(\cdot)=r_X(\cdot)$ is deterministic and satisfies the marginal-validity condition in Theorem~\ref{theo: marginal general cshift}, WCP preserves marginal coverage under covariate shift up to the usual tie term $\mathbb{E}\{\widehat{w}_{\max}(Z)\}$. When $r_X(\cdot)$ is bounded and the weighted scores are almost surely distinct, this term is of order $n^{-1}$.

\begin{corollary}[Test-conditional miscoverage of WCP]\label{corr: WCP test}
    Assume that: (i) there exists a function $\delta_n(\varepsilon)$ such that, for every $\varepsilon>0$,
    \begin{gather*}
        \mathrm{Pr}\left(\sup_{x\in\mathcal X,\,y\in\mathcal Y}|s(x,y)-s^\star(x,y)|>\varepsilon\right)\le \delta_n(\varepsilon)\,;
    \end{gather*}
    (ii) there exists a constant $M>0$ such that $|s^\star(X_{n+1},Y_{n+1})|\le M$ almost surely; (iii) the distribution $F_{r_X\circ s^\star}$ has a density lower bounded by $\underline{L}>0$ on its support; (iv) for each $t\in\mathcal X$, the conditional distribution $F_{s^\star\mid X=t}$ has a density upper bounded by $\overline{L}<\infty$ on its support; and (v) the density ratio satisfies $0<\underline{M}\le r_X(x)\le \overline{M}<\infty$ for all $x\in\mathcal X$.
    Then there exists a constant $C$ such that, for every $t\in\mathcal X$,
    \begin{align*}
        &\left|\mathrm{Pr}\left( Y_{n+1}\in\widehat{C}_{\rm WCP}(X_{n+1})\mid X_{n+1}=t \right)-(1-\alpha)\right|\\
        \leq&C \left\{ \delta_{\rm tr}+n^{-1/2}\log^{1/2}(n)+\left|Q(1-\alpha;F_{s^\star\mid X=t})-Q\left(1-\alpha;F_{r_X\circ s^\star}\right)\right| \right\}\,,
    \end{align*}
    where $\delta_{\rm tr}=\inf_{\varepsilon>0}\{\varepsilon+\delta_n(\varepsilon)\}$ is the score-estimation error.
\end{corollary}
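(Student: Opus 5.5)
The plan is to apply Theorem~\ref{theo: main miscoverage}, in its covariate-shift form, with $s(x,y;Z)=s(x,y)$ and the deterministic weight $w(\cdot)=r_X(\cdot)$. The text after Theorem~\ref{theo: marginal general cshift} notes that the form of the decomposition is unchanged when Assumption~\ref{ass: data_distribution_and_score_invariance} is replaced by Assumption~\ref{ass: covariate shift assumption}, because all conditioning events are determined by $X_{n+1}$. The only change is that $B_w$, $\sigma_w^2$ and $F_{w\circ s^\star}$ are computed under $P_{X,1}$. I would then bound the three error terms one at a time, following the same template as Corollaries~\ref{corr: CQR test} and~\ref{corr: GLCP test}.

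\emph{Verifying the assumptions.}
\begin{itemize}
\item Assumption~\ref{ass: pointwise_convergence} follows from condition (i).
\item Because the score is pre-trained and independent of $Z$, Assumption~\ref{ass: stable score} holds with $\widetilde{\varepsilon}_n=\widetilde{\delta}_n=0$.
\item Assumption~\ref{ass: oracle quantile regularity} follows from (ii)--(iv), with $L_t=\overline{L}$; here $F_{w\circ s^\star}=F_{r_X\circ s^\star}$.
\item The conditional independence hypothesis of Theorem~\ref{theo: main miscoverage} holds. Since $w$ is deterministic, $q^\star(Z;\alpha)$ depends only on $\{(X_i,Y_i)\}_{i\in[n]}$ and $X_{n+1}$. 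Given $X_{n+1}=t$, $Y_{n+1}$ is independent of the calibration sample, so $s^\star(t,Y_{n+1})$ is independent of $q^\star$.
\end{itemize}

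\emph{Bounding the terms.} The score-estimation term is $C_t\{\varepsilon+n\delta_n(\varepsilon)\}$. Its factor $n$ has to be removed to reach $\delta_{\rm tr}=\inf_\varepsilon\{\varepsilon+\delta_n(\varepsilon)\}$. Since the scores are pre-trained, the factor $n$ only arises from a union bound over the $n+1$ individual scores. Under the uniform-in-$(x,y)$ event in (i), a single event $\{\sup_{x,y}|s-s^\star|\le\varepsilon\}$ of probability at least $1-\delta_n(\varepsilon)$ controls all scores at once. On that event, every weighted empirical c.d.f.\ shifts by at most $\varepsilon$ in its argument. Hence the weighted quantile moves by at most $\varepsilon$, and the coverage changes by at most $L_t\cdot 2\varepsilon$. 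Off the event, the contribution is at most $\delta_n(\varepsilon)$. I expect this to be the main point needing care: I would re-run the score-replacement step of the proof of Theorem~\ref{theo: main miscoverage} on this single event, rather than quote the theorem's constant as stated.

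The finite-sample calibration term is handled by condition (v). It gives $M_w\le\overline{M}$, $B_w=\mathbb{E}_{P_{X,1}}r_X=1$ and $\sigma_w^2\le\overline{M}^2$. Therefore $\Gamma_n(w)\le(\overline{M}n^{-1/2}+\overline{M}n^{-1})\sqrt{\log n}=O(n^{-1/2}\log^{1/2}n)$. The lower bound $\underline{M}$ also keeps $\widehat{w}_{\max}$ of order $n^{-1}$, which is absorbed into this term.

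Finally, $w$ is deterministic, so $Q(1-\alpha;F_{r_X\circ s^\star})$ is non-random. The intrinsic conditional-mismatch term in \eqref{eq: independent pointwise bound 1} therefore reduces exactly to $|Q(1-\alpha;F_{s^\star\mid X=t})-Q(1-\alpha;F_{r_X\circ s^\star})|$. For this reason I would not pass to the looser bound \eqref{eq: independent pointwise bound 2}.

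Collecting the three terms and taking the infimum over $\varepsilon$ gives the stated bound. The constant $C$ depends only on $\overline{L}$, $\underline{L}$, $M$, $\underline{M}$ and $\overline{M}$, and is therefore uniform in $t$.
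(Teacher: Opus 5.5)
Your proposal is correct and follows the paper's route: apply Theorem~\ref{theo: main miscoverage} under Assumption~\ref{ass: covariate shift assumption} with $w=r_X$, zero stability terms, $\Gamma_n(w)=O(n^{-1/2}\log^{1/2}n)$ from condition (v), and the deterministic mismatch term left as is. Two of your steps are stated only as assertions in the paper's proof: the single-event argument, which uses the uniform-in-$(x,y)$ condition (i) to replace $n\delta_n(\varepsilon)$ by $\delta_n(\varepsilon)$, and the explicit check of the conditional-independence hypothesis; your version supplies both.
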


\begin{proof}
    Condition (i) gives the score-estimation error, which contributes $\delta_{\rm tr}=\inf_{\varepsilon>0}\{\varepsilon+\delta_n(\varepsilon)\}$ after optimizing over $\varepsilon$. Since WCP uses a pre-trained score, the score does not depend on the calibration sample, and therefore the stability terms satisfy $\widetilde{\varepsilon}_n=\widetilde{\delta}_n=0$.

    For the weight $w(x)=r_X(x)$, condition (v) implies $M_w\le \overline{M}$, $B_w=\mathbb{E}\{r_X(X_1)\}=1$, and $\sigma_w^2=\mathbb{E}\{r_X^2(X_1)\}\le \overline{M}^2$, where $X_1\sim P_{X,1}$. Hence the finite-sample calibration error satisfies
    $\Gamma_n(w)=O(n^{-1/2}\log^{1/2} n)$.

    Conditions (ii)--(iv) ensure Assumption~\ref{ass: oracle quantile regularity} for the oracle score, $F_{s^\star\mid X=t}$, and $F_{r_X\circ s^\star}$. Unlike methods whose oracle scores remove conditional heterogeneity in the score distribution, WCP with a generic pre-trained score does not necessarily eliminate the intrinsic conditional-mismatch error, which is
    $\left|Q(1-\alpha;F_{s^\star\mid X=t})-Q\left(1-\alpha;F_{r_X\circ s^\star}\right)\right|$.
    Substituting these three components into Theorem~\ref{theo: main miscoverage} under Assumption~\ref{ass: covariate shift assumption} yields the claimed bound.
\end{proof}

\paragraph*{Analysis}
Corollary~\ref{corr: WCP test} shows that the test-conditional miscoverage of WCP has three components. The score-estimation error is $\delta_{\rm tr}$, and the finite-sample weighted calibration error is of order $n^{-1/2}\log^{1/2}(n)$. The intrinsic conditional-mismatch error is
\[
    \left|Q(1-\alpha;F_{s^\star\mid X=t})-Q(1-\alpha;F_{r_X\circ s^\star})\right|,
\]
which measures the gap between the target oracle conditional quantile at $t$ and the oracle quantile induced by density-ratio weighting. Thus, the weight $r_X(\cdot)$ restores marginal validity under covariate shift, but small test-conditional miscoverage further requires this weighted oracle quantile to align with the target conditional quantile.

\section{Weighted SymmPI Framework}\label{sec: appendix weighted symmpi}

We develop a weighted version of the SymmPI framework \citep{dobriban2025symmpi}, extending the unweighted formulation in Section~\ref{sec: extension structured data} of the main text. The extension incorporates nonnegative ratio functions into the distributional invariance and equivariance, thereby allowing the framework to cover distributional shift under group actions, including the covariate-shift setting for exchangeable vector data as a special case.


We consider a general data object $Z\in\mathcal Z$, whose distribution belongs to a class $\mathcal P$, and a score map $V:\mathcal Z\to\widetilde{\mathcal Z}$. The SymmPI construction calibrates the test score against the collection of scores obtained from a symmetry group action on the full data object. Let $\mathcal G$ be a finite group equipped with a probability measure $P_{\mathcal G}$. We write $\rho:\mathcal G\times\mathcal Z\to\mathcal Z$ for the group action on the data space, and $\widetilde\rho:\mathcal G\times\widetilde{\mathcal Z}\to\widetilde{\mathcal Z}$ for the corresponding action on the score space. Thus, $\rho$ acts on the data object, while $\widetilde\rho$ describes the induced action on $V(Z)$.

\begin{example}[Exchangeable vector data]\label{example: exchangeable}
    Let $Z=\{(X_1,Y_1),\ldots,(X_{n+1},Y_{n+1})\}$ be the vector data, where $(X_i,Y_i)\in\mathcal X\times\mathcal Y$ for $i\in[n+1]$, and suppose that the $n+1$ pairs are exchangeable. Let $V(Z)=(S_1,\ldots,S_{n+1})$ be the score map with $S_i=s(X_i,Y_i;Z),\ i\in[n+1]$ for a score function $s:\mathcal X\times\mathcal Y\times\mathcal Z\to\mathbb R$. In this case, $\mathcal G$ is the permutation group on $[n+1]$, $\rho$ acts by permuting the labeled pairs, and $\widetilde\rho$ acts by applying the same permutation to the score vector $V(Z)$.
\end{example}

In the unweighted case, the SymmPI framework assumes distributional invariance,
$Z\overset{\mathrm d}{=}\rho(g,Z)$, together with score distribution equivariance,
$V(\rho(g,Z))\overset{\mathrm d}{=}\widetilde\rho(g,V(Z))$. Under distribution shift, these may no longer hold. We therefore introduce nonnegative ratio functions
$r_g:\mathcal Z\to\mathbb R_+$ and $\widetilde r_g:\widetilde{\mathcal Z}\to\mathbb R_+$. 
We say that $Z$ is $(\mathcal G,\rho)$-distributionally invariant with ratio $r_g$ if
\begin{align*}
    Z\overset{\mathrm d}{=}r_g\circ\rho(g,Z)\,.
\end{align*}
Similarly, the score map $V$ is $(\mathcal G,\rho,\widetilde\rho)$-distributionally equivariant with ratios $(r_g,\widetilde r_g)$ if
\begin{align*}
    V\bigl(r_g\circ\rho(g,Z)\bigr)
    \overset{\mathrm d}{=}
    \widetilde r_g\circ\widetilde\rho(g,V(Z))\,.
\end{align*}
Here, $r_g\circ\rho(g,Z)$ denotes the law of $\rho(g,Z)$ reweighted by the factor $r_g(\rho(g,Z))$, and $\widetilde r_g\circ\widetilde\rho(g,V(Z))$ is interpreted analogously on the score space. More explicitly, if $X\sim P_X$ has density $f_X$ and $r:\mathcal X\to\mathbb R_+$ is integrable with positive integral, then $r\circ X$ denotes a random variable with density
\begin{align*}
    \frac{r(x)f_X(x)}{\int_{\mathcal X}r(x')f_X(x')\,dx'}\,.
\end{align*}
When all ratio functions are identically one, the definitions reduce to the usual distributional invariance and equivariance conditions. More generally, the ratio functions allow the calibration step to account for the distributional change induced by the group action, rather than requiring $\rho(g,Z)$ to have exactly the same law as $Z$.

\begin{assumption}[Weighted distributional invariance and equivariance]\label{ass: Exchangeability}
    The data $Z$ are $(\mathcal{G},\rho)$-distributionally invariant weighted by ratio $r_g$, and the score map $V$ is $(\mathcal{G},\rho,\widetilde{\rho})$-distributionally equivariant weighted by ratios $(r_g,\widetilde{r}_g)$.
\end{assumption}

\begin{example}[Covariate-shift vector data]\label{example: nonexchangeable}
    Suppose $Z$ is as in Example~\ref{example: exchangeable}. Let $(X_i,Y_i)$, $i\in[n]$, be i.i.d.~from $P_{X,1}\times P_{Y\mid X}$, and let the test pair $(X_{n+1},Y_{n+1})$ be independent with law $P_{X,2}\times P_{Y\mid X}$. Assume that the density ratio $r_X=dP_{X,2}/dP_{X,1}$ is known. Let $V(Z)=(S_1,\ldots,S_{n+1},Z)$, where $S_i=s(X_i,Y_i;Z)$. The data are not exchangeable in the usual unweighted sense. Under the permutation action in Example~\ref{example: exchangeable}, the required ratio is
    \begin{align*}
        r_g(Z)=\frac{r_X(X_{n+1})}{r_X(X_{g(n+1)})}\,,
    \end{align*}
    and the corresponding score-space ratio is $\widetilde r_g(V(Z))\equiv r_g(Z)$. This recovers the weighted conformal prediction \citep{tibshirani2019conformal} under covariate shift.

    We verify this ratio from the weighted SymmPI perspective. The calibration pairs are drawn from $P_{X,1}\times P_{Y\mid X}$, whereas the test pair is drawn from $P_{X,2}\times P_{Y\mid X}$. Denote the corresponding densities by $f_{X,1}$, $f_{X,2}$, and $f_{Y\mid X}$. For a fixed realization $z_0=((x_1,y_1),\ldots,(x_{n+1},y_{n+1}))$, the joint density of $Z$ at $z_0$ is
    \begin{align*}
        &f_{X,2}(x_{n+1})f_{Y\mid X}(y_{n+1}\mid x_{n+1})\prod_{i=1}^{n}f_{X,1}(x_i)f_{Y\mid X}(y_i\mid x_i)\\
        =& r_X(x_{n+1})\prod_{i=1}^{n+1}f_{X,1}(x_i)f_{Y\mid X}(y_i\mid x_i)\,.
    \end{align*}
    Under the permuted data object $\rho(g,Z)$, the observation in the test position is indexed by $g(n+1)$, so the corresponding factor is $r_X(x_{g(n+1)})$. Hence
    \begin{align*}
        dZ\big|_{z_0} = \frac{r_X(x_{n+1})}{r_X(x_{g(n+1)})}\cdot d\rho(g,Z)\big|_{\rho(g,Z)=z_0}\,.
    \end{align*}
    Consequently, $Z$ satisfies Assumption~\ref{ass: Exchangeability} with ratios
    $r_g(Z)=r_X(X_{n+1})/r_X(X_{g(n+1)})$ and $\widetilde r_g\equiv r_g$.
\end{example}

We now define the conformal construction. In many structured-data settings, the full data object is only partially observed. Let $\Omega:\mathcal Z\to\mathcal O$ be the observation map, so that $\Omega(Z)$ denotes the observed component, and let $\psi:\widetilde{\mathcal Z}\to\mathbb R$ extract the score corresponding to the unobserved or test component. For a candidate completion $z$ with $\Omega(z)$ equal to the observed data, plausibility is assessed through the value of $\psi(V(z))$. The conformal threshold is then computed from the collection of transformed scores $\{\psi(\widetilde\rho(g,V(z))):g\in\mathcal G\}$, with the appropriate ratio weights.

Under Assumption~\ref{ass: Exchangeability}, the weighted distributional equivariance of $V$ induces a weighted calibration distribution for $\psi(V(z))$. Define
\begin{align*}
    \mathcal{G}_e=\{g\in\mathcal{G}:\psi(\widetilde{\rho}(g,V(z)))=\psi(V(z))\text{ for all }z\in\mathcal{Z}\}\,.
\end{align*}
This set contains the group elements that leave the extracted test score unchanged. As in conformal prediction, these elements are assigned an atom at $\infty$, so that the test score itself is not used as an ordinary calibration score. Specifically, $q(V(z);\alpha)$ is defined as the $(1-\alpha)$-quantile of the distribution
\begin{gather*}
    \frac{\sum_{g\in\mathcal{G}\setminus\mathcal{G}_e}\mathrm{Pr}(G=g)\widetilde{r}_g\bigl(\widetilde{\rho}(g,V(z))\bigr)\delta_{\psi(\widetilde{\rho}(g,V(z)))}+\sum_{g\in\mathcal{G}_e}\mathrm{Pr}(G=g)\widetilde{r}_g\bigl(\widetilde{\rho}(g,V(z))\bigr)\delta_\infty}{\sum_{g\in\mathcal{G}}\mathrm{Pr}(G=g)\widetilde{r}_g\bigl(\widetilde{\rho}(g,V(z))\bigr)}
\end{gather*}
where $G\sim P_{\mathcal G}$. For an observed value $z_{\rm obs}\in\mathcal O$, the resulting prediction set is
\begin{equation}\label{eq:conformal set general 0}
    \widehat{C}(z_{\rm obs})=\left\{ y:\psi\bigl(V(z^y)\bigr)\leq q(V(z^y);\alpha),\ \Omega(z^y)=z_{\rm obs}\right\}\,,
\end{equation}
where $z^y$ is any completion satisfying $\Omega(z^y)=z_{\rm obs}$. 
In the exchangeable vector-data case, this construction reduces to the usual conformal prediction set. There, the score map is $V(Z)=(s(X_1,Y_1;Z),\ldots,s(X_{n+1},Y_{n+1};Z))$, and the permutation action only changes which entry is treated as the test score. Hence the actual test score $s(X_{n+1},Y_{n+1};Z)$ is compared with the empirical quantile $Q(1-\alpha;(n+1)^{-1}(\sum_{i=1}^n\delta_{s(X_i,Y_i;Z)}+\delta_{\infty}))$. In the covariate-shift vector-data case, the same construction recovers the weighted conformal prediction set. Thus, the weighted conformal framework in the main text can be viewed as a vector-data instance of the weighted SymmPI formulation.

\begin{remark}[Role of the weighted SymmPI formulation]
    The weighted SymmPI formulation separates three ingredients that recur across applications: the symmetry encoded by $(\mathcal G,\rho,\widetilde\rho)$, the partial observation structure encoded by $\Omega$ and $\psi$, and the distributional correction encoded by $(r_g,\widetilde r_g)$. This separation is useful for structured data, where the relevant symmetry may be induced by graph relabeling, hierarchical block exchangeability, or other problem-specific transformations. Once these components are specified, the prediction set in \eqref{eq:conformal set general 0} and the conditional-coverage decomposition below apply in a common form. Thus, exchangeable vector data, covariate-shift vector data, graph data, and the two-layer hierarchical model can be treated within the same weighted symmetry-based formulation.
\end{remark}


The next theorem establishes marginal validity for the weighted SymmPI construction.

\begin{theorem}\label{theo: marginal SymmPI weighted}
    Suppose Assumption~\ref{ass: Exchangeability} holds. Then
    \begin{align*}
        1-\alpha\leq \mathrm{Pr}(Z\in \widehat{C}(\Omega(Z)))\leq 1-\alpha+\mathbb{E}\left\{\widehat{\gamma}_{\max}(Z)\right\}\,,
    \end{align*}
    where $\widehat{\gamma}_{\max}(z)=\sup_c\frac{\sum_{g\in\mathcal{G},\,\psi(\widetilde{\rho}(g,V(z)))=c}\mathrm{Pr}(G=g)\widetilde{r}_{g}(\widetilde{\rho}(g,V(z)))}{\sum_{g\in\mathcal{G}}\mathrm{Pr}(G=g)\widetilde{r}_{g}(\widetilde{\rho}(g,V(z)))}$.
\end{theorem}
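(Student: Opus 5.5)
The argument mirrors the weighted exchangeability proof of \cite{tibshirani2019conformal} and the SymmPI proof of \cite{dobriban2025symmpi}. We introduce an auxiliary random group element $G\sim P_{\mathcal G}$, independent of $Z$, and condition on the ``orbit'' of $Z$. For each $z$ write
\[
p_g(z)=\frac{\mathrm{Pr}(G=g)\,\widetilde r_g(\widetilde\rho(g,V(z)))}{\sum_{g'\in\mathcal G}\mathrm{Pr}(G=g')\,\widetilde r_{g'}(\widetilde\rho(g',V(z)))}.
\]
Let $\widehat P_z$ denote the discrete distribution $\sum_{g\in\mathcal G}p_g(z)\,\delta_{\psi(\widetilde\rho(g,V(z)))}$. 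This is the calibration distribution in the definition of $q(V(z);\alpha)$, but with the test-position atoms restored at their true values instead of $\infty$.

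The first step is the standard reduction replacing the $\infty$ atoms. For $g\in\mathcal G_e$ we have $\psi(\widetilde\rho(g,V(z)))=\psi(V(z))$. Hence the event $\psi(V(z))\le q(V(z);\alpha)$ coincides with $\psi(V(z))\le Q(1-\alpha;\widehat P_z)$, by the usual argument: moving the mass at the test value to $+\infty$ does not change whether the test value lies at or below the $(1-\alpha)$-quantile. So it suffices to study $\mathrm{Pr}\{\psi(V(Z))\le Q(1-\alpha;\widehat P_Z)\}$.

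The second and central step is a weighted-exchangeability identity. Assumption~\ref{ass: Exchangeability} should give, for any bounded measurable $f$,
\[
\mathbb E\,f\bigl(\psi(V(Z)),\widehat P_Z\bigr)=\mathbb E\sum_{g\in\mathcal G}p_g(Z)\,f\bigl(\psi(\widetilde\rho(g,V(Z))),\widehat P_Z\bigr).
\]
In words, conditionally on the multiset of transformed score vectors, the test score equals $\psi(\widetilde\rho(g,V(Z)))$ with probability $p_g(Z)$. To derive this I would use three facts:
\begin{itemize}
\item distributional invariance with ratio $r_g$, applied to $Z$ and $\rho(g,Z)$;
\item equivariance, which transfers the invariance to $V$ and yields $\widetilde r_g$;
\item the group property, giving $\{\widetilde\rho(gh,\cdot)\}_{g}=\{\widetilde\rho(g,\cdot)\}_g$, so that $\widehat P_{\rho(h,Z)}$ is the same object as $\widehat P_Z$ up to the ratio reweighting.
\end{itemize}
Concretely, I would average the equivariance identity over $g$ with weights $\mathrm{Pr}(G=g)$ and normalize by the total ratio mass. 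This is the Bayes-rule computation in the proof of Lemma~3 of \cite{tibshirani2019conformal}, transplanted to the group setting. I expect this step to be the main obstacle. The delicate point is checking that the normalizer $\sum_{g'}\mathrm{Pr}(G=g')\widetilde r_{g'}(\cdot)$ is orbit-invariant, so that it can be pulled outside the conditional expectation. This requires the ratios to satisfy a cocycle-type compatibility, $\widetilde r_{gh}$ versus $\widetilde r_g\circ\widetilde\rho(h,\cdot)$. I would verify this from the definition of reweighted laws as Radon--Nikodym derivatives, and I would check it explicitly in Example~\ref{example: nonexchangeable} as a sanity check.

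The final step applies the identity with $f(s,P)=\mathbbm 1\{s\le Q(1-\alpha;P)\}$. This gives
\[
\mathrm{Pr}(Z\in\widehat C(\Omega(Z)))=\mathbb E\,\widehat P_Z\bigl(\{u\le Q(1-\alpha;\widehat P_Z)\}\bigr).
\]
By the definition of the quantile of a discrete distribution, this mass is at least $1-\alpha$. It is also at most $1-\alpha$ plus the largest atom of $\widehat P_Z$, which is exactly $\widehat\gamma_{\max}(Z)$. Taking expectations yields both bounds of Theorem~\ref{theo: marginal SymmPI weighted}. Setting all ratios to one and $P_{\mathcal G}$ uniform recovers Theorem~\ref{theo: marginal SymmPI}.
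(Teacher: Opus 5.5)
Your skeleton is the paper's. The paper also trades the test score for $\psi(\widetilde r_G\circ\widetilde\rho(G,V(Z)))$, conditions on $Z$, and applies the discrete-quantile sandwich, so your Steps~2--3 are its two steps. Your Step~1 is not a formality, though. The paper makes the exchange while holding $q(V(Z);\alpha)$ fixed, but that threshold is not orbit-invariant: which score gets sent to $\infty$ changes under $\widetilde\rho(h,\cdot)$. So the exchange is not justified as written. Passing to $Q(1-\alpha;\widehat P_Z)$ is exactly what legitimizes it, and it makes the upper bound come out as the largest atom $\widehat\gamma_{\max}(Z)$ of $\widehat P_Z$.

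Step~2 is where both you and the paper need more than Assumption~\ref{ass: Exchangeability} literally gives, and you should restate the obstacle.
\begin{itemize}
\item The normalizer $\sum_{g'}\mathrm{Pr}(G=g')\widetilde r_{g'}(\widetilde\rho(g',V(z)))$ is not orbit-invariant. In Example~\ref{example: nonexchangeable} with uniform $P_{\mathcal G}$ it equals $\sum_j r_X(x_j)/\{(n+1)r_X(x_{n+1})\}$. It need not be invariant: what you need is $p_g(\widetilde\rho(h,\widetilde z))=p_{gh}(\widetilde z)$ (written on the score space), i.e.\ that the normalized measure $\widehat P$ is orbit-invariant.
\item To get this, chain the two halves of the assumption into $V(Z)\overset{\rm d}{=}\widetilde r_g\circ\widetilde\rho(g,V(Z))$ and work with $\widetilde Z=V(Z)$. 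Suppose $\widetilde Z$ has a density $\widetilde f$ with respect to a $\widetilde\rho$-invariant base measure and $\widetilde r_g(u)=\widetilde f(u)/\widetilde f(\widetilde\rho(g^{-1},u))$. Then $\widetilde r_g(\widetilde\rho(g,\widetilde z))=\widetilde f(\widetilde\rho(g,\widetilde z))/\widetilde f(\widetilde z)$, and the $g$-free factor $1/\widetilde f(\widetilde z)$ cancels on normalization.
\item For uniform $P_{\mathcal G}$ this gives $p_g(\widetilde z)\propto\widetilde f(\widetilde\rho(g,\widetilde z))$, hence the compatibility. Your identity then follows from the substitution $u=\widetilde\rho(g,\widetilde z)$.
\end{itemize}

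Neither ingredient follows from the assumption as stated. First, because $r\circ X$ is a normalized law, each $\widetilde r_g$ is pinned down only up to a $g$-dependent constant $c_g>0$, and such constants alter the calibration weights. Second, for non-uniform $P_{\mathcal G}$ the theorem itself fails. Take i.i.d.\ continuous scores, $n=1$, $\mathcal G=\{e,\tau\}$, all ratios equal to one, $\alpha=0.1$ and $\mathrm{Pr}(G=e)=a\le\alpha$. The assumption holds, the threshold is $q(V(Z);\alpha)=S_1$, and the coverage is $\mathrm{Pr}(S_2\le S_1)=1/2<1-\alpha$. Make these two conditions explicit (the paper uses them tacitly); with them your argument goes through.
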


The proof of Theorem~\ref{theo: marginal SymmPI weighted} follows the same argument as its unweighted counterpart in the main text, with the ratio $\widetilde{r}_g$ playing the role of the weight in the conformal calibration quantile.

We now analyze conditional coverage. Let $V^\star:\mathcal Z\to\widetilde{\mathcal Z}$ be an oracle counterpart of $V$, analogous to the oracle score $s^\star$ in the i.i.d.~data setting in the main text. Fix a conditioning event $\phi(t)$ associated with the test component. We impose the following condition.

\begin{assumption}[Score approximation for structured data]\label{ass: independence score weighted}
    For each $t\in\mathcal{T}$, assume that: (i) $\psi(V^\star(Z))$ is independent of $q(V^\star(Z);\alpha)$ conditional on $\phi(t)$; (ii) for any $\varepsilon>0$, defining
    \begin{align*}
        D_\varepsilon=\left\{\left|\psi(V(Z))-\psi(V^\star(Z))\right|\vee\left|q(V(Z);\alpha)-q(V^\star(Z);\alpha)\right|\leq\varepsilon\right\}\,,
    \end{align*}
    there exists a non-random function $\delta(\varepsilon)\in[0,1]$ such that $\mathrm{Pr}\left( D_\varepsilon\mid\phi(t) \right)\geq 1-\delta(\varepsilon)$.
\end{assumption}

\begin{remark}\label{sec: disc of ass independence score}
    Assumption~\ref{ass: independence score weighted} requires the oracle test score $\psi(V^\star(Z))$ to be independent of the oracle calibration threshold $q(V^\star(Z);\alpha)$ conditional on $\phi(t)$. This is the structured-data analogue of the independence condition in Theorem~\ref{theo: main miscoverage} for the i.i.d.~data, where $s^\star(X_{n+1},Y_{n+1})$ is assumed to be independent of $q^\star(Z;\alpha)$ given $\phi(t)$. In both cases, the condition separates the oracle test score from the calibration threshold. This requirement is natural in the present construction because the conformal quantile replaces the test score by an atom at $\infty$.

    This requirement can be weakened by conditioning on an auxiliary oracle object. For example, in the structured-data setting it is enough to assume that $\psi(V^\star(Z))$ is independent of $q(V^\star(Z);\alpha)$ given $(\phi(t),V^\star)$. 
    Similarly, in the i.i.d.~data setting it is enough to assume that $s^\star(X_{n+1},Y_{n+1})$ is independent of $q^\star(Z;\alpha)$ given $\phi(t)$ and $s^\star$.
    The same refinement applies to the approximation event $D_{\varepsilon}$; one may replace $\mathrm{Pr}(D_\varepsilon\mid\phi(t))\geq 1-\delta(\varepsilon)$ by
    \begin{align*}
        \mathrm{Pr}(D_\varepsilon\mid\phi(t),V^\star)\geq 1-\delta(\varepsilon)\,.
    \end{align*}

    These refinements are useful when the oracle score or score map is pre-trained independently of the calibration sample. If $V^\star=V$ in the structured-data setting, the score-estimation error term $\delta(\varepsilon)+2L_t\varepsilon$ in Theorems~\ref{theo:symmpi conditional miscoverage} and~\ref{theo:symmpi conditional miscoverage weighted} vanishes. Analogously, if the score in the i.i.d.~data setting is already oracle, the score-estimation error term $\varepsilon+n\delta_n(\varepsilon)+\widetilde{\varepsilon}_n+\widetilde{\delta}_n$ in Theorem~\ref{theo: main miscoverage} is zero.
\end{remark}

For each $t\in\mathcal T$, let $F_{\psi(V^\star(\cdot))\mid\phi(t)}$ denote the conditional distribution of $\psi(V^\star(Z))$ given $\phi(t)$. Its $(1-\alpha)$-quantile is the oracle conditional threshold. To incorporate the weights induced by $r_g$, define the oracle weighted c.d.f.
\begin{align*}
    &F_{r\circ\psi(V^\star(\cdot))}(u)\\
    =&\sum_{g\in\mathcal{G}}\mathrm{Pr}(G=g)\frac{\mathbb{E}\left\{\widetilde{r}_g\bigl(\widetilde{\rho}(g,V^\star(Z))\bigr)\mathbbm{1}(\psi(\widetilde{\rho}(g,V^\star(Z)))\leq u)\right\}}{\mathbb{E}\left\{\widetilde{r}_g\bigl(\widetilde{\rho}(g,V^\star(Z))\bigr)\right\}}\,.
\end{align*}
The discrepancy between $Q(1-\alpha;F_{r\circ\psi(V^\star(\cdot))})$ and $Q(1-\alpha;F_{\psi(V^\star(\cdot))\mid\phi(t)})$ is the structure-specific conditional-mismatch error.

\begin{theorem}\label{theo:symmpi conditional miscoverage weighted}
    Suppose Assumptions~\ref{ass: Exchangeability} and~\ref{ass: independence score weighted} hold. If, for every $t\in\mathcal T$, $F_{\psi(V^\star(\cdot))\mid\phi(t)}$ is $L_t$-Lipschitz continuous with constant $L_t<\infty$, then
    \begin{align}
        &\left|\mathrm{Pr}\left( \psi(V(Z))\leq q(V(Z);\alpha)\mid\phi(t) \right)-(1-\alpha)\right|\notag\\
        \leq & \delta(\varepsilon)+2L_t\varepsilon
        +L_t\mathbb{E}\Bigl\{ \left|q(V^\star(Z);\alpha)-Q(1-\alpha;F_{r\circ\psi(V^\star(\cdot))})\right|\wedge L_t^{-1}\notag\\
        &\qquad\qquad\qquad+\left|Q(1-\alpha;F_{r\circ\psi(V^\star(\cdot))})-Q(1-\alpha;F_{\psi(V^\star(\cdot))\mid\phi(t)})\right|\mid\phi(t) \Bigr\}\,. \label{eq: general symmpi error weighted}
    \end{align}
\end{theorem}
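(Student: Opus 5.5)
The plan is a sandwich argument on the event $D_\varepsilon$, followed by conditioning on the oracle threshold and then two triangle-inequality steps. Throughout, write $S=\psi(V(Z))$, $q=q(V(Z);\alpha)$, $S^\star=\psi(V^\star(Z))$, $q^\star=q(V^\star(Z);\alpha)$, $F_t=F_{\psi(V^\star(\cdot))\mid\phi(t)}$, $Q_t=Q(1-\alpha;F_t)$ and $Q_r=Q(1-\alpha;F_{r\circ\psi(V^\star(\cdot))})$.

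\textbf{Step 1 (reduce to the oracle score map).} On $D_\varepsilon$ we have
\[
\{S^\star\le q^\star-2\varepsilon\}\subseteq\{S\le q\}\subseteq\{S^\star\le q^\star+2\varepsilon\}.
\]
Together with Assumption~\ref{ass: independence score weighted}(ii), this gives
\[
\mathrm{Pr}(S^\star\le q^\star-2\varepsilon\mid\phi(t))-\delta(\varepsilon)\le \mathrm{Pr}(S\le q\mid\phi(t))\le \mathrm{Pr}(S^\star\le q^\star+2\varepsilon\mid\phi(t))+\delta(\varepsilon).
\]

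\textbf{Step 2 (condition on the oracle threshold).} By the conditional independence in Assumption~\ref{ass: independence score weighted}(i), I condition on $q^\star$:
\[
\mathrm{Pr}(S^\star\le q^\star\pm2\varepsilon\mid\phi(t))=\mathbb{E}\{F_t(q^\star\pm2\varepsilon)\mid\phi(t)\}.
\]
The $L_t$-Lipschitz property then moves the $\pm2\varepsilon$ shift out at a cost of $2L_t\varepsilon$. It remains to bound $\mathbb{E}\{|F_t(q^\star)-(1-\alpha)|\mid\phi(t)\}$.

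\textbf{Step 3 (triangle inequality on the threshold).} Continuity of $F_t$ gives $F_t(Q_t)=1-\alpha$. Since $|F_t(a)-F_t(b)|\le \min\{L_t|a-b|,1\}$, splitting $q^\star-Q_t=(q^\star-Q_r)+(Q_r-Q_t)$ yields
\[
|F_t(q^\star)-(1-\alpha)|\le L_t\bigl(|q^\star-Q_r|\wedge L_t^{-1}\bigr)+L_t|Q_r-Q_t|.
\]
The truncation $\wedge L_t^{-1}$ is harmless here because $F_t$ is a c.d.f. Taking conditional expectations given $\phi(t)$ and collecting the terms from Steps 1 and 2 produces exactly \eqref{eq: general symmpi error weighted}.

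\textbf{Main obstacle.} The only delicate point is the sandwich in Step 1. It needs the two deviations $|S-S^\star|$ and $|q-q^\star|$ to be controlled simultaneously on the same event, which Assumption~\ref{ass: independence score weighted} provides directly through $D_\varepsilon$. The conditioning in Step 2 also requires care, since Assumption~\ref{ass: independence score weighted}(i) is stated conditional on $\phi(t)$ and $F_t$ must be the law of $S^\star$ given $\phi(t)$ alone. Beyond this, the weights $\widetilde r_g$ play no role in the argument. They enter only through the definition of $Q_r$, because the oracle threshold $q^\star$ is compared to $Q_r$ directly rather than through any exchangeability argument, and Assumption~\ref{ass: Exchangeability} is not used in the inequality chain itself.
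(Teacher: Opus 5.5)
Your proposal is correct and follows the paper's proof essentially step for step: the same sandwich on $D_\varepsilon$, conditioning on $q(V^\star(Z);\alpha)$ via the conditional independence, the Lipschitz bound with truncation at $L_t^{-1}$, and the triangle inequality through $Q(1-\alpha;F_{r\circ\psi(V^\star(\cdot))})$ using $(a+b)\wedge c\le (a\wedge c)+b$. The only cosmetic difference is that you peel off the $2L_t\varepsilon$ shift before truncating, whereas the paper truncates $|q(V^\star(Z);\alpha)-Q(1-\alpha;F_{\psi(V^\star(\cdot))\mid\phi(t)})|+2\varepsilon$ jointly and then splits; and, as you observe, the paper's argument likewise never invokes Assumption~\ref{ass: Exchangeability}.
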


Theorem~\ref{theo:symmpi conditional miscoverage weighted} extends the unweighted decomposition in Theorem~\ref{theo:symmpi conditional miscoverage} of the main text to the weighted setting. The bound separates conditional miscoverage into three terms. The first term, $\delta(\varepsilon)+2L_t\varepsilon$, is the score-estimation error. The second term is the finite-sample calibration error, measured by the gap between $q(V^\star(Z);\alpha)$ and $Q(1-\alpha;F_{r\circ\psi(V^\star(\cdot))})$. The third term is the structure-specific conditional-mismatch error, measured by the gap between $Q(1-\alpha;F_{r\circ\psi(V^\star(\cdot))})$ and the target oracle conditional quantile $Q(1-\alpha;F_{\psi(V^\star(\cdot))\mid\phi(t)})$. This last term is the analogue of the intrinsic conditional-mismatch error in Theorem~\ref{theo: main miscoverage}.
When all ratio functions are identically one, the weighted oracle quantile $Q(1-\alpha;F_{r\circ\psi(V^\star(\cdot))})$ reduces to the unweighted oracle quantile $Q(1-\alpha;F_{\psi(V^\star(\cdot))})$. In this case, Theorem~\ref{theo:symmpi conditional miscoverage weighted} reduces to Theorem~\ref{theo:symmpi conditional miscoverage} in the main text.

After this decomposition is established, the remaining analysis is structure-specific. One needs to control the finite-sample calibration term and the structure-specific conditional-mismatch term using the relevant symmetry structure. The graph application in Section~\ref{sec: community conditional} of the main text and the two-layer hierarchical model in Section~\ref{sec: two layer hierarchical} below illustrate this.

\section{Additional Applications}\label{sec:supp_add_applications}

This section presents three additional applications of the unified framework. We first introduce a generalized version of RLCP \citep{hore2025conformal} based on auxiliary weighting, including its covariate-shift extension. We then show how the same weighted calibration idea yields localized conformal $p$-values. Finally, we study a two-layer hierarchical model as another structured-data example under the SymmPI framework.

\subsection{Generalized Randomized Localized Conformal Prediction (GRLCP)}\label{sec: grlcp details}

In this section, we extend RLCP \cite{hore2025conformal} to a broader auxiliary-weighting construction that can also accommodate covariate shift. We call this approach Generalized Randomized Localized Conformal Prediction (GRLCP). We work under the covariate-shift setting in Assumption~\ref{ass: covariate shift assumption}: the calibration data $(X_1,Y_1),\ldots,(X_n,Y_n)$ are drawn i.i.d.~from $P_{X,1}\times P_{Y\mid X}$, while the test pair $(X_{n+1},Y_{n+1})$ is drawn independently from $P_{X,2}\times P_{Y\mid X}$. Let $f_{X,j}$ denote the density of $P_{X,j}$ for $j=1,2$, and assume that the density ratio $r_X(x)=f_{X,2}(x)/f_{X,1}(x)$ is known. Let $s:\mathcal X\times\mathcal Y\to\mathbb R$ be a pre-trained score function.

The construction is based on a random weight function induced by an auxiliary covariate. Let $k:\mathcal X\times\mathcal X\to\mathbb R_+$ be a fixed function, independent of the data, satisfying
\begin{gather*}
    k_{\mathcal X}(x_2)=\int_{\mathcal X} k(x_1,x_2)\,dx_1<\infty,\qquad x_2\in\mathcal X.
\end{gather*}
Given $X_{n+1}$, draw an auxiliary covariate $\widetilde X$ from the distribution with density proportional to $k(\cdot, X_{n+1})$. Specifically, the density of $\widetilde{X}$ at $\widetilde{x}$ is 
\begin{gather*}
    \dfrac{k(\widetilde{x},X_{n+1})}{k_\mathcal{X}(X_{n+1})}\,.
\end{gather*}
The joint density of $(X_{n+1},\widetilde X)$ at $(x,\widetilde x)$ is therefore
\begin{gather*}
    f_{X,2}(x)\frac{k(\widetilde x,x)}{k_{\mathcal X}(x)}.
\end{gather*}
Conditioning on $\widetilde X=\widetilde x$, the density of $X_{n+1}$ is proportional to $f_{X,2}(x)k(\widetilde x,x)/k_{\mathcal X}(x)$. Relative to the calibration covariate density $f_{X,1}$, this induces the weight
\begin{gather*}
    w(x)=\frac{f_{X,2}(x)}{f_{X,1}(x)}\frac{k(\widetilde X,x)}{k_{\mathcal X}(x)} = r_X(x)\frac{k(\widetilde X,x)}{k_{\mathcal X}(x)}.
\end{gather*}
Thus GRLCP is the weighted conformal set \eqref{eq: general Cvec} with pre-trained score $s(x,y)$ and random weight function $w(\cdot)$:
\begin{gather*}
    \widehat{C}_{\rm GRLCP}(X_{n+1})=\left\{ y:s(X_{n+1},y)\leq Q\left( 1-\alpha;\dfrac{\sum_{i=1}^{n}w(X_i)\delta_{s(X_i,Y_i)}+w(X_{n+1})\delta_{\infty}}{\sum_{i=1}^{n+1}w(X_i)} \right) \right\}.
\end{gather*}
When $r_X\equiv 1$ and $k$ is chosen as the kernel used in RLCP, this construction reduces to RLCP.

We next verify marginal coverage. For $X\sim P_{X,1}$ independent of $w$, define
\begin{gather*}
    b(\widetilde x)
    =
    \int_{\mathcal X} f_{X,2}(u)\frac{k(\widetilde x,u)}{k_{\mathcal X}(u)}\,du,
    \qquad
    \mathbb{E}\{w(X)\mid w\}=b(\widetilde X),
\end{gather*}
and let $\bar w(x)=w(x)/b(\widetilde X)$. 
Since $\bar w(x)$ is obtained from $w(x)$ by dividing by the same normalization factor $b(\widetilde X)$ for all $x$, the normalized weights used in the conformal quantile are unchanged, and hence $d_Z(w,\bar w)=0$.
Moreover, for each fixed $x\in\mathcal X$,
\begin{align*}
    \mathbb{E}\{\bar{w}(x)\}&=\mathbb{E}\left\{ \dfrac{r_X(x)k(\widetilde{X},x)}{k_\mathcal{X}(x)b(\widetilde{X})} \right\}=\dfrac{r_X(x)}{k_\mathcal{X}(x)}\mathbb{E}\left[ \mathbb{E}\left\{ \dfrac{k(\widetilde{X},x)}{b(\widetilde{X})}\mid X_{n+1} \right\} \right]\\
    &=\dfrac{r_X(x)}{k_\mathcal{X}(x)}\mathbb{E} \left\{ \int_{\widetilde{x}\in\mathcal{X}}\dfrac{k(\widetilde{x},x)k(\widetilde{x},X_{n+1})}{b(\widetilde{x})k_\mathcal{X}(X_{n+1})}d\widetilde{x} \right\}\\
    &=\dfrac{r_X(x)}{k_\mathcal{X}(x)}\int_{x_{n+1}\in\mathcal{X}}f_{X,2}(x_{n+1}) \int_{\widetilde{x}\in\mathcal{X}}\dfrac{k(\widetilde{x},x)k(\widetilde{x},x_{n+1})}{b(\widetilde{x})k_\mathcal{X}(x_{n+1})}d\widetilde{x}dx_{n+1}\\
    &=\dfrac{r_X(x)}{k_\mathcal{X}(x)}\int_{\widetilde{x}\in\mathcal{X}}\dfrac{k(\widetilde{x},x)}{b(\widetilde{x})}\int_{x_{n+1}\in\mathcal{X}}f_{X,2}(x_{n+1})\frac{k(\widetilde{x},x_{n+1})}{k_\mathcal{X}(x_{n+1})} dx_{n+1}d\widetilde{x}\\
    &=\dfrac{r_X(x)}{k_\mathcal{X}(x)}\int_{\widetilde{x}\in\mathcal{X}}k(\widetilde{x},x)d\widetilde{x}=r_X(x)\,.
\end{align*}
Hence $\bar w\in\mathcal W_X$, and Theorem~\ref{theo: marginal general cshift} yields the finite-sample marginal coverage guarantee for GRLCP under covariate shift.

We now study test-conditional coverage. Since the score is pre-trained, the oracle score is $s^\star(x,y)=s(x,y)$, so the score-estimation and stability terms in Theorem~\ref{theo: main miscoverage} vanish. The remaining terms are the finite-sample calibration error induced by the random weights and the intrinsic conditional-mismatch error. The following corollary provides a specialization when $r_X\equiv 1$, which is also the setting of RLCP.

\begin{corollary}[Test-conditional miscoverage of GRLCP]\label{corr: grlcp}
    Assume $r_X\equiv 1$. In addition, assume that: (i) there exist positive constants $M_{s^\star}$ and $\underline L_{s^\star}$ such that $|s^\star(X,Y)|\leq M_{s^\star}$ almost surely; (ii) the distribution $F_{w\circ s^\star}$ has density lower bounded by $\underline L_{s^\star}$ on $[-M_{s^\star},M_{s^\star}]$; (iii) $F_{s^\star\mid X=t}$ is $L$-Lipschitz; and (iv) $w(X)\leq M_w$ almost surely. Define
    $B_w = \mathbb{E}\{w(X_1)\mid w\}$, $\sigma_w^2 = \mathbb{E}\{w^2(X_1)\mid w\}$, where $\sigma_{w}B_{w}^{-1}n^{-1/2}\log^{1/2}(n)=O(a_n)$ and $M_{w}B_{w}^{-1}n^{-1}\log^{1/2}(n)=O(b_n)$ satisfying $a_n=o(1),b_n=o(1)$.
    Then, there exist constants $C > 0$ such that for every $t\in\mathcal{X}$,
    \begin{align}
        \notag&\left|\mathrm{Pr}\left( Y_{n+1}\in\widehat{C}_{\rm GRLCP}(X_{n+1})\mid X_{n+1}=t \right)-(1-\alpha)\right|\\
        \leq &C \left[ a_n+b_n+\mathbb{E}\left\{ B_w^{-1}w(X)\left|Q(1-\alpha;F_{s^\star\mid X=t})-Q(1-\alpha;F_{w\circ s^\star})\right|\mid X_{n+1}=t \right\} \right]\,.\notag
    \end{align}
\end{corollary}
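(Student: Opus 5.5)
The plan is to derive Corollary~\ref{corr: grlcp} directly from Theorem~\ref{theo: main miscoverage}, in the test-conditional setting $\phi(t)=\{X_{n+1}=t\}$, $T=X_{n+1}$. I would check the hypotheses of that theorem one by one and then read off each of its three error terms.

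\textbf{Step 1: verify the hypotheses of Theorem~\ref{theo: main miscoverage}.}
\begin{itemize}
\item Assumption~\ref{ass: data_distribution_and_score_invariance} holds because $r_X\equiv 1$ makes the data i.i.d., and a pre-trained score is trivially permutation invariant.
\item Assumption~\ref{ass: pointwise_convergence} holds with $s=s^\star$, so $\delta_n(\varepsilon)=0$ for every $\varepsilon>0$.
\item Assumption~\ref{ass: stable score} holds with $\widetilde\varepsilon_n=\widetilde\delta_n=0$, since the score does not depend on $Z$.
\item Assumption~\ref{ass: oracle quantile regularity} is exactly conditions (i)--(iii).
\item The bound $0\le w(X_1)\le M_w$ is condition (iv), together with nonnegativity of $k$ and $r_X$.
\end{itemize}

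\textbf{Step 2: the conditional-independence requirement.} This is the step I expect to be the main obstacle. Theorem~\ref{theo: main miscoverage} needs $s^\star(X_{n+1},Y_{n+1})$ to be independent of $q^\star(Z;\alpha)$ given $X_{n+1}=t$. Given $X_{n+1}=t$:
\begin{itemize}
\item $Y_{n+1}\sim P_{Y\mid X=t}$ is independent of the calibration pairs and of $\widetilde X$, because $\widetilde X$ is drawn from a density depending only on $X_{n+1}$ and external randomness.
\item $q^\star$ is a function of $\{(X_i,Y_i)\}_{i\in[n]}$, $\widetilde X$ and $w(X_{n+1})=w(t)$ only; the test score enters only through the atom at $\infty$.
\end{itemize}
So the required independence holds. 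If the theorem's proof needs conditioning on $w$ as well, I would invoke the refinement in Remark~\ref{sec: disc of ass independence score}, conditioning additionally on $\widetilde X$; it applies for the same reason.

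\textbf{Step 3: read off the error terms.} The score-estimation error is $\varepsilon+n\delta_n(\varepsilon)+\widetilde\varepsilon_n+\widetilde\delta_n=\varepsilon$, and it vanishes on letting $\varepsilon\downarrow0$. The calibration term is
\[
\mathbb{E}\{\Gamma_n(w)\mid X_{n+1}=t\}=\mathbb{E}\{\sigma_wB_w^{-1}n^{-1/2}\log^{1/2}(n)+M_wB_w^{-1}n^{-1}\log^{1/2}(n)\mid X_{n+1}=t\}.
\]
By the stated rate assumptions this is $O(a_n+b_n)$. If the $O(\cdot)$ statements there hold only almost surely rather than uniformly, I would read them as uniform bounds in $\widetilde X$ so the conditional expectation passes through.

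\textbf{Step 4: the intrinsic conditional-mismatch term.} Take the bound~\eqref{eq: independent pointwise bound 2}. In the test-conditional case with $r_X\equiv1$, $P_{T,w}$ has density $w(x)/B_w$ relative to $P_X$, so $r_w=B_w^{-1}w$. The quantity $Q(1-\alpha;F_{s^\star\mid \phi(T_0)})$ with $T_0\sim P_X$ is the conditional quantile at a fresh covariate $X$. Hence~\eqref{eq: independent pointwise bound 2} becomes
\[
C'\,\mathbb{E}\{B_w^{-1}w(X)\,|Q(1-\alpha;F_{s^\star\mid X=t})-Q(1-\alpha;F_{s^\star\mid X})|\mid X_{n+1}=t\}.
\]
The displayed form in the corollary instead has $Q(1-\alpha;F_{w\circ s^\star})$. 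To obtain it, I would use the first form~\eqref{eq: independent pointwise bound 1} of the mismatch term, which is
\[
\mathbb{E}\{|Q(1-\alpha;F_{s^\star\mid X=t})-Q(1-\alpha;F_{w\circ s^\star})|\mid X_{n+1}=t\}.
\]
Then I would write $1=\mathbb{E}\{B_w^{-1}w(X)\mid w\}$ for an independent $X\sim P_X$, move this factor inside, and use the tower property over $w$. This gives exactly the stated expectation.

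\textbf{Step 5: combine.} Adding the three pieces and absorbing constants into $C$ yields the claim.
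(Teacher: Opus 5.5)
Your proposal is correct and follows the same route as the paper. You apply Theorem~\ref{theo: main miscoverage} with the pre-trained score $s=s^\star$ so that the score-estimation and stability terms vanish, bound $\mathbb{E}\{\Gamma_n(w)\mid X_{n+1}=t\}$ by $O(a_n+b_n)$, and keep the intrinsic conditional-mismatch term. Beyond the paper's three-line proof, you also verify that $s^\star(t,Y_{n+1})$ is conditionally independent of $q^\star(Z;\alpha)$, and you use the identity $\mathbb{E}\{B_w^{-1}w(X)\mid w\}=1$ to turn the mismatch term in \eqref{eq: independent pointwise bound 1} into the weighted expression in the corollary; the paper leaves both steps implicit.
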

\begin{proof}
    Since GRLCP uses a pre-trained score, $s(x,y)=s^\star(x,y)$. Hence the score-estimation and stability terms in Theorem~\ref{theo: main miscoverage} are zero. The assumed bounds on $\sigma_wB_w^{-1}$ and $M_wB_w^{-1}$ imply that the finite-sample calibration term satisfies $\Gamma_n(w)=O(a_n+b_n)$. 
    The remaining term in Theorem~\ref{theo: main miscoverage} is the intrinsic conditional-mismatch contribution
    \begin{gather*}
        \mathbb{E}\left\{ B_w^{-1}w(X)\left|Q(1-\alpha;F_{s^\star\mid X=t})-Q(1-\alpha;F_{w\circ s^\star})\right|\mid X_{n+1}=t \right\}.
    \end{gather*}
    Substituting these components into Theorem~\ref{theo: main miscoverage} gives the result.
\end{proof}

Corollary~\ref{corr: grlcp} shows that the test-conditional miscoverage error of GRLCP has two remaining components, because the pre-trained score makes the score-estimation term vanish. The finite-sample calibration error is controlled by $a_n+b_n$, which reflects the dispersion of the random weights. If, for a given $\widetilde X$, the function $k(\widetilde X,\cdot)$ concentrates too much mass on a small region, then $\sigma_w$ or $M_w$ may be large relative to $B_w$, preventing this term from vanishing. The intrinsic conditional-mismatch term is
\begin{gather*}
    \mathbb{E}\left\{B_w^{-1}w(X)\left|Q(1-\alpha;F_{s^\star\mid X=t})-Q(1-\alpha;F_{w\circ s^\star})\right| \mid X_{n+1}=t \right\}.
\end{gather*}
This term is small when the auxiliary weighting scheme places more mass on calibration covariates whose oracle conditional quantiles are close to the target oracle conditional quantile at $t$.

\subsection{Extension of Localized Conformal $p$-values}\label{sec: app localized p}

This section extends the weighted conformal construction from prediction sets to conformal $p$-values. Conformal $p$-values, introduced by \cite{vovk2005algorithmic} and discussed in \cite{shafer2008tutorial}, are based on the relative rank of conformity scores. 
Here we view them as another consequence of the same weighted conformal framework used for prediction sets.
This perspective is useful in many localized settings such as outlier detection \cite{bates2023testing}, two-sample testing \cite{hu2024two}, and conditional testing \cite{wu2025conditional}.

In conditional inference, the key question is whether the $p$-value adapts to the local test distribution, rather than merely satisfying marginal validity. The localized conformal $p$-values of \cite{wu2025conditional} are one representative example. Our goal is to show that the weighted calibration used for prediction sets gives an analogous construction for $p$-values. To cover both vector data and structured data, we formulate the $p$-value directly under the weighted SymmPI framework.

Following the notation in Section~\ref{sec: appendix weighted symmpi}, for an observed sample $z_{\rm obs}$ and any completion $z$ satisfying $\Omega(z)=z_{\rm obs}$, define the conformal $p$-value as
\begin{gather}
    \widehat{p}(z)=\mathrm{Pr}\left( \psi(V(z))<\psi(\widetilde{r}_{G}\circ\widetilde{\rho}(G,V(z))) \right)\,,\label{eq: general p value}
\end{gather}
where the probability is taken over $G\sim P_{\mathcal G}$. The validity target is therefore the tail inequality $\mathrm{Pr}(\widehat p(Z)\leq\alpha)\leq\alpha$, rather than a coverage statement for a prediction set.

The next theorem establishes marginal validity of \eqref{eq: general p value}. It also shows that, in the i.i.d.~data setting, the conditional deviation of the $p$-value from level $\alpha$ is exactly the same as the conditional miscoverage of the corresponding weighted conformal prediction set.

\begin{theorem}\label{theo: valid p value}
    Suppose Assumption~\ref{ass: Exchangeability} holds. Then $\widehat p$ is valid in the sense that
    \[
        \mathrm{Pr}(\widehat p(Z)\leq\alpha)\leq\alpha.
    \]
    In addition, consider the i.i.d.~data setting of Section~\ref{sec: general framework}, and let $\widehat C(X_{n+1})$ be the conformal set in \eqref{eq: general Cvec} constructed from the same score $s(x,y;Z)$ and weight function $w(\cdot)$. Then, for any conditioning event $\phi(t)$,
    \begin{gather*}
        \left|\mathrm{Pr}(\widehat p(Z)\leq\alpha\mid \phi(t))-\alpha\right| = \left|\mathrm{Pr}\left(
        Y_{n+1}\in\widehat C(X_{n+1})\mid \phi(t)\right)-(1-\alpha)\right|.
    \end{gather*}
\end{theorem}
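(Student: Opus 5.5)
The plan is to reduce the $p$-value statement to the prediction-set machinery already developed, through a pathwise equivalence between the events $\{\widehat p(Z)\le\alpha\}$ and $\{\psi(V(Z))>q(V(Z);\alpha)\}$. I read \eqref{eq: general p value} as the weighted tail mass
\[
\widehat p(z)=\frac{\sum_{g}\mathrm{Pr}(G=g)\widetilde r_g(\widetilde\rho(g,V(z)))\mathbbm{1}\{\psi(\widetilde\rho(g,V(z)))\ge\psi(V(z))\}}{\sum_{g}\mathrm{Pr}(G=g)\widetilde r_g(\widetilde\rho(g,V(z)))}.
\]
This is the same reweighted law of $\psi(\widetilde\rho(G,V(z)))$ that defines $q(V(z);\alpha)$. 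The inequality in \eqref{eq: general p value} is strict, but I include ties here, including the elements of $\mathcal G_e$, because they contribute mass at the test score itself. Matching this convention to the $\delta_\infty$ atoms in $q$ is the point to handle with care.

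\textbf{Step 1: deterministic equivalence.} For every $z$, I will show that $\widehat p(z)\le\alpha$ holds if and only if $\psi(V(z))>q(V(z);\alpha)$. Write $\widehat F$ for the weighted c.d.f. used in $q$. The elements of $\mathcal G_e$ place their mass at $\infty$ in $q$, while in $\widehat p$ they place it at the test score itself. Since both placements lie at or above the test score, they contribute identically to the tail event $\{\cdot\ge\psi(V(z))\}$. Hence $\widehat p(z)=1-\widehat F(\psi(V(z))^-)$, and by the definition of $Q(1-\alpha;\cdot)$ we get $\psi(V(z))\le q$ if and only if $\widehat F(\psi(V(z))^-)<1-\alpha$, that is, if and only if $\widehat p(z)>\alpha$.

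I expect this step to be the main obstacle, because it depends on the left-limit versus right-continuity convention of the quantile and on the treatment of ties. I would first try to fix the convention so that the identity holds exactly. If it cannot be made exact, I would absorb the discrepancy using the tie mass $\widehat\gamma_{\max}$, which only affects the upper bound.

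\textbf{Step 2: validity.} Step 1 gives $\mathrm{Pr}(\widehat p(Z)\le\alpha)=1-\mathrm{Pr}(Z\in\widehat C(\Omega(Z)))$. Theorem~\ref{theo: marginal SymmPI weighted} gives $\mathrm{Pr}(Z\in\widehat C(\Omega(Z)))\ge1-\alpha$, so $\mathrm{Pr}(\widehat p(Z)\le\alpha)\le\alpha$.

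\textbf{Step 3: conditional identity.} In the i.i.d.~setting I take $\mathcal G$ to be the permutations with uniform $P_{\mathcal G}$, $\widetilde r_g$ induced by $w$, and $V(Z)=(s(X_i,Y_i;Z))_i$. Then $\widetilde r_g(\widetilde\rho(g,V(Z)))\propto w(X_{g(n+1)})$, so grouping $g$ by $g(n+1)=i$ turns the weighted law in Step 1 into $\{\sum_i w(X_i)\}^{-1}\{\sum_{i\le n}w(X_i)\delta_{s(X_i,Y_i;Z)}+w(X_{n+1})\delta_\infty\}$. Its quantile is exactly $q(Z;\alpha)$ from \eqref{eq: general Cvec}.

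Applying Step 1 pathwise then gives $\{\widehat p(Z)\le\alpha\}=\{Y_{n+1}\notin\widehat C(X_{n+1})\}$ as events. Taking conditional probabilities given $\phi(t)$ yields $\mathrm{Pr}(\widehat p(Z)\le\alpha\mid\phi(t))-\alpha=-\{\mathrm{Pr}(Y_{n+1}\in\widehat C(X_{n+1})\mid\phi(t))-(1-\alpha)\}$, and taking absolute values gives the claim.
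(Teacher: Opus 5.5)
Your proposal is correct and follows essentially the paper's route: a pathwise identification of $\{\widehat p(Z)\le\alpha\}$ with $\{\psi(V(Z))>q(V(Z);\alpha)\}$, validity from the quantile argument underlying Theorem~\ref{theo: marginal SymmPI weighted}, and, in the i.i.d.\ case, reduction of the reweighted group law to the weighted law defining $q(Z;\alpha)$, followed by taking complements. Your switch to the non-strict inequality in \eqref{eq: general p value}, counting ties and $\mathcal G_e$ at the test score, is actually necessary: with the literal strict version and a.s.\ distinct unweighted scores one gets $\mathrm{Pr}(\widehat p(Z)\le\alpha)=1/(n+1)>\alpha$ for $\alpha<1/(n+1)$ (the paper's proof glosses over this by silently replacing $<$ with $\le$ in its first display, and it writes $\{\widehat p(Z)\le\alpha\}=\{Y_{n+1}\in\widehat C(X_{n+1})\}$ where the complement is meant), and under your convention Step~1 is exact because $\widehat F$ is a finite step function, so the $\widehat\gamma_{\max}$ fallback is never needed and could not have given the exact conditional identity anyway.
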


Therefore, any pointwise or averaged conditional-miscoverage bound for $\widehat C(X_{n+1})$ immediately yields the corresponding conditional bound for $\widehat p(Z)$.

\begin{example}[Vector data]
    Consider the vector-data construction under covariate shift in Section~\ref{sec:supp_covariate_shift}. If the weight function is $w(x)=r_X(x)$, then any score used by CQR, LCP, GLCP, or CC in Section~\ref{sec: example details} gives the weighted conformal $p$-value
    \begin{gather*}
        \widehat p(Z) = \frac{\sum_{i=1}^{n+1} w(X_i)\mathbbm{1}\{s(X_{n+1},Y_{n+1};Z)<s(X_i,Y_i;Z)\}
        }{\sum_{i=1}^{n+1}w(X_i)}.
    \end{gather*}
    Next consider the localized random-weighting construction in Section~\ref{sec: Extensions to covariate shift}. For a pre-trained score $s(x,y)$ and auxiliary covariate $\widetilde X$, take $w(x)=r_X(x)K(x,\widetilde X;h)$. When $r_X\equiv 1$, this reduces to the RLCP weighting scheme. The corresponding $p$-value is
    \begin{gather*}
        \widehat p(Z)
        =
        \frac{
        \sum_{i=1}^{n+1}
        r_X(X_i)K(X_i,\widetilde X;h)
        \mathbbm{1}\{s(X_{n+1},Y_{n+1};Z)<s(X_i,Y_i;Z)\}
        }{
        \sum_{i=1}^{n+1}r_X(X_i)K(X_i,\widetilde X;h)
        }.
    \end{gather*}
    This agrees with the weighted localized conformal $p$-value construction in \cite{wu2025conditional}, up to the additional de-randomization step for marginal validity.
\end{example}

\subsection{Two-layer Hierarchical Model Case}\label{sec: two layer hierarchical}

This section considers a two-layer hierarchical model, which provides another structured-data example under the SymmPI framework. This setting, also considered in \cite{dobriban2025symmpi}, differs from the i.i.d.~case because observations are grouped into branches generated from latent branch-level distributions. The conditioning event is indexed by the branch-level distribution together with the test covariate. Theorem~\ref{theo:symmpi conditional miscoverage} then yields the same three-term decomposition as before: score-estimation error, finite-sample calibration error, and a structure-specific conditional-mismatch error term.

Specifically, for the first layer, draw $P_1,\ldots,P_K$ independently from a distribution $\Pi$ on a distribution space $\mathcal P$. These distributions specify the branch-level subpopulations. Conditional on $P_k$, the second-layer observations in branch $k$ are i.i.d.~random variables
$Z_i^{(k)}=(X_i^{(k)},Y_i^{(k)}),\ i\in[N]$, with common distribution $P_k$. Let
$Z^{(k)}=(Z_1^{(k)},\ldots,Z_N^{(k)})$
denote the $k$-th block, and let $Z=(Z^{(1)},\ldots,Z^{(K)})$ denote the full data object, with total size $n=KN$. For a single index $a\in[n]$, write $a=(k-1)N+i$ for the unique $k\in[K]$ and $i\in[N]$; then the $a$-th observation is $Z_i^{(k)}$. Consider the $K$-th branch as the test branch, with $X^{(K)}_N$ being the test covariate and $Y^{(K)}_N$ unobserved.

Let $\mathcal G_n$ be the group of permutations on $[n]$, and let $\mathcal G=\Lambda_{K,N}\subset\mathcal G_n$ be the subgroup of permutations that map each block $Z^{(k)}$ bijectively onto another block $Z^{(k')}$. Since this example is unweighted, set $r_g\equiv \widetilde r_g\equiv 1$. Let $\mathcal Z$ denote the space of all possible data objects, and define the action $\rho:\mathcal G\times\mathcal Z\to\mathcal Z$ by
\begin{gather*}
    \rho(g,Z)=(Z_{g^{-1}(1)},\ldots,Z_{g^{-1}(n)})\,,
\end{gather*}
for $g\in\mathcal G$ and $Z\in\mathcal Z$, where $Z_a$ represents $Z_i^{(k)}$ for $a=(k-1)N+i$. As shown in \cite{dobriban2025symmpi}, the data object $Z$ is distributionally invariant under this group action $\rho$ on group $\mathcal{G}$.

Define the data space for each block $Z^{(k)}$ by $\mathcal{B}$, and consider a score function $s:\mathcal X\times\mathcal Y\times\mathcal B\times\mathcal Z\to\mathbb R$.
Assume that $s(x,y;z^{(k)},Z)$ is invariant under permutations within the block argument $z^{(k)}$ and under the corresponding permutations of the full data object $Z$. For $a=(k-1)N+i$, with $k\in[K]$ and $i\in[N]$, define
\begin{gather*}
    S_a=s(X_i^{(k)},Y_i^{(k)};Z^{(k)},Z),
\end{gather*}
and set $V(Z)=(S_1,\ldots,S_n)$ as the score map. Define the group action $\widetilde\rho:\mathcal G\times\mathbb R^n\to\mathbb R^n$ by
\begin{gather*}
    \widetilde\rho(g,(S_1,\ldots,S_n))=(S_{g^{-1}(1)},\ldots,S_{g^{-1}(n)})\,,
\end{gather*}
for $g\in\mathcal{G}$ and $\widetilde{Z}=(S_1,\ldots,S_n)\in\mathbb{R}^n$.
By the definition of $\mathcal G$ and the invariance of the score function, if $g^{-1}(a)=(k'-1)N+i'$ for some $k'\in[K]$ and $i'\in[N]$, then
\begin{gather*}
    s(X_i^{(k)},Y_i^{(k)};Z^{(k)},Z)
    =
    s(X_{i'}^{(k')},Y_{i'}^{(k')};Z^{(k')},\rho(g,Z))\,.
\end{gather*}
Thus the score map $V$ is distributionally equivariant under $\widetilde\rho$, with $\widetilde r_g\equiv 1$.

We take the test score to be the last coordinate of the score vector, $\psi(\widetilde Z)=S_n$ for $\widetilde Z=(S_1,\ldots,S_n)$. The observation map is
\begin{gather*}
    \Omega(Z)=\bigl(Z^{(1)},\ldots,Z^{(K-1)},(Z_1^{(K)},\ldots,Z_{N-1}^{(K)},X_N^{(K)})\bigr)\,.
\end{gather*}
For an observed value $z_{\rm obs}$, the conformal prediction set is
\begin{gather*}
    \widehat C(z_{\rm obs}) = \left\{y: \psi(V(Z^y))\leq q(V(Z^y);\alpha),\ \Omega(Z^y)=z_{\rm obs}\right\}\,,
\end{gather*}
where $q(\widetilde{Z};\alpha)=Q(1-\alpha; n^{-1}(\sum_{a=1}^{n-1}\delta_{S_a}+\delta_\infty))$ for $\widetilde{Z}=(S_1,\ldots,S_n)$ and $Z^y$ is obtained by replacing $Y_N^{(K)}$ with trial data $y$. 
For simplicity, we write $\widehat C(\Omega(Z))$ as $\widehat C_{\rm HI}(X_N^{(K)})$ in the following analysis. 
In this setting, conditioning only on $X_N^{(K)}=x$ gives the coverage probability
\begin{gather*}
    \mathrm{Pr}\left(Y_N^{(K)}\in\widehat{C}_{\rm HI}(X_N^{(K)})\mid X_N^{(K)}=x\right)\,.
\end{gather*}
This probability still averages the probability over the first-layer draw $P_K\sim\Pi$. Thus a better target in the hierarchical model conditions also on the branch-level distribution $P_K$ of the test block. We therefore take the test space to be $\mathcal T=\mathcal P\times\mathcal X$ and define the conditioning event
\begin{align*}
    \phi(p_K,x)=\{P_K=p_K,X_N^{(K)}=x\}\,.
\end{align*}
The corresponding hierarchical conditional coverage probability is
\begin{gather*}
    \mathrm{Pr}\left(Y_N^{(K)}\in\widehat{C}_{\rm HI}(X_N^{(K)})
    \mid X_N^{(K)}=x,\ P_K=p_K \right)\,.
\end{gather*}

To apply Theorem~\ref{theo:symmpi conditional miscoverage}, we specify the oracle score map. Conditional on the first-layer draws $P_1=p_1,\ldots,P_K=p_K$, the oracle score map $V^\star(Z)$ is block-specific: observations in block $k$ use the oracle score $s_k^\star$ associated with $P_k$. Thus,
\begin{align*}
    V^\star(Z)=( s^\star _1(X_1^{(1)},Y_1^{(1)}),\ldots, s^\star _1(X_N^{(1)},Y_N^{(1)}),\ldots, s^\star _K(X_1^{(K)},Y_1^{(K)}),\ldots, s^\star _K(X_N^{(K)},Y_N^{(K)}))\,.
\end{align*}
With this choice, the score-estimation error compares $s(x,y;Z^{(k)},Z)$ with $s_k^\star(x,y)$ within each block. The finite-sample calibration error compares the empirical calibration quantile with its population counterpart, and the structure-specific conditional-mismatch error compares that population quantile with the oracle hierarchical conditional quantile given $(P_K,X_N^{(K)})$.

We next state the stability condition used for the hierarchical score. It is the analogue of the uniform stability condition in the i.i.d.~setting, adapted to perturbations of both the full data object and the block.

\begin{assumption}[probabilistic uniform stability]\label{ass: stable score hierarchical}
    Let $Z_{1},Z_{2}\in\mathcal Z$ be two data objects obtained by replacing the $(k-1)N+i$-th element of $Z$ with $Z_{i,1}^{(k)}$ and $Z_{i,2}^{(k)}$, respectively. Let $Z^{(j,1)},Z^{(j,2)}\in\mathcal B$ be $j$-th blocks of $Z_{1}$ and $Z_{2}$, respectively. There exists $\widetilde{\delta}_n$ such that, conditional on any event $D_0=\{X_N^{(K)}=x,P_1=p_1,\ldots,P_K=p_K\}$,
    \begin{align*}
        &\mathrm{Pr}\left( \sup_{Z_{i,l}^{(k)}:k\in[K],i\in[N],l\in[2]}\sup_{x,y}|s(x,y;Z^{(k,1)},Z_{1})-s(x,y;Z^{(k,2)},Z_{2})|\leq \widetilde{\varepsilon}_n\mid D_0 \right)\\
        \geq &1-\widetilde{\delta}_n\,.
    \end{align*}
\end{assumption}
Define $\phi_k(p_k,x)=\{P_k=p_k,X_N^{(k)}=x\}$ as the counterpart of $\phi$. 
Let $F_{s_k^\star\mid\phi_k(p,x)}$ denote the conditional c.d.f.~of $s_k^\star(X_N^{(k)},Y_N^{(k)})$ for $(X_N^{(k)},Y_N^{(k)})\sim P_k=p$ given $X_N^{(k)}=x$, and let $F_{s_k^\star}$ denote its marginal c.d.f. 
Note that $F_{s_k^\star}$ is itself a random distribution function, since it depends on the random branch distribution $P_k\sim\Pi$. Averaging over $P_k\sim\Pi$, define
 $F_{s^\star}(u)=\mathbb{E}\{F_{s_k^\star}(u)\}$. Because the branch distributions $P_k$ are identically distributed, the averaged c.d.f. $F_{s^\star}$ does not depend on the particular index $k$. 
Let 
\begin{gather*}
    q_{s^\star}=Q(1-\alpha;F_{s^\star}),\qquad q_{s_k^\star}(p,x)=Q(1-\alpha;F_{s_k^\star\mid\phi_k(p,x)})\,.
\end{gather*}
Define the variance of $F_{s_k^\star}(q_{s^\star})$ by $\sigma_{s^\star}^2$, where the randomness is induced by the first-layer draw $P_k\sim\Pi$. 
\begin{remark}\label{remark: hier score quantile}
    If $Q(1-\alpha;F_{s_k^\star})$ is constant across $P_k$, then $F_{s_k^\star}$ and $F_{s^\star}$ share the same $(1-\alpha)$-quantile. Under the usual continuity condition at this quantile, this implies $F_{s_k^\star}(q_{s^\star})=1-\alpha$ for all $P_k$, and hence $\sigma_{s^\star}=0$. 
\end{remark}

Once the hierarchical construction is expressed in the SymmPI notation, Theorem~\ref{theo:symmpi conditional miscoverage} can be applied. Under stability and convergence conditions adapted to the hierarchical setting, we obtain the following specialization.
    
\begin{theorem}\label{theo: pointwise hierarchical}
    Suppose Assumption~\ref{ass: stable score hierarchical} holds. Assume there exists a function $\delta_n(\varepsilon)$ such that,
    \begin{gather*}
        \mathrm{Pr}\left( |s(x,y;Z^{(k)},Z)- s^\star _k(x,y)|>\varepsilon\mid P_1,\ldots,P_K \right)\leq \delta_n(\varepsilon)\qquad\text{for all }k\in[K]\,.
    \end{gather*}
    Assume further that $|s_k^\star(X_1^{(k)},Y_1^{(k)})|\leq M^*$ almost surely, $F_{s_k^\star\mid\phi_k(p,x)}$ is $L$-Lipschitz uniformly in $k,p,x$, and $F_{s_k^\star}$ has density lower bounded by $\underline L$ on its support. Then there exists a constant $C>0$ such that
    \begin{align*}
        &\left|\mathrm{Pr}\left( Y_N^{(K)}\in\widehat{C}_{\rm HI}(X_N^{(K)})\mid X_N^{(K)}=x, P_K=p_K \right)-(1-\alpha)\right|\\
        \leq&C\left\{ KN\delta_n(\varepsilon)+\widetilde{\delta}_n+\varepsilon+\widetilde{\varepsilon}_n+\dfrac{N}{\sqrt{K}}\wedge\sqrt{\dfrac{\log(K)}{N}}+\left|q_{s^\star}-q_{s_K^\star}(p_K,x)\right|+\dfrac{\sigma_{s^\star}}{\sqrt{K}} \right\}\,.
    \end{align*}
\end{theorem}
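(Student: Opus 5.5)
We apply Theorem~\ref{theo:symmpi conditional miscoverage} with $\mathcal T=\mathcal P\times\mathcal X$, $\phi(p_K,x)=\{P_K=p_K,X_N^{(K)}=x\}$, and the block-specific oracle map $V^\star$. The right-hand side of \eqref{eq: general symmpi error} splits into three parts, which we bound separately: the score-estimation part $\delta(\varepsilon)+2L\varepsilon$, the calibration part $L\,\mathbb E\{|q(V^\star(Z);\alpha)-q_{s^\star}|\mid\phi\}$, and the mismatch part $L|q_{s^\star}-q_{s_K^\star}(p_K,x)|$. The last is already in the required form, since $Q(1-\alpha;F_{\psi(V^\star(\cdot))})=q_{s^\star}$ once the branch distributions are averaged over $\Pi$. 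The main preliminary step is verifying Assumption~\ref{ass: independence score}(i). Conditional on $P_1,\ldots,P_K$ and $X_N^{(K)}=x$, the oracle test score $s_K^\star(x,Y_N^{(K)})$ depends only on $Y_N^{(K)}$, while $q(V^\star(Z);\alpha)$ depends only on the other $n-1$ oracle scores. These are conditionally independent because the observations within a block are i.i.d.\ given $P_K$. Following Remark~\ref{sec: disc of ass independence score}, we therefore work conditionally on $(P_1,\ldots,P_K)$ throughout and integrate over $P_1,\ldots,P_{K-1}$ only at the end.

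\textbf{Score-estimation part.} We reproduce the argument behind Theorem~\ref{theo: main miscoverage}. A union bound over the $n=KN$ coordinates of the pointwise approximation condition gives $\max_a|S_a-S_a^\star|\le\varepsilon$ with probability at least $1-KN\delta_n(\varepsilon)$. On this event the empirical quantile moves by at most $\varepsilon$, so $\delta(\varepsilon)\lesssim KN\delta_n(\varepsilon)$. To handle the dependence of $s$ on the test point itself, we replace $Z^{(K)}$ and $Z$ by copies whose $n$-th element is resampled. Assumption~\ref{ass: stable score hierarchical} makes this replacement cost at most $\widetilde\varepsilon_n+\widetilde\delta_n$, both in the score and in the threshold.

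\textbf{Calibration part (main obstacle).} The difficulty is that the $n-1$ calibration scores are not i.i.d.: they are i.i.d.\ only within blocks, and the blocks are i.i.d.\ given $\Pi$. Let $\widehat F$ be the empirical c.d.f.\ of the calibration oracle scores. We decompose
\[
\widehat F(u)-F_{s^\star}(u)=\Bigl\{\widehat F(u)-K^{-1}\textstyle\sum_k F_{s_k^\star}(u)\Bigr\}+\Bigl\{K^{-1}\textstyle\sum_k F_{s_k^\star}(u)-F_{s^\star}(u)\Bigr\},
\]
treating the test block (size $N-1$) by absorbing it into an $O(1/K)$ error. The first bracket is a within-block fluctuation. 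By DKW applied blockwise together with a union bound over the $K$ blocks, it is $O(\sqrt{\log K/N})$ uniformly in $u$. Alternatively, treating each block as a single unit bounded by $1$ and using Hoeffding across blocks yields a bound of the form $N/\sqrt K$. Taking the better of the two gives the $\frac{N}{\sqrt K}\wedge\sqrt{\log(K)/N}$ term; this bookkeeping is where I expect most of the care to be needed. The second bracket is an average of $K$ i.i.d.\ terms, and evaluated at $u=q_{s^\star}$ it has standard deviation $\sigma_{s^\star}/\sqrt K$. We then convert c.d.f.\ deviations into quantile deviations using the density lower bound $\underline L$ and the bound $M^*$, truncating with $\wedge L^{-1}$ as in Theorem~\ref{theo:symmpi conditional miscoverage weighted}. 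This gives $\mathbb E|q(V^\star;\alpha)-q_{s^\star}|\lesssim \frac{N}{\sqrt K}\wedge\sqrt{\log K/N}+\sigma_{s^\star}/\sqrt K$ plus lower-order terms. Summing the three parts completes the bound.
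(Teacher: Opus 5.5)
Your reduction to Theorem~\ref{theo:symmpi conditional miscoverage}, the conditional-independence check, the score-estimation step, and the $\sqrt{\log(K)/N}$ branch coincide with the paper's proof, including its Case~1. That branch consists of blockwise DKW with a union bound over the $K$ blocks, the between-block term $\overline\beta=K^{-1}\sum_k F_{s_k^\star}(q_{s^\star})$ controlled by $\sigma_{s^\star}/\sqrt K$, and the conversion to quantiles through $\underline L$.

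The genuinely different piece is the $N/\sqrt K$ branch. The paper does not split into within- and between-block terms there. It fixes a position $i\in[N]$ and notes that $\{S^\star_{k,i}\}_{k\in[K]}$ are i.i.d.\ from the mixture $F_{s^\star}$. It then applies DKW to each transposed empirical c.d.f.\ $\widehat H_i(u)=K^{-1}\sum_k\mathbbm{1}(S^\star_{k,i}\le u)$ and union-bounds over the $N$ positions. Integrating $2N\exp(-2K\epsilon^2)$ is exactly where the factor $N$ comes from.

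You instead take the block-level empirical c.d.f.\ $\widehat F_{s_k^\star}(u)\in[0,1]$ as the independent unit, conditionally on $P_1,\ldots,P_K$. Since each unit is bounded by $1$, Hoeffding gives deviations of order $K^{-1/2}$, not $NK^{-1/2}$. So you undersell your own argument: it is sharper than the paper's, which pays the factor $N$ for a union bound over the dependent $\widehat H_i$'s.

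What the paper's route buys is uniform-in-$u$ control straight from DKW. Hoeffding is pointwise, and making it uniform with a grid would cost an extra $\sqrt{\log K}$, which exceeds $N/\sqrt K$ when $N<\sqrt{\log K}$. To avoid this, apply Hoeffding (conditionally on the $P_k$'s) only at $u=q_{s^\star}\pm\gamma$ inside $\int_0^{2M^*}\mathrm{Pr}(|q(V^\star(Z);\alpha)-q_{s^\star}|>\gamma)\,d\gamma$. That is all the quantile conversion needs.

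There are two smaller points. First, the pointwise approximation condition holds for fixed $(x,y)$. Before the union bound over all $KN$ scores, you must therefore resample each score's own slot in the data argument, as the paper's $Z_{k,i,1}$ does. It replaces both the $((k-1)N+i)$-th and the $NK$-th elements, at cost $2\widetilde\varepsilon_n$ and $2\widetilde\delta_n$. Resampling only the $n$-th element, as you describe, does not license the union bound.

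Second, consider the $O(1/K)$ you attribute to the test block: with $P_K=p_K$ fixed, $F_{s_K^\star}(q_{s^\star})$ need not equal $1-\alpha$. This term is not dominated by the stated right-hand side when $N\gg K^2\log K$ and $\sigma_{s^\star}\ll K^{-1/2}$, so calling it lower order is not automatic. The paper's proof silently drops the same term by treating block $K$ as a generic random block in both cases. This is a looseness of the statement shared by both arguments, not a defect specific to yours.
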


Theorem~\ref{theo: pointwise hierarchical} identifies the terms that a suitable hierarchical score should control. In particular, the structure-specific conditional-mismatch term contains
$\left|q_{s^\star}-q_{s_K^\star}(p_K,x)\right|+\sigma_{s^\star}/\sqrt{K}$.
Thus, Remark~\ref{remark: hier score quantile} indicates that it is desirable to construct the score so that the quantile $q_{s_k^\star}(p,x)$ is constant across branch-level distributions, and in this case the structure-specific conditional-mismatch term reduces to zero. We give two examples of such constructions. Let $\widehat Q_a^{(k)}(x)$ be an estimator of the conditional $(1-a)$-quantile of $Y$ given $X=x$ under branch $k$, and let $\widehat F_{Y\mid X}^{(k)}(y\mid x)$ be an estimator of the conditional c.d.f.~of $Y$ given $X=x$ under branch $k$. The score may be chosen as
\begin{align*}
    s(x,y;Z^{(k)},Z)&=\left|\widehat{F}_{Y\mid X}^{(k)}(y\mid x)-1/2\right| \quad \text{~or}\\
    s(x,y;Z^{(k)},Z)&=\left\{ y-\widehat{Q}_{1-\alpha/2}^{(k)}(x) \right\}\vee\left\{ \widehat{Q}_{\alpha/2}^{(k)}(x)-y \right\}\,.
\end{align*}
For the corresponding oracle scores, both $\left|q_{s^\star}-q_{s_K^\star}(p_K,x)\right|$ and $\sigma_{s^\star}$ are equal to zero.
Since the distribution of $KN$ data points is invariant under permutation $\Lambda_{K,N}$, and the distribution of $N$ data points from the $K$-th branch is also invariant under permutation on $N$ samples, a natural question arises: when should we use $KN$ data points rather than $N$ data points for conformal calibration. 

Even without calibrating on data from other branches, those data can still be used for score estimation. Hence the score-estimation error remains of the same order, up to at most a factor depending on $K$. Consequently, only the $NK^{-1/2}\wedge N^{-1/2}\log^{1/2}(K)$ term needs to be considered. When $N$ is large while $K$ is relatively small, $N^{-1/2}\log^{1/2}(K)$ is of the same order as $N^{-1/2}\log^{1/2}(N)$, achieving nearly parametric optimality; in this regime, the dominant term is governed by the score-estimation error, and calibrating with additional branch data provides almost no benefit. On the other hand, when $K$ is large and $N$ is small, $NK^{-1/2}$ can be significantly smaller than $N^{-1/2}\log^{1/2}(N)$. In such a scenario, calibrating using all branch data becomes beneficial.

\section{Technical Details and Discussions}\label{sec: app technical details}

This section collects technical details that support the main theory. We first discuss the marginal-validity condition for random weighting schemes, then show that pointwise score convergence implies the averaged quantile-approximation condition used in the averaged conditional miscoverage theory. We also include a practical implementation note for conditional-coverage-oriented model selection.

\subsection{Discussion of Marginal Validity}\label{sec: discussion of marginal}

In Theorem~\ref{theo: marginal general}, the upper bound is governed by the expectation of the maximum point mass
\begin{gather*}
    \widehat{w}_{\max}(Z)=\sup_{c\in\{s(X_i,Y_i;Z):1\leq i\leq n+1\}}\dfrac{\sum_{i=1}^{n+1}w(X_i)\mathbbm{1}\{s(X_i,Y_i;Z)=c\}}{\sum_{j=1}^{n+1}w(X_j)}\,,
\end{gather*}
which is the largest atom of the weighted empirical score distribution used for conformal calibration. In the no-tie case with $w\equiv 1$, this term is $1/(n+1)$, recovering established findings in split conformal prediction literature \cite{vovk2005algorithmic}.

When ties are present, several observations may have the same score value, and the maximum weight $\widehat{w}_{\rm max}(Z)$ can exceed $1/(n+1)$. This leads to a more conservative conformal prediction set whose coverage probability surpasses the nominal level $1-\alpha$. 
For instance, for the CC method \cite{gibbs2025conformal} construct with a finite dimensional RKHS $\mathcal{F}$, since the score function corresponds to the solution of a quantile regression, multiple ties may occur at $0$. Consequently, the associated $\widehat{w}_{\rm max}(Z)$ is constrained by $d_0/(n+1)$, where $d_0=\mathrm{Pdim}(\mathcal{F})$ is also the upper bound on the number of ties at $0$.

When $w$ is nontrivial, the maximum weight $\widehat{w}_{\rm max}(Z)$ is determined by the ratio of the largest to the smallest weights. Specifically, $\widehat{w}_{\rm max}(Z)\leq\max_{1\leq i\leq n+1}w(X_i)/\sum_{j=1}^{n+1}w(X_j)$. When $w(\cdot)$ is bounded within $[\underline{M},\overline{M}]$ where $0<\underline{M}\leq \overline{M}<\infty$, $\widehat{w}_{\rm max}(Z)$ is controlled by $\overline{M}/(n\underline{M})$. However, if $w(\cdot)$ is unbounded, $\widehat{w}_{\rm max}$ can become arbitrarily large, potentially leading to overly conservative conformal prediction sets. 

We next explain the first claim below Theorem~\ref{theo: marginal general}. For any possibly random $w$, define $\bar{w}(x)=w(x)/\int w(x) dP_X(x)$ as its normalized version. Then 
\begin{gather*}
    \dfrac{w(X_j)}{\sum_{i=1}^{n+1}w(X_i)} = \dfrac{\bar{w}(X_j)}{\sum_{i=1}^{n+1}\bar{w}(X_i)}\,.
\end{gather*}
Consequently,
\begin{gather*}
    \inf_{w_0\in\mathcal{W}_X}\mathbb{E}\left\{ d_Z(w,w_0) \right\}=\inf_{w_0\in\mathcal{W}_X}\mathbb{E}\left\{ d_Z(\bar{w},w_0) \right\}\,.
\end{gather*}
Since $\mathbb{E}\left\{ \bar{w}(X)\mid \bar{w} \right\}=1$, the remaining condition for $\bar{w}\in\mathcal{W}_X$ is exactly $\mathbb{E}\left\{ \bar{w}(x) \right\}=1$ for almost every $x\in\mathcal{X}$. 
Thus, verifying $\inf_{w_0\in\mathcal W_X}\mathbb E\{d_Z(w,w_0)\}=0$ reduces to checking whether the normalized random weight satisfies $\mathbb{E}\left\{ \bar{w}(x) \right\}=1$ for almost every $x\in\mathcal{X}$.

\paragraph*{BaseLCP} For BaseLCP, $\bar{w}(x)=K(x,X_{n+1};h)/K_X(X_{n+1})$. 
Let $x^\star\in\arg\max_x K_X(x)$. Then $K_X(X_{n+1})\leq K_X(x^\star)$ almost surely, and hence
\begin{align*}
    \mathbb{E}\{\bar{w}(x^*)\} = \mathbb{E}\left\{\frac{K(x^*,X_{n+1};h)}{K_X(X_{n+1})}\right\} \geq \mathbb{E}\left\{\frac{K(x^*,X_{n+1};h)}{K_X(x^*)}\right\} = 1\,,
\end{align*}
where the equality holds when $K_X(X_{n+1})=K_X(x^*)$ almost surely, and this is not a trivial condition to satisfy. 

\paragraph*{RLCP} For RLCP, $\bar{w}(x)=K(x,\widetilde{X};h)/K_X(\widetilde{X})$, where, conditional on $X_{n+1}$, the auxiliary covariate $\widetilde X$ is drawn from a distribution with density $K(\widetilde x,X_{n+1};h)/K_0$. Then, for each fixed $x$,
\begin{align*}
    \mathbb{E}\{\bar{w}(x)\} &= \mathbb{E} \left[ \mathbb{E}\left\{ \bar{w}(x)\mid X_{n+1} \right\} \right]=\mathbb{E} \left[ \mathbb{E}\left\{ \dfrac{K(x,\widetilde{X};h)}{K_X(\widetilde{X})}\mid X_{n+1} \right\} \right]\\
    &=\mathbb{E} \left\{ \int_{\widetilde{x}}\dfrac{K(x,\widetilde{x};h)K(\widetilde{x},X_{n+1};h)}{K_X(\widetilde{X})K_0}d\widetilde{x} \right\}\\
    &=\int_{\widetilde{x}}\dfrac{K(x,\widetilde{x};h)}{K_X(\widetilde{x})K_0}\mathbb{E}\left\{K(\widetilde{x},X_{n+1};h)\right\}d\widetilde{x} \\
    &=\int_{\widetilde{x}}\dfrac{K(x,\widetilde{x};h)}{K_0}d\widetilde{x}=1\,.
\end{align*}
Thus the randomized weight in RLCP satisfies the marginal coverage condition in Theorem~\ref{theo: marginal general}.

\subsection{Pointwise Convergence Implies Averaged Convergence}\label{sec: detail averaged miscoverage}

Theorem~\ref{theo: averaged miscoverage} is stated under an averaged quantile-approximation assumption. The next lemma shows that this assumption follows from pointwise convergence of the learned score under mild conditions.

\begin{lemma}\label{lem: pointwise implies lp}
    Suppose $P_{X,1}=P_{X,2}=P_X$ and $P_{T,1}=P_{T,2}=P_T$. 
    Let $Z_{n+1}=Z$ explicitly specify the dependence of $Z$ on $n$. 
    Assume that: (i) for almost every $(x,y)\in\mathcal X\times\mathcal Y$, $s(x,y;Z_{n+1})\to s^\star(x,y)$ in probability; (ii) there exists $\overline M<\infty$ such that $|s(x,y;Z_{n+1})|\vee |s^\star(x,y)|\leq \overline M$ for all $x,y$ almost surely; and (iii) for every $t\in\mathcal T$, the c.d.f.~$F_{s^\star\mid\phi(t)}$ is continuous on its support and has density lower bounded by $\underline L>0$ on that support.
    Then, for every $p\geq 1$,
    \begin{gather*}
        \|q_{s(\cdot,\cdot;Z_{n+1})}-q_{s^\star}\|_{P_T,p}\rightarrow 0 \qquad\text{in probability}\,.
    \end{gather*}
\end{lemma}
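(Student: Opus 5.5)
Since $|q_{s(\cdot,\cdot;Z_{n+1})}(t)-q_{s^\star}(t)|\le 2\overline M$ uniformly by (ii), it suffices to show $\|q_{s(\cdot,\cdot;Z_{n+1})}-q_{s^\star}\|_{P_T,1}\to 0$ in probability. Indeed $\|f\|_{P_T,p}^p\le (2\overline M)^{p-1}\|f\|_{P_T,1}$, so convergence for every $p\ge 1$ follows. By Markov's inequality it is enough to show
\[
\mathbb E\int_{\mathcal T}\bigl|q_{s(\cdot,\cdot;Z_{n+1})}(t)-q_{s^\star}(t)\bigr|\,dP_T(t)\to 0 .
\]
By Fubini and dominated convergence (the integrand is bounded by $2\overline M$), this reduces to a statement about a fresh test point. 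Let $(X,Y)\sim P_X\times P_{Y\mid X}$ be independent of $Z_{n+1}$, and set $D_n=s(X,Y;Z_{n+1})-s^\star(X,Y)$. Here $Z_{n+1}$ plays the role of $Z'_{n+1}$, so $X_{n+1}$ is independent of the score.

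\textbf{Step 1: $D_n\to 0$ in probability, jointly over data and test point.} For a.e.\ fixed $(x,y)$, assumption (i) gives $\mathrm{Pr}(|s(x,y;Z_{n+1})-s^\star(x,y)|>\varepsilon)\to 0$. Integrating over $(x,y)$ with dominated convergence gives $\mathrm{Pr}(|D_n|>\varepsilon)\to 0$. Conditioning on the event $\phi(T)$ and averaging over $T\sim P_T$ recovers the joint law. Hence
\[
\mathbb E_{T}\bigl[\mathrm{Pr}(|D_n|>\varepsilon\mid\phi(T))\bigr]\to 0 ,
\]
where the inner probability is also over $Z_{n+1}$. A second Markov step then shows that, with probability tending to one in $Z_{n+1}$, the quantity $\eta_n(t,Z_{n+1})=\mathrm{Pr}(|D_n|>\varepsilon\mid\phi(t),Z_{n+1})$ is small in $L^1(P_T)$.

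\textbf{Step 2: pointwise quantile perturbation.} Fix $t$ and $Z_{n+1}$, and write $F=F_{s^\star\mid\phi(t)}$ and $G=F_{s(\cdot,\cdot;Z_{n+1})\mid\phi(t)}$. For every $u$ the elementary sandwich
\[
F(u-\varepsilon)-\eta\le G(u)\le F(u+\varepsilon)+\eta
\]
holds, with $\eta=\eta_n(t,Z_{n+1})$. By (iii), $F$ has density at least $\underline L$ on its support, so its quantile function is $\underline L^{-1}$-Lipschitz in the level. This yields
\[
|Q(1-\alpha;G)-Q(1-\alpha;F)|\le \varepsilon+\eta/\underline L
\]
whenever $1-\alpha\pm\eta$ stays inside $(0,1)$. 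Otherwise we fall back on the trivial bound $2\overline M$. Note that only the density bound on $F$ is needed here, not on $G$.

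\textbf{Step 3: integration.} Combining Step 2 with the fallback,
\[
\int|q_s-q_{s^\star}|\,dP_T\le \varepsilon+\underline L^{-1}\!\int\eta_n\,dP_T+2\overline M\,P_T(\eta_n>c_\alpha),
\]
where $c_\alpha=\min(\alpha,1-\alpha)/2$. Both $\eta_n$ terms vanish in probability by Step 1. Since $\varepsilon$ is arbitrary, the claim follows.

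The main obstacle is Step 1, the measure-theoretic passage from pointwise-in-$(x,y)$ convergence to convergence of $\eta_n(t,\cdot)$ averaged over the conditioning events $\phi(t)$. This is clean when $\phi(t)$ has positive probability or is a regular conditional distribution of $X_{n+1}$. In general I would use the disintegration of $P_X\times P_{Y\mid X}$ along $T$, which is legitimate because the averaging measure is exactly $P_T$, together with $P_T$-a.e.\ statements.
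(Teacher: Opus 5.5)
Your proof is correct, but it takes a different route from the paper's.

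\textbf{The paper's route.} The paper argues pointwise in $t$. For a fixed $t$, it extracts almost surely convergent subsequences and uses Fubini to get $s(x,y;Z_{n+1}(\omega))\to s^\star(x,y)$ for $\mu_t$-a.e.\ $(x,y)$, where $\mu_t$ is the law of $(X_{n+1},Y_{n+1})$ given $T=t$. From this it obtains pointwise, and then via P\'olya's theorem uniform, convergence of the conditional c.d.f.s. The perturbation bound $|q_n(t)-q^\star(t)|\le\underline L^{-1}\sup_u|F_t(u;Z_{n+1})-F_t^\star(u)|$ then gives $q_n(t)\to q^\star(t)$ in probability. Only at the end does the paper integrate over $t$, using bounded and dominated convergence plus Markov.

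\textbf{Your route.} You never need subsequences or uniform c.d.f.\ convergence. Set $\eta=\mu_t\bigl(\{(x,y):|s(x,y;Z_{n+1})-s^\star(x,y)|>\varepsilon\}\bigr)$. The sandwich $F(u-\varepsilon)-\eta\le G(u)\le F(u+\varepsilon)+\eta$, together with the density bound on $F$ alone, gives $|q_s(t)-q_{s^\star}(t)|\le\varepsilon+\eta/\underline L$. The identity $\mathbb{E}\int\eta_n\,dP_T=\mathrm{Pr}(|D_n|>\varepsilon)$ then handles the averaging over $t$ in one line.

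\textbf{What your route buys.} First, it is quantitative: $\mathbb{E}\|q_s-q_{s^\star}\|_{P_T,1}\le\varepsilon+C_\alpha\,\mathrm{Pr}\bigl(|s(X,Y;Z_{n+1})-s^\star(X,Y)|>\varepsilon\bigr)$ with $C_\alpha=\underline L^{-1}+2\overline M/c_\alpha$. So a rate for the score at a fresh test point transfers directly to $\delta_n(\varepsilon)$ in Assumption~\ref{ass: L_p_convergence}; the paper's argument is purely qualitative. Second, the ``main obstacle'' you flag is actually dissolved by your averaged formulation, once you define $\eta_n$, $F$ and $G$ through the disintegration $\mu_t$. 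You only need (i) to hold $P_X\times P_{Y\mid X}$-a.e. The paper's fixed-$t$ argument needs (i) to hold $\mu_t$-a.e., which is available only for $P_T$-a.e.\ $t$ and is used there implicitly.

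\textbf{A shared tacit assumption.} Both arguments treat the support of $F_{s^\star\mid\phi(t)}$ as an interval; the paper writes $[-\overline M,\overline M]$. This is what makes the quantile function $\underline L^{-1}$-Lipschitz in the level.
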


\begin{proof}
Write $q_n(t)=q_{s(\cdot,\cdot;Z_{n+1})}(t)$ and $ q^\star(t)=q_{s^\star}(t)$.
The proof contains two steps. We first show that $q_n(t)\to q^\star(t)$ in probability for each fixed $t$ in a full-measure set, and then integrate this conclusion over $P_T$, using the uniform boundedness from assumption (ii).

Let $(X^{(t)},Y^{(t)})\sim \mu_t$ be independent of all randomness in $Z_{n+1}$, where $\mu_t$ is the distribution of $(X_{n+1},Y_{n+1})$ conditional on $T_{n+1}=t$. 
We first show that $q_n(t)\to q^\star(t)$ in probability for this fixed $t$. Fix an arbitrary subsequence $\{n_m\}_{m\ge 1}$. For any $\varepsilon>0$, assumption (i) and dominated convergence under the probability measure $\mu_t$ give
\begin{align*}
    \int \mathrm{Pr}\left(\left|s(x,y;Z_{n_m+1})-s^\star(x,y)\right|>\varepsilon\right)\mu_t(dx,dy)\to 0\,.
\end{align*}
Equivalently,
\begin{align*}
    \mathrm{Pr}\left(\left|s(X^{(t)},Y^{(t)};Z_{n_m+1})-s^\star(X^{(t)},Y^{(t)})\right|>\varepsilon\right)\to 0\,.
\end{align*}
By the subsequence principle for convergence in probability, the subsequence $\{n_m\}$ admits a further subsequence $\{n_{m_k}\}$ such that
\begin{align*}
    s(X^{(t)},Y^{(t)};Z_{n_{m_k}+1})\to s^\star(X^{(t)},Y^{(t)})
\end{align*}
almost surely on the product space carrying $(X^{(t)},Y^{(t)})$ and the original data randomness. Fubini's theorem therefore yields an event $\Omega_t$ of probability one such that, for every $\omega\in\Omega_t$,
\begin{align*}
    s(x,y;Z_{n_{m_k}+1}(\omega))\to s^\star(x,y)
\end{align*}
for $\mu_t$-almost every $(x,y)$.

Fix $\omega\in\Omega_t$. For any $u\in[-\overline{M},\overline{M}]$, bounded convergence under $\mu_t$ implies
\begin{align*}
    F_t\bigl(u;Z_{n_{m_k}+1}(\omega)\bigr)\to F_t^\star(u)\,.
\end{align*}
Because $F_t^\star$ is continuous on the compact interval $[-\overline{M},\overline{M}]$, P\'olya's theorem yields
\begin{align*}
    \sup_{u\in[-\overline{M},\overline{M}]}\left|F_t\bigl(u;Z_{n_{m_k}+1}(\omega)\bigr)-F_t^\star(u)\right|\to 0\,.
\end{align*}
Since the density of $F_t^\star$ is lower bounded by $\underline L$ on $[-\overline{M},\overline{M}]$, the usual quantile perturbation bound gives
\begin{align*}
    \left|q_{n_{m_k}}(t)-q^\star(t)\right|
    \le \underline L^{-1}\sup_{u\in[-\overline{M},\overline{M}]}\left|F_t\bigl(u;Z_{n_{m_k}+1}(\omega)\bigr)-F_t^\star(u)\right|\,.
\end{align*}
Hence $q_{n_{m_k}}(t)\to q^\star(t)$ almost surely. Since the original subsequence $\{n_m\}$ was arbitrary, another application of the subsequence principle shows that $q_n(t)\to q^\star(t)$ in probability for $P_T$-almost every $t$.

We now turn to prove the averaged quantile-approximation result. By assumption (ii), both quantiles lie in $[-\overline{M},\overline{M}]$, so $\left|q_n(t)-q^\star(t)\right|^p\le (2\overline{M})^p$. For each fixed $t$ in the full-measure set above, boundedness and convergence in probability imply
\begin{align*}
    \mathbb{E}\left|q_n(t)-q^\star(t)\right|^p\to 0\,.
\end{align*}
Applying dominated convergence in $t$ and Fubini's theorem, we obtain
\begin{align*}
    \mathbb{E}\left\|q_{s(\cdot,\cdot;Z_{n+1})}-q_{s^\star}\right\|_{P_T,p}^p = \int \mathbb{E}\left|q_n(t)-q^\star(t)\right|^p\,dP_T(t)\to 0\,.
\end{align*}
Markov's inequality therefore gives
\begin{align*}
    \mathrm{Pr}\left(\left\|q_{s(\cdot,\cdot;Z_{n+1})}-q_{s^\star}\right\|_{P_T,p}>\varepsilon\right)\le \varepsilon^{-p} \mathbb{E}\left\|q_{s(\cdot,\cdot;Z_{n+1})}-q_{s^\star}\right\|_{P_T,p}^p \to 0\,,
\end{align*}
which proves the claimed convergence in probability. 
\end{proof}

\subsection{Efficient Implementation of Conditional-Coverage-Oriented Model Selection}\label{sec: efficient sel}

Motivated by Assumption~\ref{ass: stable score}, we provide a computational simplification of the proposed conditional-coverage-oriented model selection approach in Section~\ref{sec: model selection}. If replacing the last element of $Z$ changes $s(x,y;Z)$ by at most $O_p(\widetilde{\varepsilon}_n)$, then, for a fixed pair $(x_0,y_0)$, the quantity $s(x_0,y_0;Z^y)-q_s(Z^y;\alpha)$ varies little when the trial value $y$ changes. Consequently, the selection criterion $\mathcal{L}(s;Z^y)$ is nearly unchanged when evaluated on the calibration sample alone. Under uniform stability of the score \citep{feldman2018generalization}, the resulting discrepancy is typically of order $\widetilde{\varepsilon}_n+\widetilde{\delta}_n$, which is often $n^{-1}$.

Motivated by this observation, let $Z_{:n}$ denote the first $n$ elements of $Z$, and use $\mathcal{L}(s;Z_{:n})$ as a surrogate for $\{\mathcal{L}(s;Z^y):y\in\mathcal Y\}$. Define $\hat{s}=\arg\min_{s\in\mathcal{S}}\mathcal{L}(s;Z_{:n})$. The resulting efficient approximation to the conformal set is
\begin{gather}
    \widehat{C}_{\rm Eff-Sel}(X_{n+1})=\left\{ y:\hat{s}(X_{n+1},y;Z_{:n})\leq q_{\hat{s}}(Z_{:n};\alpha) \right\}\,.\label{eq: efficient conditional selection}
\end{gather}
Under the same stability condition, the marginal coverage gap incurred by $\widehat{C}_{\rm Eff\text{-}Sel}(X_{n+1})$ instead of the conformal set with exact selection is expected to be of order $\widetilde{\varepsilon}_n+\widetilde{\delta}_n$. We use this as a practical implementation guideline.

\section{Preliminary Lemmas}\label{sec:pre_lemmas}

This section collects several preliminary lemmas used in the proofs of the main results. Their proofs are deferred to Section~\ref{sec:proof_pre_lemmas}.


We first introduce a deterministic notion used in the empirical-quantile arguments. Let $F:\mathbb R\to\mathbb R$ be a continuous and nondecreasing function. Suppose there exist finite constants $a_0<b_0$ such that
$F(a_0)=\inf_{t\in\mathbb R}F(t)$ and $F(b_0)=\sup_{t\in\mathbb R}F(t)$.
For $s\in(a_0,b_0)$, we say that $F$ is \emph{sub-Lipschitz at $s$ with constant $\underline L>0$} if
\begin{gather*}
    |F(s^\prime)-F(s)|\geq \underline{L}|s^\prime-s|,\quad \forall s^\prime\in[a_0,b_0].
\end{gather*}

This condition gives local identifiability of the quantile: a uniform error bound for a c.d.f.~can then be translated into a bound for the corresponding quantile. The next lemma gives a sufficient condition that guarantees this property.

\begin{lemma}\label{lemma: sub Lipschitz weighted}
    Let $F:\mathbb R\to\mathbb R$ be a continuous and nondecreasing function. Suppose there exist finite constants $a_0<b_0$ such that
    $F(a_0)=\inf_{t\in\mathbb R}F(t)$ and $F(b_0)=\sup_{t\in\mathbb R}F(t)$.
    Let $s\in(a_0,b_0)$. If $F$ is continuously differentiable at $s$ with derivative $f(s)>0$, then there exists $\underline L>0$ such that $F$ is sub-Lipschitz at $s$ with constant $\underline L$.
\end{lemma}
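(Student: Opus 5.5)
The plan is to split the interval $[a_0,b_0]$ into a neighborhood of $s$, where differentiability controls $F$, and the complement, where monotonicity and compactness take over. Then we take $\underline L$ to be the smaller of the two resulting constants.

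\emph{Local step.} Since $F$ is continuously differentiable at $s$ with $f(s)>0$, there is $\eta>0$ with $[s-\eta,s+\eta]\subset(a_0,b_0)$ and $F'(u)\ge f(s)/2$ for all $|u-s|\le\eta$. By the mean value theorem, for $|s'-s|\le\eta$ we get
\[
|F(s')-F(s)|\ge \tfrac{f(s)}{2}|s'-s|.
\]

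\emph{Global step.} Take $s'\in[a_0,b_0]$ with $|s'-s|>\eta$, say $s'>s+\eta$; the other side is symmetric. Monotonicity gives
\[
F(s')-F(s)\ge F(s+\eta)-F(s)\ge \tfrac{f(s)}{2}\eta.
\]
Since $|s'-s|\le b_0-a_0$, we have
\[
\frac{|F(s')-F(s)|}{|s'-s|}\ge \frac{f(s)\eta}{2(b_0-a_0)}.
\]
The case $s'<s-\eta$ is handled the same way using $F(s)-F(s-\eta)$.

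\emph{Combining.} Setting
\[
\underline L=\min\Bigl\{\tfrac{f(s)}{2},\ \tfrac{f(s)\eta}{2(b_0-a_0)}\Bigr\}=\frac{f(s)\eta}{2(b_0-a_0)}
\]
(the second is smaller since $\eta\le b_0-a_0$) gives the claimed inequality for all $s'\in[a_0,b_0]$.

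The only delicate point is the local step. It needs "continuously differentiable at $s$" to mean that $F'$ exists near $s$ and is continuous at $s$, so that $F'$ is bounded below near $s$. If only differentiability at $s$ itself is available, I would instead use the definition of the derivative directly: choose $\eta$ so that $|(F(s')-F(s))/(s'-s)-f(s)|\le f(s)/2$ for $0<|s'-s|\le\eta$. That yields the same local bound without the mean value theorem, and the global step is unchanged.
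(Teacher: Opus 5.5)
Your proof is correct, but it takes a different route from the paper.

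The paper works with the slope function $k_s(s')=|F(s')-F(s)|/|s'-s|$, extended by $k_s(s)=f(s)$. It argues that $k_s$ is continuous on $[a_0,b_0]$: at $s$ by differentiability, and elsewhere by continuity of $F$. It then applies the extreme value theorem to get a minimizer. Finally it shows the minimum is positive, because $F(s')=F(s)$ for some $s'\neq s$ would, by monotonicity, make $F$ constant between $s'$ and $s$, contradicting $f(s)>0$.

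You instead split $[a_0,b_0]$ into an $\eta$-neighbourhood of $s$ and its complement. On the neighbourhood the difference quotient is at least $f(s)/2$. On the complement, monotonicity carries the increment $F(s+\eta)-F(s)\ge f(s)\eta/2$ (or its left-sided analogue) outward, and the finite length $b_0-a_0$ turns it into a slope bound.

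Your argument is more elementary and constructive. It gives the explicit constant $\underline L=f(s)\eta/\{2(b_0-a_0)\}$. It uses neither continuity of $F$ nor compactness, only monotonicity and differentiability at the single point $s$. Your fallback via the definition of the derivative is exactly enough, and it is also all the paper's argument actually uses at $s$.

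The paper's argument yields the sharp constant $\min_{s'}k_s(s')$, but gives no quantitative description of it. Notably, when the paper needs a version uniform over a class of functions, in the proof of Lemma~\ref{theo: sup quantile error}, it switches to essentially your local-plus-global reasoning. There it builds the lower bound from a uniform neighbourhood radius $\epsilon_0$, a uniform derivative lower bound, and the uniform support bound $M_{\mathcal M}$.
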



We next state two weighted empirical-process bounds. The first is a weighted DKW-type inequality, used in the proof of Lemma~\ref{theo: weighted quantile error} to control the stochastic error of the weighted empirical distribution. Both lemmas are stated for a fixed weight function $a(\cdot)$. Consequently, in later theorem proofs involving random weight functions, these bounds should be applied conditionally on the realized random weight function.

\begin{lemma}\label{lemma: weighted dkw}
    Let $\{(X_i^{(0)},S_i^{(0)})\}_{i\in[n]}$ be i.i.d.~random pairs from a joint distribution $P^{(0)}$, where $X_i^{(0)}\in\mathcal{X}$ and $S_i^{(0)}\in\mathbb{R}$. Let $a:\mathcal X\to\mathbb R_+$ be a deterministic nonnegative weight function,  possibly depending on $n$ and satisfying $\sigma_a^2=\mathbb{E}[\{a(X_1^{(0)})\}^2]<\infty$. Define
    \begin{gather*}
        A_n^{(0)}(u) = \frac{1}{n}\sum_{i=1}^{n}a(X_i^{(0)})\mathbbm{1}\{S_i^{(0)}\leq u\}, \quad
        A^{(0)}(u) = \mathbb E\left\{a(X^{(0)})\mathbbm{1}\{S^{(0)}\leq u\}\right\}\,.
    \end{gather*}
    Then, for any $\varepsilon>0$, the following DKW-type bound holds:
    \begin{align*}
        &\mathrm{Pr}\left(\sup_{u\in\mathbb R}\left|A_n^{(0)}(u)-A^{(0)}(u)\right|>16\sigma_a\sqrt{\frac{\log n}{n}}+\varepsilon\right) \\
        \leq &
        \begin{cases}
            \exp\left\{-n\varepsilon^2/(8\sigma_a^2)\right\},
            & \varepsilon<2\sigma_a^2,\\
            \exp\left(-n\varepsilon/4\right),
            & \varepsilon\geq 2\sigma_a^2.
        \end{cases}
    \end{align*}
\end{lemma}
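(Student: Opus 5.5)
We will prove the weighted DKW-type bound by splitting the deviation into its mean and its concentration around the mean. Write $D_n=\sup_{u\in\mathbb R}|A_n^{(0)}(u)-A^{(0)}(u)|$. Then
\[
D_n \le \mathbb{E}D_n + (D_n-\mathbb{E}D_n).
\]
The first part will be bounded by $16\sigma_a\sqrt{\log n/n}$ using symmetrization and a finite-class argument. The second part will be handled by a Bernstein-type concentration inequality for suprema of empirical processes. This is Talagrand/Bousquet type, or alternatively the Lederer--van de Geer bound cited in the bibliography, which only needs second moments of an envelope. That is why the tail has a sub-Gaussian regime $\exp\{-n\varepsilon^2/(8\sigma_a^2)\}$ for $\varepsilon<2\sigma_a^2$ and a sub-exponential regime $\exp(-n\varepsilon/4)$ for $\varepsilon\ge 2\sigma_a^2$.

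\textbf{Step 1 (reduction to a class).} Consider the function class
\[
\mathcal H=\bigl\{(x,s)\mapsto a(x)\mathbbm 1\{s\le u\}: u\in\mathbb R\bigr\}.
\]
This class has envelope $a(x)$ with $\mathbb{E}a^2=\sigma_a^2$, and its members are nested in $u$. For any fixed sample of size $n$, the vector $(h(X_i^{(0)},S_i^{(0)}))_{i\in[n]}$ takes at most $n+1$ distinct values as $u$ varies. The supremum over $u$ is therefore a supremum over a finite set, and measurability issues disappear.

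\textbf{Step 2 (expectation).} By symmetrization,
\[
\mathbb{E}D_n\le 2\,\mathbb{E}\sup_u\Bigl|\frac1n\sum_{i=1}^n\epsilon_i a(X_i^{(0)})\mathbbm 1\{S_i^{(0)}\le u\}\Bigr|,
\]
where the $\epsilon_i$ are Rademacher signs. Conditionally on the data, each of the $n+1$ vectors has Euclidean norm at most $(\sum_i a(X_i^{(0)})^2)^{1/2}$. Massart's finite-class lemma then gives a bound of order
\[
\sqrt{2\log(2(n+1))}\,\Bigl(\frac1{n^2}\sum_i a(X_i^{(0)})^2\Bigr)^{1/2}.
\]
Jensen's inequality yields $\mathbb{E}(n^{-1}\sum_i a^2)^{1/2}\le\sigma_a$. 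Tracking the constants, $2\sqrt{2\log(2(n+1))}\le 16\sqrt{\log n}$ for $n\ge 2$, which gives $\mathbb{E}D_n\le 16\sigma_a\sqrt{\log n/n}$.

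\textbf{Step 3 (concentration).} Here we need a tail bound for $D_n-\mathbb{E}D_n$ whose variance proxy is $\sigma_a^2$ and which requires no uniform bound on $a$, since $a$ may be unbounded (for example, a density ratio or a kernel scaled by $h^{-d}$). This is the main obstacle. Standard bounded-differences or Talagrand inequalities need $\|a\|_\infty<\infty$.

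We would first invoke the one-sided Bernstein-type bound for suprema of empirical processes with a square-integrable envelope, as in Lederer and van de Geer (2014) under the nonnegativity of the summands. In this form the sub-exponential constant is absorbed into the threshold $2\sigma_a^2$. Evaluating that inequality then produces the stated piecewise form: the quadratic regime $n\varepsilon^2/(8\sigma_a^2)$ for small $\varepsilon$, and the linear regime $n\varepsilon/4$ once $\varepsilon\ge 2\sigma_a^2$.

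If that route fails, the fallback is to treat the two one-sided deviations separately. The lower deviation $A^{(0)}-A_n^{(0)}$ involves sums of nonnegative variables and admits a clean sub-Gaussian bound with variance $\sigma_a^2$. For the upper deviation, we would truncate $a$ at level $\sigma_a$ times a constant and apply Bernstein's inequality to the truncated part, which yields the switch between the two regimes. Combining Steps 2 and 3 proves the lemma.
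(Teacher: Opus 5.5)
Your Steps 1--2 follow the same route as the paper. The paper also bounds the Rademacher complexity of the VC-dimension-one class $\{a(x)\mathbbm{1}(s\le u):u\in\mathbb{R}\}$ by a finite-class (polynomial-discrimination) bound plus Jensen, and your constants there are fine.

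The gap is Step 3, and it cannot be repaired as posed, because under only $\sigma_a^2<\infty$ the stated exponential tail is false. Take $S^{(0)}\equiv 0$, so that $D_n=|n^{-1}\sum_i a(X_i^{(0)})-\mu|$ with $\mu=\mathbb{E}a(X^{(0)})$. Let $\mathrm{Pr}\{a(X^{(0)})>t\}=t^{-3}$ for $t\ge 1$, so that $\sigma_a^2=3$. Since $a\ge 0$, we have $D_n\ge a(X_1^{(0)})/n-\mu$. The left-hand side is therefore at least $\{n(\mu+16\sigma_a\sqrt{\log n/n}+\varepsilon)\}^{-3}$, which decays only polynomially in $\varepsilon$ and eventually exceeds $\exp(-n\varepsilon/4)$.

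So no ``Bernstein bound with a square-integrable envelope'' can deliver what you need. Any inequality that assumes only a square-integrable envelope must have polynomial tails, as this example shows. The exponential inequality the paper actually invokes (Equation~(1) of Lederer--van de Geer, as transcribed there) is Bousquet's form of Talagrand's inequality. Its linear term $(\gamma^{-1}+1/3)\eta$ carries a factor $b$ with $\sup_u\|f_u-Pf_u\|_\infty\le b$, where $f_u(x,s)=a(x)\mathbbm{1}\{s\le u\}$ and $Pf_u=A^{(0)}(u)$; the paper silently sets $b=1$.

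Your truncation fallback fails for the same reason. Bernstein controls the truncated part, but the remainder $n^{-1}\sum_i a(X_i^{(0)})\mathbbm{1}\{a(X_i^{(0)})>c\sigma_a\}$ is exactly what drives the upper deviation $\sup_u\{A_n^{(0)}(u)-A^{(0)}(u)\}$ in this example. Your remark on the lower deviation is on the right track: there $A^{(0)}(u)-a(x)\mathbbm{1}\{s\le u\}\le\mu\le\sigma_a$, so Bousquet applies with no bound on $a$. Note also that the regime $\exp(-n\varepsilon/4)$ for $\varepsilon\ge 2\sigma_a^2$ is not invariant under $a\mapsto ca$, another sign that a normalization is implicit.

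What makes the argument go through is an extra hypothesis $0\le a\le M_a$. This holds in every use of the lemma (Assumption~\ref{ass: weight function}(i), and bounded kernels). Bousquet's inequality applied to the class $\{\pm(f_u-Pf_u)\}$ then gives
\[
\mathrm{Pr}\bigl\{D_n>(1+\gamma)\mathbb{E}D_n+\sigma_a\sqrt{2\eta}+M_a(\gamma^{-1}+1/3)\eta\bigr\}\le e^{-n\eta}.
\]
You need this $(1+\gamma)\mathbb{E}D_n$ form rather than a pure deviation from $\mathbb{E}D_n$, because Bousquet's variance proxy is $\sigma_a^2+2M_a\mathbb{E}D_n$, not $\sigma_a^2$. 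With $\gamma=1$, your Step 2 bound has enough slack that $2\mathbb{E}D_n\le 16\sigma_a\sqrt{\log n/n}$ for $n\ge 2$.

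Take $\eta=\varepsilon^2/(8\sigma_a^2)$ when $\varepsilon<2\sigma_a^2/M_a$, and $\eta=\varepsilon/(4M_a)$ otherwise. This yields $\exp\{-n\varepsilon^2/(8\sigma_a^2)\}$ and $\exp\{-n\varepsilon/(4M_a)\}$ respectively, which implies the stated piecewise bound when $M_a\le 1$. For general $M_a$, the linear regime becomes $\exp\{-n\varepsilon/(4M_a)\}$ above the threshold $2\sigma_a^2/M_a$. The paper's own proof relies on the same unstated normalization.
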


When the score ranges over a function class, the previous empirical-process bound must hold uniformly over that class. The next lemma provides the uniform weighted DKW-type bound used in Theorem~\ref{theo: sup quantile error}.

\begin{lemma}\label{lemma: sup weighted dkw}
    Let $X_1^{(0)},\ldots,X_n^{(0)}$ be i.i.d.~from $P_X^{(0)}$, supported on $\mathcal{X}$. Let $a:\mathcal X\to\mathbb R_+$ be a deterministic nonnegative weight function satisfying the conditions of Lemma~\ref{lemma: weighted dkw}, and assume that
    $0<a(X^{(0)})\leq M_a$ almost surely for a positive constant $M_a$.
    Let $\mathcal M$ be a class of real-valued functions on $\mathcal X$ with finite pseudo-dimension $\mathrm{Pdim}(\mathcal M)$. Define
    \begin{align*}
        A_{n,m}^{(0)}(u) &= \frac{1}{n}\sum_{i=1}^n
        a(X_i^{(0)})\mathbbm{1}\{m(X_i^{(0)})\leq u\},\\
        A_{n,m}^{(0),+}(u) &= \frac{1}{n}\sum_{i=1}^{n-1}
        a(X_i^{(0)})\mathbbm{1}\{m(X_i^{(0)})\leq u\},\\
        A_m^{(0)}(u) &= \mathbb E\left\{
        a(X^{(0)})\mathbbm{1}\{m(X^{(0)})\leq u\}\right\},
    \end{align*}
    where $X^{(0)}\sim P_X^{(0)}$ and $m\in\mathcal M$. Then, for any $\varepsilon>0$,
    \begin{align*}
        &\mathrm{Pr}\left(
        \sup_{m\in\mathcal M,\,u\in\mathbb R}
        \left|A_{n,m}^{(0)}(u)-A_m^{(0)}(u)\right|
        \leq 8M_a \sqrt{\frac{\mathrm{Pdim}(\mathcal M)\log n}{n}} +\varepsilon \right)\\
        \geq & 1-\exp\left\{-\frac{n\varepsilon^2}{2M_a^2}\right\},
    \end{align*}
    and
    \begin{align*}
        &\mathrm{Pr}\left(
        \sup_{m\in\mathcal M,\,u\in\mathbb R}
        \left|A_{n,m}^{(0),+}(u)-A_m^{(0)}(u)\right|
        \leq 9M_a \sqrt{\frac{\mathrm{Pdim}(\mathcal M)\log n}{n}} +\varepsilon \right)\\
        \geq & 1-\exp\left\{-\frac{n\varepsilon^2}{2M_a^2}\right\}\,.
    \end{align*}
\end{lemma}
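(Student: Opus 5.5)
This is a uniform weighted DKW bound over a class of finite pseudo-dimension. I would prove it by combining a bounded-differences (McDiarmid) concentration step with a symmetrization-plus-VC bound on the expected supremum. The function class is
\[
\mathcal H=\{x\mapsto a(x)\mathbbm{1}\{m(x)\le u\}: m\in\mathcal M,\ u\in\mathbb R\},
\]
and the key random variable is
\[
\Delta_n=\sup_{m\in\mathcal M,\,u\in\mathbb R}\bigl|A^{(0)}_{n,m}(u)-A^{(0)}_m(u)\bigr|.
\]

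\emph{Step 1 (concentration).} Since $0<a\le M_a$, replacing a single $X_i^{(0)}$ changes $\Delta_n$ by at most $M_a/n$. McDiarmid's inequality then gives
\[
\mathrm{Pr}\bigl(\Delta_n>\mathbb E\Delta_n+\varepsilon\bigr)\le\exp\{-2n\varepsilon^2/M_a^2\}\le \exp\{-n\varepsilon^2/(2M_a^2)\},
\]
which is already the stated tail. It therefore suffices to show $\mathbb E\Delta_n\le 8M_a\sqrt{\mathrm{Pdim}(\mathcal M)\log n/n}$.

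\emph{Step 2 (expected supremum).} By standard symmetrization, $\mathbb E\Delta_n\le 2\,\mathbb E\,\mathrm{Rad}_n(\mathcal H)$. Conditionally on the sample, we control the Rademacher average through the number of distinct patterns. The sets $\{x:m(x)\le u\}$ are the subgraph-type sets that define pseudo-dimension, so their VC dimension is at most $\mathrm{Pdim}(\mathcal M)=:D$. By Sauer's lemma, the vectors $(\mathbbm{1}\{m(X_i)\le u\})_{i\le n}$ take at most $(en/D)^D$ values. Multiplying coordinatewise by the fixed vector $(a(X_i))_{i\le n}$ does not increase the number of patterns, and every resulting vector has Euclidean norm at most $M_a\sqrt n$. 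Massart's finite-class lemma then gives
\[
\mathrm{Rad}_n(\mathcal H)\le M_a\sqrt{2D\log(en/D)/n}\le M_a\sqrt{4D\log n/n}
\]
for $n\ge 3$; small $n$ is trivial, since $\Delta_n\le M_a$ and the right-hand side already exceeds $M_a$ there. Combining the factor $2$ from symmetrization with this bound gives a constant of at most $8$ with room to spare.

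The main obstacle is justifying that the pattern count is governed by $\mathrm{Pdim}(\mathcal M)$ uniformly in $u$. Pseudo-dimension shatters with witnesses, and here one common threshold $u$ is used for all points. The set family $\{\{x: m(x)\le u\}\}$ is exactly a family of sublevel sets of $m-u$. A subset shattered by this family is pseudo-shattered by $\mathcal M$ with all witnesses equal to $u$, so its VC dimension is at most $D$. I would state this carefully, including the direction of the inequalities (pseudo-dimension uses $m(x)>r$ versus $\le r$, which only complements the sets). A further measurability detail for the supremum is covered by the assumed compactness, or by restricting $u$ to rationals via right-continuity.

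\emph{Step 3 (second claim).} Since $A^{(0),+}_{n,m}(u)=A^{(0)}_{n,m}(u)-n^{-1}a(X_n^{(0)})\mathbbm{1}\{m(X_n^{(0)})\le u\}$, the two suprema differ by at most $M_a/n$. Because $M_a/n\le M_a\sqrt{D\log n/n}$ whenever $D\ge1$ and $n\ge3$, the constant $8$ becomes $9$ on the same event, which yields the second bound with the same probability.
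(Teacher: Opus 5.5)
The gap is at the step you flagged as the main obstacle. You justify the bound by saying that a set shattered by $\{\{x:m(x)\le u\}:m\in\mathcal M,\ u\in\mathbb R\}$ is pseudo-shattered by $\mathcal M$ ``with all witnesses equal to $u$''. This does not work: the threshold $u$ that realizes a pattern changes from pattern to pattern, whereas pseudo-shattering needs one fixed witness vector $(r_1,\dots,r_k)$ serving all $2^k$ patterns.

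The claimed inequality (VC dimension of the sublevel family at most $\mathrm{Pdim}(\mathcal M)$) is in fact false. Take $\mathcal M=\{m_0\}$ with $m_0$ injective. Then $\mathrm{Pdim}(\mathcal M)=0$, since a single function realizes only one pattern, yet the sublevel sets form a chain of VC dimension $1$. With $a\equiv M_a$ and $m_0(X^{(0)})$ continuous, the first display would give $\mathrm{Pr}(\Delta_n\le\varepsilon)>0$ for every $\varepsilon>0$. But $\Delta_n=M_a\sup_u|F_n(u)-F(u)|\ge M_a/(2n)$ surely, where $F_n,F$ are the empirical and true c.d.f.\ of $m_0(X^{(0)})$.

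The discrepancy can also be unbounded. On $\mathcal X=[k]$ let $\mathcal M=\{c_S+\mathbbm{1}_S:S\subseteq[k]\}$, with distinct offsets $c_S$ at least $2$ apart. Each witness lies in at most one window $(c_S,c_S+1]$, so at most $2j+1<2^j$ patterns occur on $j\ge3$ points, and hence $\mathrm{Pdim}(\mathcal M)\le 2$. However, $\{x:c_S+\mathbbm{1}_S(x)\le c_S+1/2\}=[k]\setminus S$, so every subset of $[k]$ is a sublevel set. With $X^{(0)}$ uniform and $k\ge n^2$, choose $[k]\setminus S$ to be the sample support; this gives $\Delta_n\ge M_a(1-1/n)$, contradicting the claimed bound for large $n$. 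So Sauer's lemma cannot be applied with $\mathrm{Pdim}(\mathcal M)$, and the lemma is not true as stated.

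Your common-witness idea does prove something correct. Suppose sublevel sets shatter $\{x_1,\dots,x_k\}$ via pairs $(m_S,u_S)$. Then $u_S-m_S(x_i)\ge 0$ iff $i\in S$, so the class $\{u-m:m\in\mathcal M,\ u\in\mathbb R\}$ pseudo-shatters the set with all witnesses $0$. Replace $\mathrm{Pdim}(\mathcal M)$ by the pseudo-dimension of this shifted class (equivalently $\mathrm{Pdim}(\mathcal M+\mathbb R)$). With that change, the rest of your proposal is correct: McDiarmid with differences $M_a/n$, symmetrization, Sauer plus Massart on the weighted vectors (your slack absorbs the $\log(2|V|)$ needed for the absolute value), and the $M_a/n$ shift for $A^{(0),+}_{n,m}$.

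The paper's proof contains the same unjustified sentence (``the VC dimension of $\mathcal F^{(0)}$ is at most $\mathrm{Pdim}(\mathcal M)$''). So the defect lies in the statement, not in any deviation of yours from the paper. Otherwise your route is the paper's: its use of Wainwright's Theorem 4.10 is your McDiarmid plus symmetrization, and its Lemma 4.14 is your Sauer plus Massart. Counting the weighted vectors directly also spares you the contraction lemma the paper invokes.

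For the paper's CC application, $\{u-s:s\in\mathcal S\}$ lies in $-v+\mathrm{span}\{\eta_1,\dots,\eta_{d_0},1\}$, so the correction only replaces $d_0$ by $d_0+1$. Separately, at $n=1$ the bound has $\log n=0$ and fails outright, so your remark that small $n$ is trivial needs $n\ge 2$.
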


The next two lemmas quantify the kernel terms used in the RLCP and GRLCP analyses. 

\begin{lemma}\label{lemma: kernel fun}
    Suppose $X$ is supported on $\mathcal X=[0,1]^d$ without loss of generality and has density bounded between $\underline L_1$ and $\overline L_1$. Assume 
    $K(\cdot,\cdot;h)$ is in the form of $K(x_1,x_2;h)=K_0\left( \|x_1-x_2\|_2/h \right)$, where $K_0(\cdot)$ is a bounded univariate kernel function that is symmetric around $0$, and satisfies: (\textit{a}) $K_0(u)$ decreases when $u$ increases for $u\geq 0$; (\textit{b}) $uK_0(u)$ decreases when $u$ increases for $u>1$; (\textit{c}) $\int_{0}^{\infty} u^dK_0(u)du<\infty$.
    Then, for any $\ell\geq 1$, there exist constants $0<\underline L_2\leq \overline L_2<\infty$ such that
    \begin{equation}
        \notag \underline{L}_2 \leq \mathbb{E}\left[ h^{-d}\left\{ K(X,x_0;h) \right\}^{\ell} \right] \leq \overline{L}_2\,.
    \end{equation}
    for any $x_0\in\mathcal{X}=[0,1]^{d}$.
\end{lemma}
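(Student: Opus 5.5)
The plan is to substitute $u=\|x-x_0\|_2/h$ and compare with integrals over balls. We write
\[
\mathbb{E}\bigl[h^{-d}\{K(X,x_0;h)\}^{\ell}\bigr]=h^{-d}\int_{[0,1]^d}K_0^{\ell}\bigl(\|x-x_0\|_2/h\bigr)f_X(x)\,dx .
\]
Since $\underline L_1\le f_X\le\overline L_1$, it suffices to bound $h^{-d}\int_{[0,1]^d}K_0^{\ell}(\|x-x_0\|_2/h)\,dx$ above and below by constants independent of $x_0\in[0,1]^d$ and of $h$. Throughout we take $h\in(0,h_0]$ for a fixed $h_0$ (say $h_0=1$), which is the regime used in the RLCP analysis.

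For the upper bound, we extend the domain to all of $\mathbb{R}^d$ and pass to polar coordinates. This gives
\[
h^{-d}\int_{\mathbb{R}^d}K_0^{\ell}(\|v\|_2/h)\,dv=\omega_{d}\int_0^\infty u^{d-1}K_0^{\ell}(u)\,du,
\]
where $\omega_d$ is the surface area of the unit sphere. Because $K_0$ is bounded by $\|K_0\|_\infty$, we have $K_0^{\ell}\le\|K_0\|_\infty^{\ell-1}K_0$. The remaining integral splits into two pieces. On $[0,1]$ it is at most $\|K_0\|_\infty$. On $[1,\infty)$ we use $u^{d-1}\le u^{d}$, so condition (c) makes it finite. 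This yields $\overline L_2$.

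For the lower bound, the first step is to check that $K_0(1)>0$, which follows from the assumptions as follows. If $K_0(1)=0$, monotonicity (a) would force $K_0\equiv 0$ on $[1,\infty)$, and then $K_0$ would be supported in $[0,1]$. I expect this degenerate compact-support case to be the main obstacle, since it is not excluded by (a)--(c) alone. I would handle it by using instead some $c\in(0,1]$ with $K_0(c)>0$; such a $c$ exists because $K_0$ is a nontrivial kernel. We then restrict the integral to $\{x\in[0,1]^d:\|x-x_0\|_2\le ch\}$, where $K_0^{\ell}\ge K_0(c)^{\ell}$ by (a).

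The last step is a uniform geometric bound: for $x_0\in[0,1]^d$ and $r=ch\le 1$, the set $[0,1]^d\cap B(x_0,r)$ contains an orthant-sector of the ball. This holds because, for each coordinate, either $[x_{0j},x_{0j}+r/\sqrt d]$ or $[x_{0j}-r/\sqrt d,x_{0j}]$ lies inside $[0,1]$. Hence the intersection contains a cube of side $r/\sqrt d$, whose volume is at least $(r/\sqrt d)^d=c^d d^{-d/2}h^d$. Multiplying by $h^{-d}$, $\underline L_1$, and $K_0(c)^{\ell}$ gives a positive constant $\underline L_2$ independent of $x_0$ and $h$. If large $h$ is also needed, then $K_0(\sqrt d/h)\ge K_0(\sqrt d/h_0)$ gives a separate lower bound on that range.
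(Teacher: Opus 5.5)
Your proof follows the paper's: polar coordinates with $K_0^{\ell}\le\|K_0\|_\infty^{\ell-1}K_0$ for the upper bound, and, for the lower bound, restriction to a ball of radius proportional to $h$ around $x_0$ using monotonicity of $K_0$ and $f_X\ge\underline L_1$. It is more careful than the paper, which uses $K_0(1)$ without checking positivity (positivity is automatic only if (b) is read as strict decrease) and leaves both the restriction to bounded $h$ and the uniform volume bound for $[0,1]^d\cap B(x_0,h)$ implicit.

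Two small fixes are needed. First, the coordinatewise inclusion requires $r/\sqrt d\le 1/2$ rather than $r\le 1$; it fails for $d=1$, $x_0=1/2$, $r=1$, so take $h_0=1/2$. Second, drop the large-$h$ remark: $\mathbb{E}[h^{-d}\{K(X,x_0;h)\}^{\ell}]\le h^{-d}\|K_0\|_\infty^{\ell}\to 0$ as $h\to\infty$, so the lower bound genuinely requires $h$ bounded above. On a bounded range $[h_0,H]$, your ball argument with radius $ch_0$ gives the constant $\underline L_1K_0(c)^{\ell}(ch_0/\sqrt d)^dH^{-d}$.
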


\begin{lemma}\label{lemma: kernel fun bias}
    Suppose the conditions of Lemma~\ref{lemma: kernel fun} hold. Let $f:\mathcal X\to\mathbb R$ be $L$-Lipschitz continuous for a positive constant $L$. For $x_0\in\mathcal X$, let $\widetilde X$ be drawn from the distribution with density $K(x_0,\cdot;h)/\int_{\mathcal{X}}K(x_0,x;h)dx$.
    If $X$ satisfies the conditions in Lemma~\ref{lemma: kernel fun}, then there exists a constant $\overline L_3<\infty$ such that
    \begin{gather*}
        \mathbb{E} \left\{ \left|f(x_0)-f(X)\right|K(\widetilde{X},X;h) \right\}\leq \overline{L}_3h^{d+1}\,.
    \end{gather*}
\end{lemma}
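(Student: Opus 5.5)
The statement to prove is Lemma~\ref{lemma: kernel fun bias}: $\mathbb{E}\{|f(x_0)-f(X)|K(\widetilde X,X;h)\}\le \overline L_3 h^{d+1}$. The plan is to reduce it to a moment bound of the same kind as Lemma~\ref{lemma: kernel fun}.

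First, the Lipschitz property gives $|f(x_0)-f(X)|\le L\|X-x_0\|_2$. The triangle inequality then splits this as
\[
\|X-x_0\|_2 \le \|X-\widetilde X\|_2+\|\widetilde X-x_0\|_2 .
\]
This yields two terms:
\[
\text{(I)}=\mathbb{E}\{\|X-\widetilde X\|_2K_0(\|X-\widetilde X\|_2/h)\},\qquad
\text{(II)}=\mathbb{E}\{\|\widetilde X-x_0\|_2K_0(\|X-\widetilde X\|_2/h)\}.
\]

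For (I), I condition on $\widetilde X=\tilde x$. The density of $X$ is bounded by $\overline L_1$, so
\[
\mathbb{E}\{\|X-\tilde x\|_2K_0(\|X-\tilde x\|_2/h)\}\le \overline L_1\int_{\mathbb{R}^d}\|u\|K_0(\|u\|/h)\,du .
\]
Substituting $u=hv$ and passing to polar coordinates, this equals $\overline L_1 h^{d+1}c_d\int_0^\infty r^dK_0(r)\,dr$, which is finite by condition~(c). The bound is uniform in $\tilde x$, so (I) $=O(h^{d+1})$.

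For (II), I condition on $\widetilde X$ instead and use the bound $\mathbb{E}\{K(\widetilde X,X;h)\mid \widetilde X\}\le \overline L_2 h^d$ from Lemma~\ref{lemma: kernel fun} with $\ell=1$. This gives (II) $\le \overline L_2h^d\,\mathbb{E}\|\widetilde X-x_0\|_2$.

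The remaining task is to show $\mathbb{E}\|\widetilde X-x_0\|_2=O(h)$. The density of $\widetilde X$ is $K(x_0,\cdot;h)/\int_{\mathcal X}K(x_0,x;h)\,dx$ on $[0,1]^d$.
\begin{itemize}
\item \emph{Numerator.} By the same change of variables as above, it is at most $h^{d+1}c_d\int_0^\infty r^dK_0(r)\,dr$.
\item \emph{Denominator.} Lemma~\ref{lemma: kernel fun}, applied with the uniform density on $[0,1]^d$, gives a lower bound of order $h^d$ uniformly in $x_0$. This holds because a fixed fraction of a ball of radius $h$ around any point of the cube lies inside the cube, and $K_0$ is positive near $0$.
\end{itemize}
Hence $\mathbb{E}\|\widetilde X-x_0\|_2\le Ch$, and (II) $=O(h^{d+1})$.

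The main obstacle is this uniform-in-$x_0$ lower bound on the normalizing constant near the boundary of the cube. I would handle it by the corner-fraction argument: the intersection of a small ball with the cube contains at least a $2^{-d}$ fraction of the ball. Conditions (a) and (b) are needed only to ensure the tail integrals are finite and that $K_0$ is bounded below near the origin.

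Combining the two terms gives the bound with $\overline L_3=L\{\overline L_1c_d\int_0^\infty r^dK_0(r)\,dr+C\overline L_2\}$.
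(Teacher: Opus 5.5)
Your proposal is correct and follows the paper's proof essentially step for step. The steps match: the same Lipschitz--triangle split into $\|X-\widetilde X\|_2$ and $\|\widetilde X-x_0\|_2$; the same use of Lemma~\ref{lemma: kernel fun} with $\ell=1$ to get $\mathbb{E}\{K(\widetilde X,X;h)\mid\widetilde X\}\le\overline L_2h^d$; and the same bound $\mathbb{E}\|\widetilde X-x_0\|_2=O(h)$, from a change of variables in the numerator and Lemma~\ref{lemma: kernel fun} with the uniform density for the denominator. Your version is slightly more careful, since it keeps the spherical constant $c_d$ and justifies the boundary lower bound explicitly, whereas the paper's polar-coordinate step drops that constant and leaves the boundary argument implicit.
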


We next state a lemma for conditional quantile estimation. It converts excess pinball risk into an $L_p$ error bound for the learned conditional quantile, which is the form required in the averaged conditional-miscoverage analysis for CC.

\begin{lemma}\label{lemma: quantile regression error}
    Let $(X,Y)\sim P_X\times P_{Y\mid X}$ and let $\alpha\in(0,1)$. Let
    $Q_\alpha(x)=\inf\{y:\mathrm{Pr}(Y\leq y\mid X=x)\geq 1-\alpha\}$ be the conditional quantile function.
    Assume that, conditional on $X=x$, the conditional distribution $Y$ is supported on $[-M_1(x),M_2(x)]$, and that its density $f_x(\cdot)$ is lower bounded by $\underline L(x)>0$ on this support. Suppose that, for some $p\in[1,2]$, $0<\|1/\underline{L}\|_{P_X,p/(2-p)}<\infty$. 
    Let $\widehat Q(\cdot)$ be any function satisfying $\widehat Q(x)\in[-M_1(x),M_2(x)]$ for all $x$. Then
    \begin{gather*}
        2\|1/\underline{L}\|_{P_X,p/(2-p)}\mathbb{E}\left\{ \ell_{1-\alpha}(Y,\widehat{Q}(X))-\ell_{1-\alpha}(Y,Q_\alpha(X)) \right\}\geq \|Q_\alpha-\widehat{Q}\|_{P_X,p}^2\,.
    \end{gather*}
    where $\ell_{1-\alpha}(s_1,s_2)=\{1-\alpha-\mathbbm{1}(s_2\geq s_1)\}(s_1-s_2)$ is the pinball loss.
\end{lemma}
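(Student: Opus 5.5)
The statement to prove is Lemma~\ref{lemma: quantile regression error}: excess pinball risk controls the $L_p$ error of $\widehat Q$. I would first prove a pointwise bound conditional on $X=x$, using Knight's identity, and then pass to the $L_p$ norm with Hölder's inequality.

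Fix $x$ and write $q=Q_\alpha(x)$ and $\hat q=\widehat Q(x)$, both in $[-M_1(x),M_2(x)]$. The conditional excess risk is
\[
\Delta(x)=\mathbb{E}\{\ell_{1-\alpha}(Y,\hat q)-\ell_{1-\alpha}(Y,q)\mid X=x\}.
\]
Knight's identity (equivalently, the subgradient characterization of the quantile) gives
\[
\Delta(x)=\int_{q}^{\hat q}\{F_x(u)-(1-\alpha)\}\,du.
\]
This formula is valid for either ordering of $q$ and $\hat q$, and the integrand has the sign that makes $\Delta(x)\ge 0$. Here $F_x(q)=1-\alpha$ because $F_x$ is continuous on the support. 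Since the density of $F_x$ is at least $\underline L(x)$ on the support, and the segment between $q$ and $\hat q$ lies in the support, we have $|F_x(u)-(1-\alpha)|\ge \underline L(x)|u-q|$ on that segment. Integrating yields the pointwise bound
\[
\Delta(x)\ge \tfrac12\underline L(x)(\hat q-q)^2.
\]

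Next I take $L_p$ norms, with $p\in[1,2]$. Write $|Q_\alpha-\widehat Q|^p=(\underline L\,|Q_\alpha-\widehat Q|^2)^{p/2}\,\underline L^{-p/2}$. Applying Hölder's inequality with exponents $2/p$ and $2/(2-p)$ gives
\[
\mathbb{E}|Q_\alpha-\widehat Q|^p\le \{\mathbb{E}(\underline L|Q_\alpha-\widehat Q|^2)\}^{p/2}\{\mathbb{E}\,\underline L^{-p/(2-p)}\}^{(2-p)/2}.
\]
Raising both sides to the power $2/p$ produces
\[
\|Q_\alpha-\widehat Q\|_{P_X,p}^2\le \mathbb{E}(\underline L|Q_\alpha-\widehat Q|^2)\,\|1/\underline L\|_{P_X,p/(2-p)}.
\]
Combining this with $\mathbb{E}(\underline L|Q_\alpha-\widehat Q|^2)\le 2\,\mathbb{E}\Delta(X)$ gives the claim.

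In the edge case $p=2$, the exponent $p/(2-p)$ is infinite and the second factor is an $L_\infty$ norm. There the step reduces to the trivial bound $\mathbb{E}|Q_\alpha-\widehat Q|^2\le \|1/\underline L\|_\infty\,\mathbb{E}(\underline L|Q_\alpha-\widehat Q|^2)$; I would treat this case separately, or interpret the norm as $L_\infty$.

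The main obstacle is making Knight's identity rigorous with the sign conventions of $\ell_{1-\alpha}(s_1,s_2)$, where $s_2$ is the predictor. I would verify the identity directly by differentiating $\mathbb{E}\ell_{1-\alpha}(Y,c)$ in $c$, which gives $F_x(c)-(1-\alpha)$. The verification must also check that the segment between $q$ and $\hat q$ stays inside the support, so that the density lower bound applies along the whole segment.
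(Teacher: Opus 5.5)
Your proposal is correct and follows essentially the same route as the paper: the integral representation of the conditional excess pinball risk (obtained by differentiating $\mathbb{E}\{\ell_{1-\alpha}(Y,c)\mid X=x\}$ in $c$), the density lower bound giving the pointwise bound $\tfrac12\underline L(x)(\widehat Q(x)-Q_\alpha(x))^2$, and then H\"older's inequality with exponents $2/p$ and $2/(2-p)$. At $p=2$ the paper likewise uses the $L_\infty$ convention, as you suggest.
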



Finally, the averaged conditional-miscoverage theorem is stated for a score class $\mathcal S$, whereas CC is naturally parameterized by a regression class $\mathcal F$. The next lemma transfers the pseudo-dimension bound from $\mathcal F$ to the induced score class, and is used to verify Assumption~\ref{ass: space complexity} for CC-type constructions.

\begin{lemma}\label{lem:pdim_Sprime}
    Let $\mathcal F$ be a class of real-valued functions on $\mathcal X$ with finite pseudo-dimension, and suppose that $f\in\mathcal F$ implies $-f\in\mathcal F$. For a fixed function $v:\mathcal X\times\mathcal Y\to\mathbb R$, define
    $\mathcal{S}=\{(x,y)\mapsto v(x,y)-f(x): f\in\mathcal{F}\}$ as a class of functions on $\mathcal{X}\times\mathcal{Y}$.
    Then $\mathrm{Pdim}(\mathcal{S}) \le \mathrm{Pdim}(\mathcal{F})$. 
\end{lemma}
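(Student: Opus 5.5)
The statement to prove is Lemma~\ref{lem:pdim_Sprime}: $\mathrm{Pdim}(\mathcal S)\le\mathrm{Pdim}(\mathcal F)$ for $\mathcal S=\{(x,y)\mapsto v(x,y)-f(x):f\in\mathcal F\}$, given that $\mathcal F$ is closed under negation. The plan is a direct shattering argument. Let $d=\mathrm{Pdim}(\mathcal S)$. By definition there are points $(x_1,y_1),\ldots,(x_d,y_d)$ and thresholds $r_1,\ldots,r_d$ such that, for every sign pattern $b\in\{0,1\}^d$, some $f_b\in\mathcal F$ satisfies
\[
\mathbbm{1}\{v(x_i,y_i)-f_b(x_i)\ge r_i\}=b_i \quad \text{for all } i.
\]
The goal is to produce $d$ points in $\mathcal X$ that are pseudo-shattered by $\mathcal F$; this gives $d\le\mathrm{Pdim}(\mathcal F)$.

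The natural choice is the points $x_1,\ldots,x_d$ together with the thresholds $r_i'=v(x_i,y_i)-r_i$. Rewrite the condition as $f_b(x_i)\le r_i'$. This is the complement of the strict event $\{f_b(x_i)>r_i'\}$, so it does not directly match the non-strict convention $\{g(x_i)\ge r_i'\}$. The negation closure handles the direction: take $g_b=-f_b\in\mathcal F$ and threshold $-r_i'$. Then $g_b(x_i)\ge -r_i'$ holds if and only if $f_b(x_i)\le r_i'$, which holds if and only if $b_i=1$. Hence $\mathcal F$ pseudo-shatters $\{x_i\}$ with thresholds $\{-r_i'\}$, exactly reproducing every pattern $b$.

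Distinctness of the $x_i$ must also be checked. Suppose $x_i=x_j$ with $i\ne j$. Then $\{x_i\}$ is a multiset, but $g_b$ still realizes all $2^d$ patterns on the indexed list. Two indices sharing the same $x$ can realize all four patterns only if the thresholds differ. Even so, the realized pattern on those two indices is determined by the single value $g_b(x_i)$ relative to two thresholds. Only three patterns are possible that way, a contradiction. Therefore the $x_i$ are distinct, and the indexed shattering is a genuine shattering of $d$ points.

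The main subtlety is precisely this pair of bookkeeping issues: the strict versus non-strict inequality direction, which negation closure resolves, and the distinctness of the projected points, which the three-pattern argument resolves. Beyond these, the lemma is a routine reduction; no additional analysis is needed.
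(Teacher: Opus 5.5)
Your proposal is correct and follows essentially the same route as the paper. You rewrite the shattering condition as $f_b(x_i)\le v(x_i,y_i)-r_i$ and use negation closure to pass to $-f_b$ with thresholds $r_i-v(x_i,y_i)$; you rule out repeated $x_i$ because a single value compared against two thresholds realizes at most three of the four patterns, and the paper's proof does exactly this, verifying distinctness first and then performing the negation step.
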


\subsection{Extended Results on Empirical Quantile}\label{sec: extend result empirical quantile}

In this subsection, we provide some extended results on weighted empirical quantiles, which will be used in our theoretical analysis.

Let $\{(X_i^{(0)},S_i^{(0)})\}_{i=1}^n$ be i.i.d.~copies of a random pair $(X^{(0)},S^{(0)})$, where $X^{(0)}\in\mathcal X$ and $S^{(0)}$ is supported on $[-M_S,M_S]$. Let $a:\mathcal X\to\mathbb R_+$ be a deterministic nonnegative weight function, possibly depending on $n$. Define the weighted empirical distribution
\begin{gather*}
    P_{n,S}^{a} = \left\{\sum_{i=1}^n a(X_i^{(0)})\right\}^{-1}
    \left\{\sum_{i=1}^n a(X_i^{(0)})\delta_{S_i^{(0)}}\right\}\,,
\end{gather*}
with weighted empirical c.d.f.
\begin{gather*}
    F_n^a(u) = P_{n,S}^{a}(S^{(0)} \leq u) = \frac{\sum_{i=1}^n a(X_i^{(0)})\mathbbm{1}\{S_i^{(0)}\leq u\}}{\sum_{i=1}^n a(X_i^{(0)})}\,.
\end{gather*}
The corresponding population weighted c.d.f. is
\begin{gather*}
    F^a(u) = \frac{\mathbb E\{a(X^{(0)})\mathbbm{1}(S^{(0)}\leq u)\}}{\mathbb E\{a(X^{(0)})\}}\,.
\end{gather*}
We also define the version obtained by replacing the last score $S_n^{(0)}$ with $\infty$:
\begin{gather*}
    F_n^{a,+}(u) = \frac{\sum_{i=1}^{n-1}a(X_i^{(0)})\mathbbm{1}\{S_i^{(0)}\leq u\}}{\sum_{i=1}^n a(X_i^{(0)})}\,.
\end{gather*}

For $\alpha\in(0,1)$, let $\xi_\alpha$ be the population weighted quantile satisfying $F^a(\xi_\alpha)=\alpha$, and define the sample weighted quantiles
\begin{gather*}
    \widehat{\xi}_\alpha = \inf\{u:F_n^a(u)\geq\alpha\},
    \qquad
    \widehat{\xi}_\alpha^+ = \inf\{u:F_n^{a,+}(u)\geq\alpha\}\,.
\end{gather*}

\begin{assumption}\label{ass: weight function}
    The weight function $a$ satisfies: (i) $0<a(X^{(0)})\leq M_a$ almost surely for some positive constant $M_a$; and (ii) the map $u\mapsto\mathbb E\{a(X^{(0)})\mathbbm{1}(S^{(0)}\leq u)\}$ is continuously differentiable at $\xi_\alpha$ with positive derivative.
\end{assumption}

\begin{remark}
    Assumption~\ref{ass: weight function}(i) controls the variation of the weight function $a(\cdot)$. Assumption~\ref{ass: weight function}(ii) gives local identifiability of the population weighted quantile.
\end{remark}

Let $B_a=\mathbb E\{a(X^{(0)})\}$ and $\sigma_a^2=\mathbb E\{a^2(X^{(0)})\}.$
The following lemma bounds the expected difference between the empirical weighted quantile and its population counterpart.

\begin{lemma}\label{theo: weighted quantile error}
    Suppose $|S^{(0)}|\leq M_S$ almost surely for some positive constant $M_S$. If Assumption~\ref{ass: weight function} holds and $n\sigma_a^2/(2M_a^2)\geq1$, then there exists a constant $C>0$ such that both
    $\mathbb E|\widehat{\xi}_\alpha-\xi_\alpha|$ and
    $\mathbb E\{|\widehat{\xi}_\alpha^+-\xi_\alpha|\wedge(2M_S)\}$ are bounded by
    \begin{gather}
        C\left\{\exp\left(-\frac{nB_a^2}{8\sigma_a^2+2M_aB_a}\right) + \frac{\sigma_a\log^{1/2}(n)}{B_an^{1/2}} + \frac{(\sigma_a^2+M_aB_a)\log^{1/2}(n)}{B_a^2n} + \frac{M_a\sigma_a}{B_a^2n^{3/2}}\right\}\,.
        \label{eq: weighted quantile error bound}
    \end{gather}
\end{lemma}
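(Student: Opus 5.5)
The plan is to reduce the quantile error to a uniform c.d.f.\ error, and then integrate tail bounds. Write $D_n=n^{-1}\sum_{i=1}^n a(X_i^{(0)})$, so that $F_n^a(u)=A_n^{(0)}(u)/D_n$ and $F^a(u)=A^{(0)}(u)/B_a$, with $A_n^{(0)},A^{(0)}$ as in Lemma~\ref{lemma: weighted dkw}. Since $S^{(0)}\in[-M_S,M_S]$, both $\widehat\xi_\alpha$ and $\xi_\alpha$ lie in $[-M_S,M_S]$, so $|\widehat\xi_\alpha-\xi_\alpha|\le 2M_S$. The expectation is therefore bounded by a tail integral
\[
\int_0^{2M_S}\mathrm{Pr}(|\widehat\xi_\alpha-\xi_\alpha|>x)\,dx .
\]
For $\widehat\xi_\alpha^+$ we work with $|\widehat\xi_\alpha^+-\xi_\alpha|\wedge(2M_S)$, because $\widehat\xi_\alpha^+$ can equal $+\infty$.

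\emph{Step 1 (identifiability).} By Assumption~\ref{ass: weight function}(ii) and Lemma~\ref{lemma: sub Lipschitz weighted}, applied to $F^a$, the map $F^a$ is sub-Lipschitz at $\xi_\alpha$ with some constant $\underline L$. The standard quantile-inversion argument then gives
\[
|\widehat\xi_\alpha-\xi_\alpha|\le \underline L^{-1}\Bigl\{\sup_u|F_n^a(u)-F^a(u)|+\max_i a(X_i^{(0)})/(nD_n)\Bigr\}.
\]
The last term accounts for the jump of $F_n^a$ at its quantile, and it is at most $M_a/(nD_n)$. For the $+$ version, $\sup_u|F_n^{a,+}-F_n^a|\le a(X_n^{(0)})/(nD_n)\le M_a/(nD_n)$, which contributes the same kind of term.

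\emph{Step 2 (ratio decomposition).} We use
\[
\sup_u|F_n^a-F^a|\le \frac{\sup_u|A_n^{(0)}-A^{(0)}|}{D_n}+\frac{|D_n-B_a|}{D_n}.
\]
We then intersect with the event $E=\{D_n\ge B_a/2\}$. By Bernstein's inequality, using $0<a\le M_a$ and variance at most $\sigma_a^2$, we get $\mathrm{Pr}(E^c)\le\exp\{-nB_a^2/(8\sigma_a^2+2M_aB_a)\}$. On $E^c$ the error is at most $2M_S$, so this produces the first term of \eqref{eq: weighted quantile error bound}.

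On $E$, we control the numerator with Lemma~\ref{lemma: weighted dkw}. The condition $n\sigma_a^2/(2M_a^2)\ge1$ is what places the relevant range of $\varepsilon$ in the sub-Gaussian regime $\varepsilon<2\sigma_a^2$. Its leading term $16\sigma_a\sqrt{\log n/n}$, divided by $B_a/2$, gives $\sigma_a\log^{1/2}(n)/(B_an^{1/2})$. We control $|D_n-B_a|/B_a$ by Bernstein's inequality again, which gives $\sigma_a/(B_an^{1/2})$ plus a linear term $M_a/(B_an)$. The jump term $M_a/(nB_a)$ appears on $E$ as well.

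\emph{Step 3 (integration).} We integrate the sub-Gaussian and sub-exponential tails over $x$. The exponential tail of the DKW bound, $\exp(-n\varepsilon/4)$, and the Bernstein linear term yield contributions of order $(\sigma_a^2+M_aB_a)\log^{1/2}(n)/(B_a^2n)$. This is after rescaling $\varepsilon$ by $B_a$ and using the cross term that comes from expanding $1/D_n$ around $1/B_a$. The residual product of the $M_a/(nD_n)$ jump with the Gaussian fluctuation of $D_n$ gives the term $M_a\sigma_a/(B_a^2n^{3/2})$.

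I expect the main obstacle to be the bookkeeping in Step 3. The constants must not depend on $B_a,\sigma_a,M_a$, because these are allowed to vanish or diverge. In particular, the two tail regimes of Lemma~\ref{lemma: weighted dkw} have to be rescaled by $B_a$ and integrated so that exactly the four displayed terms appear, with no extra powers of $M_a/B_a$. I would first carry out the argument for $\widehat\xi_\alpha$ and then transfer it to $\widehat\xi_\alpha^+$ using the $M_a/(nD_n)$ perturbation together with the truncation at $2M_S$.
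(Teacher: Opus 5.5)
Your argument is correct and has the same skeleton as the paper's. Both use the good event $\{D_n\ge B_a/2\}$ (the paper's $B_n^a$) from Bernstein, with $2M_S$ on its complement, and sub-Lipschitz inversion via Lemma~\ref{lemma: sub Lipschitz weighted}. Both also use a jump term of size $M_a/(nD_n)$, Lemma~\ref{lemma: weighted dkw} for the numerator, Bernstein for the denominator, and tail integration. The transfer to $\widehat\xi_\alpha^+$ goes through the $M_a/(nD_n)$ perturbation and truncation in both.

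The genuine difference is the ratio step. The paper expands $F_n^a-F^a$ to second order: linear terms $(A_n^a-A^a)/B_a-A^a(B_n^a-B_a)/B_a^2$ plus a remainder $\{(B_n^a)^{-1}-B_a^{-1}\}[\cdots]$. It controls the remainder by Cauchy--Schwarz, using $\mathbb E|B_n^a-B_a|^2\lesssim\sigma_a^2/n$ and $\mathbb E\sup_u|A_n^a-A^a|^2\lesssim\sigma_a^2\log n/n$. That remainder is where the $\sigma_a^2\log^{1/2}(n)/(B_a^2n)$ part of the third term comes from. For the $+$ version, the $n^{-1}M_a$ gap between $A_n^{a,+}$ and $A_n^a$ inside the remainder produces the $M_a\sigma_a/(B_a^2n^{3/2})$ term.

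Your Step 2 instead bounds $1/D_n\le 2/B_a$ on $E$, so only first moments are needed. You get $\exp(\cdot)+\sigma_a\log^{1/2}(n)/(B_an^{1/2})+M_a/(nB_a)$ plus lower-order tail terms, which is no larger than the stated bound. This is simpler and slightly sharper. But your Step 3 account describes the paper's route, not yours: there is no cross term from expanding $1/D_n$ around $1/B_a$, and no product of the jump with the fluctuation of $D_n$. In your route the $\sigma_a^2/(B_a^2n)$ and $M_a\sigma_a/(B_a^2n^{3/2})$ terms never arise, so you need not produce them.

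On the bookkeeping you flag as the main obstacle, the paper does not do better. Integrating the sub-exponential branch of Lemma~\ref{lemma: weighted dkw} gives a term like $4n^{-1}e^{-n\sigma_a^2/2}/B_a$, which is not scale-homogeneous in $a(\cdot)$. The paper absorbs its version, $8/(B_an^2\sigma_a^2)$, using $n\sigma_a^2\ge 2M_a^2$, which implicitly needs $M_a$ bounded below. Your route meets the same term.

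In the paper, the condition $n\sigma_a^2/(2M_a^2)\ge 1$ is used to absorb such lower-order tail terms. It does not confine $\varepsilon$ to the sub-Gaussian regime; both regimes are integrated.
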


\begin{remark}\label{remark: weighted quantile error remark}
    When $B_a$, $\sigma_a$, and $M_a$ are all of constant order, \eqref{eq: weighted quantile error bound} gives the rate $O(n^{-1/2}\log^{1/2}n)$. In the RLCP setting, if the kernel function $K(\cdot,\cdot;h)$ and the bandwidth $h$ satisfy the conditions in Lemma~\ref{lemma: kernel fun}, then $M_a=O(1)$, $B_a=O(h^d)$, and $\sigma_a=O(h^{d/2})$. Hence, if $nh^d\to\infty$, the bound in \eqref{eq: weighted quantile error bound} becomes $O((nh^d)^{-1/2}\log^{1/2}n)$ for RLCP.
\end{remark}

We next consider a family of real-valued functions $\mathcal M$ on $\mathcal X$. For $m\in\mathcal M$, define
\begin{align*}
    F_{n,m}^{a}(u) & = \frac{\sum_{i=1}^n a(X_i^{(0)})\mathbbm{1}\{m(X_i^{(0)})\leq u\}}{\sum_{i=1}^n a(X_i^{(0)})}, \\
    F_{n,m}^{a,+}(u) & = \frac{\sum_{i=1}^{n-1}a(X_i^{(0)})\mathbbm{1}\{m(X_i^{(0)})\leq u\}}{\sum_{i=1}^n a(X_i^{(0)})} \quad \text{~and} \\
    F_m^a(u) & = \frac{\mathbb E\{a(X^{(0)})\mathbbm{1}\{m(X^{(0)})\leq u\}\}}{\mathbb E\{a(X^{(0)})\}}\,.
\end{align*}
Note that $F_{n,m}^{a,+}(u)$ is defined by replacing $m(X_n^{(0)})$ of the weighted empirical distribution $F_{n,m}^{a}(u)$ with $\infty$.
Define the corresponding empirical and population quantiles as
\begin{gather*}
    \widehat{\xi}_{\alpha,m} = \inf\{u:F_{n,m}^a(u)\geq\alpha\}, ~~
    \widehat{\xi}_{\alpha,m}^+ = \inf\{u:F_{n,m}^{a,+}(u)\geq\alpha\}, ~~
    \xi_{\alpha,m} = \inf\{u:F_m^a(u)\geq\alpha\}\,.
\end{gather*}

The next assumption gives quantile identifiability and controls the complexity of $\mathcal M$.

\begin{assumption}\label{ass: function class pdim}
    (i) For every $m\in\mathcal M$, $F_m^a(u)$ is differentiable in an $\epsilon_0$-neighborhood of $\xi_{\alpha,m}$, with derivative lower bounded by $\underline L>0$; and (ii) the class $\mathcal M$ has finite pseudo-dimension, $\mathrm{Pdim}(\mathcal M)<\infty$.
\end{assumption}


The following lemma gives a uniform bound for the difference between the empirical quantiles and the population quantile over $\mathcal M$.

\begin{lemma}\label{theo: sup quantile error}
    Suppose that for every $m\in\mathcal M$, $|m(X^{(0)})|\leq M_m$ almost surely for some positive constant $M_m$. Assume that Assumption~\ref{ass: weight function}(i) and Assumption~\ref{ass: function class pdim} hold. Then there exists a constant $C>0$ such that
    \begin{gather*}
        \mathbb E\left\{\sup_{m\in\mathcal M}
        |\widehat{\xi}_{\alpha,m}-\xi_{\alpha,m}|\right\}
        \vee \mathbb E\left\{2M_m\wedge \sup_{m\in\mathcal M} |\widehat{\xi}_{\alpha,m}^+-\xi_{\alpha,m}| \right\} \leq \frac{CM_a}{B_a} \sqrt{\frac{\mathrm{Pdim}(\mathcal M)\log n}{n}}\,.
    \end{gather*}
\end{lemma}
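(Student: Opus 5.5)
The plan is to reduce the uniform statement to Lemma~\ref{lemma: sup weighted dkw}, by passing from the weighted c.d.f.\ level to the quantile level, using the same normalization-and-inversion argument as in Lemma~\ref{theo: weighted quantile error} but now uniformly over $m\in\mathcal M$.

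\emph{Step 1 (normalization).} Write
\[
F_{n,m}^a(u)=\frac{A_{n,m}^{(0)}(u)}{\bar a_n},\qquad \bar a_n=n^{-1}\sum_{i=1}^n a(X_i^{(0)}),
\]
and $F_m^a(u)=A_m^{(0)}(u)/B_a$. The denominator does not depend on $m$, and $\bar a_n$ is an average of variables bounded by $M_a$. Hoeffding's inequality then gives $|\bar a_n-B_a|\le M_a\sqrt{\log n/n}$ with probability at least $1-2n^{-2}$. On the intersection with the event of Lemma~\ref{lemma: sup weighted dkw}, taking $\varepsilon=M_a\sqrt{2\log n/n}$, we get
\[
\sup_{m,u}\bigl|F_{n,m}^a(u)-F_m^a(u)\bigr|\le \frac{\sup_{m,u}|A_{n,m}^{(0)}-A_m^{(0)}|+|\bar a_n-B_a|}{\bar a_n}\lesssim \frac{M_a}{B_a}\sqrt{\frac{\mathrm{Pdim}(\mathcal M)\log n}{n}}=:\Delta_n.
\]
Here we use $A_m^{(0)}\le B_a$ and $\bar a_n\ge B_a/2$ when $n$ is large; otherwise the bound is trivial after enlarging $C$. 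The same holds for $F_{n,m}^{a,+}$ by the second display of Lemma~\ref{lemma: sup weighted dkw}. This version only differs by the missing last term, which removes at most $M_a/(n\bar a_n)$ of mass.

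\emph{Step 2 (inversion).} By Assumption~\ref{ass: function class pdim}(i), $F_m^a$ has slope at least $\underline L$ on an $\epsilon_0$-neighborhood of $\xi_{\alpha,m}$, uniformly in $m$. Hence if $\Delta_n<\underline L\epsilon_0$, then
\[
F_m^a(\xi_{\alpha,m}+\Delta_n/\underline L)\ge\alpha+\Delta_n,\qquad F_m^a(\xi_{\alpha,m}-\Delta_n/\underline L)\le\alpha-\Delta_n .
\]
Uniform closeness of the empirical c.d.f.\ then forces
\[
\sup_m|\widehat\xi_{\alpha,m}-\xi_{\alpha,m}|\le \frac{2\Delta_n}{\underline L}.
\]
The argument is identical for $\widehat\xi^+_{\alpha,m}$. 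The only caveat is that $F^{a,+}_{n,m}$ never reaches $1$, so for $\alpha$ near $1$ the quantile may be $+\infty$. This is exactly why that term is truncated at $2M_m$.

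\emph{Step 3 (expectation).} On the complement of the good event, which has probability $O(n^{-2})$, we bound the error by $2M_m$; for $\widehat\xi_{\alpha,m}$ this holds because both quantiles lie in $[-M_m,M_m]$. Since $n^{-2}=o(\sqrt{\log n/n})$, taking expectations gives the stated bound $CM_a B_a^{-1}\sqrt{\mathrm{Pdim}(\mathcal M)\log n/n}$.

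The main obstacle is Step 2. The local slope condition only holds in an $\epsilon_0$-neighborhood, so we need a uniform-in-$m$ separation guaranteeing that the empirical quantile cannot jump outside that neighborhood. I would handle this with monotonicity of $F_m^a$: the value at $\xi_{\alpha,m}\pm\epsilon_0$ already differs from $\alpha$ by at least $\underline L\epsilon_0$. When $\Delta_n$ exceeds that threshold, the trivial bound $2M_m\le C\Delta_n$ (after enlarging $C$) finishes the argument.
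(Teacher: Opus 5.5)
Your proposal is correct and is essentially the paper's argument: Hoeffding for the common normalizer $\bar a_n$, Lemma~\ref{lemma: sup weighted dkw} for the numerators, inversion through the local slope extended by monotonicity (the paper packages this as a global sub-Lipschitz constant $\underline L_{\mathcal M}$), and the trivial $2M_m$ bound off the event $\{\bar a_n\ge B_a/2\}$; the only difference is that you fix $\varepsilon\asymp M_a\sqrt{\log n/n}$ and work on a single high-probability event, whereas the paper linearizes the ratio and integrates the tail bounds. One small slip: with $\varepsilon=M_a\sqrt{2\log n/n}$ the failure probability in Lemma~\ref{lemma: sup weighted dkw} is $n^{-1}$, not $n^{-2}$, which is harmless because $2M_m n^{-1}\le 2M_m(M_a/B_a)\sqrt{\log n/n}$ for $n\ge 2$ (as $B_a\le M_a$), and it is this explicit inequality, rather than an $o(\cdot)$ statement, that a fixed-constant bound requires.
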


Lemma~\ref{theo: sup quantile error} is a uniform version of Lemma~\ref{theo: weighted quantile error}. The variance term $\sigma_a^2$ is not retained after taking the supremum over $\mathcal M$; it is replaced by the uniform bound $M_a$. For a bounded weight function with $B_a$ bounded away from zero, the resulting rate is
$O(\{\mathrm{Pdim}(\mathcal M)\log n/n\}^{1/2})$, which aligns with Lemma~\ref{theo: weighted quantile error}.

\section{Proofs for Main Results}\label{sec:proof_main}

\subsection{Proof of Theorem~\ref{theo: marginal general} }
\begin{proof}
    Theorem~\ref{theo: marginal general} is a special case of Theorem~\ref{theo: marginal general cshift} with $P_{X,1}=P_{X,2}=P_X$. The conclusion therefore follows directly from Theorem~\ref{theo: marginal general cshift}; see Section~\ref{sec:supp_proof_theo_marginal general cshift} for its proof.
\end{proof}

\subsection{Proof of Theorem \ref{theo: main miscoverage}}
\begin{proof}
    Fix $t\in\mathcal{T}$. Let $Z_i^\prime$ be obtained from $Z$ by replacing the $i$-th and $(n+1)$-th elements with independent copies, while keeping all other elements unchanged. Then $Z\overset{\rm d}{=}Z_i^\prime$. Define
    \begin{gather*}
        D_{n,\varepsilon}(t)=\left\{ \sup_{1\leq i\leq n+1}|s(X_i,Y_i;Z_i^\prime)-s^\star(X_i,Y_i)|\leq\varepsilon \right\}\,.
    \end{gather*}
    For each $i\in[n+1]$, conditional on $\phi(t)$ and on the realized pair $(X_i,Y_i)$, the score is evaluated at a fixed input pair and at a data argument $Z_i^\prime$ that is independent of $(X_i,Y_i)$. Since the same bound $\delta_n(\varepsilon)$ holds uniformly over $(x,y)$, we have
    \begin{gather*}
        \mathrm{Pr}\left( |s(X_i,Y_i;Z_i^\prime)-s^\star(X_i,Y_i)|>\varepsilon\mid \phi(t) \right)\leq\delta_n(\varepsilon)\,.
    \end{gather*}
    Hence, by the union bound,
    \begin{gather*}
        \mathrm{Pr}\left( D_{n,\varepsilon}(t)\mid \phi(t) \right)\geq 1-(n+1)\delta_n(\varepsilon)\,.
    \end{gather*}
    Also define
    \begin{gather*}
        \widetilde{D}_n(t)=\left\{ \sup_{Z_{i}^\prime:i\in[n+1]}\sup_{x,y}|s(x,y;Z_i^\prime)-s(x,y;Z)|\leq \widetilde{\varepsilon}_n \right\}\,.
    \end{gather*}
    By Assumption~\ref{ass: stable score},
    \begin{gather*}
        \mathrm{Pr}\left( \widetilde{D}_n(t)\mid \phi(t) \right)\geq 1-\widetilde{\delta}_n\,.
    \end{gather*}

    On the event $\widetilde{D}_n(t)\cap D_{n,\varepsilon}(t)$,
    \begin{align*}
        \sup_{1\leq i\leq n+1}|s(X_i,Y_i;Z)-s^\star(X_i,Y_i)|\leq \varepsilon+\widetilde{\varepsilon}_n\,.
    \end{align*}
    Let $\Delta_\varepsilon=\varepsilon+\widetilde{\varepsilon}_n$. Then $|s(X_i,Y_i;Z)-s^\star(X_i,Y_i)|\leq \Delta_\varepsilon$ for every $i\in[n+1]$. Therefore, for every $u\in\mathbb{R}$,
    \begin{align*}
        &\sum_{i=1}^{n+1}\frac{w(X_i)}{\sum_{j=1}^{n+1}w(X_j)}\mathbbm{1}\{s^\star(X_i,Y_i)\le u-\Delta_\varepsilon\}\\
        \leq&
        \sum_{i=1}^{n+1}\frac{w(X_i)}{\sum_{j=1}^{n+1}w(X_j)}\mathbbm{1}\{s(X_i,Y_i;Z)\le u\}\\
        \leq&
        \sum_{i=1}^{n+1}\frac{w(X_i)}{\sum_{j=1}^{n+1}w(X_j)}\mathbbm{1}\{s^\star(X_i,Y_i)\le u+\Delta_\varepsilon\}\,.
    \end{align*}
    By monotonicity of the quantile map,
    \begin{gather*}
        q^\star(Z;\alpha)-\Delta_\varepsilon\leq q(Z;\alpha)\leq q^\star(Z;\alpha)+\Delta_\varepsilon\,,
    \end{gather*}
    and therefore $|q(Z;\alpha)-q^\star(Z;\alpha)|\leq \Delta_\varepsilon$. Define
    \begin{gather*}
        D_\varepsilon(t)=\left\{ |s(X_{n+1},Y_{n+1};Z)-s^\star(X_{n+1},Y_{n+1})|\vee|q(Z;\alpha)-q^\star(Z;\alpha)|\leq \Delta_\varepsilon \right\}\,.
    \end{gather*}
    Since $\widetilde{D}_n(t)\cap D_{n,\varepsilon}(t)\subset D_\varepsilon(t)$, we have
    \begin{gather*}
        \mathrm{Pr}\left( D_\varepsilon(t)\mid \phi(t) \right)\geq 1-(n+1)\delta_n(\varepsilon)-\widetilde{\delta}_n\,.
    \end{gather*}

    We now bound the conditional coverage directly. For the lower bound,
    \begin{align*}
        &\mathrm{Pr}\left( Y_{n+1}\in\widehat{C}(X_{n+1})\mid \phi(t) \right)\\
        =&\mathrm{Pr}\left( s(X_{n+1},Y_{n+1};Z)\leq q(Z;\alpha)\mid \phi(t) \right)\\
        \geq&\mathrm{Pr}\left( s^\star(X_{n+1},Y_{n+1})+\Delta_\varepsilon\leq q^\star(Z;\alpha)-\Delta_\varepsilon,D_\varepsilon(t)\mid \phi(t) \right)\\
        \geq&\mathrm{Pr}\left( s^\star(X_{n+1},Y_{n+1})+\Delta_\varepsilon\leq q^\star(Z;\alpha)-\Delta_\varepsilon\mid \phi(t) \right)-\mathrm{Pr}\left( D_\varepsilon(t)^c\mid \phi(t) \right)\\
        =&\mathbb{E}\left\{ F_t^\star(q^\star(Z;\alpha)-2\Delta_\varepsilon)\mid \phi(t) \right\}-(n+1)\delta_n(\varepsilon)-\widetilde{\delta}_n\,,
    \end{align*}
    where the last equality follows from the assumed conditional independence between $s^\star(X_{n+1},Y_{n+1})$ and $q^\star(Z;\alpha)$ given $\phi(t)$, together with the fact that the conditional distribution of $s^\star(X_{n+1},Y_{n+1})$ given $\phi(t)$ is $F_t^\star$.
    Since $F_t^\star$ is $L_t$-Lipschitz, it follows that
    \begin{align*}
        &\mathbb{E}\left\{ F_t^\star(q^\star(Z;\alpha)-2\Delta_\varepsilon)\mid \phi(t) \right\}-(n+1)\delta_n(\varepsilon)-\widetilde{\delta}_n\\
        \geq&1-\alpha-(n+1)\delta_n(\varepsilon)-\widetilde{\delta}_n-2L_t\Delta_\varepsilon
        -L_t\mathbb{E}\left\{ \left|q^\star(Z;\alpha)-Q(1-\alpha;F_t^\star)\right|\wedge L_t^{-1}\mid \phi(t) \right\}\,.
    \end{align*}
    Similarly, for the upper bound,
    \begin{align*}
        &\mathrm{Pr}\left( Y_{n+1}\in\widehat{C}(X_{n+1})\mid \phi(t) \right)\\
        \leq&(n+1)\delta_n(\varepsilon)+\widetilde{\delta}_n
        +\mathbb{E}\left\{ F_t^\star(q^\star(Z;\alpha)+2\Delta_\varepsilon)-F_t^\star(Q(1-\alpha;F_t^\star))+1-\alpha\mid \phi(t) \right\}\\
        \leq&1-\alpha+(n+1)\delta_n(\varepsilon)+\widetilde{\delta}_n+2L_t\Delta_\varepsilon
        +L_t\mathbb{E}\left\{ \left|q^\star(Z;\alpha)-Q(1-\alpha;F_t^\star)\right|\wedge L_t^{-1}\mid \phi(t) \right\}\,.
    \end{align*}

    Combining the lower and upper bounds gives
    \begin{align*}
        &\left|\mathrm{Pr}\left( Y_{n+1}\in\widehat{C}(X_{n+1})\mid \phi(t) \right)-(1-\alpha)\right|\\
        \leq&(n+1)\delta_n(\varepsilon)+\widetilde{\delta}_n+2L_t\Delta_\varepsilon
        +L_t\mathbb{E}\left\{ \left|q^\star(Z;\alpha)-Q(1-\alpha;F_t^\star)\right|\wedge L_t^{-1} \mid \phi(t) \right\}\,.
    \end{align*}
    By the triangle inequality,
    \begin{align*}
        &\mathbb{E}\left\{ \left|q^\star(Z;\alpha)-Q(1-\alpha;F_t^\star)\right|\wedge L_t^{-1}\mid \phi(t) \right\}\\
        \leq&\mathbb{E}\left\{ \left|q^\star(Z;\alpha)-Q\left(1-\alpha;F_{w\circ s^\star}\right)\right|\wedge L_t^{-1}\mid \phi(t) \right\}\\
        &+\mathbb{E}\left\{\left|Q(1-\alpha;F_t^\star)-Q\left(1-\alpha;F_{w\circ s^\star}\right)\right|\mid \phi(t) \right\}\,.
    \end{align*}
     Since $q^\star(Z;\alpha)$ is the weighted empirical quantile targeting $Q(1-\alpha;F_{w\circ s^\star})$, Lemma~\ref{theo: weighted quantile error} implies that there exists a constant $C_1>0$ such that
    \begin{gather*}
        \mathbb{E}\left\{ \left|q^\star(Z;\alpha)-Q(1-\alpha;F_{w\circ s^\star})\right|\wedge L_t^{-1}\mid w \right\}\leq C_1\Gamma_n(w)\,.
    \end{gather*}
    Taking iterated conditional expectations then yields
    \begin{gather*}
        \mathbb{E}\left\{ \left|q^\star(Z;\alpha)-Q(1-\alpha;F_{w\circ s^\star})\right|\wedge L_t^{-1}\mid \phi(t) \right\}\leq C_1\mathbb{E}\left\{\Gamma_n(w)\mid \phi(t)\right\}\,.
    \end{gather*}
    Absorbing constants into $C_t$ proves \eqref{eq: independent pointwise bound 1}.
    
    It remains to prove the refinement of the intrinsic conditional-mismatch error term
    \begin{gather*}
        \mathbb{E}\left\{\left|Q(1-\alpha;F_t^\star)-Q\left(1-\alpha;F_{w\circ s^\star}\right)\right|\mid \phi(t) \right\}\,.
    \end{gather*}
    Let $T_0$ be drawn from $P_T$ independently of all other random objects. Denote
    \begin{gather*}
        q_t^\star=Q(1-\alpha;F_t^\star),\qquad q_{T_0}^\star=Q(1-\alpha;F_{T_0}^\star)\,.
    \end{gather*}
    Since $F_t^\star$ is $\overline{L}_{s^\star}$-Lipschitz for each $t\in\mathcal{T}$, it is continuous and therefore satisfies $F_{T_0}^\star(q_{T_0}^\star)=1-\alpha$ almost surely. Also,
    \begin{gather*}
        F_{w\circ s^\star}(u)=\mathbb{E}\left\{ F_{T_0}^\star(u)r_w(T_0)\mid w \right\}\,.
    \end{gather*}
    Then
    \begin{align*}
        &\left|F_{w\circ s^\star}(q_t^\star)-(1-\alpha)\right|\\
        =& \left|\mathbb{E}\left\{ F_{T_0}^\star(q_t^\star)r_w(T_0)\mid w \right\}
        -\mathbb{E}\left\{ F_{T_0}^\star(q_{T_0}^\star)r_w(T_0)\mid w \right\}\right|\\
        \leq&\overline{L}_{s^\star}\mathbb{E}\left\{ \left|Q(1-\alpha;F_t^\star)-Q(1-\alpha;F_{T_0}^\star)\right|r_w(T_0) \mid w \right\}\,,
    \end{align*}
    where the expectation is taken over $T_0\sim P_T$. Since the density of $F_{w\circ s^\star}$ on $(-M_{s^\star},M_{s^\star})$ is lower bounded by $\underline{L}_{s^\star}$, we obtain
    \begin{align*}
        &\left|Q(1-\alpha;F_t^\star)-Q(1-\alpha;F_{w\circ s^\star})\right|\\
        \leq&\dfrac{\overline{L}_{s^\star}}{\underline{L}_{s^\star}}
        \mathbb{E}\left\{ \left|Q(1-\alpha;F_t^\star)-Q(1-\alpha;F_{T_0}^\star)\right|r_w(T_0) \mid w \right\}\,.
    \end{align*}
    Taking expectation over $w$ conditional on $\phi(t)$ proves \eqref{eq: independent pointwise bound 2}. This completes the proof of the theorem.
\end{proof}

\subsection{Proof of Theorem~\ref{theo: averaged miscoverage}}

\begin{proof}
    Theorem~\ref{theo: averaged miscoverage} follows directly from Theorem~\ref{theo: averaged miscoverage shift} with $r \equiv 1$ and $P_{T,1}=P_T$. The proof of Theorem~\ref{theo: averaged miscoverage shift} can be found in Section~\ref{sec:supp_proof_averaged miscoverage shift}.
\end{proof}

\subsection{Proof of Theorem~\ref{theo: marginal general cshift}}\label{sec:supp_proof_theo_marginal general cshift}
\begin{proof}
    By the definition of $\widehat C(\cdot)$, the marginal coverage probability is
    \begin{align}
        \notag&\mathrm{Pr}\left( Y_{n+1}\in \widehat{C}(X_{n+1}) \right)\\
        \notag=&\mathrm{Pr}\left( s(X_{n+1},Y_{n+1};Z)\leq Q\left( 1-\alpha; \dfrac{\sum_{i=1}^{n}w(X_i)\delta_{s(X_i,Y_i;Z)}+w(X_{n+1})\delta_{\infty}}{\sum_{i=1}^{n+1}w(X_i)} \right) \right)\\
        =&\mathbb{E} \left\{ \mathbbm{1}\left( s(X_{n+1},Y_{n+1};Z)\leq Q\left( 1-\alpha; \dfrac{\sum_{i=1}^{n+1}w(X_i)\delta_{s(X_i,Y_i;Z)}}{\sum_{i=1}^{n+1}w(X_i)} \right) \right) \right\}\,.\label{eq: marginal general coverage eq1}
    \end{align}
    Let $w_0(\cdot)$ be a random function independent of $X\sim P_{X,1}$ satisfying $\mathbb{E}\{w_0(X)\mid w_0\}=1$ and $\mathbb{E}\{w_0(x)\}=r_X(x)$. Given a realization of $w_0(\cdot)$, draw $X_{n+1}'$ from the distribution with density proportional to $w_0(x)dP_{X,1}(x)$. Since $\mathbb{E}\{w_0(X)\mid w_0\}=1$, $w_0(x)dP_{X,1}(x)$ defines a valid density. Moreover, the marginal distribution of $X_{n+1}'$ satisfies
    \begin{gather*}
        \mathbb{E}\{w_0(x)dP_{X,1}(x)\}=\mathbb{E}\{w_0(x)\}dP_{X,1}(x)=r_X(x)dP_{X,1}(x)=dP_{X,2}(x)\,,
    \end{gather*}
    and hence $X_{n+1}'\sim P_{X,2}$. Define the collection of all such random functions by
    \begin{gather*}
        \mathcal{W}_X=\{w_0(\cdot)\geq 0:\mathbb{E}\{w_0(X)\mid w_0\}=1\text{ and }\mathbb{E}\{w_0(x)\}=r_X(x),\forall x\in\mathcal{X}\}\,.
    \end{gather*}

    Fix $w_0(\cdot)\in\mathcal{W}_X$ and denote the unordered collection of observations in $Z$ by $\mathrm{set}(Z)=\{(X_i,Y_i):1\leq i\leq n+1\}$. Conditional on $(w_0,\mathrm{set}(Z))$, define an auxiliary index $J$ by
    \begin{gather*}
        \mathrm{Pr}\left( J=j\mid w_0,\mathrm{set}(Z) \right)=\dfrac{w_0(X_j)}{\sum_{i=1}^{n+1}w_0(X_i)},\qquad j\in[n+1]\,.
    \end{gather*}
    By Assumption~\ref{ass: covariate shift assumption}, this auxiliary draw represents the conditional law of the test score in the sense that $s(X_{n+1},Y_{n+1};Z)\mid (w_0,\mathrm{set}(Z))$ has the same distribution as $s(X_J,Y_J;Z)\mid (w_0,\mathrm{set}(Z))$. Equivalently, conditional on $(w_0,\mathrm{set}(Z))$, the distribution of $s(X_{n+1},Y_{n+1};Z)$ is discrete with mass $\bigl\{ \sum_{j=1}^{n+1}w_0(X_j) \bigr\}^{-1}w_0(X_i)$ at the value $s(X_i,Y_i;Z)$.
    Therefore, by the law of total expectation, \eqref{eq: marginal general coverage eq1} equals
    \begin{align*}
        &\mathbb{E}\left[ \mathbb{E} \left\{ \mathbbm{1}\left( s(X_{n+1},Y_{n+1};Z)\leq Q\left( 1-\alpha; \dfrac{\sum_{i=1}^{n+1}w(X_i)\delta_{s(X_i,Y_i;Z)}}{\sum_{i=1}^{n+1}w(X_i)} \right) \right) \mid w_0,\mathrm{set}(Z)\right\} \right]\\
        &=\mathbb{E} \left\{ \sum_{j=1}^{n+1}\dfrac{w_0(X_j)}{\sum_{i=1}^{n+1}w_0(X_i)}\mathbbm{1}\left( s(X_{j},Y_{j};Z)\leq Q\left( 1-\alpha; \dfrac{\sum_{i=1}^{n+1}w(X_i)\delta_{s(X_i,Y_i;Z)}}{\sum_{i=1}^{n+1}w(X_i)} \right) \right) \right\}\,.
    \end{align*}
    By the definition of the quantile function,
    \begin{align*}
        1-\alpha&\leq \sum_{j=1}^{n+1}\dfrac{w(X_j)}{\sum_{i=1}^{n+1}w(X_i)}\mathbbm{1}\left( s(X_{j},Y_{j};Z)\leq Q\left( 1-\alpha; \dfrac{\sum_{i=1}^{n+1}w(X_i)\delta_{s(X_i,Y_i;Z)}}{\sum_{i=1}^{n+1}w(X_i)} \right) \right)\\
        &\leq 1-\alpha+\widehat{w}_{\rm max}(Z)\,,
    \end{align*}
    where $\widehat{w}_{\max}(Z)=\sup_{c}\bigl\{ \sum_{j=1}^{n+1}w(X_j) \bigr\}^{-1}\bigl\{ \sum_{i=1}^{n+1}w(X_i)\mathbbm{1}(s(X_i,Y_i;Z)=c) \bigr\}$. Thus
    \begin{align*}
        &\left|\sum_{j=1}^{n+1}\left\{ \dfrac{w(X_j)}{\sum_{i=1}^{n+1}w(X_i)}-\dfrac{w_0(X_j)}{\sum_{i=1}^{n+1}w_0(X_i)} \right\}\mathbbm{1}\left( s(X_{j},Y_{j};Z)\leq Q\left( 1-\alpha; \dfrac{\sum_{i=1}^{n+1}w(X_i)\delta_{s(X_i,Y_i;Z)}}{\sum_{i=1}^{n+1}w(X_i)} \right) \right)\right|\\
        &\leq \sum_{j=1}^{n+1}\left|\dfrac{w(X_j)}{\sum_{i=1}^{n+1}w(X_i)}-\dfrac{w_0(X_j)}{\sum_{i=1}^{n+1}w_0(X_i)}\right|\overset{\rm def.}{=}d_Z(w,w_0)\,.
    \end{align*}
    Therefore,
    \begin{align*}
        1-\alpha-\mathbb{E}\left\{ d_Z(w,w_0) \right\}&\leq \mathrm{Pr}\left( Y_{n+1}\in \widehat{C}(X_{n+1}) \right)\\
        &\leq 1-\alpha+\mathbb{E}\left\{ \widehat{w}_{\max}(Z) \right\}+\mathbb{E}\left\{ d_Z(w,w_0) \right\}\,.
    \end{align*}
    Taking the infimum over $w_0\in\mathcal{W}_X$ gives
    \begin{align*} 
        1-\alpha-\inf_{w_0\in\mathcal{W}_X}\mathbb{E}\left\{ d_Z(w,w_0) \right\}&\leq \mathrm{Pr}\left( Y_{n+1}\in \widehat{C}(X_{n+1}) \right)\\
        &\leq 1-\alpha+\mathbb{E}\left\{ \widehat{w}_{\max}(Z) \right\}+\inf_{w_0\in\mathcal{W}_X}\mathbb{E}\left\{ d_Z(w,w_0) \right\}\,,
    \end{align*}
    which completes the proof of this theorem.
\end{proof}

\subsection{Proof of Theorem~\ref{theo: marginal SymmPI}}
\begin{proof}
    Theorem~\ref{theo: marginal SymmPI} follows immediately from Theorem~\ref{theo: marginal SymmPI weighted}, whose proof can be found in Section~\ref{sec:supp_proof_marginal_symmpi_weighted}.
\end{proof}

\subsection{Proof of Theorem~\ref{theo:symmpi conditional miscoverage}}\label{sec: proof of symmpi conditional miscoverage}
\begin{proof}
    Theorem~\ref{theo:symmpi conditional miscoverage} follows directly from Theorem~\ref{theo:symmpi conditional miscoverage weighted}; see Section~\ref{sec:supp_proof_conditional_symmpi_weighted} for its proof.
\end{proof}

\subsection{Proof of Theorem~\ref{theo: community conditional}}
\begin{proof}
For any $\mathcal{I}\in [n_{\rm comm}]$, let $F_{v^\star\mid\sigma(W)=\mathcal{I}}(\cdot)$ denote the conditional distribution function of $v^\star(X,Y)$ given $\sigma(W)=\mathcal{I}$. We first bound the difference between the score $s(X_{n+1},Y_{n+1};Z)$ and $F_{v^\star\mid\sigma(W)=\mathcal{I}}(v^\star(X_{n+1},Y_{n+1}))$.

\noindent\textbf{Step 1:} Define the empirical distribution of the oracle base score within community $\mathcal{I}$ by
\begin{gather*}
    \widetilde{F}_{v^\star\mid\sigma(W)=\mathcal{I}}^{(1)}(u)=\dfrac{\sum_{j=1}^{n+1}\mathbbm{1}(\sigma(W_j)=\mathcal{I},v^\star(X_j,Y_j)\leq u)}{\sum_{j=1}^{n+1}\mathbbm{1}(\sigma(W_j)=\mathcal{I})}\,.
\end{gather*}
By the DKW inequality, conditional on $\sum_{j=1}^{n+1}\mathbbm{1}\{\sigma(W_j)=\mathcal{I}\}$, for any $\varepsilon>0$,
\begin{align*}
    &\mathrm{Pr}\left( \sup_{u}|F_{v^\star\mid\sigma(W)=\mathcal{I}}(u)-\widetilde{F}_{v^\star\mid\sigma(W)=\mathcal{I}}^{(1)}(u)|\leq\varepsilon \right)\\
    \geq & 1-2\exp\left( -2\left\{ \sum_{j=1}^{n+1}\mathbbm{1}(\sigma(W_j)=\mathcal{I}) \right\}\varepsilon^2 \right)\,.
\end{align*}
Let $p_\mathcal{I}=\mathrm{Pr}(\sigma(W)=\mathcal{I})\geq p_{\rm comm}>0$. Hoeffding's inequality gives
\begin{gather*}
    \mathrm{Pr}\left( \sum_{j=1}^{n+1}\mathbbm{1}(\sigma(W_j)=\mathcal{I})>(n+1)p_\mathcal{I}/2 \right)\geq 1-\exp\left( -np_\mathcal{I}^2/2 \right)\\
    \mathrm{Pr}\left( \forall\mathcal{I}\in[n_{\rm comm}],\sum_{j=1}^{n+1}\mathbbm{1}(\sigma(W_j)=\mathcal{I})>(n+1)p_\mathcal{I}/2 \right)\geq 1-n_{\rm comm}\exp\left( -np_{\rm comm}^2/2 \right)\,,
\end{gather*}
where the last inequality is derived by union bound. Define the event
\begin{gather*}
    D_1=\left\{ \forall\mathcal{I}\in[n_{\rm comm}],\sum_{j=1}^{n+1}\mathbbm{1}(\sigma(W_j)=\mathcal{I})>(n+1)p_\mathcal{I}/2 \right\}\,.
\end{gather*}
On $D_1$, applying the preceding DKW bound uniformly over $\mathcal I$ yields
\begin{align*}
    &\mathrm{Pr}\left( \forall\mathcal{I}\in[n_{\rm comm}],\sup_{u}\left|F_{v^\star\mid\sigma(W)=\mathcal{I}}(u)-\widetilde{F}_{v^\star\mid\sigma(W)=\mathcal{I}}^{(1)}(u)\right|\leq\varepsilon\mid D_1 \right)\\
    \geq&1-2n_{\rm comm}\exp\left( -np_{\rm comm}\varepsilon^2 \right)\,.
\end{align*}
Consequently,
\begin{align*}
    &\mathrm{Pr}\left( \forall\mathcal{I}\in[n_{\rm comm}],\sup_{u}\left|F_{v^\star\mid\sigma(W)=\mathcal{I}}(u)-\widetilde{F}_{v^\star\mid\sigma(W)=\mathcal{I}}^{(1)}(u)\right|\leq\varepsilon \right)\\
    \geq&1-3n_{\rm comm}\exp\left( -np_{\rm comm}\varepsilon^2 \right),
\end{align*}
where the last inequality holds when $\varepsilon^2\leq p_{\rm comm}$.

\noindent\textbf{Step 2:} First define the optimal rearangement of $\widehat{\sigma}$ as:
\begin{gather*}
    \widehat{\pi}=\underset{\pi\in\mathcal{G}_{n_{\rm comm}}}{\arg\min}\ \ell_{\rm mis}(\sigma,\widehat{\sigma};\{W_i\}_{i=1}^{n+1},A)\,.
\end{gather*}
Define the empirical distribution of the oracle base score within estimated community $\mathcal{I}$ by
\begin{gather*}
    \widetilde{F}_{v^\star\mid\sigma(W)=\mathcal{I}}^{(2)}(u)=\dfrac{\sum_{j=1}^{n+1}\mathbbm{1}(\widehat{\pi}(\widehat{\sigma}(A)_j)=\mathcal{I},v^\star(X_j,Y_j)\leq u)}{\sum_{j=1}^{n+1}\mathbbm{1}(\widehat{\pi}(\widehat{\sigma}(A)_j)=\mathcal{I})}\,.
\end{gather*}
We next bound the difference between $\widetilde{F}_{v^\star\mid\sigma(W)=\mathcal{I}}^{(2)}(u)$ and $\widetilde{F}_{v^\star\mid\sigma(W)=\mathcal{I}}^{(1)}(u)$. By the definition of $\widehat{\pi}(\cdot)$, we can derive
\begin{align*}
    &\left|\widetilde{F}_{v^\star\mid\sigma(W)=\mathcal{I}}^{(1)}(u)-\widetilde{F}_{v^\star\mid\sigma(W)=\mathcal{I}}^{(2)}(u)\right|\\
    \leq&\dfrac{\sum_{j=1}^{n+1}\mathbbm{1}(\widehat{\pi}(\widehat{\sigma}(A)_j)=\mathcal{I},v^\star(X_j,Y_j)\leq u)\left|\sum_{j=1}^{n+1}\mathbbm{1}(\widehat{\pi}(\widehat{\sigma}(A)_j)=\mathcal{I})-\mathbbm{1}(\sigma(W_j)=\mathcal{I})\right|}{\sum_{j=1}^{n+1}\mathbbm{1}(\sigma(W_j)=\mathcal{I})\sum_{j=1}^{n+1}\mathbbm{1}(\widehat{\pi}(\widehat{\sigma}(A)_j)=\mathcal{I})}\\
    \leq&\dfrac{\sum_{j=1}^{n+1}\mathbbm{1}(\widehat{\pi}(\widehat{\sigma}(A)_j)=\mathcal{I})\left|\sum_{j=1}^{n+1}\left\{ \mathbbm{1}(\widehat{\pi}(\widehat{\sigma}(A)_j)=\mathcal{I})-\mathbbm{1}(\sigma(W_j)=\mathcal{I}) \right\}\mathbbm{1}(v^\star(X_j,Y_j)\leq u)\right|}{\sum_{j=1}^{n+1}\mathbbm{1}(\sigma(W_j)=\mathcal{I})\sum_{j=1}^{n+1}\mathbbm{1}(\widehat{\pi}(\widehat{\sigma}(A)_j)=\mathcal{I})}\\
    \leq&\dfrac{2\sum_{j=1}^{n+1}\mathbbm{1}(\widehat{\pi}(\widehat{\sigma}(A)_j)=\mathcal{I})\sum_{j=1}^{n+1}\mathbbm{1}(\widehat{\pi}(\widehat{\sigma}(A)_j)\neq\sigma(W_j))}{\sum_{j=1}^{n+1}\mathbbm{1}(\sigma(W_j)=\mathcal{I})\sum_{j=1}^{n+1}\mathbbm{1}(\widehat{\pi}(\widehat{\sigma}(A)_j)=\mathcal{I})}\\
    =&\dfrac{2(n+1)\ell_{\rm mis}(\sigma,\widehat{\sigma};\{W_i\}_{i=1}^{n+1},A)}{\sum_{j=1}^{n+1}\mathbbm{1}(\sigma(W_j)=\mathcal{I})}\,.
\end{align*}
Denote the event of identifying true community with $\varepsilon$ precision as
\begin{gather*}
    D_2=\{\ell_{\rm mis}(\sigma,\widehat{\sigma};\{W_i\}_{i=1}^{n+1},A)\leq\varepsilon\},
\end{gather*}
which has probability at least $1-\delta_{\rm comm}(n,\varepsilon)$. On $D_1\cap D_2$,
\begin{gather*}
    \left|\widetilde{F}_{v^\star\mid\sigma(W)=\mathcal{I}}^{(1)}(u)-\widetilde{F}_{v^\star\mid\sigma(W)=\mathcal{I}}^{(2)}(u)\right|\leq\dfrac{4\varepsilon}{p_{\mathcal{I}}}\,.
\end{gather*}
Hence
\begin{align*}
    &\mathrm{Pr}\left( \forall\mathcal{I}\in[n_{\rm comm}],\sup_{u}\left|F_{v^\star\mid\sigma(W)=\mathcal{I}}(u)-\widetilde{F}_{v^\star\mid\sigma(W)=\mathcal{I}}^{(2)}(u)\right|\leq\varepsilon(1+4/p_\mathcal{I}) \right)\\
    \geq&1-3n_{\rm comm}\exp\left( -np_{\rm comm}\varepsilon^2 \right)-\delta_{\rm comm}(n,\varepsilon)\,.
\end{align*}

\noindent\textbf{Step 3:} 
Define the empirical distribution of the estimated base score within estimated community $\mathcal{I}$ by
\begin{gather*}
    \widehat{F}_{v(\cdot,\cdot;Z)\mid\sigma(W)=\mathcal{I}}(u)=\dfrac{\sum_{j=1}^{n+1}\mathbbm{1}\{\widehat{\pi}(\widehat{\sigma}(A)_j)=\mathcal{I},v(X_j,Y_j;Z)\leq u\}}{\sum_{j=1}^{n+1}\mathbbm{1}\{\widehat{\pi}(\widehat{\sigma}(A)_j)=\mathcal{I}\}}\,.
\end{gather*}
Define the score-estimation event
\begin{gather*}
    D_3=\left\{ \sup_{1\leq i\leq n+1}|v(X_i,Y_i;Z)-v^\star(X_i,Y_i)|\leq\varepsilon \right\}\,,
\end{gather*}
which has probability at least $1-\delta(n,\varepsilon)$. On $D_3$, simple algebra gives
\begin{gather*}
    \widehat{F}_{v(\cdot,\cdot;Z)\mid\sigma(W)=\mathcal{I}}(u-\varepsilon)\leq \widetilde{F}_{v^\star\mid\sigma(W)=\mathcal{I}}^{(2)}(u)\leq \widehat{F}_{v(\cdot,\cdot;Z)\mid\sigma(W)=\mathcal{I}}(u+\varepsilon)\,.
\end{gather*}
Thus, on $D_1\cap D_2\cap D_3$,
\begin{align*}
    F_{v^\star\mid\sigma(W)=\mathcal{I}}(u-\varepsilon)-\varepsilon(1+4/p_\mathcal{I})
    &\leq \widehat{F}_{v(\cdot,\cdot;Z)\mid\sigma(W)=\mathcal{I}}(u)\\
    &\leq F_{v^\star\mid\sigma(W)=\mathcal{I}}(u+\varepsilon)+\varepsilon(1+4/p_\mathcal{I})\,.
\end{align*}
Since $F_{v^\star\mid\sigma(W)=\mathcal{I}}$ is $L_\mathcal{I}$-Lipschitz continuous, it follows that, on $D_1\cap D_2\cap D_3$,
\begin{gather*}
    \left|\widehat{F}_{v(\cdot,\cdot;Z)\mid\sigma(W)=\mathcal{I}}(u)-F_{v^\star\mid\sigma(W)=\mathcal{I}}(u)\right|\leq (1+4/p_\mathcal{I}+L_\mathcal{I})\varepsilon\,.
\end{gather*}
Define $L_0=\sup_{\mathcal{I}\in[n_{\rm comm]}}\{1+4/p_\mathcal{I}+L_\mathcal{I}\}$. Then
\begin{align*}
    &\mathrm{Pr}\left( \forall\mathcal{I}\in[n_{\rm comm}],\sup_{u}\left|\widehat{F}_{v(\cdot,\cdot;Z)\mid\sigma(W)=\mathcal{I}}(u)-F_{v^\star\mid\sigma(W)=\mathcal{I}}(u)\right|\leq L_0\varepsilon \right)\\
    \geq&1-3n_{\rm comm}\exp\left( -np_{\rm comm}\varepsilon^2 \right)-\delta_{\rm comm}(n,\varepsilon)-\delta(n,\varepsilon)\,.
\end{align*}

\noindent\textbf{Step 4:} Consider $(W_{n+1},X_{n+1},Y_{n+1})$ satisfying $\sigma(W_{n+1})=\mathcal{I}$. If $\widehat{\pi}(\widehat{\sigma}(A)_{n+1})=\mathcal{I}$, then
$s(X_{n+1},Y_{n+1};Z)=\widehat{F}_{v(\cdot,\cdot;Z)\mid\sigma(W)=\mathcal{I}}(v(X_{n+1},Y_{n+1};Z))$.
On $D_1\cap D_2\cap D_3$, it follows that
\begin{align*}
    &\left|s(X_{n+1},Y_{n+1};Z)-F_{v^\star\mid\sigma(W)=\mathcal{I}}(v^\star(X_{n+1},Y_{n+1}))\right|\\
    \leq&\left|\widehat{F}_{v(\cdot,\cdot;Z)\mid\sigma(W)=\mathcal{I}}(v(X_{n+1},Y_{n+1};Z))-F_{v^\star\mid\sigma(W)=\mathcal{I}}(v(X_{n+1},Y_{n+1};Z))\right|\\
    &+\left|F_{v^\star\mid\sigma(W)=\mathcal{I}}(v(X_{n+1},Y_{n+1};Z))-F_{v^\star\mid\sigma(W)=\mathcal{I}}(v^\star(X_{n+1},Y_{n+1}))\right|\\
    \leq&L_0\varepsilon+L_\mathcal{I}\left|v(X_{n+1},Y_{n+1};Z)-v^\star(X_{n+1},Y_{n+1})\right|\leq 2L_0\varepsilon\,.
\end{align*}

For any observed dataset $Z\in D_1\cap D_2\cap D_3$, assume that the observations are unordered and denote the resulting unordered set by $Z_{\{\}}=\{(W_i,X_i,Y_i):i\in[n+1]\}$. Since the graph model is permutation invariant, conditional on $Z_{\{\}}$, each labeling of the $n+1$ nodes is equally admissible. Thus the test node is uniformly distributed over the nodes in this unordered sample. Conditional on $Z_{\{\}}$, suppose there are at least $\widehat{n}_\mathcal{I}^{(1)}$ possible values of $(W_{n+1},X_{n+1},Y_{n+1})$ satisfying $\sigma(W_{n+1})=\mathcal{I}$. By the definition of $D_1$, $\widehat{n}_\mathcal{I}^{(1)}>(n+1)p_\mathcal{I}/2$. By the definition of $D_2$, at least $\widehat{n}_\mathcal{I}^{(2)}=\widehat{n}_\mathcal{I}^{(1)}-(n+1)\varepsilon$ of these $\widehat{n}_\mathcal{I}^{(1)}$ possible values are assigned to community $\mathcal{I}$, namely $\widehat{\pi}(\widehat{\sigma}(A)_j)=\mathcal{I}$. Hence, for a fixed unordered set $Z_{\{\}}$ with $Z\in D_1\cap D_2\cap D_3$,
\begin{gather*}
    \mathrm{Pr}\left( \widehat{\pi}(\widehat{\sigma}(A)_{n+1})=\mathcal{I}\mid Z_{\{\}},\sigma(W_{n+1})=\mathcal{I} \right)\geq \dfrac{\widehat{n}_\mathcal{I}^{(1)}-(n+1)\varepsilon}{\widehat{n}_\mathcal{I}^{(1)}}> 1-2\varepsilon/p_\mathcal{I}\,.
\end{gather*}
Define $D_\varepsilon^{(1)}=\left\{ \left|s(X_{n+1},Y_{n+1};Z)-F_{v^\star\mid\sigma(W)=\mathcal{I}}(v^\star(X_{n+1},Y_{n+1}))\right|\leq 2L_0\varepsilon \right\}$. The preceding inequality implies
\begin{align*}
    &\mathrm{Pr}\left( D_\varepsilon^{(1)}\mid \sigma(W_{n+1})=\mathcal{I} \right)\\
    \geq&(1-2\varepsilon/p_{\rm comm})\left\{ 1-3n_{\rm comm}\exp\left( -np_{\rm comm}\varepsilon^2 \right)-\delta_{\rm comm}(n,\varepsilon)-\delta(n,\varepsilon) \right\}\,.
\end{align*}

Define the finite-score empirical quantile
\begin{gather*}
    \overline{q}(V(Z);\alpha)=Q\left( 1-\alpha;(n+1)^{-1}\sum_{i=1}^{n+1}\delta_{s(X_i,Y_i;Z)} \right)
\end{gather*}
and recall that the graph conformal construction uses
\begin{gather*}
    q(V(Z);\alpha)=Q\left( 1-\alpha;(n+2)^{-1}\left\{ \sum_{i=1}^{n+1}\delta_{s(X_i,Y_i;Z)}+\delta_\infty \right\} \right)\,.
\end{gather*}

We first control $\overline{q}(V(Z);\alpha)$ and then show that the additional $\delta_\infty$ changes the quantile by only $O(n^{-1})$. On $D_1 \cap D_2$, for any $\mathcal{I}\in[n_{\rm comm}]$, at least $\widehat{n}_\mathcal{I}^{(2)}\geq(n+1)(p_{\rm comm}/2-\varepsilon)$ samples $(W_i, X_i, Y_i)$ satisfy $\sigma(W_i) = \mathcal{I}$ and are assigned to community $\mathcal I$. Denote their index set by $\widehat{\mathfrak{I}}_\mathcal{I}$. By the definition of $s(X_i,Y_i;Z)$ and $|\widehat{\mathfrak{I}}_\mathcal{I}|=\widehat{n}_\mathcal{I}^{(2)}$, we have
\begin{gather*}
    \left\{ s(X_i,Y_i;Z):i\in\widehat{\mathfrak{I}}_\mathcal{I} \right\}=\left\{ 1/\widehat{n}_\mathcal{I}^{(2)},\ldots,(\widehat{n}_\mathcal{I}^{(2)}-1)/\widehat{n}_\mathcal{I}^{(2)}, 1 \right\}\,.
\end{gather*}
Hence the union of all finite score values contains the grid $\{j/m_{\min}:1\le j\le m_{\min}\}$ with
\begin{gather*}
    m_{\min}=\min_{\mathcal{I}\in[n_{\rm comm}]}\widehat{n}_\mathcal{I}^{(2)}\,.
\end{gather*}
Therefore, every adjacent gap in the ordered finite score values is at most $m_{\min}^{-1}$, and by the definition of the empirical quantile function,
\begin{gather*}
    |\overline{q}(V(Z);\alpha)-(1-\alpha)|\leq m_{\min}^{-1}\,.
\end{gather*}

Let $S_{(1)}\leq\cdots\leq S_{(n+1)}$ be the ordered finite scores $\{s(X_i,Y_i;Z)\}_{i\in[n+1]}$. Since the added $\delta_\infty$ only appends one extra point at the end of the ordered sample, $\overline{q}(V(Z);\alpha)=S_{(k)}$ and $q(V(Z);\alpha)=S_{(k^\prime)}$ for
\begin{gather*}
    k=\left\lceil (1-\alpha)(n+1)\right\rceil,\qquad k^\prime=\left\lceil (1-\alpha)(n+2)\right\rceil\,,
\end{gather*}
and necessarily $k^\prime-k\in\{0,1\}$. Since every adjacent finite-score gap is at most $m_{\min}^{-1}$, it follows that
\begin{gather*}
    |q(V(Z);\alpha)-\overline{q}(V(Z);\alpha)|\leq m_{\min}^{-1}.
\end{gather*}
Consequently,
\begin{gather*}
    |q(V(Z);\alpha)-(1-\alpha)|\leq 2m_{\min}^{-1}\leq \dfrac{8}{np_{\rm comm}}\,,
\end{gather*}
where the last inequality holds when $\varepsilon\leq p_{\rm comm}/4$ on $D_1\cap D_2$. Define
\begin{gather*}
    D^{(2)}=\left\{ |q(V(Z);\alpha)-(1-\alpha)|\leq \dfrac{8}{np_{\rm comm}} \right\}\,.
\end{gather*}
Then, for $\varepsilon\leq p_{\rm comm}/4$,
\begin{align*}
    &\mathrm{Pr}\left( D_\varepsilon^{(1)}\cap D^{(2)}\mid \sigma(W_{n+1})=\mathcal{I} \right)\\
    \geq&(1-2\varepsilon/p_{\rm comm})\left\{ 1-3n_{\rm comm}\exp\left( -np_{\rm comm}\varepsilon^2 \right)-\delta_{\rm comm}(n,\varepsilon)-\delta(n,\varepsilon) \right\}\,.
\end{align*}

Using Theorem~\ref{theo:symmpi conditional miscoverage}, for $\varepsilon\leq p_{\rm comm}/4$, there exists a constant $C>0$ such that
\begin{align*}
    &\left|\mathrm{Pr}\left( Y_{n+1}\in\widehat{C}_{\rm GraphCP}(X_{n+1})\mid\sigma(W_{n+1})=\mathcal{I} \right)-(1-\alpha)\right|\\
    \leq&1-(1-2\varepsilon/p_{\rm comm})\left\{ 1-3n_{\rm comm}\exp\left( -np_{\rm comm}\varepsilon^2 \right)-\delta_{\rm comm}(n,\varepsilon)-\delta(n,\varepsilon) \right\}\\
    &+C\left\{\varepsilon+\dfrac{1}{np_{\rm comm}}\right\}\\
    \leq& C\left\{ \varepsilon+n^{-1}+\delta_{\rm comm}(n,\varepsilon)+\delta(n,\varepsilon)+n_{\rm comm}\exp\left( -np_{\rm comm}\varepsilon^2 \right) \right\}.
\end{align*}
This completes the proof of the theorem.
\end{proof}

\subsection{Proof of Corollary~\ref{cor: cc L1 test}}
\begin{proof}
    Corollary~\ref{cor: cc L1 test} follows immediately from Corollary~\ref{cor: cc L1 test shift}; see Section~\ref{sec:supp_proof_cc_L1_test_shift} for its proof.
\end{proof}

\section{Proofs for Supplementary Results}\label{sec:proof_supp}

\subsection{Proof of Theorem~\ref{theo: averaged miscoverage shift}}\label{sec:supp_proof_averaged miscoverage shift}
\begin{proof}
Let $P_{T,1}$, $P_{T,2}$, and $P_{T,w}$ denote the laws of $T(X)$ induced by $X\sim P_{X,1}$, $X\sim P_{X,2}$, and $X\sim P_{X,w}$, respectively, with density ratios $r=dP_{T,2}/dP_{T,1}$ and $r_w=dP_{T,w}/dP_{T,1}$.

The proof contains two steps. We first derive an event-conditional miscoverage bound for a fixed label $t$, following the same template as the i.i.d.~averaged result while keeping track of the weighted calibration distribution under covariate shift. We then average this bound over $T\sim P_{T,2}$ and express the resulting terms relative to $P_{T,1}$ through the density ratios $r$ and $r_w$.

Let $Z_{n+1}^\prime$ be obtained by replacing the $(n+1)$-th element of $Z$ with an independent copy $(X_{n+1}^\prime,Y_{n+1}^\prime)$ drawn from $P_{X,2}\times P_{Y\mid X}$; then $Z\overset{\rm d}{=}Z_{n+1}^\prime$. Given $Z_{n+1}^\prime$ and $t\in\mathcal{T}$, denote the conditional c.d.f.s of $s(X_{n+1},Y_{n+1};Z_{n+1}^\prime)$ and $s^\star(X_{n+1},Y_{n+1})$ given $\phi(t)$ by $F_t(\cdot;Z_{n+1}^\prime)$ and $F_t^\star(\cdot)$, and denote the corresponding densities by $f_t(\cdot;Z_{n+1}^\prime)$ and $f_t^\star(\cdot)$. For a random label $T$, write $F_T(\cdot;Z_{n+1}^\prime)$ and $f_T(\cdot;Z_{n+1}^\prime)$ for these functions evaluated at the realized label $T$. The upper density bound implies that $F_t(\cdot;Z_{n+1}^\prime)$ and $F_t^\star(\cdot)$ are $\overline{L}$-Lipschitz.

Denote the marginal c.d.f.~of $s(X,Y;Z_{n+1}^\prime)$ under independent $(X,Y)\sim P_{X,2}\times P_{Y\mid X}$, conditional on fixed $Z_{n+1}^\prime$, by $F_{r\circ S}(\cdot;Z_{n+1}^\prime)$, and denote the corresponding c.d.f.~of the oracle score $s^\star(X,Y)$ by $F_{r\circ s^\star}(\cdot)$. Also denote the marginal c.d.f.~of $s(X,Y;Z_{n+1}^\prime)$ under independent $(X,Y)\sim P_{X,w}\times P_{Y\mid X}$, conditional on fixed $Z_{n+1}^\prime$, by $F_{w\circ S}(\cdot;Z_{n+1}^\prime)$, and denote the corresponding c.d.f.~of the oracle score $s^\star(X,Y)$ by $F_{w\circ s^\star}(\cdot)$.
Let $T^\prime$ be an i.i.d.~copy of $T\sim P_{T,2}$, and let $T_0$ be drawn independently from $P_{T,1}$.

Define the event
\begin{gather*}
    D_{n,\varepsilon}=\left\{ \|q_{s(\cdot,\cdot;Z_{n+1}^\prime)}-q_{s^\star}\|_{P_{T,1},p}\leq\varepsilon \right\}\,,
\end{gather*}
which satisfies $\mathrm{Pr}(D_{n,\varepsilon})\geq 1-\delta_n(\varepsilon)$. By the assumed independence of $\phi(t)$ from the first $n$ elements of $Z$, the event $D_{n,\varepsilon}$ is independent of $\phi(t)$. Also define
\begin{gather*}
    \widetilde{D}_n=\left\{ \sup_{z_{n+1}^\prime}\sup_{x,y}|s(x,y;Z)-s(x,y;Z_{n+1}^\prime)|\leq \widetilde{\varepsilon}_n \right\}\,,
\end{gather*}
which satisfies $\mathrm{Pr}(\widetilde{D}_n\mid \phi(t))\geq 1-\widetilde{\delta}_n$ by Assumption~\ref{ass: stable score}. On $\widetilde{D}_n$, for any $i\in[n+1]$,
\begin{gather*}
    |s(X_i,Y_i;Z)-s(X_i,Y_i;Z_{n+1}^\prime)|\leq \widetilde{\varepsilon}_n\,.
\end{gather*}
Hence
\begin{align*}
    \Bigg|&Q\left( 1-\alpha;\left\{ \sum_{i=1}^{n+1}w(X_i) \right\}^{-1}\left\{ \sum_{i=1}^{n}w(X_i)\delta_{s(X_i,Y_i;Z)}+w(X_{n+1})\delta_{\infty} \right\} \right)-\\
    &Q\left( 1-\alpha;\left\{ \sum_{i=1}^{n+1}w(X_i) \right\}^{-1}\left\{ \sum_{i=1}^{n}w(X_i)\delta_{s(X_i,Y_i;Z_{n+1}^\prime)}+w(X_{n+1})\delta_{\infty} \right\} \right)\Bigg|
    \leq \widetilde{\varepsilon}_n\,,
\end{align*}
where, if both quantiles are infinite, the difference is interpreted as $0$.

Define
\begin{gather*}
    q_\alpha(Z,Z_{n+1}^\prime)=Q\left( 1-\alpha;\left\{ \sum_{i=1}^{n+1}w(X_i) \right\}^{-1}\left\{ \sum_{i=1}^{n}w(X_i)\delta_{s(X_i,Y_i;Z_{n+1}^\prime)}+w(X_{n+1})\delta_{\infty} \right\} \right)\,,
\end{gather*}
and
\begin{gather*}
    q_{\alpha}(Z_{n+1}^\prime)=Q\left( 1-\alpha;\dfrac{\sum_{i=1}^{n}w(X_i)\delta_{s(X_i,Y_i;Z_{n+1}^\prime)}+w(X_{n+1}^\prime)\delta_{s(X_{n+1}^\prime,Y_{n+1}^\prime;Z_{n+1}^\prime)}}{\sum_{i=1}^{n}w(X_i)+w(X_{n+1}^\prime)} \right).
\end{gather*}
Since $w(X_i)\in[\underline{M},\overline{M}]$ almost surely for every $i\in[n+1]$, there exists $\gamma_n=O(n^{-1})$ such that
\begin{gather*}
    q_{\alpha+\gamma_n}(Z_{n+1}^\prime)\leq q_\alpha(Z,Z_{n+1}^\prime)\leq q_{\alpha-\gamma_n}(Z_{n+1}^\prime)\,.
\end{gather*}

\noindent\textbf{Step 1: decomposition of conditional miscoverage.}
First,
\begin{align*}
    &\mathrm{Pr}\left( Y_{n+1}\in\widehat{C}(X_{n+1})\mid \phi(t) \right)\\
    \leq&\widetilde{\delta}_n+\delta_n(\varepsilon)+\mathrm{Pr}\Bigg(\widetilde{D}_n,D_{n,\varepsilon},s(X_{n+1},Y_{n+1};Z)\leq\\
    &\qquad\qquad Q\left( 1-\alpha;\left\{ \sum_{i=1}^{n+1}w(X_i) \right\}^{-1}\left\{ \sum_{i=1}^{n}w(X_i)\delta_{s(X_i,Y_i;Z)}+w(X_{n+1})\delta_{\infty} \right\} \right)\mid \phi(t)\Bigg)\\
    \leq&\widetilde{\delta}_n+\delta_n(\varepsilon)+\mathrm{Pr}\left( \widetilde{D}_n,D_{n,\varepsilon},s(X_{n+1},Y_{n+1};Z_{n+1}^\prime)\leq 2\widetilde{\varepsilon}_n+q_\alpha(Z,Z_{n+1}^\prime)\mid \phi(t) \right)\\
    \leq&\widetilde{\delta}_n+\delta_n(\varepsilon)+\mathrm{Pr}\left( D_{n,\varepsilon},s(X_{n+1},Y_{n+1};Z_{n+1}^\prime)\leq 2\widetilde{\varepsilon}_n+q_{\alpha-\gamma_n}(Z_{n+1}^\prime)\mid \phi(t) \right)\\
    =&\widetilde{\delta}_n+\delta_n(\varepsilon)+\mathbb{E}\left\{ \mathbbm{1}\left(D_{n,\varepsilon},s(X_{n+1},Y_{n+1};Z_{n+1}^\prime)\leq 2\widetilde{\varepsilon}_n+q_{\alpha-\gamma_n}(Z_{n+1}^\prime)\right)\mid \phi(t) \right\}\,.
\end{align*}
Decomposing the conditional expectation gives
\begin{align}
    &\mathbb{E}\left\{ \mathbbm{1}\left(D_{n,\varepsilon},s(X_{n+1},Y_{n+1};Z_{n+1}^\prime)\leq 2\widetilde{\varepsilon}_n+q_{\alpha-\gamma_n}(Z_{n+1}^\prime)\right)\mid \phi(t) \right\}\notag\\
    =&\mathbb{E}\left[ \mathbb{E}\left\{ \mathbbm{1}\left(D_{n,\varepsilon},s(X_{n+1},Y_{n+1};Z_{n+1}^\prime)\leq 2\widetilde{\varepsilon}_n+q_{\alpha-\gamma_n}(Z_{n+1}^\prime)\right)\mid Z_{n+1}^\prime,\phi(t) \right\} \right]\notag\\
    =&\mathbb{E}\left[ \mathbbm{1}(D_{n,\varepsilon})\mathbb{E}\left\{ \mathbbm{1}\left(s(X_{n+1},Y_{n+1};Z_{n+1}^\prime)\leq 2\widetilde{\varepsilon}_n+q_{\alpha-\gamma_n}(Z_{n+1}^\prime)\right)\mid Z_{n+1}^\prime,\phi(t) \right\} \right]\label{eq: general lp explain 1}\\
    =&\mathbb{E}\left\{ \mathbbm{1}(D_{n,\varepsilon})F_t\left( 2\widetilde{\varepsilon}_n+q_{\alpha-\gamma_n}(Z_{n+1}^\prime);Z_{n+1}^\prime \right)\mid\phi(t) \right\}\notag\\
    \leq& 2\overline{L}\widetilde{\varepsilon}_n+\mathbb{E}\left\{ \mathbbm{1}(D_{n,\varepsilon})F_t\left( q_{\alpha-\gamma_n}(Z_{n+1}^\prime);Z_{n+1}^\prime \right) \mid\phi(t)\right\},\label{eq: general lp explain 2}
\end{align}
where the inner expectation in \eqref{eq: general lp explain 1} is taken with respect to the conditional law of $s(X_{n+1},Y_{n+1};Z_{n+1}^\prime)$ given $\phi(t)$ and $Z_{n+1}^\prime$.
Also,
\begin{align}
    &F_t\left( q_{\alpha-\gamma_n}(Z_{n+1}^\prime);Z_{n+1}^\prime \right)\notag\\
    =&F_t\Bigl( q_{\alpha-\gamma_n}(Z_{n+1}^\prime)-Q\left( 1-\alpha+\gamma_n;F_{w\circ S}(\cdot;Z_{n+1}^\prime) \right)\notag\\
    & ~~~~~~~~~~~~~~~~~~~~~~ +Q\left( 1-\alpha+\gamma_n;F_{w\circ S}(\cdot;Z_{n+1}^\prime) \right);Z_{n+1}^\prime \Bigr)\notag\\
    \leq& \left\{ \overline{L}\left|q_{\alpha-\gamma_n}(Z_{n+1}^\prime)-Q\left( 1-\alpha+\gamma_n;F_{w\circ S}(\cdot;Z_{n+1}^\prime) \right)\right| \right\}\wedge 1\notag\\
    & ~~~~~~~~~~~~~~~~~~~~~~ +F_t\left(Q\left( 1-\alpha+\gamma_n;F_{w\circ S}(\cdot;Z_{n+1}^\prime) \right);Z_{n+1}^\prime \right)\,.\label{eq: general lp explain 3}
\end{align}

Define $B_w=\mathbb{E}\left\{ w(X_1)\mid w \right\}\geq \underline{M}$. Combining \eqref{eq: general lp explain 2} and \eqref{eq: general lp explain 3}, and applying Lemma~\ref{theo: sup quantile error}, gives
\begin{align*}
    &\mathrm{Pr}\left( Y_{n+1}\in\widehat{C}(X_{n+1})\mid \phi(t) \right)\\
    \leq&\widetilde{\delta}_n+\delta_n(\varepsilon)+2\overline{L}\widetilde{\varepsilon}_n
    +\overline{L}(2\overline{M}\overline{L}\vee 1)C_2(\overline{M}/\underline{M})\sqrt{n^{-1}\mathrm{Pdim}(\mathcal{S})\log(n)}\\
    &+\mathbb{E}\left\{ \mathbbm{1}(D_{n,\varepsilon})F_t\left(Q\left( 1-\alpha+\gamma_n;F_{w\circ S}(\cdot;Z_{n+1}^\prime) \right);Z_{n+1}^\prime \right) \right\}\,.
\end{align*}
By the definition of the quantile function and the $\overline{L}$-Lipschitz property of $F_t(\cdot;Z_{n+1}^\prime)$,
\begin{align*}
    &\mathbb{E}\left\{ \mathbbm{1}(D_{n,\varepsilon})F_t\left(Q\left( 1-\alpha+\gamma_n;F_{w\circ S}(\cdot;Z_{n+1}^\prime) \right);Z_{n+1}^\prime \right) \right\}\\
    \leq&1-\alpha+\overline{L}\mathbb{E}\left\{ \mathbbm{1}(D_{n,\varepsilon})\left|Q\left( 1-\alpha+\gamma_n;F_{w\circ S}(\cdot;Z_{n+1}^\prime) \right)-q_{s(\cdot,\cdot;Z_{n+1}^\prime)}(t)\right| \right\}\,.
\end{align*}
Thus
\begin{align*}
    &\mathrm{Pr}\left( Y_{n+1}\in\widehat{C}(X_{n+1})\mid \phi(t) \right)\\
    \leq&1-\alpha+\widetilde{\delta}_n+\delta_n(\varepsilon)+2\overline{L}\widetilde{\varepsilon}_n
    +\overline{L}(2\overline{M}\overline{L}\vee 1)C_2(\overline{M}/\underline{M})\sqrt{n^{-1}\mathrm{Pdim}(\mathcal{S})\log(n)}\\
    &+\overline{L}\mathbb{E}\left\{ \mathbbm{1}(D_{n,\varepsilon})\left|Q\left( 1-\alpha+\gamma_n;F_{w\circ S}(\cdot;Z_{n+1}^\prime) \right)-q_{s(\cdot,\cdot;Z_{n+1}^\prime)}(t)\right| \right\}\,.
\end{align*}
Similarly,
\begin{align*}
    &\mathrm{Pr}\left( Y_{n+1}\in\widehat{C}(X_{n+1})\mid \phi(t) \right)\\
    \geq&1-\alpha-2\overline{L}\widetilde{\varepsilon}_n-\widetilde{\delta}_n-\delta_n(\varepsilon)-\overline{L}(2\overline{M}\overline{L}\vee 1)C_2(\overline{M}/\underline{M})\sqrt{n^{-1}\mathrm{Pdim}(\mathcal{S})\log(n)}\\
    &-\overline{L}\mathbb{E}\left\{ \mathbbm{1}(D_{n,\varepsilon})\left|Q\left( 1-\alpha-\gamma_n;F_{w\circ S}(\cdot;Z_{n+1}^\prime) \right)-q_{s(\cdot,\cdot;Z_{n+1}^\prime)}(t)\right| \right\}\,.
\end{align*}
Taking expectation over $T\sim P_{T,2}$ gives
\begin{align*}
    &\mathbb{E}\left|\mathrm{Pr}\left( Y_{n+1}\in\widehat{C}(X_{n+1})\mid \phi(T) \right)-(1-\alpha)\right|\\
    \leq&2\overline{L}\widetilde{\varepsilon}_n+\widetilde{\delta}_n+\delta_n(\varepsilon)+\overline{L}(2\overline{M}\overline{L}\vee 1)C_2(\overline{M}/\underline{M})\sqrt{n^{-1}\mathrm{Pdim}(\mathcal{S})\log(n)}\\
    &+\overline{L}\mathbb{E}\left[ \mathbb{E}\left\{ \mathbbm{1}(D_{n,\varepsilon})\left|Q\left( 1-\alpha+\gamma_n;F_{w\circ S}(\cdot;Z_{n+1}^\prime) \right)-q_{s(\cdot,\cdot;Z_{n+1}^\prime)}(T)\right|\mid Z_{n+1}^\prime \right\} \right]\vee\\
    &\quad \overline{L}\mathbb{E}\left[ \mathbb{E}\left\{ \mathbbm{1}(D_{n,\varepsilon})\left|Q\left( 1-\alpha-\gamma_n;F_{w\circ S}(\cdot;Z_{n+1}^\prime) \right)-q_{s(\cdot,\cdot;Z_{n+1}^\prime)}(T)\right|\mid Z_{n+1}^\prime \right\} \right]\,.
\end{align*}

\noindent\textbf{Step 2: bounding the remaining quantile-alignment term.}
We treat the upper-sign case, where the quantile is taken at $1-\alpha+\gamma_n$; the proof for the lower-sign case is identical. Given $Z_{n+1}^\prime$, the constant $Q(1-\alpha+\gamma_n;F_{w\circ S}(\cdot;Z_{n+1}^\prime))$ satisfies
\begin{align*}
    &\mathbb{E}\left\{ F_{T_w}\left(Q\left( 1-\alpha+\gamma_n;F_{w\circ S}(\cdot;Z_{n+1}^\prime) \right);Z_{n+1}^\prime \right) \right\}\\
    =&F_{w\circ S}\left(Q\left( 1-\alpha+\gamma_n;F_{w\circ S}(\cdot;Z_{n+1}^\prime) \right);Z_{n+1}^\prime \right)=1-\alpha+\gamma_n\,,
\end{align*}
where the expectation is taken over $T_w\sim P_{T,w}$ independent of all other random objects. Therefore
\begin{align*}
    \mathbb{E}\left\{ F_{T_w}\left(Q\left( 1-\alpha+\gamma_n;F_{w\circ S}(\cdot;Z_{n+1}^\prime) \right);Z_{n+1}^\prime \right)-F_{T_w}\left(q_{s(\cdot,\cdot;Z_{n+1}^\prime)}(T_w);Z_{n+1}^\prime \right) \right\}=\gamma_n\,.
\end{align*}

By the mean-value theorem, for each realized $T_w$ there exists $\xi(T_w,Z_{n+1}^\prime)$ between $Q(1-\alpha+\gamma_n;F_{w\circ S}(\cdot;Z_{n+1}^\prime))$ and $q_{s(\cdot,\cdot;Z_{n+1}^\prime)}(T_w)$ such that
\begin{align*}
    &F_{T_w}\left(Q\left( 1-\alpha+\gamma_n;F_{w\circ S}(\cdot;Z_{n+1}^\prime) \right);Z_{n+1}^\prime \right)-F_{T_w}\left(q_{s(\cdot,\cdot;Z_{n+1}^\prime)}(T_w);Z_{n+1}^\prime \right)\\
    =&f_{T_w}\left(\xi(T_w,Z_{n+1}^\prime);Z_{n+1}^\prime\right)\left\{ Q\left( 1-\alpha+\gamma_n;F_{w\circ S}(\cdot;Z_{n+1}^\prime) \right)-q_{s(\cdot,\cdot;Z_{n+1}^\prime)}(T_w) \right\}\,.
\end{align*}
Under the standing density assumption, $f_{T_w}(\xi(T_w,Z_{n+1}^\prime);Z_{n+1}^\prime)\in[\underline{L},\overline{L}]$. Taking expectation over $T_w$ yields
\begin{align*}
    &Q\left( 1-\alpha+\gamma_n;F_{w\circ S}(\cdot;Z_{n+1}^\prime) \right)\\
    =&\dfrac{\mathbb{E} \left\{ f_{T_w}(\xi(T_w,Z_{n+1}^\prime);Z_{n+1}^\prime)q_{s(\cdot,\cdot;Z_{n+1}^\prime)}(T_w)\mid Z_{n+1}^\prime \right\}+\gamma_n}{\mathbb{E} \left\{ f_{T_w}(\xi(T_w,Z_{n+1}^\prime);Z_{n+1}^\prime)\mid Z_{n+1}^\prime \right\}}\,.
\end{align*}
Hence
\begin{align*}
    &\mathbb{E}\left\{ \mathbbm{1}(D_{n,\varepsilon})\left|Q\left( 1-\alpha+\gamma_n;F_{w\circ S}(\cdot;Z_{n+1}^\prime) \right)-q_{s(\cdot,\cdot;Z_{n+1}^\prime)}(T)\right|\mid Z_{n+1}^\prime \right\}\\
    \leq&\mathbbm{1}(D_{n,\varepsilon})\mathbb{E}\Bigg[ \Bigg|Q\left( 1-\alpha+\gamma_n;F_{w\circ S}(\cdot;Z_{n+1}^\prime) \right) \\
    & \qquad\qquad\qquad -\dfrac{\mathbb{E} \left\{ f_{T_w}(\xi(T_w,Z_{n+1}^\prime);Z_{n+1}^\prime)q_{s^\star}(T_w)\mid Z_{n+1}^\prime \right\}}{\mathbb{E} \left\{ f_{T_w}(\xi(T_w,Z_{n+1}^\prime);Z_{n+1}^\prime)\mid Z_{n+1}^\prime \right\}}\Bigg|\mid Z_{n+1}^\prime \Bigg]\\
    &+\mathbbm{1}(D_{n,\varepsilon})\mathbb{E}\left[ \left|\dfrac{\mathbb{E} \left\{ f_{T_w}(\xi(T_w,Z_{n+1}^\prime);Z_{n+1}^\prime)q_{s^\star}(T_w)\mid Z_{n+1}^\prime \right\}}{\mathbb{E} \left\{ f_{T_w}(\xi(T_w,Z_{n+1}^\prime);Z_{n+1}^\prime)\mid Z_{n+1}^\prime \right\}}-q_{s^\star}(T^\prime)\right|\mid Z_{n+1}^\prime \right]\\
    &+\mathbbm{1}(D_{n,\varepsilon})\mathbb{E}\left\{ \left|q_{s^\star}(T)-q_{s^\star}(T^\prime)\right|\mid Z_{n+1}^\prime \right\} \\
    & +\mathbbm{1}(D_{n,\varepsilon})\mathbb{E}\left\{ \left|q_{s^\star}(T)-q_{s(\cdot,\cdot;Z_{n+1}^\prime)}(T)\right|\mid Z_{n+1}^\prime \right\}\,.
\end{align*}

By changing measure from $P_{T,2}$ to $P_{T,1}$ and from $P_{T,w}$ to $P_{T,1}$, and then applying the triangle inequality,
\begin{align*}
    &\mathbb{E}\left\{ \mathbbm{1}(D_{n,\varepsilon})\left|Q\left( 1-\alpha+\gamma_n;F_{w\circ S}(\cdot;Z_{n+1}^\prime) \right)-q_{s(\cdot,\cdot;Z_{n+1}^\prime)}(T)\right|\mid Z_{n+1}^\prime \right\}\\
    \leq&\mathbbm{1}(D_{n,\varepsilon})\dfrac{\mathbb{E} \left\{ r_w(T_0)f_{T_0}(\xi(T_0,Z_{n+1}^\prime);Z_{n+1}^\prime)\left| q_{s(\cdot,\cdot;Z_{n+1}^\prime)}(T_0)-q_{s^\star}(T_0) \right|\mid Z_{n+1}^\prime \right\}}{\mathbb{E} \left\{ f_{T_w}(\xi(T_w,Z_{n+1}^\prime);Z_{n+1}^\prime)\mid Z_{n+1}^\prime \right\}}\\
    &+\mathbbm{1}(D_{n,\varepsilon})\dfrac{\mathbb{E} \left\{ f_{T_w}(\xi(T_w,Z_{n+1}^\prime);Z_{n+1}^\prime)\left|q_{s^\star}(T_w)-q_{s^\star}(T^\prime)\right|\mid Z_{n+1}^\prime \right\}}{\mathbb{E} \left\{ f_{T_w}(\xi(T_w,Z_{n+1}^\prime);Z_{n+1}^\prime)\mid Z_{n+1}^\prime \right\}}\\
    &+\mathbbm{1}(D_{n,\varepsilon})\mathbb{E}\left\{ \left|q_{s^\star}(T)-q_{s^\star}(T^\prime)\right|\mid Z_{n+1}^\prime \right\}\\
    &+\mathbbm{1}(D_{n,\varepsilon})\mathbb{E}\left\{ r(T_0)\left|q_{s^\star}(T_0)-q_{s(\cdot,\cdot;Z_{n+1}^\prime)}(T_0)\right|\mid Z_{n+1}^\prime \right\}
    +\gamma_n\underline{L}^{-1}\\
    \leq& \mathbbm{1}(D_{n,\varepsilon})(\overline{L}/\underline{L})\mathbb{E}\left\{ r_w(T_0)\left|q_{s^\star}(T_0)-q_{s(\cdot,\cdot;Z_{n+1}^\prime)}(T_0)\right|\mid Z_{n+1}^\prime \right\}\\
    &+\mathbbm{1}(D_{n,\varepsilon})\mathbb{E}\left\{ r(T_0)\left|q_{s^\star}(T_0)-q_{s(\cdot,\cdot;Z_{n+1}^\prime)}(T_0)\right|\mid Z_{n+1}^\prime \right\}\\
    &+\mathbbm{1}(D_{n,\varepsilon})(\overline{L}/\underline{L})\mathbb{E}\left[ \left\{ r_w(T_0)+r(T_0) \right\}r(T_0^\prime)\left|q_{s^\star}(T_0^{\prime})-q_{s^\star}(T_0)\right| \right]
    +\gamma_n\underline{L}^{-1}\,,
\end{align*}
where the expectations in the last inequality are with respect to independent $T_0,T_0^\prime\sim P_{T,1}$. The same bound holds with $1-\alpha+\gamma_n$ replaced by $1-\alpha-\gamma_n$. By H\"older's inequality,
\begin{align*}
    &\mathbbm{1}(D_{n,\varepsilon})\mathbb{E}\left\{ r(T_0)\left|q_{s^\star}(T_0)-q_{s(\cdot,\cdot;Z_{n+1}^\prime)}(T_0)\right|\mid Z_{n+1}^\prime \right\}\\
    \leq&\mathbbm{1}(D_{n,\varepsilon})\left[ \mathbb{E}\left\{ r(T_0)^{p/(p-1)}\mid Z_{n+1}^\prime \right\} \right]^{(p-1)/p}\left[ \mathbb{E}\left\{ \left|q_{s^\star}(T_0)-q_{s(\cdot,\cdot;Z_{n+1}^\prime)}(T_0)\right|^p\mid Z_{n+1}^\prime \right\} \right]^{1/p}\\
    \leq&\|r\|_{P_{T,1},p/(p-1)}\varepsilon\,.
\end{align*}
Similarly,
\begin{align*}
    &\mathbbm{1}(D_{n,\varepsilon})(\overline{L}/\underline{L})\mathbb{E}\left\{ r_w(T_0)\left|q_{s^\star}(T_0)-q_{s(\cdot,\cdot;Z_{n+1}^\prime)}(T_0)\right|\mid Z_{n+1}^\prime \right\}\\
    \leq&(\overline{L}/\underline{L})\mathbb{E}\left( \|r_w\|_{P_{T,1},p/(p-1)} \right)\varepsilon\,.
\end{align*}
Substituting these bounds into the preceding inequality, we obtain
\begin{align*}
    &\mathbb{E}\left|\mathrm{Pr}\left( Y_{n+1}\in\widehat{C}(X_{n+1})\mid \phi(T) \right)-(1-\alpha)\right|\\
    \leq&2\overline{L}\widetilde{\varepsilon}_n+\widetilde{\delta}_n+\delta_n(\varepsilon)+\overline{L}(2\overline{M}\overline{L}\vee 1)C_2(\overline{M}/\underline{M})\sqrt{n^{-1}\mathrm{Pdim}(\mathcal{S})\log(n)}+\gamma_n\overline{L}\underline{L}^{-2}\\
    &+(\overline{L}/\underline{L})\mathbb{E}\left[ \|r+r_w\|_{P_{T,1},p/(p-1)}\varepsilon+\left\{ r_w(T_0)+r(T_0) \right\}r(T_0^\prime)\left|q_{s^\star}(T_0^\prime)-q_{s^\star}(T_0)\right| \right]\,.
\end{align*}
Since $\gamma_n=O(n^{-1})$, its contribution is absorbed by the other terms. 
Finally, we have
\begin{align*}
    &\mathbb{E}\left|\mathrm{Pr}\left( Y_{n+1}\in\widehat{C}(X_{n+1})\mid \phi(T) \right)-(1-\alpha)\right|\\
    \leq& C\left[ \widetilde{\delta}_n+\widetilde{\varepsilon}_n+\delta_n(\varepsilon)+\left\{ \|r\|_{P_{T,1},p/(p-1)}+\mathbb{E}\left( \|r_w\|_{P_{T,1},p/(p-1)} \right) \right\}\varepsilon \right]\\
    & +C \left( \sqrt{n^{-1}\mathrm{Pdim}(\mathcal{S})\log(n)}+\mathbb{E}\left[ \left\{ r_w(T_0)+r(T_0) \right\}r(T_0^{\prime})\left|q_{s^\star}(T_0)-q_{s^\star}(T_0^{\prime})\right| \right] \right)\,,
\end{align*}
which completes the proof of this theorem.
\end{proof}

\subsection{Proof of Theorem~\ref{theo: marginal SymmPI weighted}}\label{sec:supp_proof_marginal_symmpi_weighted}
\begin{proof}
    Under the weighted distributional invariance of $Z$ with ratio $r_g$ and the weighted distributional equivariance of $V$ with ratios $(r_g,\widetilde r_g)$, the marginal coverage probability of $\widehat{C}(z_{\rm obs})$ can be written as
    \begin{align*}
        \mathrm{Pr}(Z\in\widehat{C}(\Omega(Z)))&=\mathrm{Pr}(\psi(V(Z))\leq q(V(Z);\alpha))\\
        &=\mathrm{Pr}\left( \psi(\widetilde{r}_{G}\circ\widetilde{\rho}(G,V(Z)))\leq q(V(Z);\alpha) \right)\\
        &=\mathbb{E}\left\{ \mathrm{Pr}\left( \psi(\widetilde{r}_{G}\circ\widetilde{\rho}(G,V(Z)))\leq q(V(Z);\alpha)\mid Z \right) \right\}\,.
    \end{align*}
    For a given $Z$, $q(V(Z);\alpha)$ is defined as the $(1-\alpha)$-quantile of the distribution of $\psi(\widetilde{r}_{G}\circ\widetilde{\rho}(G,V(Z)))$ for $G\sim P_{\mathcal G}$, with the elements in $\mathcal{G}_e$ replaced by $\infty$. Therefore,
    \begin{gather*}
        \mathrm{Pr}\left( \psi(\widetilde{r}_{G}\circ\widetilde{\rho}(G,V(Z)))\leq q(V(Z);\alpha)\mid Z \right)\geq 1-\alpha\,.
    \end{gather*}
    For a given $Z$, the corresponding weighted distribution, $\psi(\widetilde{r}_{G}\circ\widetilde{\rho}(G,V(Z)))$ for $G\sim P_\mathcal{G}$, with the above replacement, is
    \begin{gather*}
        \dfrac{\sum_{g\in\mathcal{G}\setminus\mathcal{G}_e}\mathrm{Pr}(G=g)\widetilde{r}_{g}(\widetilde{\rho}(g,V(Z)))\delta_{\psi(\widetilde{\rho}(g,V(Z)))}+\sum_{g\in\mathcal{G}_e}\mathrm{Pr}(G=g)\widetilde{r}_{g}(\widetilde{\rho}(g,V(Z)))\delta_{\infty}}{\sum_{g\in\mathcal{G}}\mathrm{Pr}(G=g)\widetilde{r}_{g}(\widetilde{\rho}(g,V(Z)))}\,.
    \end{gather*}
    By the definition of the quantile function, for a discrete distribution, the probability that a random variable is less than or equal to its $(1-\alpha)$-quantile is at most $1-\alpha$ plus the point mass at that quantile. Hence,
    \begin{gather*}
        1-\alpha\leq\mathrm{Pr}\left( \psi(\widetilde{r}_{G}\circ\widetilde{\rho}(G,V(Z)))\leq q(V(Z);\alpha)\mid Z \right)\leq 1-\alpha+\widehat{\gamma}_{\max}(Z)\,,
    \end{gather*}
    where $\widehat{\gamma}_{\max}(Z)$ is a random variable that depends on $Z$ and is defined as
    \begin{align*}
        \widehat{\gamma}_{\max}(Z)&=\sup_{c\in\{\psi(\widetilde{\rho}(g,V(Z))):g\in\mathcal{G}\setminus\mathcal{G}_e\}}\dfrac{\sum_{g\in\mathcal{G},\psi(\widetilde{\rho}(g,V(Z)))=c}\mathrm{Pr}(G=g)\widetilde{r}_{g}(\widetilde{\rho}(g,V(Z)))}{\sum_{g\in\mathcal{G}}\mathrm{Pr}(G=g)\widetilde{r}_{g}(\widetilde{\rho}(g,V(Z)))}\\
        &=\sup_c\dfrac{\sum_{g\in\mathcal{G},\psi(\widetilde{\rho}(g,V(Z)))=c}\mathrm{Pr}(G=g)\widetilde{r}_{g}(\widetilde{\rho}(g,V(Z)))}{\sum_{g\in\mathcal{G}}\mathrm{Pr}(G=g)\widetilde{r}_{g}(\widetilde{\rho}(g,V(Z)))}\,.
    \end{align*}
    Finally, taking expectations gives
    \begin{gather*}
        1-\alpha\leq \mathrm{Pr}(Z\in\widehat{C}(\Omega(Z)))\leq 1-\alpha+\mathbb{E}\left\{ \widehat{\gamma}_{\max}(Z) \right\}\,.
    \end{gather*}
    This completes the proof of the theorem.
\end{proof}

\subsection{Proof of Theorem~\ref{theo:symmpi conditional miscoverage weighted}}\label{sec:supp_proof_conditional_symmpi_weighted}
\begin{proof}
    Fix $t\in\mathcal{T}$ and write 
    $q_t^\star=Q(1-\alpha;F_{\psi(V^\star(\cdot))\mid\phi(t)})$.
    The proof follows the same structure as the i.i.d.~data case in Theorem~\ref{theo: main miscoverage}, adapted to the weighted SymmPI construction.

    For the lower bound,
    \begin{align*}
        &\mathrm{Pr}\left( \psi(V(Z))\leq q(V(Z);\alpha)\mid \phi(t) \right)\\
        =&\mathrm{Pr}\left( \psi(V(Z))\leq q(V(Z);\alpha),D_\varepsilon^c\mid \phi(t) \right)+\mathrm{Pr}\left( \psi(V(Z))\leq q(V(Z);\alpha),D_\varepsilon\mid \phi(t) \right)\\
        \geq&\mathrm{Pr}\left( \psi(V^\star(Z))+\varepsilon\leq q(V^\star(Z);\alpha)-\varepsilon,D_\varepsilon\mid \phi(t) \right)\\
        \geq&\mathrm{Pr}\left( \psi(V^\star(Z))+\varepsilon\leq q(V^\star(Z);\alpha)-\varepsilon\mid \phi(t) \right)-\mathrm{Pr}\left( D_\varepsilon^c\mid \phi(t) \right)\\
        =&\mathbb{E}\left\{ F_{\psi(V^\star(\cdot))\mid\phi(t)}(q(V^\star(Z);\alpha)-2\varepsilon)\mid \phi(t) \right\}-\delta(\varepsilon)\,,
    \end{align*}
    where the last equality follows from the conditional independence in Assumption~\ref{ass: independence score weighted}. Since $F_{\psi(V^\star(\cdot))\mid\phi(t)}$ is Lipschitz continuous, it is continuous, and hence
    $F_{\psi(V^\star(\cdot))\mid\phi(t)}(q_t^\star)=1-\alpha$. Therefore,
    \begin{align*}
        &\mathbb{E}\left\{ F_{\psi(V^\star(\cdot))\mid\phi(t)}(q(V^\star(Z);\alpha)-2\varepsilon)\mid \phi(t) \right\}-\delta(\varepsilon)\\
        \geq&1-\alpha-\delta(\varepsilon)-L_t\mathbb{E}\left[\left\{\left|q(V^\star(Z);\alpha)-q_t^\star\right|+2\varepsilon\right\}\wedge L_t^{-1}\mid \phi(t)\right]\\
        \geq&1-\alpha-\delta(\varepsilon)-2L_t\varepsilon-L_t\mathbb{E}\left\{\left|q(V^\star(Z);\alpha)-q_t^\star\right|\wedge L_t^{-1}\mid \phi(t)\right\}\,.
    \end{align*}

    Similarly, for the upper bound,
    \begin{align*}
        &\mathrm{Pr}\left( \psi(V(Z))\leq q(V(Z);\alpha)\mid \phi(t) \right)\\
        =&\mathrm{Pr}\left( \psi(V(Z))\leq q(V(Z);\alpha),D_\varepsilon^c\mid \phi(t) \right)+\mathrm{Pr}\left( \psi(V(Z))\leq q(V(Z);\alpha),D_\varepsilon\mid \phi(t) \right)\\
        \leq&\delta(\varepsilon)+\mathrm{Pr}\left( \psi(V^\star(Z))-\varepsilon\leq q(V^\star(Z);\alpha)+\varepsilon\mid \phi(t) \right)\\
        =&\delta(\varepsilon)+\mathbb{E}\left\{ F_{\psi(V^\star(\cdot))\mid\phi(t)}(q(V^\star(Z);\alpha)+2\varepsilon)-F_{\psi(V^\star(\cdot))\mid\phi(t)}(q_t^\star)+1-\alpha\mid \phi(t) \right\}\\
        \leq&1-\alpha+\delta(\varepsilon)+2L_t\varepsilon+L_t\mathbb{E}\left\{\left|q(V^\star(Z);\alpha)-q_t^\star\right|\wedge L_t^{-1}\mid \phi(t)\right\}.
    \end{align*}

    Combining the lower and upper bounds gives
    \begin{align*}
        &\left|\mathrm{Pr}\left( \psi(V(Z))\leq q(V(Z);\alpha)\mid \phi(t) \right)-(1-\alpha)\right|\\
        \leq&\delta(\varepsilon)+2L_t\varepsilon+L_t\mathbb{E}\left\{\left|q(V^\star(Z);\alpha)-q_t^\star\right|\wedge L_t^{-1}\mid \phi(t)\right\}.
    \end{align*}
    By the triangle inequality and the inequality $(a+b)\wedge c\leq (a\wedge c)+b$ for $a,b,c\geq0$, we obtain
    \begin{align*}
        &\mathbb{E}\left\{\left|q(V^\star(Z);\alpha)-q_t^\star\right|\wedge L_t^{-1}\mid \phi(t)\right\}\\
        \leq&\mathbb{E}\left\{\left|q(V^\star(Z);\alpha)-Q(1-\alpha;F_{r\circ\psi(V^\star(\cdot))})\right|\wedge L_t^{-1}\mid \phi(t)\right\}\\
        &+\left|Q(1-\alpha;F_{r\circ\psi(V^\star(\cdot))})-Q(1-\alpha;F_{\psi(V^\star(\cdot))\mid\phi(t)})\right|\,.
    \end{align*}
    Substituting this bound into the previous inequality proves the theorem.
\end{proof}

\subsection{Proof of Theorem~\ref{theo: valid p value}}
\begin{proof}
By the definition of $\widehat{p}(z)$ in formula \eqref{eq: general p value}, we can rewrite the event $\{\widehat{p}(z)\leq\alpha\}$ as:
\begin{align*}
    \{\widehat{p}(z)\leq\alpha\}&=\left\{ \mathrm{Pr}\left( \psi(V(z))\leq\psi(\widetilde{r}_{G}\circ\widetilde{\rho}(G,V(z))) \right)<\alpha \right\}\\
    &=\left\{ \psi(V(z))> Q\left( 1-\alpha;\psi(\widetilde{r}_{G}\circ\widetilde{\rho}(G,V(z))) \right) \right\}\,.
\end{align*}
Since $\psi(V(Z)) \overset{\rm d.}{=} \psi(\widetilde{r}_{G}\circ\widetilde{\rho}(G,V(Z)))$ and by the definition of quantile function, we obtain that
\begin{align*}
    \mathrm{Pr}\left( \widehat{p}(Z)\leq\alpha \right)&=\mathrm{Pr}\left( \psi(V(Z))>Q\left( 1-\alpha;\psi(\widetilde{r}_{G}\circ\widetilde{\rho}(G,V(Z))) \right) \right)\\
    &=1-\mathrm{Pr}\left( \psi(V(Z))\leq Q\left( 1-\alpha;\psi(\widetilde{r}_{G}\circ\widetilde{\rho}(G,V(Z))) \right) \right)\\
    &\leq \alpha\,,
\end{align*}
which indicates $\widehat{p}$ is valid.

When the exchangeable data construction in Section~\ref{sec: general framework} is used with weight function $w(\cdot)$, for $G\sim P_{\mathcal G}$, $\psi(\widetilde{r}_{G}\circ\widetilde{\rho}(G,V(Z)))$ has the following distribution:
\begin{gather*}
    \dfrac{\sum_{i=1}^{n+1}w(X_i)\delta_{s(X_i,Y_i;Z)}}{\sum_{i=1}^{n+1}w(X_i)}\,.
\end{gather*}
Therefore the $p$-value can be rewritten as:
\begin{gather*}
    \widehat{p}(Z)=\dfrac{\sum_{i=1}^{n+1}w(X_i)\mathbbm{1}\left( s(X_{n+1},Y_{n+1};Z)< s(X_i,Y_i;Z) \right)}{\sum_{i=1}^{n+1}w(X_i)}\,.
\end{gather*}
Thus
\begin{align*}
    \left\{ \widehat{p}(Z)\leq\alpha \right\}&=\left\{ \dfrac{\sum_{i=1}^{n+1}w(X_i)\mathbbm{1}\left( s(X_i,Y_i;Z)\leq s(X_{n+1},Y_{n+1};Z) \right)}{\sum_{i=1}^{n+1}w(X_i)}\geq 1-\alpha \right\}\\
    &=\left\{ Y_{n+1}\in\widehat{C}(X_{n+1}) \right\}\,.
\end{align*}
Therefore, for any conditioning event $\phi(t)$ with positive probability,
\begin{gather*}
\left|\mathrm{Pr}(\widehat{p}(Z)\leq\alpha\mid \phi(t))-\alpha\right|=\left|\mathrm{Pr}\left( Y_{n+1}\in\widehat{C}(X_{n+1})\mid \phi(t) \right)-(1-\alpha)\right|\,.
\end{gather*}
\end{proof}

\subsection{Proof of Theorem~\ref{theo: pointwise hierarchical}}
\begin{proof}
    The proof contains two parts. We first verify that the hierarchical score satisfies the same approximation and stability ingredients required by Theorem~\ref{theo:symmpi conditional miscoverage}. We then bound the remaining structure-specific conditional-mismatch error, where the argument splits according to whether within-branch sample size $N$ or the number of branches $K$ is the more effective source of concentration.

    Assume $Z_{k,i,1}$ has the $(k-1)N+i$-th and $NK$-th elements $Z_{i,1}^{(k)}$ and $Z_{N,1}^{(K)}$ being independent copies of $Z_{i}^{(k)}$ and $Z_{N}^{(K)}$, with other elements being identical to $Z$. Also define $Z_{k,i,1}^{(j)}$ as the $j$-th block of $Z_{k,i,1}$. Denote the event $D_{n,\varepsilon}$ as:
    \begin{align*}
        D_{n,\varepsilon}=&\left\{ \sup_{k\in[K],i\in[N]}\left|s(X_i^{(k)},Y_i^{(k)};Z_{k,i,1}^{(k)},Z_{k,i,1})- s^\star _k(X_i^{(k)},Y_i^{(k)})\right| \leq \varepsilon \right\}\cap D_0\\
        \subset&\underset{k\in[K],i\in[N]}{\cap}\left\{ \left|s(X_i^{(k)},Y_i^{(k)};Z_{k,i,1}^{(k)},Z_{k,i,1})- s^\star _k(X_i^{(k)},Y_i^{(k)})\right| \leq \varepsilon \right\}\cap D_0\,,
    \end{align*}
    where $D_0=\{X_N^{(K)}=x,P_1=p_1,\ldots,P_K=p_K\}$. By the definition of $Z_{k,i,1}$, $(X_i^{(k)},Y_i^{(k)})$ is independent of $Z_{k,i,1}^{(k)}$ and $Z_{k,i,1}$. By the union bound, we can derive that $D_{n,\varepsilon}$ has conditional probability:
    \begin{gather*}
        \mathrm{Pr}\left( D_{n,\varepsilon}\mid X_N^{(K)}=x,P_1=p_1,\ldots,P_K=p_K \right)=\mathrm{Pr}\left( D_{n,\varepsilon}\mid D_0 \right)\geq 1-KN\delta_n(\varepsilon)\,.
    \end{gather*}
    Denote following event:
    \begin{align*}
        \widetilde{D}_n=&\left\{ \sup_{Z_{k,i,1}^{(k)},Z_{N,1}^{(K)}:i\in[N],k\in[K]}\sup_{x,y}\left|s(x,y;Z_{k,i,1}^{(k)},Z_{k,i,1})-s(x,y;Z^{(k)},Z)\right|\leq 2\widetilde{\varepsilon}_n \right\}\cap D_0\,.
    \end{align*}
    By Assumption~\ref{ass: stable score hierarchical}, it has conditional probability:
    \begin{gather*}
        \mathrm{Pr}\left( \widetilde{D}_n\mid X_N^{(K)}=x,P_1=p_1,\ldots,P_K=p_K \right)=\mathrm{Pr}\left( \widetilde{D}_n\mid D_0 \right)\geq 1-2\widetilde{\delta}_n\,.
    \end{gather*}
    Thus on the event $\widetilde{D}_n\cap D_{n,\varepsilon}$, we have:
    \begin{gather*}
        \left|s(X_i^{(k)},Y_i^{(k)};Z^{(k)},Z)- s^\star _k(X_i^{(k)},Y_i^{(k)})\right|\leq \varepsilon + 2\widetilde{\varepsilon}_n\,.
    \end{gather*}
    Further define 
    \begin{gather*}
        D_\varepsilon=\left\{ \left|\psi(V(Z))-\psi(V^\star(Z))\right|\vee\left|q(V(Z);\alpha)-q(V^\star(Z);\alpha)\right|\leq \varepsilon + 2\widetilde{\varepsilon}_n \right\}\,,
    \end{gather*}
    and we can derive that $\widetilde{D}_n\cap D_{n,\varepsilon}\subset D_\varepsilon\cap D_0$ with probability
    \begin{gather}
        \mathrm{Pr}\left( D_\varepsilon\mid X_N^{(K)}=x,P_1=p_1,\ldots,P_K=p_K \right)=\mathrm{Pr}\left( D_\varepsilon\mid D_0 \right)\geq 1-KN\delta_n(\varepsilon)-2\widetilde{\delta}_n\,.\label{eq: hier ineq 1}
    \end{gather}
    Moreover, conditional on $D_0$, the oracle test score $s_K^\star(X_N^{(K)},Y_N^{(K)})$ is independent of $q(V^\star(Z);\alpha)$, because the latter is computed from the remaining oracle scores with the test position replaced by $\infty$. 
    Therefore, averaging the previous inequality~\eqref{eq: hier ineq 1} over $P_1,\ldots,P_{K-1}$ shows that Assumption~\ref{ass: independence score} holds with $\delta(\varepsilon)=KN\delta_n(\varepsilon)+2\widetilde{\delta}_n$. Since $F_{s_K^\star\mid\phi(p_K,x)}$ is $L$-Lipschitz, Theorem~\ref{theo:symmpi conditional miscoverage} reduces the problem to controlling the structure specific conditional-mismatch error:
    \begin{align*}
        &\left|\mathrm{Pr}\left( Y_N^{(K)}\in\widehat{C}_{\rm HI}(X_N^{(K)})\mid X_N^{(K)}=x, P_K=p_K \right)-(1-\alpha)\right|\\
        \leq&KN\delta_n(\varepsilon)+2\widetilde{\delta}_n+2L\varepsilon+4L\widetilde{\varepsilon}_n+L\mathbb{E}\left\{ \left|q(V^\star(Z);\alpha)-q_{s_K^\star}(p_K,x)\right| \right\}\,.
    \end{align*}
    To be specific, the $q(V^\star(Z);\alpha)$ reduces to:
    \begin{gather*}
        q(V^\star(Z);\alpha)=Q\left( 1-\alpha;(KN)^{-1}\sum_{(k,i)\neq (K,N)}\delta_{ s^\star _k(X_i^{(k)},Y_i^{(k)})}+\delta_\infty \right)\,,
    \end{gather*}
    which satisfies following inequality:
    \begin{align*}
        &Q\left( 1-\alpha;(KN)^{-1}\sum_{k\in[K],i\in[N]}\delta_{ s^\star _k(X_i^{(k)},Y_i^{(k)})} \right)\leq q(V^\star(Z);\alpha)\\
        & ~~~~~~~~~~~~~~~~~~~~~~~~~~~~~~~~~~~ \leq Q\left( 1-\alpha_n;(KN)^{-1}\sum_{k\in[K],i\in[N]}\delta_{ s^\star _k(X_i^{(k)},Y_i^{(k)})} \right)\,,
    \end{align*}
    where $\alpha_n=\alpha-1/(KN+1)$. For notation simplicity, we denote $ s^\star _k(X_i^{(k)},Y_i^{(k)})$ as $ S^\star _{k,i}$ and $ S^\star_{k,i}$ has c.d.f. $F_{s_k^\star}$ given the specific score function $s^\star _k(\cdot,\cdot)$. 
    
    \noindent\textbf{Case 1: $N\geq \log(K)$.} In this regime, the empirical score distribution within each branch is sufficiently accurate, so we compare the oracle conformal quantile with the population mixture quantile through concentration of the branch-level empirical c.d.f.s. 
    Denote empirical distribution of $\{ S^\star _{k,i}\}_{i\in[N]}$ and $\{ S^\star _{k,i}\}_{k\in[K],i\in[N]}$ as 
    \begin{gather*}
        \widehat{F}_{s_k^\star}(u)=N^{-1}\sum_{i=1}^N\mathbbm{1}( S^\star _{k,i}\leq u)\\
        \widehat{F}_{s^\star}(u)=K^{-1}\sum_{k\in[K]}\widehat{F}_{s_k^\star}(u)=(KN)^{-1}\sum_{k\in[K],i\in[N]}\mathbbm{1}( S^\star _{k,i}\leq u)\\
        \widetilde{F}_{s^\star}(u)=K^{-1}\sum_{k\in[K]}F_{s_k^\star}(u)\,,~F_{s^\star}(u)=\mathbb{E} \left\{ F_{s_k^\star}(u) \right\}\,,
    \end{gather*}
    where the last expectation is taken over $ s_k^\star $ with random $P_K\sim\Pi$. Then
    \begin{gather}
        Q\left( 1-\alpha;\widehat{F}_{s^\star} \right)\leq q(V^\star(Z);\alpha)\leq Q\left( 1-\alpha_n;\widehat{F}_{s^\star} \right)\,.\label{eq: hier ineq 2}
    \end{gather}
    By the Dvoretzky-Kiefer-Wolfowitz (DKW) inequality, for any $\epsilon>0$, we can derive that:
    \begin{gather*}
        \mathrm{Pr}\left( \sup_{u\in\mathbb{R}}|\widehat{F}_{s_k^\star}(u)-F_{s_k^\star}(u)|>\epsilon \right)\leq 2\exp(-2N\epsilon^2)\,.
    \end{gather*}
    As $\widehat{F}_{s_k^\star}$ are independent with each other, we have:
    \begin{align*}
        &\mathrm{Pr}\left( \sup_{u\in\mathbb{R},k\in[K]}|\widehat{F}_{s_k^\star}(u)-F_{s_k^\star}(u)|>\epsilon \right)\\
        =&1-\prod_{k\in[K]}\mathrm{Pr}\left( \sup_{u\in\mathbb{R}}|\widehat{F}_{s_k^\star}(u)-F_{s_k^\star}(u)|\leq\epsilon \right)\leq 1-\left[ \{1-2\exp(-2N\epsilon^2)\}\vee 0 \right]^K\,.
    \end{align*}
    Since $\sup_{u\in\mathbb{R},k\in[K]}|\widehat{F}_{s_k^\star}(u)-F_{s_k^\star}(u)|\leq\epsilon$ indicates that $\sup_{u\in\mathbb{R}}|\widehat{F}_{s^\star}(u)-F_{s^\star}(u)|\leq\epsilon$, we can further derive that:
    \begin{gather*}
        \mathrm{Pr}\left( \sup_{u\in\mathbb{R}}|\widehat{F}_{s^\star}(u)-\widetilde{F}_{s^\star}(u)|>\epsilon \right)\leq 1-\left[ \{1-2\exp(-2N\epsilon^2)\}\vee 0 \right]^K\,.
    \end{gather*}
    Denote $A_\epsilon=\{\sup_{u\in\mathbb{R}}|\widehat{F}_{s^\star}(u)-\widetilde{F}_{s^\star}(u)|\leq\epsilon\}$ with probability  
    \begin{gather*}
        \mathrm{Pr}(A_\epsilon)\geq \left[ \{1-2\exp(-2N\epsilon^2)\}\vee 0 \right]^K\geq 1-2K\exp(-2N\epsilon^2)\,.
    \end{gather*}
    For any $u\in\mathbb{R}$, on event $A_\epsilon$ it holds that
    \begin{gather*}
        \widetilde{F}_{s^\star}(u)-\epsilon\leq\widehat{F}_{s^\star}(u)\leq \widetilde{F}_{s^\star}(u)+\epsilon\,.
    \end{gather*}
    Define $\beta_k=F_{s_k^\star}(q_{s^\star})$ and $\overline{\beta}=K^{-1}\sum_{k\in[K]}\beta_k$, where $\beta_k$ is random with randomness induced by $P_k\sim\Pi$. Since $F_{s_k^\star}$ has a density lower bounded by $\underline{L}$ on its support, we obtain that for any $\gamma>0$, it holds that:
    \begin{gather*}
        F_{s_k^\star}(q_{s^\star}+\gamma)\geq\beta_k+\underline{L}\gamma\,,~F_{s_k^\star}(q_{s^\star}-\gamma)\leq\beta_k-\underline{L}\gamma\\
        \widetilde{F}_{s^\star}(q_{s^\star}+\gamma)\geq\overline{\beta}+\underline{L}\gamma\,,~\widetilde{F}_{s^\star}(q_{s^\star}-\gamma)\leq\overline{\beta}-\underline{L}\gamma\,,
    \end{gather*}
    which indicates on the event $A_\epsilon$ the following holds: 
    \begin{gather*}
        \widehat{F}_{s^\star}(q+\gamma)\geq\overline{\beta}+\underline{L}\gamma-\epsilon\,,~\widehat{F}_{s^\star}(q-\gamma)\leq\overline{\beta}-\underline{L}\gamma+\epsilon\,.
    \end{gather*}
    Simple algebra yields that:
    \begin{gather*}
        q_{s^\star}+\left\{ |\overline{\beta}-(1-\alpha_n)|+\epsilon \right\}/\underline{L}\geq Q\left( \overline{\beta}+|\overline{\beta}-(1-\alpha_n)|;\widehat{F}_{s^\star} \right)\geq Q\left( 1-\alpha_n;\widehat{F}_{s^\star} \right)\\
        q_{s^\star}-\left\{ |\overline{\beta}-(1-\alpha)|+\epsilon \right\}/\underline{L}\leq Q\left( \overline{\beta}-|\overline{\beta}-(1-\alpha)|;\widehat{F}_{s^\star} \right)\leq Q\left( 1-\alpha;\widehat{F}_{s^\star} \right)\,.
    \end{gather*}
    Therefore, on the event $A_\epsilon$, by inequality~\eqref{eq: hier ineq 2}, we have:
    \begin{gather*}
        -|\overline{\beta}-(1-\alpha)|-\epsilon\leq \underline{L}\left\{ q(V^\star(Z);\alpha)-q_{s^\star} \right\}\leq |\overline{\beta}-(1-\alpha_n)|+\epsilon\,,
    \end{gather*}
    which indicates 
    \begin{align*}
        &\mathrm{Pr}\left( \underline{L}\left|q(V^\star(Z);\alpha)-q\right|-|\overline{\beta}-(1-\alpha_n)|\vee|\overline{\beta}-(1-\alpha)|>\epsilon \right)\\
        \leq& 1-\left[ \{1-2\exp(-2N\epsilon^2)\}\vee 0 \right]^K\,.
    \end{align*}
    Therefore, by triangle inequality and equality of expectations we can calculate the expectation of $\left|q(V^\star(Z);\alpha)-q_{s_K^\star}(p_K,x)\right|$ as follows:
    \begin{align*}
        &\mathbb{E}\left\{ \left|q(V^\star(Z);\alpha)-q_{s_K^\star}(p_K,x)\right| \right\}\\
        \leq& \left|q_{s^\star}-q_{s_K^\star}(p_K,x)\right|+\mathbb{E}\left\{ \left|q(V^\star(Z);\alpha)-q_{s^\star}\right| \right\}\,.
    \end{align*}
    Simple algebra yields that:
    \begin{align*}
        &\mathbb{E}\left\{ \underline{L}\left|q(V^\star(Z);\alpha)-q_{s^\star}\right|-|\overline{\beta}-(1-\alpha_n)|\vee|\overline{\beta}-(1-\alpha)| \right\}\\
        \leq&\int_0^\infty\mathrm{Pr}\left( \underline{L}\left|q(V^\star(Z);\alpha)-q\right|-|\overline{\beta}-(1-\alpha_n)|\vee|\overline{\beta}-(1-\alpha)|>\epsilon \right)d\epsilon\\
        \leq&\int_0^\infty 1-\left[ \{1-2\exp(-2N\epsilon^2)\}\vee 0 \right]^Kd\epsilon\\
        =&(2N)^{-1/2}\int_0^\infty 1-\left[ \{1-2\exp(-\epsilon^2)\}\vee 0 \right]^Kd\epsilon\,,
    \end{align*}
    where the last equality is obtained by change of variable $\epsilon\to\epsilon/\sqrt{2N}$. Further we can derive that:
    \begin{align*}
        &\int_0^\infty 1-\left[ \{1-2\exp(-\epsilon^2)\}\vee 0 \right]^Kd\epsilon\\
        =&\int_0^{2\sqrt{\log(K)}}1-\left[ \{1-2\exp(-\epsilon^2)\}\vee 0 \right]^Kd\epsilon\\
        &+\int_{2\sqrt{\log(K)}}^\infty 1-\left[ \{1-2\exp(-\epsilon^2)\}\vee 0 \right]^Kd\epsilon\\
        \leq&2\sqrt{\log(K)}+\int_{2\sqrt{\log(K)}}^\infty 1-\{1-2\exp(-\epsilon^2)\}^Kd\epsilon\\
        \leq&2\sqrt{\log(K)}+\int_0^\infty 1-\{1-2\exp(-\epsilon^2-4\log(K)-4\epsilon\sqrt{log(K)})\}^Kd\epsilon\\
        \leq&2\sqrt{\log(K)}+\int_0^\infty 1-\{1-2K^{-4}\exp(-\epsilon^2)\}^Kd\epsilon\\
        \leq&2\sqrt{\log(K)}+\int_0^\infty 2K^{-3}\exp(-\epsilon^2)d\epsilon\\
        =&2\sqrt{\log(K)}+\sqrt{\pi}K^{-3}\,,
    \end{align*}
    where the second inequality is obtained by change of variable $\epsilon\to\epsilon+2\sqrt{\log(K)}$ and the last inequality is obtained by Bernoulli's inequality. 
    Finally we can conclude that there exisits constant $C>0$ such that the following holds:
    \begin{align*}
        &L\mathbb{E}\left\{ \left|q(V^\star(Z);\alpha)-q_{s_K^\star}(p_K,x)\right| \right\}\\
        \leq& L\left|q_{s^\star}-q_{s_K^\star}(p_K,x)\right|+L\underline{L}^{-1}\sqrt{\dfrac{2\pi \log(K)}{N}}+L\underline{L}^{-1}\mathbb{E}\left\{ |\overline{\beta}-(1-\alpha_n)|\vee|\overline{\beta}-(1-\alpha)| \right\}\\
        \leq& C \left[ \left|q_{s^\star}-q_{s_K^\star}(p_K,x)\right|+\sqrt{\dfrac{\log(K)}{N}}+\mathbb{E}\left\{ |\overline{\beta}-(1-\alpha)| \right\} \right]\,.
    \end{align*}
    By the definition of $\overline{\beta}$ and $q_{s^\star}$ we obtain that $\mathbb{E}\overline{\beta}=1-\alpha$. Since $\beta_k$ is independent of each other, simple algebra using concentration inequality indicates there exists constant $C^\prime>0$ such that:
    \begin{gather*}
        \mathbb{E}\left\{ |\overline{\beta}-(1-\alpha)| \right\} \leq C^\prime K^{-1/2}\sigma_{s^\star}\,.
    \end{gather*}
    For notation simplicity, we use constant $C$ to absorb other constants and we can conclude that:
    \begin{align*}
        &\left|\mathrm{Pr}\left( Y_N^{(K)}\in\widehat{C}_{\rm HI}(X_N^{(K)})\mid X_N^{(K)}=x, P_K=p_K \right)-(1-\alpha)\right|\\
        \leq&C\left\{ KN\delta_n(\varepsilon)+\widetilde{\delta}_n+\varepsilon+\widetilde{\varepsilon}_n+\left|q_{s^\star}-q_{s_K^\star}(p_K,x)\right|+\sqrt{\dfrac{\log(K)}{N}}+\dfrac{\sigma_{s^\star}}{\sqrt{K}} \right\}\,.
    \end{align*}

    \noindent\textbf{Case 2: $N<\log(K)$.} In this regime, concentration within each branch is weaker, so it is more efficient to average across branches first and then control the resulting empirical c.d.f. 
    Denote the empirical distribution of $\{ S^\star _{k,i}\}_{k\in[K]}$ for $i\in[N]$ as
    \begin{gather*}
        \widehat{H}_i(u)=K^{-1}\sum_{k=1}^K\mathbbm{1}( S^\star _{k,i}\leq u)\,.
    \end{gather*}
    Therefore $\widehat{F}_{s^\star}(u)=N^{-1}\sum_{i=1}^{N}\widehat{H}_i(u)$. 
    Since $S^\star_{k,i}$ are i.i.d. sampled from c.d.f. $F_{s^\star}(\cdot)$ for a given $i\in[N]$, thus by the DKW inequality, for any $\epsilon>0$, we can derive that:
    \begin{gather*}
        \mathrm{Pr}\left( \sup_{u\in\mathbb{R}}|\widehat{H}_i(u)-F_{s^\star}(u)|>\epsilon \right)\leq 2\exp(-2K\epsilon^2)\,.
    \end{gather*}
    Then by union bound we obtain that
    \begin{gather*}
        \mathrm{Pr}\left( \sup_{u\in\mathbb{R},i\in[N]}|\widehat{H}_i(u)-F_{s^\star}(u)|>\epsilon \right)\leq 2N\exp(-2K\epsilon^2)\,,
    \end{gather*}
    which indicates 
    \begin{gather*}
        \mathrm{Pr}\left( \sup_{u\in\mathbb{R}}|\widehat{F}_{s^\star}(u)-F_{s^\star}(u)|>\epsilon \right)\leq 2N\exp(-2K\epsilon^2)\,.
    \end{gather*}
    We can calculate following integral:
    \begin{align*}
        &\int_0^\infty 2N\exp(-2K\epsilon^2)d\epsilon=N\sqrt{\frac{\pi}{2K}}\,.
    \end{align*}
    Thus similar to the previous case, we can derive that there exists constant $C>0$ such that:
    \begin{align*}
        &\left|\mathrm{Pr}\left( Y_N^{(K)}\in\widehat{C}_{\rm HI}(X_N^{(K)})\mid X_N^{(K)}=x, P_K=p_K \right)-(1-\alpha)\right|\\
        \leq&C\left\{ KN\delta_n(\varepsilon)+\widetilde{\delta}_n+\varepsilon+\widetilde{\varepsilon}_n+\left|q_{s^\star}-q_{s_K^\star}(p_K,x)\right|+\dfrac{N+\sigma_{s^\star}}{\sqrt{K}} \right\}\,.
    \end{align*}

    Therefore, we can conclude that there exists constant $C>0$ such that the following holds:
    \begin{align*}
        &\left|\mathrm{Pr}\left( Y_N^{(K)}\in\widehat{C}_{\rm HI}(X_N^{(K)})\mid X_N^{(K)}=x, P_K=p_K \right)-(1-\alpha)\right|\leq \\
        \leq&C\left\{ KN\delta_n(\varepsilon)+\widetilde{\delta}_n+\varepsilon+\widetilde{\varepsilon}_n+\dfrac{N}{\sqrt{K}}\wedge\sqrt{\dfrac{\log(K)}{N}}+\left|q_{s^\star}-q_{s_K^\star}(p_K,x)\right|+\dfrac{\sigma_{s^\star}}{\sqrt{K}} \right\}\,.
    \end{align*}
\end{proof}

\subsection{Proof of Corollary~\ref{cor: cc L1 test shift}}\label{sec:supp_proof_cc_L1_test_shift}
\begin{proof}

We apply Theorem~\ref{theo: averaged miscoverage shift} with $T=X_{n+1}$ and $\phi(t)=\{X_{n+1}=t\}$. For the CC construction under covariate shift, the weight function is $w(x)=r_X(x)\in[\underline{M},\overline{M}]$. Thus $r(t)\leq \overline{M}/\underline{M}$ and $r_w(t)\leq \overline{M}/\underline{M}$ hold for every $t$. Let
\begin{gather*}
    s^\star(x,y)=v(x,y)-Q_\alpha^\star(x)\,.
\end{gather*}
Then $q_{s^\star}(t)=0$ for every $t$, and therefore
\begin{gather*}
    \mathbb{E}\left|q_{s^\star}(T)-q_{s^\star}(T^\prime)\right|=0
\end{gather*}
always holds. We now verify the assumptions and compute the corresponding terms in Theorem~\ref{theo: averaged miscoverage shift}.

First, we verify Assumption~\ref{ass: space complexity}. Since $\{\kappa\in\mathbb{R}^{d_0}:\|\kappa\|_2\le B\}$ is closed and bounded in the finite-dimensional Euclidean space $\mathbb{R}^{d_0}$, it is compact. The map $\kappa\mapsto f_\kappa(\cdot)=\sum_{i=1}^{d_0}\kappa_i\eta_i(\cdot)$ is linear and continuous. Hence $\mathcal{S}$, as the image of a compact set under a continuous mapping, is compact, for example under the sup-norm. Moreover, $\mathcal{F}$ is a linear function class parameterized by $\kappa\in\mathbb{R}^{d_0}$, so its pseudo-dimension is bounded by the parameter dimension, namely $\mathrm{Pdim}(\mathcal{F})\le d_0<\infty$. By Lemma~\ref{lem:pdim_Sprime}, we also have $\mathrm{Pdim}(\mathcal{S})\le d_0<\infty$.

Since $\|\eta(X)\|_2\le M$ holds almost surely and $\|\kappa\|_2\le B$, we have
\begin{gather*}
    |f_\kappa(X)|\le \|\kappa\|_2\,\|\eta(X)\|_2\le BM
\end{gather*}
almost surely, which implies that $|\widehat{Q}_\alpha(\cdot;Z)|$ is bounded by $BM$. Therefore,
\begin{gather*}
    |s(X,Y;Z)|=|v(X,Y)-\widehat{Q}_\alpha(X;Z)| \le |v(X,Y)|+|\widehat{Q}_\alpha(X;Z)| \le (B+1)M\,.
\end{gather*}
Since $\phi(t)=\{X_{n+1}=t\}$ depends only on the test covariate, it is independent of the first $n$ elements of $Z$. By the definition of $q_{s(\cdot,\cdot;Z_{n+1}^\prime)}$, we have
\begin{gather*}
    \|q_{s(\cdot,\cdot;Z_{n+1}^\prime)}-q_{s^\star}\|_{P_{T,1},p}
    = \| \widehat{Q}_\alpha(\cdot;Z_{n+1}^\prime)- Q_\alpha^\star(\cdot) \|_{P_{X,1},p}\,.
\end{gather*}
Because $Z_{n+1}^\prime\overset{\rm d}{=}Z$, the same bound holds with $Z_{n+1}^\prime$ replaced by $Z$. We assume that there exists $\delta_n(\varepsilon)$ such that
\begin{gather*}
    \mathrm{Pr}\bigl( \| \widehat{Q}_\alpha(\cdot;Z)- Q_\alpha^\star(\cdot) \|_{P_{X,1},p}>\varepsilon \bigr)\leq \delta_n(\varepsilon)\,,
\end{gather*}
and defer the analysis of $\delta_n(\varepsilon)$ to the end of the proof.

Since the conditional c.d.f.~of $s(X,Y;Z)$ given $X=x$ is a location-shifted version of that of $v(X,Y)$ given $X=x$, it has the same density bounds as $v(X,Y)\mid X=x$. Therefore, the event-conditional densities required by Theorem~\ref{theo: averaged miscoverage shift} are bounded within $[\underline{L},\overline{L}]$, and the remaining assumptions of that theorem are satisfied.

Next, we control the $\widetilde{\delta}_n$ term in Assumption~\ref{ass: stable score}. The regularized quantile regression estimator is
\begin{gather*}
    \widehat{\kappa}(Z) =\arg\min_{\|\kappa\|_2\leq B}\left\{\frac{1}{n}\sum_{i=1}^n \ell_{1-\alpha}\!\left(v(X_i,Y_i),f_\kappa(X_i)\right)+\lambda\|\kappa\|_2^2\right\}\,,
\end{gather*}
with $\lambda>0$. Since the objective is strongly convex in $\kappa$, the minimizer is unique and the solution map $Z\mapsto\widehat{\kappa}(Z)$ is uniformly stable under the replacement of a single sample. In particular, for any $j$ and any $z_{j,1},z_{j,2}$, standard stability results for regularized empirical risk minimization \cite{bousquet2002stability} imply
\begin{gather*}
    \|\widehat{\kappa}(Z_{j,1})-\widehat{\kappa}(Z_{j,2})\|_2 \le \frac{C M}{\lambda n}\,,
\end{gather*}
for a universal constant $C>0$. Using $\|\eta(x)\|_2\le M$, it follows that
\begin{gather*}
    \sup_x\big|\widehat{Q}_\alpha(x;Z_{j,1})-\widehat{Q}_\alpha(x;Z_{j,2})\big| \le M\|\widehat{\kappa}(Z_{j,1})-\widehat{\kappa}(Z_{j,2})\|_2 \le \frac{C M^2}{\lambda n}\,.
\end{gather*}
Consequently,
\begin{gather*}
    \sup_{x,y}\big|s(x,y;Z_{j,1})-s(x,y;Z_{j,2})\big| = \sup_x\big|\widehat{Q}_\alpha(x;Z_{j,1})-\widehat{Q}_\alpha(x;Z_{j,2})\big| \le \frac{C M^2}{\lambda n}
\end{gather*}
holds deterministically. Hence Assumption~\ref{ass: stable score} holds with $\widetilde{\delta}_n=0$ and $\widetilde{\varepsilon}_n=C\lambda^{-1}n^{-1}$. Applying Theorem~\ref{theo: averaged miscoverage shift} gives
\begin{align*}
    &\mathbb{E}\left|\mathrm{Pr}\left( Y_{n+1}\in\widehat{C}_{\rm CC}(X_{n+1})\mid X_{n+1} \right)-(1-\alpha)\right|\\
    \leq& C \left\{ \delta_n(\varepsilon)+\varepsilon+d_0^{1/2}n^{-1/2}\log^{1/2}(n)+\lambda^{-1}n^{-1} \right\}\,.
\end{align*}

It remains to calculate the rate of $\delta_n(\varepsilon)+\varepsilon$. Define $Q_\alpha^\circ=\arg\min_{f\in\mathcal{F}}R(f)$. For $f_\kappa\in\mathcal F$, define
\begin{gather*}
    h(x,u;f_\kappa)=\mathbbm{1}\{u\in[Q_\alpha^\circ(x),f_\kappa(x)]\cup[f_\kappa(x),Q_\alpha^\circ(x)]\}|u-f_\kappa(x)|\,.
\end{gather*}
Since the conditional density of $v(X,Y)$ given $X=x$ is lower bounded by the positive constant $\underline{L}$, it follows from \cite{li2007quantile} that there exists $c>0$ such that
\begin{gather*}
    [\mathbb{E}\{h(X,v(X,Y);f_\kappa)\}]^{1/2}\geq c\mathbb{E}|f_\kappa(X)-Q_\alpha^\circ(X)|
\end{gather*}
holds for all $f_\kappa\in\mathcal{F}$. By the definition of $Q_\alpha^\circ=\arg\min_{f\in\mathcal{F}}R(f)$, assume that $f_{\kappa^\circ}\in \mathcal{F}$ satisfies $Q_\alpha^\circ=f_{\kappa^\circ}$. Define $J_0=\|\kappa^\circ\|_2^2\vee 1\leq B^2\vee 1$.

Based on Theorem~2 of \cite{li2007quantile}, together with the properties of finite-dimensional linear reproducing kernel Hilbert spaces (RKHS) in \cite{zhou2002covering}, the following inequality holds for $\lambda\leq c_1 \{d_0\log(n)/n\}^{2/3}/(2J_0)$:
\begin{gather*}
    \mathrm{Pr}\left( R(f_{\widehat{\kappa}})-R(f_{\kappa^\circ})\geq c_1 \{d_0\log(n)/n\}^{2/3} \right)
    \leq 3.5\exp\left( -c_2n(J_0\lambda)^{3/2} \right)\,,
\end{gather*}
where $c_1,c_2>0$ are constants. Taking $\lambda=c_1 \{d_0\log(n)/n\}^{2/3}/(2J_0)$, for a sufficiently large constant $c_1$, we have
\begin{gather*}
    \lambda^{-1}n^{-1}=O\bigl(\{d_0\log(n)/n\}^{-2/3}n^{-1}\bigr)
    =O\bigl(\{nd_0^2\log^2(n)\}^{-1/3}\bigr)
\end{gather*}
and
\begin{align*}
    & \mathrm{Pr}\left( R(f_{\widehat{\kappa}})-R(f_{\kappa^\circ})\geq c_1 \{d_0\log(n)/n\}^{2/3} \right)\\
    \leq& 3.5\exp\left( -c_3(c_1/2J_0)^{3/2}\log^2(n) \right)=o(n^{-1})\,.
\end{align*}
Therefore,
\begin{gather*}
    \mathrm{Pr}\left( R(f_{\widehat{\kappa}})-R(Q_\alpha^\star)\geq R(Q_\alpha^\circ)-R(Q_\alpha^\star) + c_1 \{d_0\log(n)/n\}^{2/3} \right)=o(n^{-1})\,.
\end{gather*}

By Lemma~\ref{lemma: quantile regression error}, for the admissible $p\in[1,2]$, there exists a constant $C>0$ such that
\begin{align*}
    & \mathrm{Pr}\left( \| \widehat{Q}_\alpha(\cdot;Z)- Q_\alpha^\star(\cdot) \|_{P_{X,1},p}
    > C\left[ \{R(Q_\alpha^\circ)-R(Q_\alpha^\star)\}^{1/2}+ \{d_0\log(n)/n\}^{1/3} \right] \right)\\
    \leq& o(n^{-1})\,.
\end{align*}
Hence, taking $\varepsilon=C\left[ \{R(Q_\alpha^\circ)-R(Q_\alpha^\star)\}^{1/2}+ \{d_0\log(n)/n\}^{1/3} \right]$
gives $\delta_n(\varepsilon)=o(n^{-1})$. Combining this with the preceding miscoverage bound, and absorbing all constants into $C$, yields
\begin{align*}
    &\mathbb{E}\left|\mathrm{Pr}\left( Y_{n+1}\in\widehat{C}_{\rm CC}(X_{n+1})\mid X_{n+1} \right)-(1-\alpha)\right|\\
    \leq& C \left\{ \sqrt{R(Q_\alpha^\circ)-R(Q_\alpha^\star)}+\sqrt[3]{d_0\log(n)/n} \right\}\,.
\end{align*}
This completes the proof of the theorem.
\end{proof}

\section{Proofs of Preliminary Lemmas}\label{sec:proof_pre_lemmas}

\subsection{Proof of Lemma~\ref{lemma: sub Lipschitz weighted}}

\begin{proof}
    For $s^\prime\neq s$, define the abstract slope function as
    \begin{gather*}
        k_s(s^\prime)=\dfrac{|F(s^\prime)-F(s)|}{|s^\prime-s|}\,,
    \end{gather*}
    and set $k_s(s)=f(s)$.
    Since $F(\cdot)$ is continuously differentiable at $s$ with derivative $f(s)>0$,
    \begin{gather*}
        \lim_{s^\prime\rightarrow s}\dfrac{|F(s^\prime)-F(s)|}{|s^\prime-s|}=f(s)\,,
    \end{gather*}
    so $k_s(\cdot)$ is continuous at $s$. 
    For $s'\neq s$, continuity of $F(\cdot)$ implies continuity of $k_s(\cdot)$ away from $s$. Hence $k_s(\cdot)$ is continuous on the compact interval $[a,b]$.

    By the extreme value theorem, $k_s(\cdot)$ attains its minimum on $[a,b]$. Moreover, this minimum is strictly positive. Indeed, $k_s(s)=f(s)>0$. If $s'\neq s$ and $F(s')=F(s)$, then monotonicity of $F$ would imply that $F$ is constant on the interval between $s'$ and $s$, contradicting the fact that the derivative at $s$ is strictly positive. Thus $k_s(s')>0$ for all $s'\in[a,b]$.

    Define $\underline L=\inf_{s'\in[a,b]} k_s(s')>0$.
    Then, for every $s'\in[a,b]$, $|F(s')-F(s)|\geq \underline L |s'-s|$.
    Therefore, $F$ is sub-Lipschitz at $s$ with constant $\underline L$.
\end{proof}

\subsection{Proof of Lemma \ref{lemma: weighted dkw}}

\begin{proof}
    Define the function class $\mathcal{F}_a^{(0)}=\{(x,\varsigma)\mapsto a(x)\mathbbm{1}(\varsigma\leq u):u\in\mathbb{R}\}$. Since for any given $(x,\varsigma)$, $a(x)\mathbbm{1}(\varsigma\leq u)$ is monotonically increasing with respect to the index $u$, the VC dimension of the function class $\mathcal{F}_a^{(0)}$ is $1$. To be specific, for any two distinct points $(x_1,\varsigma_1,v_1)$ and $(x_2,\varsigma_2,v_2)$, denote $u_i=\inf\{u\in\mathbb{R}:v_i\leq a(x_i)\mathbbm{1}(\varsigma_i\leq u)\}$. If $u_1=u_2$, obviously $\mathcal{F}_a^{(0)}$ cannot scatter these two points. Without loss of generality, assume $u_1<u_2$. Then we cannot find $u$ such that $v_1>a(x_1)\mathbbm{1}(\varsigma_1\leq u)$ and $v_2\leq a(x_2)\mathbbm{1}(\varsigma_2\leq u)$ hold simultaneously, which indicates that $\mathcal{F}_a^{(0)}$ cannot scatter these two points. The proof therefore reduces to combining a standard VC/Rademacher bound for $\mathcal F_a^{(0)}$ with the empirical-process inequality in \cite{10.3150/13-BEJ549}.

    Define the empirical Rademacher complexity of $\mathcal{F}_a^{(0)}$ on $(X_1^{(0)},S_1^{(0)}),\ldots,(X_n^{(0)},S_n^{(0)})$ as
    \begin{gather*}
        \widehat{\mathcal{R}}_n(\mathcal{F}_a^{(0)})\overset{\rm def.}{=}\mathbb{E}\left\{ \sup_{f\in\mathcal{F}_a^{(0)}}\left|n^{-1}\sum_{i=1}^{n}\tau_if(X_i^{(0)},S_i^{(0)})\right|\mid (X_1^{(0)},S_1^{(0)}),\ldots,(X_n^{(0)},S_n^{(0)}) \right\}\,,
    \end{gather*}
    where $\tau_1,\ldots,\tau_n$ are i.i.d.~Rademacher random variables with $\mathrm{Pr}(\tau_i=1)=\mathrm{Pr}(\tau_i=-1)=1/2$ and the expectation is taken over all $\tau_i$. Define the Rademacher complexity of $\mathcal{F}_a^{(0)}$ as the expectation of the empirical Rademacher complexity:
    \begin{gather*}
        \mathcal{R}_n(\mathcal{F}_a^{(0)})\overset{\rm def.}{=}\mathbb{E}\left\{ \widehat{\mathcal{R}}_n(\mathcal{F}_a^{(0)}) \right\}=\mathbb{E}\left\{ \sup_{f\in\mathcal{F}_a^{(0)}}\left|n^{-1}\sum_{i=1}^{n}\tau_if(X_i^{(0)},S_i^{(0)})\right| \right\}\,.
    \end{gather*}

    As $\mathcal{F}_a^{(0)}$ has finite VC dimension $1$, it has polynomial discrimination of order one by Sauer--Shelah's lemma. Lemma~4.14 of \cite{wainwright2019high} gives
    \begin{gather}
        \widehat{\mathcal{R}}_n(\mathcal{F}_a^{(0)})\leq 4\sup_{f\in\mathcal{F}_a^{(0)}}\sqrt{n^{-1}\sum_{i=1}^{n}f^2(X_i^{(0)},S_i^{(0)})}\sqrt{\dfrac{\log(n)}{n}}\,.\label{eq: weighted dkw 1}
    \end{gather}
    By the definition of $\mathcal F_a^{(0)}$, $\sup_{f\in\mathcal{F}_a^{(0)}}\sum_{i=1}^{n}f^2(X_i^{(0)},S_i^{(0)})\leq\sum_{i=1}^{n}a^2(X_i^{(0)})$. Taking expectations in \eqref{eq: weighted dkw 1} and using Jensen's inequality yields
    \begin{align*}
        \mathcal{R}_n(\mathcal{F}_a^{(0)})&\leq 4\mathbb{E}\sqrt{n^{-1}\sum_{i=1}^{n}a^2(X_i^{(0)})}\sqrt{n^{-1}\log(n)}\\
        &\leq 4\sigma_a\sqrt{n^{-1}\log(n)}\,,
    \end{align*}
    where $\sigma_a^2=\mathbb{E}\{a^2(X^{(0)})\}$. Moreover,
    \begin{gather*}
        \sup_{f\in\mathcal{F}_a^{(0)}}\sqrt{\mathbb{E}f^2(X^{(0)},S^{(0)})}
        =\sup_{u\in\mathbb R}\sqrt{\mathbb{E}\{a^2(X^{(0)})\mathbbm{1}(S^{(0)}\leq u)\}}
        \leq \sigma_a\,.
    \end{gather*}
    Applying Equation~(1) of \cite{10.3150/13-BEJ549} to the class $\mathcal F_a^{(0)}$, for any $\gamma>0$ and any $\eta>0$,
    \begin{align*}
        &\mathrm{Pr}\left( \sup_{u\in\mathbb R} \left|A_n^{(0)}(u)-A^{(0)}(u)\right| > 2(1+\gamma)\mathcal R_n(\mathcal F_a^{(0)})+\sigma_a\sqrt{2\eta}+(\gamma^{-1}+1/3)\eta \right)\\
        &\qquad\leq \exp(-n\eta).
    \end{align*}

    Taking $\gamma=1$ and $\eta=\sigma_a^2x/2$ for $x\in(0,1)$ gives
    \begin{align*}
        \exp(-n\sigma_a^2x/2)\geq&\mathrm{Pr}\left( \sup_{u\in\mathbb{R}}\left|A_n^{(0)}(u)-A^{(0)}(u)\right|>4\mathcal{R}_n(\mathcal{F}_a^{(0)})+\sigma_a\sqrt{2\eta}+4\eta/3 \right)\\
        \geq&\mathrm{Pr}\left( \sup_{u\in\mathbb{R}}\left|A_n^{(0)}(u)-A^{(0)}(u)\right|>4\mathcal{R}_n(\mathcal{F}_a^{(0)})+\sigma_a^2(\sqrt{x}+x) \right)\\
        \geq&\mathrm{Pr}\left( \sup_{u\in\mathbb{R}}\left|A_n^{(0)}(u)-A^{(0)}(u)\right|>4\mathcal{R}_n(\mathcal{F}_a^{(0)})+2\sigma_a^2\sqrt{x} \right)\,.
    \end{align*}
    Let $\varepsilon=2\sigma_a^2\sqrt{x}<2\sigma_a^2$. The previous inequality becomes
    \begin{gather*}
        \mathrm{Pr}\left( \sup_{u\in\mathbb{R}}\left|A_n^{(0)}(u)-A^{(0)}(u)\right|>4\mathcal{R}_n(\mathcal{F}_a^{(0)})+\varepsilon \right)\leq\exp(-n\varepsilon^2/(8\sigma_a^2)),\\
        \mathrm{Pr}\left( \sup_{u\in\mathbb{R}}\left|A_n^{(0)}(u)-A^{(0)}(u)\right|>16\sigma_a\sqrt{n^{-1}\log(n)}+\varepsilon \right)\leq\exp(-n\varepsilon^2/(8\sigma_a^2)).
    \end{gather*}
    Likewise, when $x\geq1$, take $\eta=\sigma_a^2x/2$. Then $\sigma_a\sqrt{2\eta}+4\eta/3=\sigma_a^2\sqrt{x}+2\sigma_a^2x/3\leq 2\sigma_a^2x$. Let $\varepsilon=2\sigma_a^2x\geq2\sigma_a^2$. The same argument gives
    \begin{gather*}
        \mathrm{Pr}\left( \sup_{u\in\mathbb{R}}\left|A_n^{(0)}(u)-A^{(0)}(u)\right|>16\sigma_a\sqrt{n^{-1}\log(n)}+\varepsilon \right)\leq\exp(-n\varepsilon/4)\,.
    \end{gather*}
    Thus, we finish the proof of this lemma.
\end{proof}

\subsection{Proof of Lemma~\ref{lemma: sup weighted dkw}}

\begin{proof}
    Define
    \begin{gather*}
        \mathcal F^{(0)} = \left\{ x\mapsto \mathbbm{1}\{m(x)\leq u\}:m\in\mathcal M,\ u\in\mathbb R
        \right\}\,,\\
        \mathcal F_a^{(0)} = \left\{ x\mapsto a(x)\mathbbm{1}\{m(x)\leq u\}:m\in\mathcal M,\ u\in\mathbb R \right\}\,.
    \end{gather*}
    
    The proof contains two parts: first, a uniform empirical-process bound over the weighted class $\mathcal{F}_a^{(0)}$; second, a contraction step reducing the complexity of $\mathcal{F}_a^{(0)}$ to that of the indicator class $\mathcal F^{(0)}$. Since $0<a(X^{(0)})\leq M_a$ almost surely, the class $\mathcal{F}_a^{(0)}$ is uniformly bounded by $M_a$. The pseudo-dimension assumption on $\mathcal M$ implies that the VC dimension of $\mathcal F^{(0)}$ is at most $\mathrm{Pdim}(\mathcal M)$. Based on the definition of $\mathcal{F}_a^{(0)}$, we can derive that
    \begin{gather*}
        \sup_{m\in\mathcal M,\,u\in\mathbb R}
        \left|A_{n,m}^{(0)}(u)-A_m^{(0)}(u)\right| = \sup_{f_a\in\mathcal F_a^{(0)}}
        \left| \frac1n\sum_{i=1}^n f_a(X_i^{(0)}) - \mathbb E\{f_a(X^{(0)})\} \right|.
    \end{gather*}
    Define empirical Rademacher complexities
    \begin{gather*}
        \widehat{\mathcal{R}}_n(\mathcal{F}^{(0)})\overset{\rm def.}{=}\mathbb{E}\left\{ \sup_{f\in\mathcal{F}^{(0)}}\left|n^{-1}\sum_{i=1}^{n}\tau_if(X_i^{(0)})\right|\mid X_1^{(0)},\ldots,X_n^{(0)} \right\}\\
        \widehat{\mathcal{R}}_n(\mathcal{F}_a^{(0)})\overset{\rm def.}{=}\mathbb{E}\left\{ \sup_{f_a\in\mathcal{F}_a^{(0)}}\left|n^{-1}\sum_{i=1}^{n}\tau_if_a(X_i^{(0)})\right|\mid X_1^{(0)},\ldots,X_n^{(0)} \right\}\,,
    \end{gather*}
    where $\tau_1,\ldots,\tau_n$ are i.i.d. Rademacher random variable.
    Define the expectations of empirical Rademacher complexity as $\mathcal{R}_n(\mathcal{F}^{(0)})=\mathbb{E}\{\widehat{\mathcal{R}}_n(\mathcal{F}^{(0)})\}$ and $\mathcal{R}_n(\mathcal{F}_a^{(0)})=\mathbb{E}\{\widehat{\mathcal{R}}_n(\mathcal{F}_a^{(0)})\}$. For any $\varepsilon>0$, Theorem 4.10 of \cite{wainwright2019high} yields
    \begin{align*}
        & \mathrm{Pr}\left( \sup_{f_a\in\mathcal{F}_a^{(0)}}\left|n^{-1}\sum_{i=1}^{n}f_a(X_i^{(0)})-\mathbb{E}\{f_a(X^{(0)})\}\right|\leq 2\mathcal{R}_n(\mathcal{F}_a^{(0)})+\varepsilon \right) \\
        \geq & 1-\exp\left( -\dfrac{n\varepsilon^2}{2M_a^2} \right)\,.
    \end{align*}
    It remains to bound $\mathcal R_n(\mathcal F_a^{(0)})$.
    First we notice that
    \begin{align*}
        &\mathbb{E}\left\{ \sup_{f_a\in\mathcal{F}_a^{(0)}}\left|n^{-1}\sum_{i=1}^{n}\tau_if_a(X_i^{(0)})\right|\mid X_1^{(0)},\ldots,X_n^{(0)} \right\}\\
        =&\mathbb{E}\left\{ \sup_{f\in\mathcal{F}^{(0)}}\left|n^{-1}\sum_{i=1}^{n}\tau_i a(X_i^{(0)})f(X_i^{(0)})\right|\mid X_1^{(0)},\ldots,X_n^{(0)} \right\}
    \end{align*}
    where for any given $X_i^{(0)}$ and any $f\in\mathcal{F}^{(0)}$, the function $f(X_i^{(0)})\mapsto a(X_i^{(0)})f(X_i^{(0)})$ is $M_a$-Lipschitz. Thus by Lemma 8 of \cite{mohri2014learning}, the contraction leads to
    \begin{gather*}
        \widehat{\mathcal{R}}_n(\mathcal{F}_a^{(0)})\leq M_a\widehat{\mathcal{R}}_n(\mathcal{F}^{(0)}), \quad \mathcal{R}_n(\mathcal{F}_a^{(0)})\leq M_a\mathcal{R}_n(\mathcal{F}^{(0)})\,.
    \end{gather*}
    Furthermore, since the function space $\mathcal{F}^{(0)}$ has a finite VC dimension of $\mathrm{Pdim}(\mathcal{M})$, it follows from Lemma 4.14 in \cite{wainwright2019high} that we can bound the empirical Rademacher complexity as follows:
    \begin{gather*}
        \widehat{\mathcal{R}}_n(\mathcal{F}^{(0)})\leq 4\sup_{f\in\mathcal{F}^{(0)}}\sqrt{n^{-1}\sum_{i=1}^{n}f^2(X_i^{(0)})}\sqrt{\dfrac{\mathrm{Pdim}(\mathcal{M})\log(n)}{n}}\leq 4\sqrt{\dfrac{\mathrm{Pdim}(\mathcal{M})\log(n)}{n}}\,,
    \end{gather*}
    which indicates
    \begin{gather*}
        \mathcal{R}_n(\mathcal{F}_a^{(0)})\leq 4M_a\sqrt{\dfrac{\mathrm{Pdim}(\mathcal{M})\log(n)}{n}}\,.
    \end{gather*}
    Together, we conclude the probabilistic bound for $\sup_{m\in\mathcal{M},u\in\mathbb{R}}\left|A_{n,m}^{(0)}(u)-A_m^{(0)}(u)\right|$:
    \begin{align*}
        & \mathrm{Pr}\left( \sup_{m\in\mathcal{M},u\in\mathbb{R}}\left|A_{n,m}^{(0)}(u)-A_m^{(0)}(u)\right|\leq 8M_a\sqrt{\dfrac{\mathrm{Pdim}(\mathcal{M})\log(n)}{n}}+\varepsilon \right) \\
        \geq & 1-\exp\left( -\dfrac{n\varepsilon^2}{2M_a^2} \right)\,.
    \end{align*}
    For $A_{n,m}^{(0),+}(u) = n^{-1} \sum_{i=1}^{n-1} a(X_i^{(0)}) \mathbbm{1}(m(X_i^{(0)}) \leq u)$, the following holds almost surely for $n>1$:
    \begin{gather*}
        \sup_{m\in\mathcal{M},u\in\mathbb{R}}\left|A_{n,m}^{(0)}(u)-A_{n,m}^{(0),+}(u)\right|\leq n^{-1}M_a\leq M_a\sqrt{\dfrac{\mathrm{Pdim}(\mathcal{M})\log(n)}{n}}\,.
    \end{gather*}
    Combining this with the previous inequality and applying the triangle inequality gives
    \begin{align*}
        & \mathrm{Pr}\left( \sup_{m\in\mathcal{M},u\in\mathbb{R}}\left|A_{n,m}^{(0),+}(u)-A_m^{(0)}(u)\right|\leq 9M_a\sqrt{\dfrac{\mathrm{Pdim}(\mathcal{M})\log(n)}{n}}+\varepsilon \right) \\
        \geq & 1-\exp\left( -\dfrac{n\varepsilon^2}{2M_a^2} \right)\,.
    \end{align*}
    We finish the proof of this lemma.
\end{proof}

\subsection{Proof of Lemma~\ref{lemma: kernel fun}}
\begin{proof}
    Under the assumptions of the kernel function, by changing to spherical coordinates for the following integral, we obtain
    \begin{equation}
        \notag \int_{\mathbb{R}^{d}}K_0(\|x\|_2)dx \leq \int_{0}^{\infty}u^{d-1}K_0(u)du < \infty\,.
    \end{equation}
    It follows that for any $\ell\geq 1$, 
    \begin{equation*}
        \int_{\mathbb{R}^d}\left\{ K_0(\|x\|_2) \right\}^{\ell}dx\leq \left\{ K_0(0) \right\}^{\ell-1}\int_{\mathbb{R}^d}K_0(\|x\|_2)dx<\infty\,.
    \end{equation*}

    Suppose $f_X(x)$ is the density function of $X$ and $\underline{L}_1\leq f_X(x)\leq \overline{L}_1$ for any $x\in\mathcal{X}$. Consequently, for any $x_0\in\mathcal{X}$,
    \begin{eqnarray}
        \notag \mathbb{E}\left[ h^{-d}\left\{ K(X,x_0;h) \right\}^{\ell} \right] & = & \int_\mathcal{X}h^{-d}\left\{ K_0\left( \|x-x_0\|_2/h \right) \right\}^{\ell}f_X(x)dx\\
        \notag  & = & \int_{\{x:~x_0+hx\in\mathcal{X}\}}\left\{ K_0(\|x\|_2) \right\}^{\ell}f_X(x_0+hx)dx\\
        \notag & \leq & \overline{L}_1\int_{\mathbb{R}^d}\left\{ K_0(\|x\|_2) \right\}^{\ell}dx<\infty\,.
    \end{eqnarray}
    In addition, since $K_0(u)$ is decreasing in $u$,
    \begin{eqnarray}
        \notag \mathbb{E}\left[ h^{-d}\left\{ K(X,x_0;h) \right\}^{\ell} \right] & \geq & \{K_0(1)\}^{\ell}h^{-d}\int_{\{x\in\mathcal{X}:~\|x-x_0\|_2\leq h\}}f_X(x)dx\,,
    \end{eqnarray}
    where $h^{-d}\int_{\{x\in\mathcal{X}:~\|x-x_0\|_2\leq h\}}f_X(x)dx=h^{-d}\mathrm{Pr}\left(\|X-x_0\|_2\leq h\right)$ is bounded below by a positive constant due to $f_X(x)\geq \underline{L}_1$ for any $x\in\mathcal{X}$. 
    Therefore, there exist positive constants $\underline{L}_2$ and $\overline{L}_2$ such that
    \begin{equation}
        \notag \underline{L}_2 \leq \mathbb{E}\left[ h^{-d}\left\{ K(X,x_0;h) \right\}^{\ell} \right] \leq \overline{L}_2\,.
    \end{equation}
\end{proof}

\subsection{Proof of Lemma~\ref{lemma: kernel fun bias}}
\begin{proof}
    The Lipschitz property and triangle inequality yield:
    \begin{align*}
        \left|f(x_0)-f(X)\right|\leq L\|x_0-X\|_2\leq L\|x_0-\widetilde{X}\|_2+L\|\widetilde{X}-X\|_2\,.
    \end{align*}
    We can decompose the target as follows:
    \begin{align*}
        &\mathbb{E} \left\{ \left|f(x_0)-f(X)\right|K(\widetilde{X},X;h) \right\}\\
        \leq &\mathbb{E}\left[ \mathbb{E} \left\{ \left|f(x_0)-f(X)\right|K(\widetilde{X},X;h)\mid\widetilde{X} \right\} \right]\\
        \leq &LE \left[ \|x_0-\widetilde{X}\|_2\mathbb{E} \left\{ K(\widetilde{X},X;h)\mid\widetilde{X} \right\} \right] + LE\left[ \mathbb{E} \left\{ \|\widetilde{X}-X\|_2K(\widetilde{X},X;h)\mid\widetilde{X} \right\} \right]\,.
    \end{align*}
    First by Lemma~\ref{lemma: kernel fun}, we have $\mathbb{E} \left\{ K(\widetilde{X},X;h)\mid\widetilde{X} \right\}\leq\overline{L}_2h^d$. Then we calculate $E\|x_0-\widetilde{X}\|_2$ by:
    \begin{align*}
        E\|x_0-\widetilde{X}\|_2=&\dfrac{\int_{\mathcal{X}}\|x_0-x\|_2K(x_0,x;h)dx}{\int_{\mathcal{X}}K(x_0,x;h)dx}\\
        \leq&\dfrac{\int_{\mathcal{X}}\|x_0-x\|_2K(x_0,x;h)dx}{\underline{L}_2h^d}\\
        =&\dfrac{\int_{\mathcal{X}}\|x_0-x\|_2K_0(\|x_0-x\|_2/h)dx}{\underline{L}_2h^d}\\
        =&\dfrac{h^d\int_{x_0+hx\in\mathcal{X}}h\|x\|_2K_0(\|x\|_2)dx}{\underline{L}_2h^d}\\
        \leq&h\underline{L}_2^{-1}\int_{\mathbb{R}^d} \|x\|_2K_0(\|x\|_2)dx\\
        \leq&h\underline{L}_2^{-1}\int_{0}^{\infty}u^dK_0(u)du\,,
    \end{align*}
    where the first inequality can be obtained by assuming a uniform distribution over $\mathcal{X}$ in Lemma~\ref{lemma: kernel fun}. Thus we obtain
    \begin{gather*}
        LE \left[ \|x_0-\widetilde{X}\|_2\mathbb{E} \left\{ K(\widetilde{X},X;h)\mid\widetilde{X} \right\} \right]\leq L\underline{L}_2^{-1}\overline{L}_2h^{d+1}\int_{0}^{\infty}u^dK_0(u)du\,.
    \end{gather*}
    For the other part, we consider $x\in\mathcal{X}$ and $f_X(x)$ as the density of $X$. Thus we can calculate:
    \begin{align*}
        \mathbb{E} \left\{ \|x-X\|_2K(x,X;h) \right\}=&\int_{\mathcal{X}}\|x-x^\prime\|_2K_0(\|x-x^\prime\|_2/h)f_X(x^\prime)dx^\prime\\
        \leq&\overline{L}_1\int_{\mathcal{X}}\|x-x^\prime\|_2K_0(\|x-x^\prime\|_2/h)dx^\prime\\
        =&\overline{L}_1h^{d+1}\int_{x+hx^\prime\in\mathcal{X}}\|x^\prime\|_2K_0(\|x^\prime\|_2)dx^\prime\\
        \leq& \overline{L}_1h^{d+1}\underline{L}_2^{-1}\int_{\mathbb{R}^d} \|x\|_2K_0(\|x\|_2)dx\\
        \leq& \overline{L}_1h^{d+1}\int_{0}^{\infty}u^dK_0(u)du\,.
    \end{align*}
    Together we obtain
    \begin{align*}
        \mathbb{E} \left\{ \left|f(x_0)-f(X)\right|K(\widetilde{X},X;h) \right\}\leq L(\overline{L}_1+\underline{L}_2^{-1}\overline{L}_2)h^{d+1}\int_{0}^{\infty}u^dK_0(u)du\,.
    \end{align*}
    Define $\overline{L}_3=L(\overline{L}_1+\underline{L}_2^{-1}\overline{L}_2)\int_{0}^{\infty}u^dK_0(u)du$ and we conclude that
    \begin{gather*}
        \mathbb{E} \left\{ \left|f(x_0)-f(X)\right|K(\widetilde{X},X;h) \right\}\leq \overline{L}_3h^{d+1}\,.
    \end{gather*}
\end{proof}

\subsection{Proof of Lemma~\ref{lemma: quantile regression error}}
\begin{proof}
Write
\begin{gather*}
    \phi_x(t)=\mathbb{E}\left\{ \ell_{1-\alpha}(Y,t)\mid X=x \right\},\qquad
    F_x(t)=\mathrm{Pr}(Y\le t\mid X=x)\,.
\end{gather*}
Since $Y\mid X=x$ has density $f_x$, differentiation of the conditional pinball risk gives
\begin{gather*}
    \phi_x'(t)=F_x(t)-(1-\alpha)\,.
\end{gather*}
Let $Q_\alpha(x)=Q(1-\alpha;F_x)$. Since $F_x(Q_\alpha(x))=1-\alpha$, we have
\begin{gather*}
    \phi_x(t)-\phi_x(Q_\alpha(x))
    =\int_{Q_\alpha(x)}^t\{F_x(z)-F_x(Q_\alpha(x))\}\,dz\,.
\end{gather*}

The lower density bound converts this excess conditional risk into a quadratic error term. If $t\geq Q_\alpha(x)$, then for every $z\in[Q_\alpha(x),t]$,
\begin{gather*}
    F_x(z)-F_x(Q_\alpha(x))=\int_{Q_\alpha(x)}^z f_x(u)\,du\geq \underline{L}(x)\{z-Q_\alpha(x)\}\,.
\end{gather*}
Substituting this bound into the integral representation of $\phi_x(t)-\phi_x(Q_\alpha(x))$ gives
\begin{gather*}
    \phi_x(t)-\phi_x(Q_\alpha(x))
    \geq \underline{L}(x)\int_{Q_\alpha(x)}^t\{z-Q_\alpha(x)\}\,dz
    =\frac{\underline{L}(x)}{2}\{t-Q_\alpha(x)\}^2\,.
\end{gather*}
The same argument for $t\leq Q_\alpha(x)$ yields
\begin{gather*}
    \phi_x(t)-\phi_x(Q_\alpha(x))
    \geq \underline{L}(x)\int_t^{Q_\alpha(x)}\{Q_\alpha(x)-z\}\,dz
    =\frac{\underline{L}(x)}{2}\{Q_\alpha(x)-t\}^2\,.
\end{gather*}
Taking expectation with respect to $X$ yields
\begin{align*}
    &\mathbb{E}\left\{ \ell_{1-\alpha}(Y,\widehat{Q}(X))-\ell_{1-\alpha}(Y,Q_\alpha(X)) \right\}\\
    =&\mathbb{E}\left\{ \phi_X(\widehat{Q}(X))-\phi_X(Q_\alpha(X)) \right\}\geq 2^{-1}\mathbb{E}\left\{ \underline{L}(X)(Q_\alpha(X)-\widehat{Q}(X))^2 \right\}\,.
\end{align*}

It remains to convert the weighted $L_2$ term on the right-hand side into the $L_p$ norm in the lemma statement. Let
\begin{gather*}
    r=\frac{2}{p},\qquad r'=\frac{2}{2-p}\,,
\end{gather*}
with the usual convention that $r'=\infty$ when $p=2$. Then $1/r+1/r'=1$. Applying H\"older's inequality to
\begin{gather*}
    |Q_\alpha(X)-\widehat Q(X)|^p
    =\left\{\underline L(X)(Q_\alpha(X)-\widehat Q(X))^2\right\}^{p/2}\left\{1/\underline L(X)\right\}^{p/2}
\end{gather*}
gives
\begin{align*}
    &\mathbb{E}\left|Q_\alpha(X)-\widehat Q(X)\right|^p\\
    \leq&\left[\mathbb{E}\left\{\underline L(X)(Q_\alpha(X)-\widehat Q(X))^2\right\}\right]^{1/r}
    \left[\mathbb{E}\left\{\underline L(X)^{-pr'/2}\right\}\right]^{1/r'}\\
    =&\left[\mathbb{E}\left\{\underline L(X)(Q_\alpha(X)-\widehat Q(X))^2\right\}\right]^{p/2}
    \left[\mathbb{E}\left\{\underline L(X)^{-p/(2-p)}\right\}\right]^{(2-p)/2}\,.
\end{align*}
Raising both sides to the power $2/p$ yields
\begin{gather*}
    \|Q_\alpha-\widehat Q\|_{P_X,p}^2
    \leq \|1/\underline L\|_{P_X,p/(2-p)}\mathbb{E}\left\{\underline L(X)(Q_\alpha(X)-\widehat Q(X))^2\right\}.
\end{gather*}
Combining this bound with the excess-risk lower bound gives the claimed inequality, and thus completes the proof of this lemma.
\end{proof}

\subsection{Proof of Lemma~\ref{lem:pdim_Sprime}}
\begin{proof}
Recall that a function class $\mathcal{G}$ pseudo-shatters a set
$\{z_1,\dots,z_n\}$ if there exist real numbers $r_1,\dots,r_n$ such that,
for every $\sigma\in\{0,1\}^n$, there exists $g\in\mathcal{G}$ satisfying
\begin{gather*}
    \mathbbm{1}\{g(z_i)\ge r_i\}=\sigma_i,\quad i=1,\dots,n\,.
\end{gather*}
Suppose $\mathcal{S}$ pseudo-shatters a set
$\{(x_i,y_i)\}_{i=1}^n$.
Then there exist thresholds $r_1,\dots,r_n$ such that, for every
$\sigma\in\{0,1\}^n$, there exists $f_\sigma\in\mathcal{F}$ satisfying
\begin{gather*}
    \mathbbm{1}\{v(x_i,y_i)-f_\sigma(x_i)\ge r_i\}=\sigma_i,\quad i=1,\dots,n\,.
\end{gather*}

We first show that $x_i\neq x_j$ for any $i\neq j$. Suppose, to the contrary, that there exist $i\neq j$ such that $x_i=x_j$. Then the two constraints reduce to thresholding the same value $f(x_i)=f(x_j)$ against two constants. Assume without loss of generality that
$v(x_i,y_i)-r_i\geq v(x_j,y_j)-r_j$. Then any label vector with $\sigma_i=0$ and $\sigma_j=1$ would require
\begin{gather*}
    f_\sigma(x_i)>v(x_i,y_i)-r_i\qquad\text{and}\qquad f_\sigma(x_j)\leq v(x_j,y_j)-r_j\,,
\end{gather*}
which is impossible because $x_i=x_j$. Hence, $x_i\neq x_j$ for any $i\neq j$.
Equivalently,
\begin{gather*}
    \mathbbm{1}\{-f_\sigma(x_i)\geq r_i-v(x_i,y_i)\}=\sigma_i,\quad i=1,\dots,n\,.
\end{gather*}

By the assumption that $-f\in\mathcal{F}$ for every $f\in\mathcal{F}$, we have $-f_\sigma\in\mathcal{F}$ for every $\sigma\in\{0,1\}^n$. Define new thresholds $t_i=r_i-v(x_i,y_i)$. Thus, for every $\sigma\in\{0,1\}^n$, there exists $f_\sigma^\prime=-f_\sigma\in\mathcal{F}$ such that
\begin{gather*}
    \mathbbm{1}\{f_\sigma^\prime(x_i)\geq t_i\}=\sigma_i\,.
\end{gather*}
This shows that $\mathcal{F}$ pseudo-shatters the set $\{x_i\}_{i=1}^n$ with thresholds $\{t_i\}_{i=1}^n$. Therefore, $\mathrm{Pdim}(\mathcal{F})\geq n$. Since $n$ was arbitrary among sets pseudo-shattered by $\mathcal S$, it follows that $\mathrm{Pdim}(\mathcal{S})\leq \mathrm{Pdim}(\mathcal{F})$. This completes the proof of this lemma.
\end{proof}

\subsection{Proof of Lemma~\ref{theo: weighted quantile error}}
\begin{proof}
    Define 
    \begin{align*}
        A_n^a(u)=n^{-1}\sum_{i=1}^n a(X_i^{(0)})\mathbbm{1}(S_i^{(0)}\leq u), \quad A_n^{a,+}(u)=n^{-1}\sum_{i=1}^{n-1}a(X_i^{(0)})\mathbbm{1}(S_i^{(0)}\leq u)\,,
    \end{align*}
    and $B_n^a=n^{-1}\sum_{i=1}^n a(X_i^{(0)})$, so $F_n^a(u)=A_n^a(u)/B_n^a$ and $F_n^{a,+}(u)=A_n^{a,+}(u)/B_n^a$. Also let $A^a(u)=\mathbb{E}\{a(X^{(0)})\mathbbm{1}(S^{(0)}\leq u)\}$ and $B_a=\mathbb{E}\{a(X^{(0)})\}$, so $F^a(u)=A^a(u)/B_a$.

    The proof follows a Bahadur-type decomposition. We first express the quantile error through the difference between the weighted empirical c.d.f.\ $F_n^a$ and its population counterpart $F^a$. This reduces the problem to three components: fluctuation of the numerator process $A_n^a-A^a$, fluctuation of the normalizing denominator $B_n^a-B_a$, and a second-order remainder caused by the ratio structure. We then bound these three terms separately.

    By the definition of $A_n^{a,+}(\cdot)$ and $A_n^a(\cdot)$ and the assumption on $a(\cdot)$, we obtain
    \begin{align*}
        \sup_{u\in\mathbb{R}}\left|A_n^{a,+}(u)-A_n^a(u)\right|=n^{-1}a(X_n^{(0)})\leq n^{-1}M_a\,.
    \end{align*}
    
    Given Assumption~\ref{ass: weight function} that $0<a(X^{(0)})\leq M_a$ holds almost surely, we have $\sup_x a(x)-\inf_x a(x)\leq M_a$ almost surely and $\sigma_a^2=\mathbb{E}\{a^2(X^{(0)})\}$. By Bernstein's inequality,
    \begin{gather}
        \mathrm{Pr}\left(\left|B_n^a-B_a\right|\geq\varepsilon\right)\leq 2\exp\left(\frac{-n\varepsilon^2}{2\sigma_a^2+2M_a\varepsilon/3}\right)\leq 2\exp\left(\frac{-n\varepsilon^2}{2\sigma_a^2+M_a\varepsilon}\right). \label{eq: bahad rep B 3}
    \end{gather}
    Denote $D_B=\{B_n^a\geq B_a/2\}$, which has probability at least $1-\exp\{-nB_a^2/(8\sigma_a^2+2M_aB_a)\}$.

    Since $\widehat{\xi}_\alpha$ is the sample quantile, define $F_n^a(\widehat{\xi}_\alpha)=\alpha+\Delta_n$, where $\Delta_n$ satisfies
    \begin{gather*}
        0\leq \Delta_n\leq \max_{1\leq i\leq n}\frac{a(X_i^{(0)})}{\sum_{j=1}^na(X_j^{(0)})}\leq \frac{M_a}{\sum_{j=1}^na(X_j^{(0)})}=\frac{M_a}{nB_n^a}.
    \end{gather*}
    Based on this definition,
    \begin{gather}
        F_n^a(\widehat{\xi}_\alpha)-F^a(\widehat{\xi}_\alpha)=\alpha-F^a(\widehat{\xi}_\alpha)+\Delta_n. \label{eq: bahad rep 1}
    \end{gather}
    For any $u$,
    \begin{gather*}
        F_n^a(u)-F^a(u)=\frac{A_n^a(u)}{B_n^a}-\frac{A^a(u)}{B_a}=\frac{A_n^a(u)B_a-A^a(u)B_n^a}{B_n^aB_a},
    \end{gather*}
    and the numerator equals $B_a\{A_n^a(u)-A^a(u)\}-A^a(u)(B_n^a-B_a)$. Thus,
    \begin{align}
        \notag F_n^a(u)-F^a(u)
        =&\frac{1}{B_n^a}\{A_n^a(u)-A^a(u)\}-\frac{A^a(u)}{B_aB_n^a}(B_n^a-B_a)\\
        \notag
        =&\frac{A_n^a(u)-A^a(u)}{B_a}-\frac{A^a(u)(B_n^a-B_a)}{B_a^2}\\
        &+\{(B_n^a)^{-1}-B_a^{-1}\}\left[\{A_n^a(u)-A^a(u)\}-A^a(u)B_a^{-1}(B_n^a-B_a)\right]. \label{eq: bahad rep 2}
    \end{align}

    As $u\mapsto\mathbb{E}\{a(X^{(0)})\mathbbm{1}(S^{(0)}\leq u)\}$ is continuously differentiable at $\xi_\alpha$ with positive derivative, Lemma~\ref{lemma: sub Lipschitz weighted} shows that there exists a constant $\underline{L}_a(\xi_\alpha)$ such that $A^a(\cdot)$ is sub-Lipschitz with constant $B_a\underline{L}_a(\xi_\alpha)$ at $\xi_\alpha$. Since $F^a(u)=A^a(u)/B_a$, the sub-Lipschitz condition of $A^a(\cdot)$ at $\xi_\alpha$ yields
    \begin{gather*}
        \left|F^a(\widehat{\xi}_\alpha)-F^a(\xi_\alpha)\right|\geq \underline{L}_a(\xi_\alpha)\left|\widehat{\xi}_\alpha-\xi_\alpha\right|.
    \end{gather*}
    Since $F^a(\xi_\alpha)=\alpha$, we have $\bigl|\alpha-F^a(\widehat{\xi}_\alpha)\bigr|\geq \underline{L}_a(\xi_\alpha)\bigl|\widehat{\xi}_\alpha-\xi_\alpha\bigr|$. Applying \eqref{eq: bahad rep 2} at $u=\widehat{\xi}_\alpha$, noting that $A^a(u)\leq B_a$ for any $u$, and substituting into \eqref{eq: bahad rep 1}, we obtain
    \begin{align*}
        &\underline{L}_a(\xi_\alpha)\left|\widehat{\xi}_\alpha-\xi_\alpha\right|
        \leq \left|\{\alpha-F^a(\widehat{\xi}_\alpha)\}+\Delta_n\right|+\Delta_n\\
        \leq&\frac{\left|A_n^a(\widehat{\xi}_\alpha)-A^a(\widehat{\xi}_\alpha)\right|+\left|B_n^a-B_a\right|}{B_a}\\
        &+\frac{\left|B_n^a-B_a\right|\left|\{A_n^a(\widehat{\xi}_\alpha)-A^a(\widehat{\xi}_\alpha)\}-A^a(\widehat{\xi}_\alpha)B_a^{-1}(B_n^a-B_a)\right|}{B_n^aB_a}+\Delta_n.
    \end{align*}

    Because the denominator $B_n^a$ can be small on a rare event, we work separately on $D_B=\{B_n^a\geq B_a/2\}$. On its complement $D_B^c$, the quantile error is controlled trivially by the support bound $-M_S\leq S^{(0)}\leq M_S$:
    \begin{gather*}
        \mathbb{E}\left\{\left|\widehat{\xi}_\alpha-\xi_\alpha\right|\mathbbm{1}(D_B^c)\right\}\leq 2M_S\mathrm{Pr}(D_B^c)\leq 2M_S\exp\left\{-\frac{nB_a^2}{8\sigma_a^2+2M_aB_a}\right\}.
    \end{gather*}
    Combining the previous inequality with the bound on $D_B^c$, we obtain
    \begin{align*}
        &\underline{L}_a(\xi_\alpha) \mathbb{E}\left\{ \left|\hat{\xi}_\alpha - \xi_\alpha\right| \right\}\\
        \leq&\underline{L}_a(\xi_\alpha)\left[ \mathbb{E}\left\{ \left|\hat{\xi}_\alpha - \xi_\alpha\right|\mathbbm{1}(D_B^c) \right\}+\mathbb{E}\left\{ \left|\hat{\xi}_\alpha - \xi_\alpha\right|\mathbbm{1}(D_B) \right\} \right]\\
        \leq&2M_S\underline{L}_a(\xi_\alpha)\exp\left(-\dfrac{nB_a^2}{8\sigma_a^2+2M_aB_a}\right)\\
        & +\dfrac{\mathbb{E}\left\{ \left|A_n^a(\hat{\xi}_\alpha)-A^a(\hat{\xi}_\alpha)\right|+\left|B_n^a-B_a\right| \right\}}{B_a}+\mathbb{E}\left\{ \Delta_n\mathbbm{1}(D_B) \right\}\\
        & +\mathbb{E}\left\{ \dfrac{\left|B_n^a-B_a\right|\left|\{A_n^a(\hat{\xi}_\alpha) - A^a(\hat{\xi}_\alpha)\}-A^a(\hat{\xi}_\alpha)B_a^{-1}(B_n^a - B_a)\right|}{B_n^aB_a}\mathbbm{1}(D_B) \right\}\\
        \leq&2M_S\underline{L}_a(\xi_\alpha)\exp\left(-\frac{nB_a^2}{8\sigma_a^2+2M_aB_a}\right)\\
        &+\frac{\mathbb{E}\left\{\left|A_n^a(\widehat{\xi}_\alpha)-A^a(\widehat{\xi}_\alpha)\right|+\left|B_n^a-B_a\right|\right\}}{B_a}+\frac{2M_a}{nB_a}\\
        &+\mathbb{E}\left\{\frac{2\left|B_n^a-B_a\right|\left|\{A_n^a(\widehat{\xi}_\alpha)-A^a(\widehat{\xi}_\alpha)\}-A^a(\widehat{\xi}_\alpha)B_a^{-1}(B_n^a-B_a)\right|}{B_a^2}\right\}.
    \end{align*}
    The same decomposition applies to $\widehat{\xi}_\alpha^+$ after truncating $\bigl|\widehat{\xi}_\alpha^+-\xi_\alpha\bigr|$ at $2M_S$ and replacing $A_n^a(\widehat{\xi}_\alpha)$ by $A_n^{a,+}(\widehat{\xi}_\alpha^+)$:
    \begin{align*}
        &\underline{L}_a(\xi_\alpha)\mathbb{E}\left\{\left|\widehat{\xi}_\alpha^+-\xi_\alpha\right|\wedge(2M_S)\right\}\\
        \leq&2M_S\underline{L}_a(\xi_\alpha)\exp\left(-\frac{nB_a^2}{8\sigma_a^2+2M_aB_a}\right)\\
        &+\frac{\mathbb{E}\left\{\left|A_n^{a,+}(\widehat{\xi}_\alpha^+)-A^a(\widehat{\xi}_\alpha^+)\right|+\left|B_n^a-B_a\right|\right\}}{B_a}+\frac{2M_a}{nB_a}\\
        &+\mathbb{E}\left\{\frac{2\left|B_n^a-B_a\right|\left|\{A_n^{a,+}(\widehat{\xi}_\alpha^+)-A^a(\widehat{\xi}_\alpha^+)\}-A^a(\widehat{\xi}_\alpha^+)B_a^{-1}(B_n^a-B_a)\right|}{B_a^2}\right\}.
    \end{align*}
    Also, by the fact that $\sup_{u\in\mathbb{R}}|A_n^{a,+}(u)-A_n^a(u)|\leq n^{-1}M_a$, we can further simplify the above inequality as
    \begin{align*}
        &\underline{L}_a(\xi_\alpha)\mathbb{E}\left\{\left|\widehat{\xi}_\alpha^+-\xi_\alpha\right|\wedge(2M_S)\right\}\\
        \leq&2M_S\underline{L}_a(\xi_\alpha)\exp\left(-\frac{nB_a^2}{8\sigma_a^2+2M_aB_a}\right)\\
        &+\frac{\mathbb{E}\left\{\left|A_n^a(\widehat{\xi}_\alpha^+)-A^a(\widehat{\xi}_\alpha^+)\right|+\left|B_n^a-B_a\right|\right\}}{B_a}+\frac{3M_a}{nB_a}+\frac{2M_a\mathbb{E}|B_n^a-B_a|}{nB_a^2}\\
        &+\mathbb{E}\left\{\frac{2\left|B_n^a-B_a\right|\left|\{A_n^a(\widehat{\xi}_\alpha^+)-A^a(\widehat{\xi}_\alpha^+)\}-A^a(\widehat{\xi}_\alpha^+)B_a^{-1}(B_n^a-B_a)\right|}{B_a^2}\right\}.
    \end{align*}
    
    The remainder of the proof is organized accordingly: Part (i) controls the numerator fluctuation, Part (ii) controls the denominator fluctuation, and Part (iii) controls the second-order remainder generated by the ratio expansion.

    \noindent\textbf{Part (i): control the numerator fluctuation term.} We analyze $\mathbb{E}\{|A_n^a(\widehat{\xi}_\alpha)-A^a(\widehat{\xi}_\alpha)|\}$ by controlling it through the supremum discrepancy of the weighted empirical numerator process:
    \begin{gather*}
        \mathbb{E}\left\{\left|A_n^a(\widehat{\xi}_\alpha)-A^a(\widehat{\xi}_\alpha)\right|\right\}\leq \mathbb{E}\left\{\sup_{u\in\mathbb{R}}\left|A_n^a(u)-A^a(u)\right|\right\}.
    \end{gather*}
    According to Lemma~\ref{lemma: weighted dkw},
    \begin{gather}
        \mathrm{Pr}\left(\sup_{u\in\mathbb{R}}\left|A_n^a(u)-A^a(u)\right|>16\sigma_a\frac{\log^{1/2}(n)}{n^{1/2}}+\varepsilon\right)\leq
        \left\{\begin{aligned}
            &\exp(-n\varepsilon^2/(8\sigma_a^2)),&\varepsilon<2\sigma_a^2,\\
            &\exp(-n\varepsilon/4),&\varepsilon\geq 2\sigma_a^2.
        \end{aligned}\right.
        \label{eq: bahad rep A 4}
    \end{gather}
    Therefore,
    \begin{align*}
        &\mathbb{E}\left\{ \sup_{u\in\mathbb{R}}\left|A_n^a(u)-A^a(u)\right|-16\sigma_a\dfrac{\log^{1/2}(n)}{n^{1/2}} \right\}\\
        \leq&\mathbb{E}\left[ \left\{ \sup_{u\in\mathbb{R}}\left|A_n^a(u)-A^a(u)\right|-16\sigma_a\dfrac{\log^{1/2}(n)}{n^{1/2}} \right\}\vee 0 \right]\\
        =&\int_{0}^{\infty}\mathrm{Pr}\left( \sup_{u\in\mathbb{R}}\left|A_n^a(u)-A^a(u)\right|-16\sigma_a\dfrac{\log^{1/2}(n)}{n^{1/2}}>\varepsilon \right)d\varepsilon\\
        \leq&\int_{0}^{2\sigma_a^2}\exp(-n\varepsilon^2/(8\sigma_a^2))d\varepsilon+\int_{2\sigma_a^2}^{\infty}\exp(-n\varepsilon/4)d\varepsilon\\
        \leq&\int_{0}^{\infty}\exp(-n\varepsilon^2/(8\sigma_a^2))d\varepsilon+\int_{2\sigma_a^2}^{\infty}\exp(-n\varepsilon/4)d\varepsilon\\
        =&2\sqrt{\pi}n^{-1/2}\sigma_a+4n^{-1}\exp(-n\sigma_a^2/2)\\
        \leq&2\sqrt{\pi}n^{-1/2}\sigma_a+8n^{-2}\sigma_a^{-2}\,.
    \end{align*}
    Thus,
    \begin{gather*}
        \mathbb{E}\left\{\frac{\left|A_n^a(\widehat{\xi}_\alpha)-A^a(\widehat{\xi}_\alpha)\right|}{B_a}\right\}\leq \frac{20\sigma_a\log^{1/2}(n)}{B_an^{1/2}}+\frac{8}{B_an^2\sigma_a^2}.
    \end{gather*}

    \noindent\textbf{Part (ii): control the denominator fluctuation term.} Using \eqref{eq: bahad rep B 3}, we can bound the expectation of $|B_n^a-B_a|$ as
    \begin{align*}
        &\mathbb{E}\left( \left|B_n^a-B_a\right| \right)\\
        =&\int_{0}^{\infty}\mathrm{Pr}\left( \left|B_n^a-B_a\right|>\varepsilon \right)d\varepsilon\\
        \leq&2\int_{0}^{\infty}\exp\left( \dfrac{-n\varepsilon^2}{2\sigma_a^2+M_a\varepsilon} \right)d\varepsilon\\
        =&2\int_{0}^{2\sigma_a^2/M_a}\exp\left( -\dfrac{n\varepsilon^2}{4\sigma_a^2} \right)d\varepsilon+2\int_{2\sigma_a^2/M_a}^{\infty}\exp\left( -\dfrac{n\varepsilon}{2M_a} \right)d\varepsilon\\
        \leq&2\int_{0}^{\infty}\exp\left( -\dfrac{n\varepsilon^2}{4\sigma_a^2} \right)d\varepsilon+2\int_{2\sigma_a^2/M_a}^{\infty}\exp\left( -\dfrac{n\varepsilon}{2M_a} \right)d\varepsilon\\
        =&2\sqrt{2\pi}\sigma_an^{-1/2}+4M_an^{-1}\exp(-n\sigma_a^2/M_a^2)\\
        \leq&2\sqrt{2\pi}\sigma_an^{-1/2}+4M_a^3n^{-2}\sigma_a^{-2}\,.
    \end{align*}
    Thus,
    \begin{gather*}
        \mathbb{E}\left(\frac{|B_n^a-B_a|}{B_a}\right)\leq \frac{2\sqrt{2\pi}\sigma_a}{B_an^{1/2}}+\frac{4M_a^3}{B_an^2\sigma_a^2}.
    \end{gather*}

    \noindent\textbf{Part (iii): control the second-order remainder term.} We decompose the expectation as
    \begin{align*}
        &2\mathbb{E}\left[\frac{|B_n^a-B_a|\left|\{A_n^a(\widehat{\xi}_\alpha)-A^a(\widehat{\xi}_\alpha)\}-A^a(\widehat{\xi}_\alpha)B_a^{-1}(B_n^a-B_a)\right|}{B_a^2}\right]\\
        \leq&2\mathbb{E}\left\{\frac{|B_n^a-B_a|\sup_{u\in\mathbb{R}}|A_n^a(u)-A^a(u)|}{B_a^2}\right\}+2\mathbb{E}\left\{\frac{A^a(\widehat{\xi}_\alpha)|B_n^a-B_a|^2}{B_a^3}\right\}.
    \end{align*}
    Using the Cauchy--Schwarz inequality, we can bound the first term as
    \begin{align*}
        & 2\mathbb{E}\left\{\frac{|B_n^a-B_a|\sup_{u\in\mathbb{R}}|A_n^a(u)-A^a(u)|}{B_a^2}\right\} \\
        \leq & \frac{2}{B_a^2}\left[\mathbb{E}|B_n^a-B_a|^2\mathbb{E}\left\{\sup_{u\in\mathbb{R}}|A_n^a(u)-A^a(u)|^2\right\}\right]^{1/2}.
    \end{align*}
    Since $A^a(u)\leq B_a$ for any $u$, the second term is bounded by
    \begin{gather*}
        2\mathbb{E}\left\{\frac{A^a(\widehat{\xi}_\alpha)|B_n^a-B_a|^2}{B_a^3}\right\}\leq \frac{2}{B_a^2}\mathbb{E}|B_n^a-B_a|^2.
    \end{gather*}
    Using \eqref{eq: bahad rep B 3}, we obtain
    \begin{align*}
        \mathbb{E}|B_n^a-B_a|^2
        &=2\int_0^\infty u\,\mathrm{Pr}(|B_n^a-B_a|>u)du\\
        &\leq 2\int_0^\infty u\exp\left(-\frac{nu^2}{2\sigma_a^2+M_au}\right)du\,.
    \end{align*}
    Further decompose the previous integral as two parts similar as in part (ii), we get 
    \begin{align*}
        &2\int_{0}^{\infty}u\exp\left( -\dfrac{nu^2}{2\sigma_a^2+M_au} \right)du\\
        \leq&2\int_{0}^{2\sigma_a^2/M_a}u\exp\left( -\dfrac{nu^2}{4\sigma_a^2} \right)du+2\int_{2\sigma_a^2/M_a}^{\infty}u\exp\left( -\dfrac{nu}{2M_a} \right)du\\
        \leq&\dfrac{4\sigma_a^2}{n}\int_{0}^{\infty}\exp\left( -u \right)du+\dfrac{8M_a^2}{n^2}\int_{n\sigma_a^2/M_a^2}^\infty u\exp(-u)du\\
        \leq&\dfrac{4\sigma_a^2}{n}+\dfrac{16\sigma_a^2}{n}\exp(-n\sigma_a^2/M_a^2)\leq\dfrac{12\sigma_a^2}{n}\,,
    \end{align*}
    where the last two inequalities use the fact that $n\sigma_a^2/M_a^2>1$.
    This indicates
    \begin{gather*}
        \dfrac{2}{B_a^2}\mathbb{E} \left( \left|B_n^a-B_a\right|^2 \right)\leq \dfrac{24\sigma_a^2}{B_a^2n}\,.
    \end{gather*}

    Finally, we calculate $\mathbb{E}\{ \sup_{u\in\mathbb{R}}|A_n^a(u)-A^a(u)|^2 \}$. Repeating the procedure from the previous derivation, we transform the expectation into an integral of the tail probability and then use the change of variable for the squared tail:
    \begin{align*}
        &\mathbb{E}\left\{ \sup_{u\in\mathbb{R}}\left|A_n^a(u)-A^a(u)\right|^2 \right\}\\
        =&\int_{0}^{\infty}\mathrm{Pr}\left( \sup_{u\in\mathbb{R}}\left|A_n^a(u)-A^a(u)\right|^2>\varepsilon \right)d\varepsilon\\
        =&2\int_{0}^{\infty}\mathrm{Pr}\left( \sup_{u\in\mathbb{R}}\left|A_n^a(u)-A^a(u)\right|>\varepsilon \right)\varepsilon d\varepsilon\\
        =&2\int_{0}^{16\sigma_a\{\log(n)/n\}^{1/2}}\mathrm{Pr}\left( \sup_{u\in\mathbb{R}}\left|A_n^a(u)-A^a(u)\right|>\varepsilon \right)\varepsilon d\varepsilon\\
        & ~~~~~~~~~ +2\int_{16\sigma_a\{\log(n)/n\}^{1/2}}^{\infty}\mathrm{Pr}\left( \sup_{u\in\mathbb{R}}\left|A_n^a(u)-A^a(u)\right|>\varepsilon \right)\varepsilon d\varepsilon\\
        \leq&\dfrac{256\sigma_a^2\log(n)}{n}+2\int_{16\sigma_a\{\log(n)/n\}^{1/2}}^{\infty}\mathrm{Pr}\left( \sup_{u\in\mathbb{R}}\left|A_n^a(u)-A^a(u)\right|>\varepsilon \right)\varepsilon d\varepsilon\,.
    \end{align*}
    Combining this with inequality \eqref{eq: bahad rep A 4}, and replacing $\varepsilon$ by $\varepsilon+16\sigma_a\{\log(n)/n\}^{1/2}$ for $\varepsilon>0$, we further obtain
    \begin{align}
        \notag &2\int_{16\sigma_a\{\log(n)/n\}^{1/2}}^{\infty}\mathrm{Pr}\left( \sup_{u\in\mathbb{R}}\left|A_n^a(u)-A^a(u)\right|>\varepsilon \right)\varepsilon d\varepsilon\\
        \notag =&2\int_{0}^{\infty}\mathrm{Pr}\left( \sup_{u\in\mathbb{R}}\left|A_n^a(u)-A^a(u)\right|>16\sigma_a\dfrac{\log^{1/2}(n)}{n^{1/2}}+\varepsilon \right)\left[ \varepsilon+16\sigma_a\{\log(n)/n\}^{1/2} \right]d\varepsilon\\
        \notag=&2\int_{0}^{\infty}\mathrm{Pr}\left( \sup_{u\in\mathbb{R}}\left|A_n^a(u)-A^a(u)\right|>16\sigma_a\dfrac{\log^{1/2}(n)}{n^{1/2}}+\varepsilon \right)\varepsilon d\varepsilon\\
        \notag & ~~~~~~ +32\sigma_a\{\log(n)/n\}^{1/2}\int_{0}^{\infty}\mathrm{Pr}\left( \sup_{u\in\mathbb{R}}\left|A_n^a(u)-A^a(u)\right|>16\sigma_a\dfrac{\log^{1/2}(n)}{n^{1/2}}+\varepsilon \right)d\varepsilon\,.
    \end{align}
    
    In Part (i), we have already shown that
    \begin{gather*}
        \int_{0}^{\infty}\mathrm{Pr}\left( \sup_{u\in\mathbb{R}}\left|A_n^a(u)-A^a(u)\right|>16\sigma_a\dfrac{\log^{1/2}(n)}{n^{1/2}}+\varepsilon \right)d\varepsilon
        \leq 2\sqrt{\pi}n^{-1/2}\sigma_a+8n^{-2}\sigma_a^{-2}\,.
    \end{gather*}
    Thus, the previous inequality gives
    \begin{align*}
        &2\int_{16\sigma_a\{\log(n)/n\}^{1/2}}^{\infty}\mathrm{Pr}\left( \sup_{u\in\mathbb{R}}\left|A_n^a(u)-A^a(u)\right|>\varepsilon \right)\varepsilon d\varepsilon\\
        \leq&2\int_{0}^{\infty}\mathrm{Pr}\left( \sup_{u\in\mathbb{R}}\left|A_n^a(u)-A^a(u)\right|>16\sigma_a\dfrac{\log^{1/2}(n)}{n^{1/2}}+\varepsilon \right)\varepsilon d\varepsilon\\
        & ~~~~~~ +\dfrac{64\sqrt{\pi}\sigma_a^2\log^{1/2}(n)}{n}+\dfrac{256\log^{1/2}(n)}{n^{5/2}\sigma_a}\\
        \leq&2\int_{0}^{\infty}\mathrm{Pr}\left( \sup_{u\in\mathbb{R}}\left|A_n^a(u)-A^a(u)\right|>16\sigma_a\dfrac{\log^{1/2}(n)}{n^{1/2}}+\varepsilon \right)\varepsilon d\varepsilon+\dfrac{256\sigma_a^2\log^{1/2}(n)}{n}\,,
    \end{align*}
    where the last inequality follows from the condition $n\sigma_a^2/2>1$.
    Next, we calculate
    \begin{align*}
        &2\int_{0}^{\infty}\mathrm{Pr}\left( \sup_{u\in\mathbb{R}}\left|A_n^a(u)-A^a(u)\right|>16\sigma_a\dfrac{\log^{1/2}(n)}{n^{1/2}}+\varepsilon \right)\varepsilon d\varepsilon\\
        =&\int_{0}^{2\sigma_a^2}\exp(-n\varepsilon^2/(8\sigma_a^2))d\varepsilon^2+2\int_{2\sigma_a^2}^{\infty}\exp(-n\varepsilon/4)\varepsilon d\varepsilon\\
        \leq&\dfrac{8\sigma_a^2}{n}\int_{0}^{\infty}\exp(-v)dv+\dfrac{32}{n^2}\int_{\sigma_a^2n/2}^{\infty}\exp(-v)vdv\\
        \leq&\dfrac{24\sigma_a^2}{n}\,,
    \end{align*}
    where the last inequality follows from $\sigma_a^2n/2>1$, and the second last inequality follows by the changes of variables $v=n\varepsilon^2/(8\sigma_a^2)$ and $v=n\varepsilon/4$. Thus,
    \begin{gather*}
        2\int_{16\sigma_a\{\log(n)/n\}^{1/2}}^{\infty}\mathrm{Pr}\left( \sup_{u\in\mathbb{R}}\left|A_n^a(u)-A^a(u)\right|>\varepsilon \right)\varepsilon d\varepsilon\leq\dfrac{304\sigma_a^2\log^{1/2}(n)}{n}\,,
    \end{gather*}
    and hence
    \begin{gather*}
        \mathbb{E}\left\{ \sup_{u\in\mathbb{R}}\left|A_n^a(u)-A^a(u)\right|^2 \right\}\leq\dfrac{560\sigma_a^2\log(n)}{n}\,.
    \end{gather*}
    Together, we obtain
    \begin{gather*}
        \dfrac{2}{B_a^2}\left[ \mathbb{E}\left( \left|B_n^a-B_a\right|^2 \right)\mathbb{E}\left\{ \sup_{u\in\mathbb{R}}\left|A_n^a(u)-A^a(u)\right|^2 \right\} \right]^{1/2}\leq\dfrac{164\sigma_a^2\log^{1/2}(n)}{B_a^2n}\,,
    \end{gather*}
    and therefore
    \begin{gather*}
        2\mathbb{E}\left[ \dfrac{\left|B_n^a-B_a\right|\left|\{A_n^a(\widehat{\xi}_\alpha)-A^a(\widehat{\xi}_\alpha)\}-A^a(\widehat{\xi}_\alpha)B_a^{-1}(B_n^a-B_a)\right|}{B_a^2} \right]\leq\dfrac{188\sigma_a^2\log^{1/2}(n)}{B_a^2n}\,.
    \end{gather*}
    
    Combining Parts (i)--(iii) and the condition $\sigma_a^2n/(2M_a^2)>1$, we conclude that
    \begin{align*}
        \underline{L}_a(\xi_\alpha) \mathbb{E}\left\{ \left|\widehat{\xi}_\alpha-\xi_\alpha\right| \right\}
        \leq&2M_S\underline{L}_a(\xi_\alpha)\exp\{-nB_a^2/(8\sigma_a^2+2M_aB_a)\}\\
        &+\dfrac{20\sigma_a\log^{1/2}(n)}{B_an^{1/2}}+\dfrac{8}{B_an^{2}\sigma_a^{2}}\\
        &+\dfrac{2\sqrt{2\pi}\sigma_a}{B_an^{1/2}}+\dfrac{4M_a^3}{B_an^{2}\sigma_a^{2}}+\dfrac{2M_a}{B_an}+\dfrac{188\sigma_a^2\log^{1/2}(n)}{B_a^2n}\\
        \leq&2M_S\underline{L}_a(\xi_\alpha)\exp\{-nB_a^2/(8\sigma_a^2+2M_aB_a)\}\\
        &+\dfrac{24\sigma_a\log^{1/2}(n)}{B_an^{1/2}}+\dfrac{196(\sigma_a^2+M_aB_a)\log^{1/2}(n)}{B_a^2n}\,.
    \end{align*}
    Preserving $n,B_a,\sigma_a,M_a$ and treating the remaining terms as constants, there exists a constant $C$ such that $\mathbb{E}\{|\widehat{\xi}_\alpha-\xi_\alpha|\}$ is upper bounded by
    \begin{gather*}
        C \left\{ \exp\left( -\dfrac{nB_a^2}{8\sigma_a^2+2M_aB_a} \right)+\dfrac{\sigma_a\log^{1/2}(n)}{B_an^{1/2}}+\dfrac{(\sigma_a^2+M_aB_a)\log^{1/2}(n)}{B_a^2n} \right\}\,.
    \end{gather*}
    Also, by the result of Part (ii), since
    \begin{align*}
        \dfrac{2M_a\mathbb{E}\left|B_n^a-B_a\right|}{nB_a^2}\leq \dfrac{4\sqrt{2\pi}M_a\sigma_a}{B_a^2n^{3/2}}+\dfrac{8M_a^4}{B_a^2n^{3}\sigma_a^{2}}\leq\dfrac{32M_a\sigma_a}{B_a^2n^{3/2}}\,,
    \end{align*}
    we obtain that $\mathbb{E}\{|\widehat{\xi}_\alpha^+-\xi_\alpha|\wedge(2M_S)\}$ is bounded by
    \begin{gather*}
        C \left\{ \exp\left( -\dfrac{nB_a^2}{8\sigma_a^2+2M_aB_a} \right)+\dfrac{\sigma_a\log^{1/2}(n)}{B_an^{1/2}}+\dfrac{(\sigma_a^2+M_aB_a)\log^{1/2}(n)}{B_a^2n}+\dfrac{M_a\sigma_a}{B_a^2n^{3/2}} \right\}\,.       
    \end{gather*}
    Together, both $\mathbb{E}\{|\widehat{\xi}_\alpha-\xi_\alpha|\}$ and $\mathbb{E}\{|\widehat{\xi}_\alpha^+-\xi_\alpha|\wedge(2M_S)\}$ are bounded by the same order of convergence, which completes the proof.
\end{proof}

\subsection{Proof of Lemma \ref{theo: sup quantile error}}

\begin{proof}
    Denote the abstract slope function of $m$ at $\xi_{\alpha,m}$ by
    \begin{gather*}
        k_m(u)=\dfrac{\left|F_m^a(u)-F_m^a(\xi_{\alpha,m})\right|}{\left|u-\xi_{\alpha,m}\right|}\,,\qquad u\neq \xi_{\alpha,m}\,,
    \end{gather*}
    and let $k_m(\xi_{\alpha,m})$ be the derivative of $F_m^a$ at $\xi_{\alpha,m}$. Thus, for any $u\in[-M_{\mathcal M},M_{\mathcal M}]$, we have $k_m(u)>0$. Since $F_m^a$ is differentiable in an $\epsilon_0$-neighborhood of $\xi_{\alpha,m}$ with derivative lower bounded by $\underline L$, for $u\in[\xi_{\alpha,m}-\epsilon_0,\xi_{\alpha,m}+\epsilon_0]$, we have $k_m(u)\geq\underline L$. Also, since $|m(X^{(0)})|\leq M_{\mathcal M}$ holds almost surely, $k_m(u)$ is lower bounded by $\underline L\wedge \epsilon_0\underline L/M_{\mathcal M}$ on $[-M_{\mathcal M},M_{\mathcal M}]$. Thus, define
    \begin{gather*}
        \underline L_{\mathcal M}=\inf_{m\in\mathcal M,\,u\in[-M_{\mathcal M},M_{\mathcal M}]}k_m(u)>0\,.
    \end{gather*}

    Define $A_{n,m}^a(u)=n^{-1}\sum_{i=1}^n a(X_i^{(0)})\mathbbm{1}\{m(X_i^{(0)})\leq u\}$ and $B_n^a=n^{-1}\sum_{i=1}^n a(X_i^{(0)})$, so $F_{n,m}^a(u)=A_{n,m}^a(u)/B_n^a$. Let $A_m^a(u)=\mathbb{E}\{a(X^{(0)})\mathbbm{1}(m(X^{(0)})\leq u)\}$ and $B_a=\mathbb{E}\{a(X^{(0)})\}$, so $F_m^a(u)=A_m^a(u)/B_a$. Given Assumption~\ref{ass: weight function} that $0<a(X^{(0)})\leq M_a$ holds almost surely, we have $\sup_x a(x)-\inf_x a(x)\leq M_a$ almost surely. By Hoeffding's inequality,
    \begin{gather}
        \mathrm{Pr}\left( \left|B_n^a-B_a\right|>\varepsilon \right)\leq 2\exp\left( -\dfrac{2n\varepsilon^2}{M_a^2} \right)\,.\label{eq: sup bahad rep 1}
    \end{gather}
    Denote $D_B=\{B_n^a\geq B_a/2\}$, which has probability at least $1-\exp\{-nB_a^2/(2M_a^2)\}$.

    Since $\widehat{\xi}_{\alpha,m}$ is the sample quantile, define $F_{n,m}^a(\widehat{\xi}_{\alpha,m})=\alpha+\Delta_{n,m}$, where $\Delta_{n,m}$ satisfies
    \begin{gather*}
        0\leq \Delta_{n,m}\leq\max_{1\leq i\leq n}\dfrac{a(X_i^{(0)})}{\sum_{j=1}^{n}a(X_j^{(0)})}\leq\dfrac{M_a}{\sum_{j=1}^{n}a(X_j^{(0)})}=\dfrac{M_a}{nB_n^a}\,.
    \end{gather*}
    By the definition of $\Delta_{n,m}$,
    \begin{align*}
        \alpha-F_m^a(\widehat{\xi}_{\alpha,m})+\Delta_{n,m}
        =&F_{n,m}^a(\widehat{\xi}_{\alpha,m})-F_m^a(\widehat{\xi}_{\alpha,m})\\
        =&\frac{A_{n,m}^a(\widehat{\xi}_{\alpha,m})}{B_n^a}-\frac{A_m^a(\widehat{\xi}_{\alpha,m})}{B_a}\\
        =&\frac{A_{n,m}^a(\widehat{\xi}_{\alpha,m})B_a-A_m^a(\widehat{\xi}_{\alpha,m})B_n^a}{B_n^aB_a}\,.
    \end{align*}
    The numerator in the last formula equals $B_a\{A_{n,m}^a(\widehat{\xi}_{\alpha,m})-A_m^a(\widehat{\xi}_{\alpha,m})\}-A_m^a(\widehat{\xi}_{\alpha,m})(B_n^a-B_a)$. Thus,
    \begin{align*}
        & \alpha-F_m^a(\widehat{\xi}_{\alpha,m}) \\
        =&\dfrac{A_{n,m}^a(\widehat{\xi}_{\alpha,m})-A_m^a(\widehat{\xi}_{\alpha,m})}{B_n^a}
        -\dfrac{A_m^a(\widehat{\xi}_{\alpha,m})(B_n^a-B_a)}{B_n^aB_a}
        -\Delta_{n,m}\\
        =&\dfrac{A_{n,m}^a(\widehat{\xi}_{\alpha,m})-A_m^a(\widehat{\xi}_{\alpha,m})}{B_a}
        -\dfrac{A_m^a(\widehat{\xi}_{\alpha,m})(B_n^a-B_a)}{B_a^2}
        -\Delta_{n,m}\\
        &+\left\{(B_n^a)^{-1}-B_a^{-1}\right\}\left[\{A_{n,m}^a(\widehat{\xi}_{\alpha,m})-A_m^a(\widehat{\xi}_{\alpha,m})\}-A_m^a(\widehat{\xi}_{\alpha,m})B_a^{-1}(B_n^a-B_a)\right]\,.
    \end{align*}
    Since $k_m(u)\geq\underline L_{\mathcal M}$ for any $m\in\mathcal M$ and $u\in[-M_{\mathcal M},M_{\mathcal M}]$, and $A_m^a(u)\leq B_a$, we obtain
    \begin{align*}
        \left|\widehat{\xi}_{\alpha,m}-\xi_{\alpha,m}\right|
        \leq&\dfrac{\left|A_{n,m}^a(\widehat{\xi}_{\alpha,m})-A_m^a(\widehat{\xi}_{\alpha,m})\right|}{\underline L_{\mathcal M}B_a}
        +\dfrac{\left|B_n^a-B_a\right|}{\underline L_{\mathcal M}B_a}
        +\dfrac{M_a}{n\underline L_{\mathcal M}B_n^a}\\
        &+\dfrac{\left|A_{n,m}^a(\widehat{\xi}_{\alpha,m})-A_m^a(\widehat{\xi}_{\alpha,m})\right|\left|B_n^a-B_a\right|}{\underline L_{\mathcal M}B_aB_n^a}
        +\dfrac{(B_n^a-B_a)^2}{B_aB_n^a}\,.
    \end{align*}
    Taking the supremum over $m\in\mathcal M$ gives
    \begin{align*}
        \sup_{m\in\mathcal M}\left|\widehat{\xi}_{\alpha,m}-\xi_{\alpha,m}\right|
        \leq&\dfrac{\sup_{u\in\mathbb{R},\,m\in\mathcal M}\left|A_{n,m}^a(u)-A_m^a(u)\right|}{\underline L_{\mathcal M}B_a}
        +\dfrac{\left|B_n^a-B_a\right|}{\underline L_{\mathcal M}B_a}
        +\dfrac{M_a}{n\underline L_{\mathcal M}B_n^a}\\
        &+\dfrac{\sup_{u\in\mathbb{R},\,m\in\mathcal M}\left|A_{n,m}^a(u)-A_m^a(u)\right|\left|B_n^a-B_a\right|}{\underline L_{\mathcal M}B_aB_n^a}
        +\dfrac{(B_n^a-B_a)^2}{B_aB_n^a}\,.
    \end{align*}

    Since $B_n^a$ may be small, the tail term is difficult to control directly. We therefore decompose the expectation of $\sup_{m\in\mathcal M}|\widehat{\xi}_{\alpha,m}-\xi_{\alpha,m}|$ according to $D_B$ and $D_B^c$. Since $m(X^{(0)})$ is uniformly bounded by $M_{\mathcal M}$, on $D_B^c$ we have
    \begin{gather*}
        \sup_{m\in\mathcal M}\left|\widehat{\xi}_{\alpha,m}-\xi_{\alpha,m}\right|\leq 2M_{\mathcal M}\,.
    \end{gather*}
    Thus,
    \begin{align*}
        &\mathbb{E}\left( \sup_{m\in\mathcal M}\left|\widehat{\xi}_{\alpha,m}-\xi_{\alpha,m}\right| \right)\\
        =&\mathbb{E}\left\{\mathbbm{1}(D_B)\sup_{m\in\mathcal M}\left|\widehat{\xi}_{\alpha,m}-\xi_{\alpha,m}\right|\right\}
        +\mathbb{E}\left\{\mathbbm{1}(D_B^c)\sup_{m\in\mathcal M}\left|\widehat{\xi}_{\alpha,m}-\xi_{\alpha,m}\right|\right\}\\
        \leq&\mathbb{E}\left\{\mathbbm{1}(D_B)\sup_{m\in\mathcal M}\left|\widehat{\xi}_{\alpha,m}-\xi_{\alpha,m}\right|\right\}
        +2M_{\mathcal M}\mathrm{Pr}(D_B^c)\\
        \leq&\mathbb{E}\left\{\mathbbm{1}(D_B)\sup_{m\in\mathcal M}\left|\widehat{\xi}_{\alpha,m}-\xi_{\alpha,m}\right|\right\}
        +2M_{\mathcal M}\exp\left(-\dfrac{nB_a^2}{2M_a^2}\right)\,.
    \end{align*}
    On the event $D_B$, the previous inequality and Cauchy--Schwarz inequality imply
    \begin{align*}
        & \mathbbm{1}(D_B)\sup_{m\in\mathcal M}\left|\widehat{\xi}_{\alpha,m}-\xi_{\alpha,m}\right| \\
        \leq&\dfrac{\sup_{u\in\mathbb{R},\,m\in\mathcal M}\left|A_{n,m}^a(u)-A_m^a(u)\right|}{\underline L_{\mathcal M}B_a}
        +\dfrac{\left|B_n^a-B_a\right|}{\underline L_{\mathcal M}B_a}
        +\dfrac{2M_a}{n\underline L_{\mathcal M}B_a}\\
        &+\dfrac{2\sup_{u\in\mathbb{R},\,m\in\mathcal M}\left|A_{n,m}^a(u)-A_m^a(u)\right|\left|B_n^a-B_a\right|}{\underline L_{\mathcal M}B_a^2}
        +\dfrac{2(B_n^a-B_a)^2}{B_a^2}\,,
    \end{align*}
    and hence
    \begin{align*}
        &\mathbb{E}\left\{\mathbbm{1}(D_B)\sup_{m\in\mathcal M}\left|\widehat{\xi}_{\alpha,m}-\xi_{\alpha,m}\right|\right\}\\
        \leq&\dfrac{\mathbb{E}\left\{\sup_{u\in\mathbb{R},\,m\in\mathcal M}\left|A_{n,m}^a(u)-A_m^a(u)\right|\right\}}{\underline L_{\mathcal M}B_a}
        +\dfrac{2M_a}{n\underline L_{\mathcal M}B_a}\\
        &+\dfrac{\mathbb{E}|B_n^a-B_a|}{\underline L_{\mathcal M}B_a}
        +\dfrac{2\mathbb{E}|B_n^a-B_a|^2}{B_a^2}\\
        &+\dfrac{2\left[\mathbb{E}\left\{\sup_{u\in\mathbb{R},\,m\in\mathcal M}\left|A_{n,m}^a(u)-A_m^a(u)\right|^2\right\}\mathbb{E}|B_n^a-B_a|^2\right]^{1/2}}{\underline L_{\mathcal M}B_a^2}\,.
    \end{align*}

    We calculate these expectations one by one. First, using \eqref{eq: sup bahad rep 1},
    \begin{align*}
        \mathbb{E}|B_n^a-B_a|
        &=\int_{0}^{\infty}\mathrm{Pr}\left( |B_n^a-B_a|>\varepsilon \right)d\varepsilon
        \leq 2\int_{0}^{\infty}\exp\left(-\dfrac{2n\varepsilon^2}{M_a^2}\right)d\varepsilon\\
        &=M_a\sqrt{\pi}n^{-1/2}\leq 2M_an^{-1/2}\,.
    \end{align*}
    Similarly,
    \begin{align*}
        \mathbb{E}|B_n^a-B_a|^2
        &=\int_{0}^{\infty}\mathrm{Pr}\left( |B_n^a-B_a|^2>\varepsilon \right)d\varepsilon
        =2\int_{0}^{\infty}\mathrm{Pr}\left( |B_n^a-B_a|>\varepsilon \right)\varepsilon d\varepsilon\\
        &\leq 2\int_{0}^{\infty}\exp\left(-\dfrac{2n\varepsilon^2}{M_a^2}\right)\varepsilon d\varepsilon
        =\int_{0}^{\infty}\exp\left(-\dfrac{2n\varepsilon}{M_a^2}\right)d\varepsilon\\
        &=M_a^2(2n)^{-1}\,.
    \end{align*}
    Thus,
    \begin{gather*}
        \dfrac{\mathbb{E}|B_n^a-B_a|}{\underline L_{\mathcal M}B_a}
        +\dfrac{2\mathbb{E}|B_n^a-B_a|^2}{B_a^2}
        \leq\dfrac{2M_an^{-1/2}}{\underline L_{\mathcal M}B_a}
        +\dfrac{M_a^2n^{-1}}{B_a^2}\,.
    \end{gather*}
    Lemma~\ref{lemma: sup weighted dkw} yields
    \begin{align*}
        & \mathrm{Pr}\left( \sup_{m\in\mathcal M,\,u\in\mathbb{R}}\left|A_{n,m}^a(u)-A_m^a(u)\right|\leq 8M_a\sqrt{\dfrac{\mathrm{Pdim}(\mathcal M)\log(n)}{n}}+\varepsilon \right) \\ \geq & 1-\exp\left(-\dfrac{n\varepsilon^2}{2M_a^2}\right)\,.
    \end{align*}
    Therefore, similarly to the calculation of $\mathbb{E}|B_n^a-B_a|$, we obtain
    \begin{align*}
        \mathbb{E}\left\{ \sup_{m\in\mathcal M,\,u\in\mathbb{R}}\left|A_{n,m}^a(u)-A_m^a(u)\right|-8M_a\sqrt{\dfrac{\mathrm{Pdim}(\mathcal M)\log(n)}{n}} \right\}\leq 2M_an^{-1/2}\,,
    \end{align*}
    which implies
    \begin{gather*}
        \mathbb{E}\left\{ \sup_{m\in\mathcal M,\,u\in\mathbb{R}}\left|A_{n,m}^a(u)-A_m^a(u)\right| \right\}\leq 10M_a\sqrt{\dfrac{\mathrm{Pdim}(\mathcal M)\log(n)}{n}}\,.
    \end{gather*}
    Following the same calculation used for $\mathbb{E}|B_n^a-B_a|^2$, we also obtain
    \begin{align*}
        \mathbb{E}\left[ \left\{ \sup_{m\in\mathcal M,\,u\in\mathbb{R}}\left|A_{n,m}^a(u)-A_m^a(u)\right|-8M_a\sqrt{\dfrac{\mathrm{Pdim}(\mathcal M)\log(n)}{n}} \right\}^2 \right]\leq M_a^2n^{-1}\,,
    \end{align*}
    and hence
    \begin{gather*}
        \mathbb{E}\left\{ \sup_{m\in\mathcal M,\,u\in\mathbb{R}}\left|A_{n,m}^a(u)-A_m^a(u)\right|^2 \right\}\leq 97M_a^2\dfrac{\mathrm{Pdim}(\mathcal M)\log(n)}{n}\,.
    \end{gather*}
    Therefore,
    \begin{gather*}
        \dfrac{\mathbb{E}\left\{\sup_{u\in\mathbb{R},\,m\in\mathcal M}\left|A_{n,m}^a(u)-A_m^a(u)\right|\right\}}{\underline L_{\mathcal M}B_a}
        \leq\dfrac{10M_a}{\underline L_{\mathcal M}B_a}\sqrt{\dfrac{\mathrm{Pdim}(\mathcal M)\log(n)}{n}}\,,
    \end{gather*}
    and
    \begin{align*}
        & \dfrac{2\left[\mathbb{E}\left\{\sup_{u\in\mathbb{R},\,m\in\mathcal M}\left|A_{n,m}^a(u)-A_m^a(u)\right|^2\right\}\mathbb{E}|B_n^a-B_a|^2\right]^{1/2}}{\underline L_{\mathcal M}B_a^2} \\
        \leq & \dfrac{14M_a^2}{\underline L_{\mathcal M}B_a^2}\dfrac{\sqrt{\mathrm{Pdim}(\mathcal M)\log(n)}}{n}\,.
    \end{align*}
    Combining the preceding bounds, we get
    \begin{align*}
        &\mathbb{E}\left\{ \mathbbm{1}(D_B)\sup_{m\in\mathcal M}\left|\widehat{\xi}_{\alpha,m}-\xi_{\alpha,m}\right| \right\}\\
        \leq&\dfrac{2M_an^{-1/2}}{\underline L_{\mathcal M}B_a}
        +\dfrac{M_a^2n^{-1}}{B_a^2}
        +\dfrac{2M_a}{n\underline L_{\mathcal M}B_a} \\
        & 
        +\dfrac{14M_a^2}{\underline L_{\mathcal M}B_a^2}\dfrac{\sqrt{\mathrm{Pdim}(\mathcal M)\log(n)}}{n}
        +\dfrac{10M_a}{\underline L_{\mathcal M}B_a}\sqrt{\dfrac{\mathrm{Pdim}(\mathcal M)\log(n)}{n}}\,.
    \end{align*}
    Thus there exists a constant $C>0$ such that
    \begin{gather*}
        \mathbb{E}\left( \sup_{m\in\mathcal M}\left|\widehat{\xi}_{\alpha,m}-\xi_{\alpha,m}\right| \right)
        \leq \dfrac{CM_a}{B_a}\sqrt{\dfrac{\mathrm{Pdim}(\mathcal M)\log(n)}{n}}\,.
    \end{gather*}
    Similarly, the same bound holds for $\mathbb{E}\left(2M_{\mathcal M}\wedge\sup_{m\in\mathcal M}|\widehat{\xi}_{\alpha,m}^+-\xi_{\alpha,m}|\right)$. Taking the same constant $C$ for both bounds yields
    \begin{align*}
        & \mathbb{E}\left( \sup_{m\in\mathcal M}\left|\widehat{\xi}_{\alpha,m}-\xi_{\alpha,m}\right| \right)
        \vee
        \mathbb{E}\left(2M_{\mathcal M}\wedge\sup_{m\in\mathcal M}\left|\widehat{\xi}_{\alpha,m}^+-\xi_{\alpha,m}\right|\right) \\
        \leq & \dfrac{CM_a}{B_a}\sqrt{\dfrac{\mathrm{Pdim}(\mathcal M)\log(n)}{n}}\,.
    \end{align*}
\end{proof}

\section{Additional Numerical Results}\label{sec: additional numerical}

\subsection{Implementation Details of GLCP and CQR-based Methods}\label{sec: method_details}
We describe the implementation of GLCP and the CQR-based method used in the numerical experiments. Let $Z = \{(X_1, Y_1), \ldots, (X_n, Y_n),(X_{n+1}, Y_{n+1})\}$ denote the calibration and test data, where $(X_{n+1},Y_{n+1})$ is the test point. Let $V_i=v(X_i,Y_i)$ be the base score for $i\in[n+1]$.

Using the training subset $Z_{\mathrm{tr},2}$, we estimate the conditional c.d.f.~of the base score $v(x,y)$ given $X=x$, denoted by $\widehat F_{v\mid X}(v\mid x)$, and the conditional $(1-\alpha)$-quantile of the base score $v(x,y)$ given $X=x$, denoted by $\widehat Q_\alpha(x)$. The GLCP score is defined as
\begin{gather*}
    s_{\text{glcp}}(x, y) = \varphi\left(\widehat{F}_{v|X}\big(v(x, y) \mid x\big)\right)\,,
\end{gather*}
where $\varphi:[0,1]\to[0,1]$ is a fixed deterministic transformation. The CQR-based score is defined by
\begin{gather*}
    s_{\text{cqr}}(x, y) = v(x, y) - \widehat{Q}_\alpha(x)\,.
\end{gather*}
Thus, the implemented CQR-based method is a quantile-centered version for the base score, rather than the standard two-sided max-type CQR score.

For either choice $s\in\{s_{\rm GLCP},s_{\rm CQR}\}$, the conformal prediction set is
\begin{gather*}
    \widehat{C}(X_{n+1}) = \left\{ y : s(X_{n+1}, y) \leq Q\left(1-\alpha; \frac{1}{n+1}\left\{\sum_{i=1}^{n} \delta_{s(X_i, Y_i)} + \delta_{\infty}\right\}\right) \right\}\,.
\end{gather*}

\paragraph*{Kernel Bandwidth Selection}
For kernel-based methods, the bandwidth $h$ is chosen through the effective sample size criterion of \cite{hore2025conformal}. Specifically, define
$n_{\text{eff}}(h)=n \mathbb{E}\big[\mathbb{E}\{K(X,X^\prime;h)\mid X\}^2\big]/\mathbb{E}\{K^2(X,X^\prime;h)\}$ 
where $X$ and $X^\prime$ are independent draws from the covariate distribution. Using the training subsets $Z_{\mathrm{tr},1}$ and $Z_{\mathrm{tr},2}$, we compute the empirical analogue $\widehat n_{\rm eff}(h)$. For a prescribed target effective sample size, we choose $\widehat h_{n_{\rm eff}}$ by solving
$\hat{n}_{\text{eff}}(\hat{h}_{n_{\text{eff}}}) \approx n_{\text{eff}}$,
and use this bandwidth to form the kernel weights in the corresponding method.

\subsection{Additional Results for Conditional Conformal Methods under Covariate Shift}\label{sec: app cs experiments}

In Section~\ref{sec: covariate shift} of the main text, we study conditional conformal methods under covariate shift. The calibration covariates are drawn from $N_d(0,I_d)$, while the test covariates are drawn from $N_d(0,\sigma^2 I_d)$ for a prescribed $\sigma>0$. The conditional distribution $Y\mid X$ follows DGP1--3. This subsection provides additional implementation details for the five conditional methods fitted on $Z_{\rm tr,2}$, and reports supplementary results on their test-conditional coverage performance under covariate shift.

The five conditional methods are LCP, GLCP with Engression as the conditional distribution estimator, CQR-LR, CQR-RF, and CQR-LGB. Their weighted counterparts are obtained by density-ratio reweighting. We also include standard WCP based on the base score as a benchmark. The method-specific configurations used on $Z_{\rm tr,2}$ are as follows:
\begin{itemize}
    \item \textbf{GLCP \citep{min2025personalized}:} Engression is used to estimate the conditional distribution of the base score. The network has two hidden layers with 50 units per layer and is trained for 500 epochs.

    \item \textbf{LCP:} Kernel localization is applied to the base score, with the target effective sample size fixed at $n_{\rm eff}=40$.

    \item \textbf{CQR-LR:} Linear quantile regression is fitted with an $L_2$ penalty. The target quantile level is $0.9$, and the regularization coefficient is $0.01$.

    \item \textbf{CQR-RF \citep{meinshausen2006quantile}:} Quantile random forest is fitted with minimum split size 2 and maximum tree depth 10. The target quantile level is $0.9$.

    \item \textbf{CQR-LGB \citep{ke2017lightgbm}:} LightGBM quantile regression is fitted with 21 leaves, learning rate $0.05$, and feature and bagging fractions both equal to $0.6$. The target quantile level is $0.9$, and no early stopping is used.
\end{itemize}

Tables~\ref{tab: app cs miscov}--\ref{tab: app cs mar} report conditional miscoverage and marginal coverage for sample sizes $n\in\{500,1000\}$ and dimensions $d\in\{5,10,15,20\}$. Overall, the conditional conformal methods tend to have smaller test-conditional miscoverage than WCP, with CQR-LR being less competitive in more complex settings. The density-ratio-weighted versions generally attain marginal coverage close to the nominal level. By contrast, the unweighted versions can deviate substantially from nominal marginal coverage when $\sigma\neq1$, as expected under covariate shift.

{\fontsize{8}{7}\selectfont{\setlength{\tabcolsep}{2.5pt}
\begin{longtable}{cccccccccccccccccc}
\caption{Conditional miscoverage results under covariate shift. The method outperforming WCP is highlighted with \textbf{bold} and $*$. } \label{tab: app cs miscov} \\
\toprule
& & \multicolumn{5}{c}{No covariate-shift adjustment} & & \multicolumn{5}{c}{Density-ratio weighted}\\
\cmidrule(lr){3-7} \cmidrule(lr){9-13} 
& $\sigma$ & GLCP & LCP & CQR-LR & CQR-RF & CQR-LGB & & GLCP & LCP & CQR-LR & CQR-RF & CQR-LGB & & WCP \\
\midrule
\endfirsthead
\multicolumn{15}{c}{{\tablename\ \thetable{} -- continued from previous page}} \\
\toprule
& & \multicolumn{5}{c}{No covariate-shift adjustment} & & \multicolumn{5}{c}{Density-ratio weighted}\\
\cmidrule(lr){3-7} \cmidrule(lr){9-13} 
& $\sigma$ & GLCP & LCP & CQR-LR & CQR-RF & CQR-LGB & & GLCP & LCP & CQR-LR & CQR-RF & CQR-LGB & & WCP \\
\midrule
\endhead
\midrule
\multicolumn{15}{r}{{\small Continued on next page}} \\
\endfoot
\bottomrule
\endlastfoot
& & \multicolumn{11}{c}{$n=500\quad\quad d=5$}\\
DGP1 & $0.8$ & $0.110$ & $\textbf{0.094}^*$ & $0.110$ & $\textbf{0.100}^*$ & $\textbf{0.099}^*$ &  & $\textbf{0.104}^*$ & $\textbf{0.100}^*$ & $0.107$ & $\textbf{0.092}^*$ & $\textbf{0.091}^*$ &  & $0.108$\\
 & $1.0$ & $\textbf{0.110}^*$ & $\textbf{0.108}^*$ & $0.113$ & $\textbf{0.097}^*$ & $\textbf{0.101}^*$ &  & $\textbf{0.109}^*$ & $\textbf{0.108}^*$ & $0.113$ & $\textbf{0.097}^*$ & $\textbf{0.101}^*$ &  & $0.114$\\
 & $1.2$ & $0.139$ & $0.150$ & $0.145$ & $0.110$ & $0.124$ &  & $\textbf{0.105}^*$ & $0.109$ & $0.108$ & $\textbf{0.091}^*$ & $\textbf{0.096}^*$ &  & $0.108$\\
DGP2 & $0.8$ & $\textbf{0.036}^*$ & $\textbf{0.058}^*$ & $0.074$ & $\textbf{0.039}^*$ & $\textbf{0.032}^*$ &  & $\textbf{0.021}^*$ & $\textbf{0.049}^*$ & $0.067$ & $\textbf{0.032}^*$ & $\textbf{0.016}^*$ &  & $0.067$\\
 & $1.0$ & $\textbf{0.041}^*$ & $\textbf{0.059}^*$ & $0.081$ & $\textbf{0.036}^*$ & $\textbf{0.039}^*$ &  & $\textbf{0.041}^*$ & $\textbf{0.059}^*$ & $0.081$ & $\textbf{0.036}^*$ & $\textbf{0.039}^*$ &  & $0.081$\\
 & $1.2$ & $\textbf{0.066}^*$ & $\textbf{0.087}^*$ & $0.120$ & $\textbf{0.050}^*$ & $\textbf{0.063}^*$ &  & $\textbf{0.063}^*$ & $\textbf{0.076}^*$ & $0.089$ & $\textbf{0.051}^*$ & $\textbf{0.069}^*$ &  & $0.090$\\
DGP3 & $0.8$ & $\textbf{0.036}^*$ & $\textbf{0.042}^*$ & $0.050$ & $\textbf{0.023}^*$ & $\textbf{0.016}^*$ &  & $\textbf{0.036}^*$ & $\textbf{0.041}^*$ & $0.049$ & $\textbf{0.024}^*$ & $\textbf{0.016}^*$ &  & $0.049$\\
 & $1.0$ & $\textbf{0.035}^*$ & $\textbf{0.039}^*$ & $0.049$ & $\textbf{0.021}^*$ & $\textbf{0.016}^*$ &  & $\textbf{0.035}^*$ & $\textbf{0.039}^*$ & $0.049$ & $\textbf{0.021}^*$ & $\textbf{0.016}^*$ &  & $0.049$\\
 & $1.2$ & $\textbf{0.040}^*$ & $0.046$ & $0.059$ & $\textbf{0.020}^*$ & $\textbf{0.020}^*$ &  & $\textbf{0.032}^*$ & $\textbf{0.038}^*$ & $0.046$ & $\textbf{0.018}^*$ & $\textbf{0.017}^*$ &  & $0.046$\\
\midrule
& & \multicolumn{11}{c}{$n=500\quad\quad d=15$}\\
DGP1 & $0.8$ & $0.135$ & $0.136$ & $0.143$ & $0.176$ & $0.157$ &  & $\textbf{0.086}^*$ & $\textbf{0.092}^*$ & $0.095$ & $0.096$ & $\textbf{0.094}^*$ &  & $0.096$\\
 & $1.0$ & $\textbf{0.100}^*$ & $\textbf{0.106}^*$ & $\textbf{0.105}^*$ & $\textbf{0.105}^*$ & $\textbf{0.103}^*$ &  & $\textbf{0.099}^*$ & $\textbf{0.106}^*$ & $\textbf{0.105}^*$ & $\textbf{0.105}^*$ & $\textbf{0.103}^*$ &  & $0.107$\\
 & $1.2$ & $0.172$ & $0.177$ & $0.168$ & $0.122$ & $0.151$ &  & $0.102$ & $0.099$ & $\textbf{0.097}^*$ & $\textbf{0.088}^*$ & $\textbf{0.093}^*$ &  & $0.099$\\
DGP2 & $0.8$ & $\textbf{0.028}^*$ & $0.063$ & $0.066$ & $\textbf{0.036}^*$ & $\textbf{0.037}^*$ &  & $\textbf{0.023}^*$ & $\textbf{0.039}^*$ & $0.041$ & $\textbf{0.023}^*$ & $\textbf{0.019}^*$ &  & $0.041$\\
 & $1.0$ & $\textbf{0.034}^*$ & $\textbf{0.048}^*$ & $0.051$ & $\textbf{0.025}^*$ & $\textbf{0.035}^*$ &  & $\textbf{0.029}^*$ & $\textbf{0.048}^*$ & $0.051$ & $\textbf{0.025}^*$ & $\textbf{0.035}^*$ &  & $0.051$\\
 & $1.2$ & $0.075$ & $0.096$ & $0.102$ & $\textbf{0.048}^*$ & $0.069$ &  & $\textbf{0.040}^*$ & $\textbf{0.054}^*$ & $0.057$ & $\textbf{0.033}^*$ & $\textbf{0.048}^*$ &  & $0.057$\\
DGP3 & $0.8$ & $\textbf{0.023}^*$ & $0.037$ & $0.039$ & $\textbf{0.016}^*$ & $\textbf{0.014}^*$ &  & $\textbf{0.026}^*$ & $0.029$ & $0.030$ & $\textbf{0.018}^*$ & $\textbf{0.014}^*$ &  & $0.030$\\
 & $1.0$ & $0.028$ & $\textbf{0.027}^*$ & $0.028$ & $\textbf{0.015}^*$ & $\textbf{0.013}^*$ &  & $\textbf{0.022}^*$ & $\textbf{0.027}^*$ & $0.028$ & $\textbf{0.015}^*$ & $\textbf{0.013}^*$ &  & $0.028$\\
 & $1.2$ & $0.044$ & $0.039$ & $0.041$ & $\textbf{0.013}^*$ & $\textbf{0.017}^*$ &  & $\textbf{0.019}^*$ & $0.023$ & $0.025$ & $\textbf{0.015}^*$ & $\textbf{0.011}^*$ &  & $0.024$\\
\midrule
& & \multicolumn{11}{c}{$n=500\quad\quad d=20$}\\
DGP1 & $0.8$ & $0.173$ & $0.159$ & $0.161$ & $0.203$ & $0.182$ &  & $0.105$ & $\textbf{0.084}^*$ & $0.085$ & $0.085$ & $0.085$ &  & $0.086$\\
 & $1.0$ & $\textbf{0.104}^*$ & $0.105$ & $\textbf{0.104}^*$ & $0.105$ & $\textbf{0.104}^*$ &  & $\textbf{0.081}^*$ & $0.105$ & $\textbf{0.104}^*$ & $0.105$ & $\textbf{0.104}^*$ &  & $0.106$\\
 & $1.2$ & $0.207$ & $0.187$ & $0.181$ & $0.126$ & $0.158$ &  & $0.123$ & $0.088$ & $0.089$ & $\textbf{0.082}^*$ & $\textbf{0.085}^*$ &  & $0.089$\\
DGP2 & $0.8$ & $\textbf{0.022}^*$ & $0.063$ & $0.065$ & $\textbf{0.035}^*$ & $\textbf{0.037}^*$ &  & $\textbf{0.023}^*$ & $0.037$ & $0.038$ & $\textbf{0.022}^*$ & $\textbf{0.021}^*$ &  & $0.038$\\
 & $1.0$ & $0.056$ & $\textbf{0.044}^*$ & $0.045$ & $\textbf{0.021}^*$ & $\textbf{0.031}^*$ &  & $\textbf{0.026}^*$ & $\textbf{0.044}^*$ & $0.045$ & $\textbf{0.021}^*$ & $\textbf{0.031}^*$ &  & $0.045$\\
 & $1.2$ & $0.114$ & $0.097$ & $0.102$ & $0.049$ & $0.067$ &  & $0.051$ & $0.044$ & $0.046$ & $\textbf{0.026}^*$ & $\textbf{0.038}^*$ &  & $0.045$\\
DGP3 & $0.8$ & $0.033$ & $0.037$ & $0.038$ & $\textbf{0.015}^*$ & $\textbf{0.014}^*$ &  & $\textbf{0.022}^*$ & $0.027$ & $0.028$ & $\textbf{0.018}^*$ & $\textbf{0.016}^*$ &  & $0.028$\\
 & $1.0$ & $0.058$ & $0.025$ & $0.026$ & $\textbf{0.014}^*$ & $\textbf{0.011}^*$ &  & $\textbf{0.022}^*$ & $0.025$ & $0.026$ & $\textbf{0.014}^*$ & $\textbf{0.011}^*$ &  & $0.025$\\
 & $1.2$ & $0.082$ & $0.038$ & $0.040$ & $\textbf{0.013}^*$ & $\textbf{0.015}^*$ &  & $0.030$ & $0.021$ & $0.023$ & $\textbf{0.016}^*$ & $\textbf{0.012}^*$ &  & $0.022$\\
\midrule
& & \multicolumn{11}{c}{$n=1000\quad\quad d=5$}\\
DGP1 & $0.8$ & $0.117$ & $\textbf{0.089}^*$ & $0.114$ & $\textbf{0.105}^*$ & $\textbf{0.093}^*$ &  & $\textbf{0.107}^*$ & $\textbf{0.098}^*$ & $0.110$ & $\textbf{0.095}^*$ & $\textbf{0.085}^*$ &  & $0.110$\\
 & $1.0$ & $\textbf{0.112}^*$ & $\textbf{0.106}^*$ & $0.115$ & $\textbf{0.100}^*$ & $\textbf{0.095}^*$ &  & $\textbf{0.112}^*$ & $\textbf{0.106}^*$ & $0.115$ & $\textbf{0.100}^*$ & $\textbf{0.095}^*$ &  & $0.116$\\
 & $1.2$ & $0.134$ & $0.149$ & $0.149$ & $0.113$ & $0.115$ &  & $\textbf{0.109}^*$ & $0.112$ & $0.111$ & $\textbf{0.095}^*$ & $\textbf{0.094}^*$ &  & $0.111$\\
DGP2 & $0.8$ & $\textbf{0.025}^*$ & $\textbf{0.054}^*$ & $0.075$ & $\textbf{0.044}^*$ & $\textbf{0.024}^*$ &  & $\textbf{0.014}^*$ & $\textbf{0.043}^*$ & $0.067$ & $\textbf{0.038}^*$ & $\textbf{0.015}^*$ &  & $0.066$\\
 & $1.0$ & $\textbf{0.030}^*$ & $\textbf{0.053}^*$ & $0.081$ & $\textbf{0.038}^*$ & $\textbf{0.029}^*$ &  & $\textbf{0.030}^*$ & $\textbf{0.053}^*$ & $0.081$ & $\textbf{0.038}^*$ & $\textbf{0.029}^*$ &  & $0.081$\\
 & $1.2$ & $\textbf{0.050}^*$ & $\textbf{0.078}^*$ & $0.119$ & $\textbf{0.049}^*$ & $\textbf{0.047}^*$ &  & $\textbf{0.053}^*$ & $\textbf{0.071}^*$ & $0.092$ & $\textbf{0.046}^*$ & $\textbf{0.060}^*$ &  & $0.092$\\
DGP3 & $0.8$ & $\textbf{0.034}^*$ & $\textbf{0.039}^*$ & $0.051$ & $\textbf{0.024}^*$ & $\textbf{0.017}^*$ &  & $\textbf{0.034}^*$ & $\textbf{0.038}^*$ & $0.049$ & $\textbf{0.024}^*$ & $\textbf{0.016}^*$ &  & $0.049$\\
 & $1.0$ & $\textbf{0.032}^*$ & $\textbf{0.035}^*$ & $0.049$ & $\textbf{0.020}^*$ & $\textbf{0.016}^*$ &  & $\textbf{0.032}^*$ & $\textbf{0.035}^*$ & $0.049$ & $\textbf{0.020}^*$ & $\textbf{0.016}^*$ &  & $0.048$\\
 & $1.2$ & $\textbf{0.034}^*$ & $\textbf{0.041}^*$ & $0.058$ & $\textbf{0.018}^*$ & $\textbf{0.017}^*$ &  & $\textbf{0.030}^*$ & $\textbf{0.036}^*$ & $0.047$ & $\textbf{0.018}^*$ & $\textbf{0.016}^*$ &  & $0.047$\\
\midrule
& & \multicolumn{11}{c}{$n=1000\quad\quad d=10$}\\
DGP1 & $0.8$ & $0.137$ & $0.113$ & $0.132$ & $0.139$ & $0.128$ &  & $0.106$ & $\textbf{0.100}^*$ & $0.106$ & $\textbf{0.102}^*$ & $\textbf{0.098}^*$ &  & $0.106$\\
 & $1.0$ & $\textbf{0.103}^*$ & $\textbf{0.103}^*$ & $0.105$ & $\textbf{0.100}^*$ & $\textbf{0.097}^*$ &  & $\textbf{0.103}^*$ & $\textbf{0.103}^*$ & $0.105$ & $\textbf{0.100}^*$ & $\textbf{0.097}^*$ &  & $0.105$\\
 & $1.2$ & $0.150$ & $0.177$ & $0.161$ & $0.121$ & $0.138$ &  & $0.104$ & $0.105$ & $0.104$ & $\textbf{0.094}^*$ & $\textbf{0.097}^*$ &  & $0.104$\\
DGP2 & $0.8$ & $\textbf{0.030}^*$ & $0.062$ & $0.069$ & $\textbf{0.044}^*$ & $\textbf{0.032}^*$ &  & $\textbf{0.016}^*$ & $\textbf{0.043}^*$ & $0.049$ & $\textbf{0.032}^*$ & $\textbf{0.014}^*$ &  & $0.049$\\
 & $1.0$ & $\textbf{0.030}^*$ & $\textbf{0.054}^*$ & $0.062$ & $\textbf{0.034}^*$ & $\textbf{0.032}^*$ &  & $\textbf{0.030}^*$ & $\textbf{0.054}^*$ & $0.062$ & $\textbf{0.034}^*$ & $\textbf{0.032}^*$ &  & $0.063$\\
 & $1.2$ & $\textbf{0.057}^*$ & $0.092$ & $0.108$ & $\textbf{0.052}^*$ & $\textbf{0.058}^*$ &  & $\textbf{0.047}^*$ & $\textbf{0.066}^*$ & $0.073$ & $\textbf{0.043}^*$ & $\textbf{0.057}^*$ &  & $0.073$\\
DGP3 & $0.8$ & $\textbf{0.027}^*$ & $0.038$ & $0.041$ & $\textbf{0.021}^*$ & $\textbf{0.016}^*$ &  & $\textbf{0.026}^*$ & $\textbf{0.032}^*$ & $0.035$ & $\textbf{0.021}^*$ & $\textbf{0.015}^*$ &  & $0.034$\\
 & $1.0$ & $\textbf{0.024}^*$ & $\textbf{0.031}^*$ & $0.034$ & $\textbf{0.017}^*$ & $\textbf{0.014}^*$ &  & $\textbf{0.024}^*$ & $\textbf{0.031}^*$ & $0.034$ & $\textbf{0.017}^*$ & $\textbf{0.014}^*$ &  & $0.034$\\
 & $1.2$ & $\textbf{0.029}^*$ & $0.040$ & $0.045$ & $\textbf{0.015}^*$ & $\textbf{0.017}^*$ &  & $\textbf{0.023}^*$ & $\textbf{0.030}^*$ & $0.032$ & $\textbf{0.015}^*$ & $\textbf{0.014}^*$ &  & $0.032$\\
\midrule
& & \multicolumn{11}{c}{$n=1000\quad\quad d=15$}\\
DGP1 & $0.8$ & $0.144$ & $0.132$ & $0.143$ & $0.166$ & $0.149$ &  & $0.093$ & $\textbf{0.091}^*$ & $0.093$ & $0.092$ & $\textbf{0.090}^*$ &  & $0.093$\\
 & $1.0$ & $\textbf{0.102}^*$ & $0.105$ & $0.105$ & $\textbf{0.104}^*$ & $\textbf{0.102}^*$ &  & $\textbf{0.101}^*$ & $0.105$ & $0.105$ & $\textbf{0.104}^*$ & $\textbf{0.102}^*$ &  & $0.106$\\
 & $1.2$ & $0.156$ & $0.177$ & $0.167$ & $0.125$ & $0.145$ &  & $\textbf{0.093}^*$ & $0.100$ & $0.100$ & $\textbf{0.093}^*$ & $\textbf{0.096}^*$ &  & $0.100$\\
DGP2 & $0.8$ & $\textbf{0.029}^*$ & $0.062$ & $0.066$ & $0.042$ & $\textbf{0.032}^*$ &  & $\textbf{0.016}^*$ & $\textbf{0.037}^*$ & $0.040$ & $\textbf{0.026}^*$ & $\textbf{0.014}^*$ &  & $0.040$\\
 & $1.0$ & $\textbf{0.026}^*$ & $\textbf{0.047}^*$ & $0.051$ & $\textbf{0.028}^*$ & $\textbf{0.030}^*$ &  & $\textbf{0.026}^*$ & $\textbf{0.047}^*$ & $0.051$ & $\textbf{0.028}^*$ & $\textbf{0.030}^*$ &  & $0.051$\\
 & $1.2$ & $\textbf{0.057}^*$ & $0.095$ & $0.104$ & $\textbf{0.051}^*$ & $0.061$ &  & $\textbf{0.037}^*$ & $\textbf{0.056}^*$ & $0.060$ & $\textbf{0.035}^*$ & $\textbf{0.048}^*$ &  & $0.060$\\
DGP3 & $0.8$ & $\textbf{0.022}^*$ & $0.036$ & $0.038$ & $\textbf{0.018}^*$ & $\textbf{0.013}^*$ &  & $\textbf{0.022}^*$ & $\textbf{0.027}^*$ & $0.029$ & $\textbf{0.018}^*$ & $\textbf{0.013}^*$ &  & $0.029$\\
 & $1.0$ & $\textbf{0.020}^*$ & $\textbf{0.026}^*$ & $0.028$ & $\textbf{0.015}^*$ & $\textbf{0.012}^*$ &  & $\textbf{0.021}^*$ & $\textbf{0.026}^*$ & $0.028$ & $\textbf{0.015}^*$ & $\textbf{0.012}^*$ &  & $0.028$\\
 & $1.2$ & $0.028$ & $0.037$ & $0.040$ & $\textbf{0.014}^*$ & $\textbf{0.013}^*$ &  & $\textbf{0.020}^*$ & $\textbf{0.024}^*$ & $0.026$ & $\textbf{0.015}^*$ & $\textbf{0.011}^*$ &  & $0.026$\\
\midrule
& & \multicolumn{11}{c}{$n=1000\quad\quad d=20$}\\
DGP1 & $0.8$ & $0.164$ & $0.162$ & $0.170$ & $0.199$ & $0.181$ &  & $\textbf{0.085}^*$ & $\textbf{0.092}^*$ & $0.093$ & $0.093$ & $\textbf{0.090}^*$ &  & $0.093$\\
 & $1.0$ & $\textbf{0.101}^*$ & $0.108$ & $0.108$ & $\textbf{0.108}^*$ & $\textbf{0.105}^*$ &  & $\textbf{0.101}^*$ & $0.108$ & $0.108$ & $\textbf{0.108}^*$ & $\textbf{0.105}^*$ &  & $0.109$\\
 & $1.2$ & $0.176$ & $0.196$ & $0.187$ & $0.135$ & $0.162$ &  & $0.099$ & $0.097$ & $0.094$ & $\textbf{0.089}^*$ & $\textbf{0.092}^*$ &  & $0.094$\\
DGP2 & $0.8$ & $\textbf{0.029}^*$ & $0.064$ & $0.066$ & $0.044$ & $\textbf{0.035}^*$ &  & $\textbf{0.020}^*$ & $\textbf{0.035}^*$ & $0.036$ & $\textbf{0.024}^*$ & $\textbf{0.015}^*$ &  & $0.037$\\
 & $1.0$ & $\textbf{0.029}^*$ & $\textbf{0.043}^*$ & $0.045$ & $\textbf{0.025}^*$ & $\textbf{0.028}^*$ &  & $\textbf{0.027}^*$ & $\textbf{0.043}^*$ & $0.045$ & $\textbf{0.025}^*$ & $\textbf{0.028}^*$ &  & $0.045$\\
 & $1.2$ & $0.068$ & $0.095$ & $0.101$ & $\textbf{0.050}^*$ & $0.060$ &  & $\textbf{0.034}^*$ & $\textbf{0.049}^*$ & $0.051$ & $\textbf{0.030}^*$ & $\textbf{0.042}^*$ &  & $0.051$\\
DGP3 & $0.8$ & $\textbf{0.021}^*$ & $0.037$ & $0.038$ & $\textbf{0.017}^*$ & $\textbf{0.013}^*$ &  & $\textbf{0.021}^*$ & $0.026$ & $0.027$ & $\textbf{0.018}^*$ & $\textbf{0.012}^*$ &  & $0.027$\\
 & $1.0$ & $\textbf{0.022}^*$ & $0.024$ & $0.025$ & $\textbf{0.015}^*$ & $\textbf{0.011}^*$ &  & $\textbf{0.019}^*$ & $0.024$ & $0.025$ & $\textbf{0.015}^*$ & $\textbf{0.011}^*$ &  & $0.025$\\
 & $1.2$ & $0.033$ & $0.036$ & $0.038$ & $\textbf{0.013}^*$ & $\textbf{0.012}^*$ &  & $\textbf{0.017}^*$ & $0.022$ & $0.023$ & $\textbf{0.015}^*$ & $\textbf{0.010}^*$ &  & $0.022$\\
\end{longtable}
}}

{\fontsize{8}{6}\selectfont{\setlength{\tabcolsep}{2.5pt}
\begin{longtable}{cccccccccccccccccc}
\caption{Marginal coverage results under covariate shift. The method with marginal coverage deviated from $1-\alpha$ by more than $0.02$ is highlighted with \textit{italic} and $\times$.} \label{tab: app cs mar} \\
\toprule
& & \multicolumn{5}{c}{No covariate-shift adjustment} & & \multicolumn{5}{c}{Density-ratio weighted}\\
\cmidrule(lr){3-7} \cmidrule(lr){9-13} 
& $\sigma$ & GLCP & LCP & CQR-LR & CQR-RF & CQR-LGB & & GLCP & LCP & CQR-LR & CQR-RF & CQR-LGB & & WCP \\
\midrule
\endfirsthead
\multicolumn{15}{c}{{\tablename\ \thetable{} -- continued from previous page}} \\
\toprule
& & \multicolumn{5}{c}{No covariate-shift adjustment} & & \multicolumn{5}{c}{Density-ratio weighted}\\
\cmidrule(lr){3-7} \cmidrule(lr){9-13} 
& $\sigma$ & GLCP & LCP & CQR-LR & CQR-RF & CQR-LGB & & GLCP & LCP & CQR-LR & CQR-RF & CQR-LGB & & WCP \\
\midrule
\endhead
\midrule
\multicolumn{15}{r}{{\small Continued on next page}} \\
\endfoot
\bottomrule
\endlastfoot
& & \multicolumn{11}{c}{$n=500\quad\quad d=5$}\\
DGP1 & $0.8$ & $\textit{0.879}^\times$ & $0.904$ & $0.886$ & $\textit{0.877}^\times$ & $0.880$ &  & $0.887$ & $0.893$ & $0.890$ & $0.890$ & $0.892$ &  & $0.890$\\
 & $1.0$ & $0.892$ & $0.899$ & $0.897$ & $0.896$ & $0.896$ &  & $0.893$ & $0.899$ & $0.897$ & $0.896$ & $0.896$ &  & $0.897$\\
 & $1.2$ & $\textit{0.849}^\times$ & $\textit{0.840}^\times$ & $\textit{0.848}^\times$ & $\textit{0.872}^\times$ & $\textit{0.861}^\times$ &  & $0.906$ & $0.910$ & $0.908$ & $0.907$ & $0.908$ &  & $0.910$\\
DGP2 & $0.8$ & $\textit{0.927}^\times$ & $\textit{0.947}^\times$ & $\textit{0.960}^\times$ & $\textit{0.933}^\times$ & $\textit{0.927}^\times$ &  & $0.896$ & $0.906$ & $0.907$ & $0.904$ & $0.902$ &  & $0.907$\\
 & $1.0$ & $0.899$ & $0.907$ & $0.908$ & $0.906$ & $0.907$ &  & $0.900$ & $0.907$ & $0.908$ & $0.906$ & $0.907$ &  & $0.908$\\
 & $1.2$ & $\textit{0.858}^\times$ & $\textit{0.852}^\times$ & $\textit{0.835}^\times$ & $\textit{0.874}^\times$ & $\textit{0.867}^\times$ &  & $0.909$ & $0.918$ & $0.916$ & $0.914$ & $0.915$ &  & $0.915$\\
DGP3 & $0.8$ & $0.916$ & $\textit{0.926}^\times$ & $\textit{0.934}^\times$ & $0.909$ & $0.909$ &  & $0.897$ & $0.904$ & $0.904$ & $0.901$ & $0.902$ &  & $0.904$\\
 & $1.0$ & $0.896$ & $0.903$ & $0.903$ & $0.900$ & $0.900$ &  & $0.897$ & $0.903$ & $0.903$ & $0.900$ & $0.900$ &  & $0.903$\\
 & $1.2$ & $0.880$ & $0.880$ & $\textit{0.873}^\times$ & $0.895$ & $0.890$ &  & $0.900$ & $0.908$ & $0.908$ & $0.905$ & $0.902$ &  & $0.908$\\
\midrule
& & \multicolumn{11}{c}{$n=500\quad\quad d=15$}\\
DGP1 & $0.8$ & $\textit{0.822}^\times$ & $\textit{0.831}^\times$ & $\textit{0.822}^\times$ & $\textit{0.779}^\times$ & $\textit{0.800}^\times$ &  & $0.893$ & $0.899$ & $0.897$ & $0.894$ & $0.897$ &  & $0.898$\\
 & $1.0$ & $0.899$ & $0.913$ & $0.911$ & $0.906$ & $0.907$ &  & $0.902$ & $0.913$ & $0.911$ & $0.906$ & $0.907$ &  & $0.911$\\
 & $1.2$ & $\textit{0.790}^\times$ & $\textit{0.798}^\times$ & $\textit{0.806}^\times$ & $\textit{0.855}^\times$ & $\textit{0.822}^\times$ &  & $0.885$ & $0.907$ & $0.909$ & $0.909$ & $0.910$ &  & $0.907$\\
DGP2 & $0.8$ & $\textit{0.922}^\times$ & $\textit{0.962}^\times$ & $\textit{0.965}^\times$ & $\textit{0.935}^\times$ & $\textit{0.934}^\times$ &  & $0.894$ & $0.903$ & $0.904$ & $0.899$ & $0.900$ &  & $0.904$\\
 & $1.0$ & $0.884$ & $0.906$ & $0.907$ & $0.903$ & $0.902$ &  & $0.905$ & $0.906$ & $0.907$ & $0.903$ & $0.902$ &  & $0.906$\\
 & $1.2$ & $\textit{0.829}^\times$ & $\textit{0.821}^\times$ & $\textit{0.816}^\times$ & $\textit{0.860}^\times$ & $\textit{0.844}^\times$ &  & $0.887$ & $0.903$ & $0.905$ & $0.900$ & $0.901$ &  & $0.902$\\
DGP3 & $0.8$ & $0.908$ & $\textit{0.935}^\times$ & $\textit{0.937}^\times$ & $0.903$ & $0.909$ &  & $0.897$ & $0.903$ & $0.903$ & $0.901$ & $0.900$ &  & $0.903$\\
 & $1.0$ & $0.882$ & $0.903$ & $0.903$ & $0.899$ & $0.900$ &  & $0.903$ & $0.903$ & $0.903$ & $0.899$ & $0.900$ &  & $0.902$\\
 & $1.2$ & $\textit{0.859}^\times$ & $\textit{0.872}^\times$ & $\textit{0.870}^\times$ & $0.899$ & $0.888$ &  & $0.901$ & $0.904$ & $0.904$ & $0.901$ & $0.902$ &  & $0.903$\\
\midrule
& & \multicolumn{11}{c}{$n=500\quad\quad d=20$}\\
DGP1 & $0.8$ & $\textit{0.762}^\times$ & $\textit{0.796}^\times$ & $\textit{0.793}^\times$ & $\textit{0.741}^\times$ & $\textit{0.766}^\times$ &  & $\textit{0.844}^\times$ & $0.911$ & $0.909$ & $0.910$ & $0.910$ &  & $0.909$\\
 & $1.0$ & $\textit{0.874}^\times$ & $0.911$ & $0.910$ & $0.905$ & $0.908$ &  & $0.915$ & $0.911$ & $0.910$ & $0.905$ & $0.908$ &  & $0.911$\\
 & $1.2$ & $\textit{0.736}^\times$ & $\textit{0.777}^\times$ & $\textit{0.782}^\times$ & $\textit{0.842}^\times$ & $\textit{0.807}^\times$ &  & $\textit{0.831}^\times$ & $0.910$ & $0.909$ & $0.904$ & $0.909$ &  & $0.910$\\
DGP2 & $0.8$ & $0.895$ & $\textit{0.962}^\times$ & $\textit{0.964}^\times$ & $\textit{0.934}^\times$ & $\textit{0.935}^\times$ &  & $0.904$ & $0.908$ & $0.909$ & $0.908$ & $0.910$ &  & $0.908$\\
 & $1.0$ & $\textit{0.848}^\times$ & $0.905$ & $0.904$ & $0.901$ & $0.902$ &  & $0.899$ & $0.905$ & $0.904$ & $0.901$ & $0.902$ &  & $0.905$\\
 & $1.2$ & $\textit{0.787}^\times$ & $\textit{0.815}^\times$ & $\textit{0.811}^\times$ & $\textit{0.856}^\times$ & $\textit{0.842}^\times$ &  & $\textit{0.856}^\times$ & $0.907$ & $0.907$ & $0.903$ & $0.907$ &  & $0.907$\\
DGP3 & $0.8$ & $\textit{0.875}^\times$ & $\textit{0.935}^\times$ & $\textit{0.936}^\times$ & $0.904$ & $0.909$ &  & $0.905$ & $0.910$ & $0.910$ & $0.910$ & $0.909$ &  & $0.910$\\
 & $1.0$ & $\textit{0.844}^\times$ & $0.902$ & $0.902$ & $0.901$ & $0.901$ &  & $0.896$ & $0.902$ & $0.902$ & $0.901$ & $0.901$ &  & $0.902$\\
 & $1.2$ & $\textit{0.818}^\times$ & $\textit{0.870}^\times$ & $\textit{0.868}^\times$ & $0.902$ & $0.890$ &  & $\textit{0.878}^\times$ & $0.901$ & $0.903$ & $0.900$ & $0.900$ &  & $0.900$\\
\midrule
& & \multicolumn{11}{c}{$n=1000\quad\quad d=5$}\\
DGP1 & $0.8$ & $\textit{0.868}^\times$ & $0.910$ & $0.882$ & $\textit{0.873}^\times$ & $\textit{0.878}^\times$ &  & $0.884$ & $0.891$ & $0.888$ & $0.891$ & $0.892$ &  & $0.888$\\
 & $1.0$ & $0.889$ & $0.899$ & $0.894$ & $0.893$ & $0.894$ &  & $0.889$ & $0.899$ & $0.894$ & $0.893$ & $0.894$ &  & $0.895$\\
 & $1.2$ & $\textit{0.852}^\times$ & $\textit{0.839}^\times$ & $\textit{0.845}^\times$ & $\textit{0.871}^\times$ & $\textit{0.865}^\times$ &  & $0.896$ & $0.904$ & $0.905$ & $0.900$ & $0.903$ &  & $0.905$\\
DGP2 & $0.8$ & $0.919$ & $\textit{0.945}^\times$ & $\textit{0.962}^\times$ & $\textit{0.934}^\times$ & $\textit{0.922}^\times$ &  & $0.899$ & $0.906$ & $0.908$ & $0.905$ & $0.903$ &  & $0.908$\\
 & $1.0$ & $0.900$ & $0.908$ & $0.910$ & $0.907$ & $0.907$ &  & $0.901$ & $0.908$ & $0.910$ & $0.907$ & $0.907$ &  & $0.910$\\
 & $1.2$ & $\textit{0.868}^\times$ & $\textit{0.859}^\times$ & $\textit{0.837}^\times$ & $\textit{0.877}^\times$ & $\textit{0.878}^\times$ &  & $0.906$ & $0.911$ & $0.914$ & $0.910$ & $0.913$ &  & $0.914$\\
DGP3 & $0.8$ & $0.916$ & $\textit{0.925}^\times$ & $\textit{0.935}^\times$ & $0.913$ & $0.911$ &  & $0.899$ & $0.903$ & $0.905$ & $0.903$ & $0.902$ &  & $0.905$\\
 & $1.0$ & $0.899$ & $0.904$ & $0.905$ & $0.904$ & $0.903$ &  & $0.900$ & $0.904$ & $0.905$ & $0.904$ & $0.903$ &  & $0.905$\\
 & $1.2$ & $0.886$ & $0.884$ & $\textit{0.875}^\times$ & $0.899$ & $0.895$ &  & $0.901$ & $0.904$ & $0.907$ & $0.906$ & $0.905$ &  & $0.907$\\
\midrule
& & \multicolumn{11}{c}{$n=1000\quad\quad d=10$}\\
DGP1 & $0.8$ & $\textit{0.841}^\times$ & $\textit{0.877}^\times$ & $\textit{0.854}^\times$ & $\textit{0.837}^\times$ & $\textit{0.847}^\times$ &  & $0.887$ & $0.896$ & $0.894$ & $0.893$ & $0.894$ &  & $0.894$\\
 & $1.0$ & $0.898$ & $0.907$ & $0.906$ & $0.903$ & $0.904$ &  & $0.898$ & $0.907$ & $0.906$ & $0.903$ & $0.904$ &  & $0.906$\\
 & $1.2$ & $\textit{0.827}^\times$ & $\textit{0.805}^\times$ & $\textit{0.824}^\times$ & $\textit{0.862}^\times$ & $\textit{0.842}^\times$ &  & $0.900$ & $0.913$ & $0.911$ & $0.906$ & $0.910$ &  & $0.912$\\
DGP2 & $0.8$ & $\textit{0.926}^\times$ & $\textit{0.957}^\times$ & $\textit{0.964}^\times$ & $\textit{0.940}^\times$ & $\textit{0.929}^\times$ &  & $0.895$ & $0.903$ & $0.905$ & $0.902$ & $0.900$ &  & $0.904$\\
 & $1.0$ & $0.899$ & $0.907$ & $0.908$ & $0.904$ & $0.904$ &  & $0.899$ & $0.907$ & $0.908$ & $0.904$ & $0.904$ &  & $0.908$\\
 & $1.2$ & $\textit{0.856}^\times$ & $\textit{0.835}^\times$ & $\textit{0.823}^\times$ & $\textit{0.862}^\times$ & $\textit{0.861}^\times$ &  & $0.902$ & $0.912$ & $0.911$ & $0.909$ & $0.911$ &  & $0.912$\\
DGP3 & $0.8$ & $0.917$ & $\textit{0.932}^\times$ & $\textit{0.936}^\times$ & $0.910$ & $0.911$ &  & $0.896$ & $0.901$ & $0.901$ & $0.899$ & $0.899$ &  & $0.902$\\
 & $1.0$ & $0.897$ & $0.904$ & $0.904$ & $0.901$ & $0.902$ &  & $0.898$ & $0.904$ & $0.904$ & $0.901$ & $0.902$ &  & $0.904$\\
 & $1.2$ & $0.882$ & $\textit{0.877}^\times$ & $\textit{0.872}^\times$ & $0.899$ & $0.892$ &  & $0.899$ & $0.907$ & $0.905$ & $0.902$ & $0.903$ &  & $0.905$\\
\midrule
& & \multicolumn{11}{c}{$n=1000\quad\quad d=15$}\\
DGP1 & $0.8$ & $\textit{0.816}^\times$ & $\textit{0.838}^\times$ & $\textit{0.825}^\times$ & $\textit{0.793}^\times$ & $\textit{0.811}^\times$ &  & $0.894$ & $0.906$ & $0.906$ & $0.905$ & $0.907$ &  & $0.906$\\
 & $1.0$ & $0.908$ & $0.916$ & $0.916$ & $0.910$ & $0.913$ &  & $0.909$ & $0.916$ & $0.916$ & $0.910$ & $0.913$ &  & $0.916$\\
 & $1.2$ & $\textit{0.815}^\times$ & $\textit{0.800}^\times$ & $\textit{0.812}^\times$ & $\textit{0.856}^\times$ & $\textit{0.832}^\times$ &  & $0.910$ & $0.915$ & $0.915$ & $0.906$ & $0.912$ &  & $0.915$\\
DGP2 & $0.8$ & $\textit{0.927}^\times$ & $\textit{0.960}^\times$ & $\textit{0.964}^\times$ & $\textit{0.940}^\times$ & $\textit{0.930}^\times$ &  & $0.900$ & $0.905$ & $0.905$ & $0.902$ & $0.904$ &  & $0.905$\\
 & $1.0$ & $0.897$ & $0.905$ & $0.906$ & $0.902$ & $0.903$ &  & $0.897$ & $0.905$ & $0.906$ & $0.902$ & $0.903$ &  & $0.905$\\
 & $1.2$ & $\textit{0.850}^\times$ & $\textit{0.821}^\times$ & $\textit{0.814}^\times$ & $\textit{0.858}^\times$ & $\textit{0.851}^\times$ &  & $0.902$ & $0.907$ & $0.910$ & $0.904$ & $0.906$ &  & $0.909$\\
DGP3 & $0.8$ & $0.915$ & $\textit{0.934}^\times$ & $\textit{0.936}^\times$ & $0.908$ & $0.909$ &  & $0.897$ & $0.904$ & $0.905$ & $0.903$ & $0.905$ &  & $0.904$\\
 & $1.0$ & $0.895$ & $0.903$ & $0.903$ & $0.901$ & $0.902$ &  & $0.895$ & $0.903$ & $0.903$ & $0.901$ & $0.902$ &  & $0.902$\\
 & $1.2$ & $\textit{0.879}^\times$ & $\textit{0.874}^\times$ & $\textit{0.870}^\times$ & $0.902$ & $0.894$ &  & $0.893$ & $0.901$ & $0.902$ & $0.900$ & $0.902$ &  & $0.902$\\
\midrule
& & \multicolumn{11}{c}{$n=1000\quad\quad d=20$}\\
DGP1 & $0.8$ & $\textit{0.782}^\times$ & $\textit{0.793}^\times$ & $\textit{0.784}^\times$ & $\textit{0.746}^\times$ & $\textit{0.765}^\times$ &  & $0.891$ & $0.900$ & $0.900$ & $0.900$ & $0.902$ &  & $0.900$\\
 & $1.0$ & $0.899$ & $0.908$ & $0.907$ & $0.903$ & $0.906$ &  & $0.901$ & $0.908$ & $0.907$ & $0.903$ & $0.906$ &  & $0.907$\\
 & $1.2$ & $\textit{0.779}^\times$ & $\textit{0.769}^\times$ & $\textit{0.778}^\times$ & $\textit{0.832}^\times$ & $\textit{0.804}^\times$ &  & $0.882$ & $0.908$ & $0.911$ & $0.902$ & $0.907$ &  & $0.911$\\
DGP2 & $0.8$ & $\textit{0.926}^\times$ & $\textit{0.963}^\times$ & $\textit{0.965}^\times$ & $\textit{0.943}^\times$ & $\textit{0.934}^\times$ &  & $0.899$ & $0.904$ & $0.905$ & $0.902$ & $0.905$ &  & $0.904$\\
 & $1.0$ & $0.889$ & $0.906$ & $0.906$ & $0.904$ & $0.904$ &  & $0.897$ & $0.906$ & $0.906$ & $0.904$ & $0.904$ &  & $0.906$\\
 & $1.2$ & $\textit{0.836}^\times$ & $\textit{0.817}^\times$ & $\textit{0.812}^\times$ & $\textit{0.857}^\times$ & $\textit{0.850}^\times$ &  & $0.898$ & $0.907$ & $0.907$ & $0.903$ & $0.903$ &  & $0.909$\\
DGP3 & $0.8$ & $0.912$ & $\textit{0.935}^\times$ & $\textit{0.937}^\times$ & $0.908$ & $0.909$ &  & $0.899$ & $0.903$ & $0.903$ & $0.903$ & $0.903$ &  & $0.903$\\
 & $1.0$ & $0.890$ & $0.903$ & $0.903$ & $0.902$ & $0.902$ &  & $0.901$ & $0.903$ & $0.903$ & $0.902$ & $0.902$ &  & $0.902$\\
 & $1.2$ & $\textit{0.871}^\times$ & $\textit{0.873}^\times$ & $\textit{0.870}^\times$ & $0.903$ & $0.893$ &  & $0.904$ & $0.901$ & $0.902$ & $0.900$ & $0.900$ &  & $0.903$\\
\end{longtable}
}}

\subsection{Details for Conditional-Coverage-Oriented Model Selection Experiments}\label{sec: detail selection experiment}

This section gives additional details for the model-selection experiment in Section~\ref{sec: model selection} of the main text. The experiment compares several rules for selecting one method from a pool of candidate conformal procedures.
We consider the following selection rules:
\begin{itemize}
    \item \textbf{AvgLoss:} selects the candidate with the smallest average loss, computed over bandwidths corresponding to the three target effective sample sizes $30$, $40$, and $50$.
    \item \textbf{AvgRankLoss:} ranks the candidates separately for the three losses, averages the ranks, and selects the candidate with the smallest average rank.
    \item \textbf{EffSize baseline:} selects the candidate with the smallest average prediction-set size on the calibration set.
    \item \textbf{Rand baseline:} randomly selects a candidate from the pool.
\end{itemize}

The candidate conformal sets differ in the underlying conditional quantile estimator. We consider $20$ candidates, grouped as follows:
\begin{itemize}
    \item \textbf{RKHS-based quantile regression:} four candidates obtained from penalty coefficient $\lambda_0\in\{0.01,0.05\}$ and kernel bandwidth $h\in\{1.0,1.5\}$.
    
    \item \textbf{Engression-based quantile regression \citep{shen2025engression}:} four candidates with network architectures $$(\text{number of layers},\text{hidden dimension})\in\{(3,50),(3,100),(2,50),(2,100)\}\,.$$ 
    All are trained for $500$ epochs with learning rate $0.001$. Quantile predictions are obtained by inverting the estimated conditional distribution at the target quantile level.
    
    \item \textbf{Linear quantile regression:} two candidates fitted by penalized quantile regression with $L_2$ penalty coefficients $\lambda_0\in\{0.01,0.05\}$.
    
    \item \textbf{Random-forest quantile regression \citep{meinshausen2006quantile}:} two candidates, each using $100$ trees, a fixed random seed, and a minimum split size of $2$. The maximum tree depth is set to $5$ or $10$.
    
    \item \textbf{LightGBM quantile regression \citep{NIPS2017_6449f44a}:} eight candidates, each trained for $200$ boosting rounds without early stopping. The candidates vary by number of leaves $\{11,21\}$, learning rate $\{0.01,0.05\}$, and feature and bagging fractions $\{0.6,0.8\}$, with bagging frequency fixed at $5$.
\end{itemize}

\subsection{Detailed Setup for the Toloker Graph Application}
\label{sec: graph details}

This section gives the experimental details for the Toloker graph application in Section~\ref{sec: graph real data}.

\paragraph*{Dataset and preprocessing} 
We use the Toloker crowdsourcing dataset, downloaded from \texttt{Kaggle.com}. The graph contains 11,758 nodes, corresponding to annotators, and 519,000 undirected edges, corresponding to shared task annotations. Each node has four numerical features, \texttt{approved\_rate}, \texttt{skipped\_rate}, \texttt{expired\_rate}, and \texttt{rejected\_rate}; three categorical features, \texttt{english\_profile}, \texttt{english\_tested}, and \texttt{education}; and a binary label indicating whether the annotator was banned. The goal is to construct prediction sets for the ban status with reliable performance across latent graph communities.

\paragraph*{Data splits and graph construction} 
The dataset provides 10 predefined random splits. For each split, nodes are allocated to training, calibration, and test sets in a $2:1:1$ ratio. To maintain the inductive setting, we construct the graph as follows: (i) for the training set, we keep only edges whose endpoints are both in the training set; (ii) for the calibration set, we keep only edges whose endpoints are both in the calibration set; (iii) for a test node, we keep only its edges to nodes in the calibration set. This guarantees that predictions for a test node rely solely on the calibration set and the node's own features.

\paragraph*{Feature engineering and model training} 
We augment each node with features derived from its neighborhood. For a node and its neighbors (according to the edges retained as described above), we compute: (a) seven mean statistics (one per original numerical/categorical feature) of the neighbor's numerical/categorical features, and (b) the average proportion of banned neighbors. For test nodes, the latter is computed using only their known neighbors in the calibration set. This yields eight new features. Adding a binary indicator for whether the node has any neighbor gives a total of fifteen features (twelve numeric, three categorical). A LightGBM classifier is trained on these features (100 estimators, learning rate 0.1, max depth 5), achieving an average accuracy of about 79\% over the 10 splits.

\paragraph*{Community detection and evaluation}
To evaluate community-conditional coverage, we obtain a reference community partition by applying the Louvain algorithm to the full graph. Communities with fewer than 10 nodes are merged into a single ``outlier'' community, resulting in 8 final communities. These community labels are not used by the conformal procedures and are used only for evaluation. We compare the proposed GraphCP method with standard conformal prediction (StdCP), using the same base score derived from the LightGBM predicted probabilities. Table~\ref{table: tolokerTS} in the main text reports the results aggregated over the 10 splits, including the community proportions, conditional miscoverage, and average prediction set sizes for each method, together with the overall aggregate performance.

\end{document}